\documentclass[11pt]{amsart}

\usepackage{graphicx}
\usepackage{framed}
\usepackage{ulem}
\usepackage{pgf,tikz,pgfplots}
\usepackage{float}
\usepackage{mathrsfs}
\usepackage{mathtools}
\usepackage{mathpple}
\usepackage[all,cmtip]{xy}
\usepackage{amsmath}
\usepackage{tikz-cd}
\usepackage{mathdots}
\usepackage{yhmath}
\usepackage{cancel}
\usepackage{color}
\definecolor{deepgreen}{rgb}{0.0,0.35,0.0}
\usepackage{siunitx}
\usepackage{array}
\usepackage{multirow}
\usepackage{amssymb}
\usepackage{gensymb}
\usepackage{tabularx}
\usepackage{booktabs}
\usepackage{makecell}
\usetikzlibrary{fadings}
\usetikzlibrary{patterns}
\usetikzlibrary{shapes}
\usetikzlibrary{arrows.meta,calc,decorations.markings}

\usetikzlibrary{matrix,arrows,decorations.pathmorphing}
\usepackage{enumerate}
\usepackage{enumitem}
\usepackage{extarrows}
\usepackage{appendix}

\usepackage[bookmarks=true, bookmarksopen=true]{hyperref}

\DeclareMathOperator{\Id}{Id}

\theoremstyle{plain}
\newtheorem{thm}{Theorem}[section]
\newtheorem{thm-defn}[thm]{Theorem/Definition}
\newtheorem{lem}[thm]{Lemma}
\newtheorem{lem-defn}[thm]{Lemma/Definition}
\newtheorem{prop}[thm]{Proposition}
\newtheorem{cor}[thm]{Corollary}
\newtheorem{conj}[thm]{Conjecture}

\newtheorem{defn}[thm]{Definition}

\newtheorem{eg}[thm]{Example}
\newtheorem{rmk}[thm]{Remark}

\theoremstyle{definition}
\newtheorem{assumption}{Assumption}

\hypersetup{colorlinks=true,linkcolor=blue,citecolor=blue,urlcolor=blue}

\newcommand{\R}{\mathbb{R}}
\newcommand{\C}{\mathbb{C}}
\newcommand{\Z}{\mathbb Z}
\newcommand{\dd}{\mathrm d}

\newcommand{\ev}{\mathrm{ev}_0}
\newcommand{\Norm}[1]{\left\lVert #1\right\rVert}
\newcommand{\Abs}[1]{\left\lvert #1\right\rvert}
\newcommand{\ip}[2]{\left\langle #1,#2\right\rangle}
\newcommand{\norm}[1]{\left\lVert #1\right\rVert}

\newcommand{\PP}{\mathcal{P}}
\newcommand{\HH}{\mathbb{H}}
\newcommand{\PR}{\mathcal{P}_d^{\mathbb{R}}}
\newcommand{\var}{\text{var}}
\newcommand{\cont}{\text{cont}}
\newcommand{\rad}{\text{rad}}

\DeclareMathOperator{\re}{Re}
\DeclareMathOperator{\im}{Im}
\DeclareMathOperator{\diag}{diag}

\title[Heat Kernel and Resurgence]{Heat Kernel and Resurgence}

\author[Si Li]{Si Li}
\author[Yong Li]{Yong Li}
\author[Xinxing Tang]{Xinxing Tang}

\thanks{S.~Li: Yau Mathematical Sciences Center, Tsinghua University, Beijing, China.
Email: \texttt{sili@mail.tsinghua.edu.cn}.}

\thanks{Y.~Li: Beijing Institute of Mathematical Sciences and Applications,
Beijing, China.
Email: \texttt{liyong@bimsa.cn}.}

\thanks{X.~Tang: Beijing Institute of Mathematical Sciences and Applications,
Beijing, China.
Email: \texttt{tangxinxing@bimsa.cn}.}

\date{}

\keywords{}

\begin{document}

\begin{abstract}
We study the resurgent structure of short-time heat kernel asymptotics from the viewpoint of Picard–Lefschetz theory. For a real analytic Riemannian manifold, we show the heat kernel admits a $1$--Gevrey small-time expansion whose Borel transform detects complex-geometric data beyond the real geodesic sector. We formulate an infinite-dimensional Picard–Lefschetz problem of Morse–Floer type for the holomorphic energy functional on the complexified path space, and propose a heat-kernel analogue of the Picard–Lefschetz/Alien correspondence. In this framework, pointed alien operators acting on the asymptotic expansion associated with the real geodesic are predicted to produce the formal heat-kernel sectors associated with other holomorphic geodesics, with coefficients given by signed counts of connecting trajectories of the Morse flow. We perform a confirming test of this proposal on the hyperbolic plane $\mathbb{H}_{\R}^2$.
\end{abstract}
\maketitle
\tableofcontents

\section{Introduction}

The purpose of this paper is to explain how the resurgent structure of a heat-kernel expansion is reflected by Picard--Lefschetz theory on an infinite-dimensional complexified path space. This is motivated by the Picard–Lefschetz/Alien correspondence in the finite-dimensional case. We start by reviewing this correspondence following \cite{LiLiTang2026}  and refer to it for further details and references therein. 

\subsection{Lefschetz thimble and Borel singularity}\label{sec:intro-Borel}
Consider a finite-dimensional exponential integral
\begin{equation*}
I_\Gamma(\hbar)=\int_\Gamma e^{-S(z)/\hbar}\,\mu
\end{equation*}
where $S(z)$ is a holomorphic function on a complex manifold, $\mu$ is a holomorphic form, and $\Gamma$ is a contour (or cycle) in a suitable relative homology class $[\Gamma]$. The asymptotic behavior of $I_\Gamma(\hbar)$ for small $\hbar$ is controlled by the critical points of the phase function
$$\dd S=0.$$

Assume $S$ has only non-degenerate isolated critical points and the corresponding critical values are pairwise distinct. For each critical point $p$, there corresponds an integration cycle $\mathcal J_p$, called the Lefschetz thimble, which can be constructed by the downward gradient flow of $\re(S/\hbar)$ (with a chosen Hermitian metric on $X$). One can decompose the cycle in homology class
$$
[\Gamma]=\sum_p n_p [\mathcal J_p]
$$
and obtain a representation of the integral by
$$
I_\Gamma(\hbar)=\sum_{p}n_p \int_{\mathcal J_p} e^{-S(z)/\hbar}\,\mu.
$$
Near each critical point $p$, one obtains a formal asymptotic expansion of the form 
$$
I_p(\hbar) :=\int_{\mathcal J_p} e^{-S(z)/\hbar}\,\mu \sim e^{-S(p)/\hbar}\tilde \phi_p(\hbar)= e^{-S(p)/\hbar}\hbar^{n/2}\left(c_{p,0}+c_{p,1}\hbar+c_{p,2}\hbar^2+\cdots\right)\quad \text{as}\ \hbar\to 0 
$$
uniformly on closed subsectors of a sufficiently small sector in the $\hbar$-plane. This is the standard local steepest-descent (or stationary-phase) expansion of a Laplace-type integral near a non-degenerate critical point (see e.g. \cite{BleisteinHandelsman,PhamSaddlepoint}); equivalently, it may be viewed as the local WKB expansion associated with the saddle $p$.
This expansion is typically divergent and only asymptotic in nature. They all together capture the asymptotic behavior of $I_\Gamma(\hbar)$ for small $\hbar$. 

The theory of resurgence developed by \'Ecalle \cite{Ecalle1981I,Ecalle1981II,Ecalle1985III} provides a general framework to study global analytic natures of divergent asymptotic expansions. Under appropriate growth condition, the Borel transform $\hat \phi_p(\xi)$ of the $\hbar$-asymptotic series $\tilde \phi_p(\hbar)$ from the critical point $p$ will be an integrable germ locally near $\xi=0$ and admits analytic continuation beyond it. For $\arg \hbar = \theta$,  one has a resummed function with the expected asymptotic expansion
\[
I_p(\hbar)
:=
\int_{\mathcal J_p} e^{-S(z)/\hbar}\,\mu
=
e^{-S(p)/\hbar}\,
\mathcal L_\theta\widehat\phi_p(\hbar) \sim e^{-S(p)/\hbar}
\widetilde\phi_p(\hbar),
\qquad \hbar\to0.
\]
Here $\mathcal L_\theta$ is the Laplace tranform along the ray  $e^{i\theta}\R_{\geq 0}$ (see Definition \ref{defn:Laplace}). 
The Borel transform $\hat \phi_p(\xi)$ may exhibit singularities in the $\xi$-plane. Such Borel singularities lead to possible ambiguities in the resummation. One has to choose a direction $e^{i\theta}$ so that the ray in the Laplace transform avoids Borel singularities. Two different directions could give rise to different analytic objects when one ray has to pass certain singularities to reach the other. This is precisely the Stokes phenomenon. 

\begin{figure}[H]
\centering
\begin{tikzpicture}[scale=0.7,>=Latex]

\tikzset{
  stokesray/.style={
    gray!70,
    dashed,
    line width=1pt
  },
  lateralsum/.style={
    black,
    line width=1.1pt,
    -{Latex[length=2.2mm]}
  },
  stokesarcminus/.style={
    black,
    line width=1.05pt,
    -{Latex[length=2.2mm]}
  },
  stokesarcplus/.style={
    black,
    line width=1.05pt,
    -{Latex[length=2.2mm]}
  }
}
\coordinate (O) at (0,0);
\draw[stokesray] (O) -- (30:7.2);
\draw[lateralsum] (O) -- (20:7.0);
\draw[lateralsum] (O) -- (40:7.0);
\filldraw[black] (O) circle (1.3pt);
\node[below left] at (O) {$0$};
\foreach \r in {2.0,3.6,5.2,6.5}{
  \draw[red, line width=1pt]
    ($(30:\r)+(-0.11,-0.11)$) -- ($(30:\r)+(0.11,0.11)$);
  \draw[red, line width=1pt]
    ($(30:\r)+(-0.11,0.11)$) -- ($(30:\r)+(0.11,-0.11)$);
}
\node at (20:7.15) [below right] {$\theta_2$};
\node at (40:7.15) [above left] {$\theta_1$};
  (25:8.35) arc[start angle=25,end angle=35,radius=8.35];
  (35:8.2) arc[start angle=35,end angle=25,radius=7.65];
\end{tikzpicture}
\caption{$\xi$-plane}
\label{fig:stokes-automorphism-zeta-plane}
\end{figure}

For the exponential integral, the ambiguity of the ray for Laplace transform is related to the ambiguity of the choice of Lefschetz thimble. Take $\hbar$ to be valued in $\C^\times$ and write
$$
\hbar=|\hbar|e^{i\theta}.
$$
Then the flow of $\re(S/\hbar)$ is the same as that of $\re(e^{-i\theta}S)$. The magic of holomorphic Morse function $S$ tells that along the  the downward gradient flow of $\re(S/\hbar)$ :
\begin{align*}
&(1)~ \re(e^{-i\theta} S)\ \text{is {decreasing}}, &\qquad (2)~ \im (e^{-i\theta} S) \ \text{is {constant}}.
\end{align*}
The corresponding Lefschetz thimble is represented by (here we specify its $\theta$-dependence)
$$
\mathcal J_p^\theta=\{z\in X\mid \text{$z$ flows to $p$ at time $t=+\infty$}\}.
$$

With our assumptions on $S$, at generic $\theta$, the values of $\im(e^{-i\theta}S)$ at critical points are all distinct, so all such constructed thimbles are disjoint. If we sit at some $\theta_*$ such that
$$
\im(e^{-i\theta_*}S(p))=\im(e^{-i\theta_*}S(q))
$$
for two critical points $p,q$, then one could have flow trajectories that connect one critical point to another.  Such $\theta_*$ is called a Stokes phase and the ray in the direction $e^{i\theta_*}$ is called a Stokes ray. 

Assume we have a trajectory that flows from $q$ to $p$ at a Stokes phase $\theta_*$ (this requires $\re(e^{-i\theta_*}S(p))<\re(e^{-i\theta_*}S(q))$ by the downward gradient flow). Then the Borel transform $\hat \phi_p(\xi)$ will see a singularity in the direction $e^{i\theta_*}$ at 
$$
\omega=S(q)-S(p).
$$
Thus Borel singularities correspond to relative critical values with respect to a reference critical point.

In  \cite{KS-2022,KS-2024}, Kontsevich and Soibelman investigated analyticity and resurgence phenomena in the context of Lefschetz thimbles and their wall-crossing behavior, and develop a Floer-theoretic framework for exponential integrals. Our study of the complexified path space in this work follows the perspective of holomorphic Morse–Floer theory and connects it to the alien calculus, as we explain next.

\subsection{Connecting trajectory and Alien operator}\label{sec:intro-Alien} Borel singularities lead to rich algebraic structures on the global analytic nature of divergence series. They are captured by the so-called Alien operators.  

Let $\widetilde\phi(\hbar)$ 
be a $1$--Gevrey formal series (see Section \ref{sec:Gevrey-Borel}). Let
\[
\widehat\phi(\xi)=\mathcal B\widetilde\phi(\xi)
\]
be its Borel transform.  Suppose that $\widehat\phi$ admits analytic
continuation from the origin to a singular point $\omega\in\C^*$, and the germ is integrable at $\omega$. Then we can obtain a formal series in $\hbar$, denoted by  $$
\Delta_{\omega}^{+}\widetilde\phi
$$
by extracting the local
variation of the Borel transform at $\omega$, recentering it at the origin,
and applying inverse Borel transform. See Section \ref{sec:heat-alien} for the precise definition and references.

$\Delta_{\omega}^{+}$ is called the positive alien operator. The corresponding pointed alien operator is
\begin{equation*}
\dot\Delta_{\omega}^{+}
:=
e^{-\omega/\hbar}\Delta_{\omega}^{+}.
\end{equation*}
The exponential factor restores the action shift carried by the Borel
singularity.  For instance, if a formal solution with exponential weight
$e^{-\Phi_0/\hbar}$ has a Borel singularity at
\[
\omega=\Phi_1-\Phi_0,
\]
then $\dot\Delta_{\omega}^{+}$ produces a contribution with weight
\[
e^{-\Phi_1/\hbar}.
\]

The finite-dimensional analysis of \cite{LiLiTang2026} provides the guiding
principle for the comparison below.  There, the pointed alien operator acts on
the formal saddle expansion $\widetilde I_p$ attached to a critical point
$p$, and produces the expansion $\widetilde I_q$ attached to another
critical point $q$ when
\[
\omega=S(q)-S(p)
\]
is a Borel singularity.  More precisely, if
\[
\dot\Delta_{\omega}^{+}\,\widetilde I_p
=
N_{pq}\,\widetilde I_q,
\]
then $N_{pq}$ agrees with the corresponding Picard--Lefschetz
wall-crossing coefficient; in elementary situations, this is the signed
count of direct connecting trajectories at the relevant Stokes phase:
$$
N_{pq}=\text{signed}\ \sharp \{\text{flow}\ \gamma(t): \gamma(-\infty)=q, \gamma(+\infty)=p\}.
$$
This identification is established in \cite{LiLiTang2026} by comparing the
Stokes automorphism on the resurgent side with the thimble jump formula on
the Picard--Lefschetz side.

Here is an example for illustration. Consider $X=\C^*$ and the exponential integral 
\[
\int_{\Gamma}e^{-\frac{1}{2\hbar}\left(z-\frac{1}{z}\right)}\frac{\dd z}{z}.
\]
There are two critical points $p_{\pm}=\pm i$, which lead to two asymptotic series $ I_{\pm}(\hbar)$. The Stokes phases appear at $\theta_*=\frac{\pi}{2}\ \text{or}\ \frac{3\pi}{2}$. 
\begin{figure}[H]
\centering

\begin{tikzpicture}[>=Latex, scale=0.95]

\tikzset{
  cyl/.style={
    gray!65,
    line width=0.9pt
  },
  cylfill/.style={
    fill=gray!10,
    draw=none
  },
  waist/.style={
    gray!75,
    densely dashed,
    line width=0.9pt
  },
  crit/.style={
    circle,
    fill=black,
    inner sep=1.5pt
  },
  Jcurve/.style={
    black,
    line width=1.25pt,
    postaction={decorate},
    decoration={
      markings,
      mark=at position 0.75 with {\arrow{Latex[length=2.2mm]}}
    }
  },
  JcurveBack/.style={
    black,
    line width=1.25pt,
    dashed,
    postaction={decorate},
    decoration={
      markings,
      mark=at position 0.75 with {\arrow{Latex[length=2.2mm]}}
    }
  },
  traj/.style={
    blue!75!black,
    line width=2.1pt,
    postaction={decorate},
    decoration={
      markings,
      mark=at position 0.82 with {\arrow{Latex[length=2.4mm]}}
    }
  },
  trajback/.style={
    blue!75!black,
    line width=2.1pt,
    dashed,
    postaction={decorate},
    decoration={
      markings,
      mark=at position 0.82 with {\arrow{Latex[length=2.4mm]}}
    }
  }
}
\def\rx{1.15}
\def\ry{0.33}
\def\H{2.25}

\newcommand{\drawCylinderBase}{
\fill[cylfill] (-\rx,-\H) rectangle (\rx,\H);

\draw[cyl,dashed] (-\rx,\H)
    .. controls (-0.55,\H+\ry) and (0.55,\H+\ry) ..
    (\rx,\H);
\draw[cyl,dashed] (-\rx,-\H)
    .. controls (-0.55,-\H+\ry) and (0.55,-\H+\ry) ..
    (\rx,-\H);

\draw[cyl] (-\rx,-\H) -- (-\rx,\H);
\draw[cyl] (\rx,-\H) -- (\rx,\H);

\draw[cyl] (-\rx,\H)
    .. controls (-0.55,\H-\ry) and (0.55,\H-\ry) ..
    (\rx,\H);
\draw[cyl] (-\rx,-\H)
    .. controls (-0.55,-\H-\ry) and (0.55,-\H-\ry) ..
    (\rx,-\H);
\draw[waist] (0,0) ellipse[x radius=\rx, y radius=\ry];

\node[crit] (wp) at (-\rx,0) {};
\node[crit] (wm) at (\rx,0) {};
\node[left] at (-\rx-0.08,0.12) {$p_+$};
\node[right] at (\rx+0.08,0.12) {$p_-$};
}

\begin{scope}[shift={(-4.2,0)}]
  \node at (0,\H+0.85) {$\theta=\frac\pi2-\delta$};
  \drawCylinderBase

\draw[Jcurve]
(\rx,0) .. controls  (0.5,-0.3) and  (-0.5,-0.2)..
    (-0.85,0.1)
    --
    (-0.85,\H-0.1);
\draw[JcurveBack]
    (-0.85,-\H+0.1)
    --
    (-0.85,-0.1)
    .. controls (-0.5,0.2) and (0.5,0.3) ..
    (\rx,0);
\draw[Jcurve]
    (-\rx,-\H)--(-\rx,\H);
  \node[black] at (-1.55,1.35) {$\mathcal J_+^{<}$};
  \node[black] at (-0.5,1.35) {$\mathcal J_-^{<}$};
\end{scope}

\begin{scope}[shift={(0,0)}]
  \node at (0,\H+0.85) {$\theta_*=\frac{\pi}{2}$};
  \drawCylinderBase

\draw[traj]
    (-\rx,0)
    .. controls (-0.70,-0.42) and (0.70,-0.42) ..
    (\rx,0);
\draw[Jcurve]
    (-\rx,-\H)--(-\rx,\H);
\draw[trajback]
    (-\rx,0)
    .. controls (-0.70,0.42) and (0.70,0.42) ..
    (\rx,0);
\end{scope}

\begin{scope}[shift={(4.2,0)}]
  \node at (0,\H+0.85) {$\theta=\frac\pi2+\delta$};
  \drawCylinderBase
\draw[Jcurve]
(\rx,0) .. controls  (0.5,-0.2) and  (-0.5,-0.3)..
    (-0.85,-0.1)
    --
    (-0.85,-\H-0.1);
\draw[JcurveBack]
    (-0.85,\H+0.1)
    --
    (-0.85,0.1)
    .. controls (-0.5,0.2) and (0.5,0.3) ..
    (\rx,0);
\draw[Jcurve]
    (-\rx,-\H)--(-\rx,\H);
  \node[black] at (-1.55,1.35) {$\mathcal J_+^{>}$};
\node[black] at (-0.5,-1.35) {$\mathcal J_-^{>}$};
\end{scope}

\end{tikzpicture}

\end{figure}

At $\theta_*=\frac{\pi}{2}$, there are two connecting trajectories from $p_+$ to $p_-$. The two Lefschetz thimbles exhibit topological change when crossing the Stokes phase. On the other hand, explicit computation via analytic continuation shows
$$
 \dot\Delta_{\omega}^{+}\widetilde I_-=
2\,\widetilde I_+.
$$
The coefficient $2$ coincides precisely with the two connecting trajectories above. See \cite{LiLiTang2026} for details. 

\subsection{Holomorphic Morse-Floer Theory}\label{sec:intro-MF} In this paper, we formulate an infinite-dimensional Picard–Lefschetz problem of Morse–Floer type following Witten's idea of analytic continuation of quantum mechanics \cite{Witten2010}. In \cite{Witten2010}, the path integral is formulated on the phase space, and an appropriate complexification of the target is a hyperK\"{a}hler manifold. Our formulation is  instead modeled on the path integral over the configuration space, which could be understood as the adiabatic limit of the hyperK\"ahler construction.

Let $(M,g)$ be a Riemannian manifold. Fix $\tau>0$ and $a,b\in M$. Consider the Sobolev path space
\[
\PP_{a,b}(M):=
\bigl\{\gamma\in H^{1}([0,\tau],M)\,:\,\gamma(0)=a,\ \gamma(\tau)=b\bigr\}
\]
equipped with the energy functional (the $\frac{1}{4}$ convention is for convenience of resurgence analysis)
\[
\mathcal E(\gamma):=\frac{1}{4}\int_{0}^{\tau}g(\dot\gamma(t),\dot{\gamma}(t))\,\dd t.
\]
Critical points of $\mathcal E$ correspond to geodesics in $M$ from $a$ to $b$. 

We are interested in Picard–Lefschetz theory on path spaces. To set things up, we need a holomorphic structure. Assume $(M,g)$ is real analytic and extends to {an appropriate} complexification  $(M_\C,g_{\C})$ where $M_{\C}$ is a complex manifold, $g_{\C}$ is a holomorphic metric such that $M$ is a connected component of the real locus of $M_{\C}$ and $g$ is the restriction of $g_{\C}$. We discuss a few related constructions of complexification in Section \ref{sec: complexification}. Then the natural candidate of our compexified path space is 
\[
\mathcal P_{a,b}(M_{\C})
=\left\{\gamma\in H^1([0,\tau],M_{\C}) :
\gamma(0)=a,\ \gamma(\tau)=b\right\}
\]
which inherits a complex structure from that of $M_{\C}$.  The complexified energy by
\[
\mathcal E_{\C}(\gamma)
:=
\frac{1}{4}\int_0^\tau
g_{\C}(\dot\gamma(t),\dot\gamma(t))\,\dd t
\]
can be viewed as a holomorphic function on the infinite-dimensional complex manifold $\mathcal P_{a,b}(M_{\C})$. 

Note that if we normalize the path by re-parametrization 
$$
\tilde \gamma(t):=\gamma(\tau t): [0,1]\to M_{\C},
$$
then 
$$
\mathcal E_{\C}(\gamma)=\frac{1}{4\tau}\int_0^1
g_{\C}(\dot{\tilde\gamma}(t),\dot{\tilde \gamma}(t))\,\dd t.
$$
Thus $\tau$ plays the analogous role of $\hbar$. Motivated by Feynman-Kac formula, Picard–Lefschetz theory of such type is related to the resurgence problem of heat kernel at time $\tau$. The following theorem on the Borel summability of heat kernel at small time provides a first evidence. 

\begin{thm}[Theorem \ref{thm:local-borel-summability}]
\label{thm:local-borel-summability-introduction}
Let $(M,g)$ be real analytic and $\Delta$ be the  Laplace--Beltrami
operator acting on functions
\[
  \Delta f = - \mathrm{div}(\nabla f).
\]
Let $V$ be a sufficiently small compact
convex neighborhood with the heat kernel asymptotic expansion 
\[
K(\tau;x,y)\sim (4\pi \tau)^{-n/2}e^{-d(x,y)^2/(4\tau)}\sum_{j\ge0}u_j(x,y)\tau^j, \quad (x,y)\in V\times V. 
\]
Then the coefficient series in time $\tau$
\[
\widetilde u(\tau;x,y)=\sum_{j\ge0}u_j(x,y)\tau^j
\]
is $1$--Gevrey, uniformly for $(x,y)\in V'\times V'$ with
$V'\Subset V$.  Hence its Borel transform
\[
\widehat u(\xi;x,y)=\sum_{j\ge0}\frac{u_j(x,y)}{j!}\xi^j\quad \text{converges near $\xi=0$.  }
\]
\end{thm}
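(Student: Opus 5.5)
We need Gevrey-1 bounds of the form $|u_j(x,y)|\le C A^j j!$, uniformly on $V'\times V'$. The plan is to use the Hadamard–Minakshisundaram–Pleijel transport equations together with a Cauchy-majorant argument in the complex domain.

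First, work in normal coordinates centred at $y$, which is legitimate because $V$ is small and convex. Write $r=d(x,y)$ and let $\theta(x,y)=\det(\dd\exp_y)$ be the volume density in normal coordinates. The coefficients are then determined recursively by
\[
u_0=\theta^{-1/2},\qquad
u_j(x,y)=\theta^{-1/2}(x,y)\int_0^1 s^{j-1}\,\theta^{1/2}(x_s,y)\,\bigl(-\Delta_x u_{j-1}\bigr)(x_s,y)\,\dd s,
\]
where $x_s$ is the point at parameter $s$ along the geodesic from $y$ to $x$. In normal coordinates this is $x_s=s\,x$. The sign convention is arranged so that $(\partial_\tau+\Delta)K=0$.

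Second, complexify. Since $g$ is real analytic, the metric coefficients, $\theta$, $\theta^{-1/2}$ and $\exp_y$ all extend holomorphically to a complex polydisc neighbourhood $U_\rho=\{|z|<\rho\}$ in normal coordinates, uniformly in $y\in V'$. The extension of $\theta$ is nonvanishing there, after shrinking $V$. The recursion above extends verbatim to holomorphic functions on $U_\rho$. It preserves star-shaped domains, because the integral uses only the points $sz$ with $s\in[0,1]$, and these stay in $U_\rho$ whenever $z$ does.

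Third, and this is the key step, set up a nested-domain estimate. Let $M_j(r')=\sup_{U_{r'}}|u_j|$. The Laplacian is a second-order operator with bounded holomorphic coefficients, so the Cauchy estimates give
\[
\sup_{U_{r'-\delta}}|\Delta u|\le \frac{C}{\delta^2}\sup_{U_{r'}}|u|.
\]
Iterating naively with shrinkages $\delta=\rho/(2j)$ produces a bound of order $C^j j^{2j}$, which behaves like $(j!)^2$. That is only Gevrey-2, so this is the expected main obstacle: the estimate must exploit the factor $s^{j-1}$ and the fact that the loss is really one derivative per step, not two.

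To obtain the $1/j$ gain, I would use the standard Nagumo-norm or majorant technique for Fuchsian-type transport equations. The operator $\mathcal R_j$ that maps $f$ to $\int_0^1 s^{j-1}f(sz)\,\dd s$ inverts $r\partial_r+j$. Write $\Delta=-\partial_r^2+\dots$ in polar form, and use weighted norms $\sup_{U_{r'}}(\rho-|z|)^{2j}|u_j|$ in the spirit of Nagumo. This should yield the recursive inequality
\[
N_j\le C\,\frac{j^2}{j}\,N_{j-1}\cdot\frac{1}{\text{(domain loss)}},
\]
which gives $N_j\le C A^j j!$.

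Alternatively, and perhaps more cleanly, I would compare with a majorant series. Replace the coefficients of $\Delta$ by a dominating constant-coefficient analytic majorant. Then show that the formal solution $\sum u_j\tau^j$ is majorized by the expansion of an explicit kernel whose coefficients are known to grow like $j!$, for instance one built from $\theta^{-1/2}$ of a model hyperbolic-type metric. This reduces the problem to a single explicit computation.

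Finally, restrict the resulting bound to real points $(x,y)\in V'\times V'$. This gives $|u_j|\le CA^j j!$ uniformly there, which is the $1$--Gevrey property. Convergence of $\widehat u(\xi;x,y)$ for $|\xi|<1/A$ then follows immediately.
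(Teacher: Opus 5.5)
Your ingredients are the paper's: the transport formula with weight $s^{j-1}$, holomorphic extension of the coefficients to a complex neighbourhood, and Cauchy estimates for $\Delta_x$ on shrinking domains. You diverge only in the bookkeeping, and there your diagnosis of the obstacle is mistaken. The Gevrey-2 count $C^j j^{2j}$ comes from dropping the factor $\int_0^1 s^{j-1}\,\dd s=1/j$. Once that factor is kept, the ``naive'' nested-domain iteration already closes, and this is exactly the paper's argument (Lemmas \ref{lem:recursionestimate} and \ref{lem:factorialU}). Take target index $N$, a uniform step $\delta/N$, and no domain loss in the transport step (as in your star-shaped setting). Then
\[
\sup_{U_{r-\delta}}|u_N|\le\prod_{j=1}^{N}\frac{C}{j}\Bigl(\frac{N}{\delta}\Bigr)^{2}\sup_{U_r}|u_0|=\Bigl(\frac{C}{\delta^2}\Bigr)^{N}\frac{N^{2N}}{N!}\sup_{U_r}|u_0|\le\Bigl(\frac{Ce^2}{\delta^2}\Bigr)^{N}N!\,\sup_{U_r}|u_0|,
\]
using $N^N\le e^N N!$. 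So the loss is not ``really one derivative per step'': $\Delta_x$ genuinely costs two derivatives, and the resulting extra power of $j$ is absorbed by the $1/j$ from the transport step.

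Your Nagumo route is a valid repackaging of the same two facts. Set $w(z)=\rho-|z|$ and $\|f\|_p=\sup w^p|f|$. Nagumo's lemma gives $\|\Delta f\|_{p+2}\le C(p+1)(p+2)\|f\|_p$. Since $w(sz)\ge w(z)$, your operator satisfies $\|\mathcal R_j f\|_{2j}\le \frac{C}{j}\|f\|_{2j}$. Together these give $N_j\le C j N_{j-1}$. The ``domain loss'' factor in your displayed recursion should simply be deleted. This route saves you the $N$-dependent chain of radii, and nothing more.

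Your choice of normal coordinates at $y$ is a genuine advantage over the paper's set-up. Because $x_s=sx$, the domains $U_r$ are exactly invariant under the transport step. The paper instead works on tubes $\mathcal P_\rho(V')$ in both variables, needs the inclusion $(\Gamma_\sigma(x,y),y)\in\mathcal P_{\rho_2}(V')$, and so pays radius in the transport step as well. That inclusion must then hold uniformly along the whole chain of radii, which your set-up makes automatic. The majorant-series alternative is unnecessary and too vague as stated.
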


We prove Theorem \ref{thm:local-borel-summability-introduction} in Section \ref{subsec:proof-borel-summability}. It allows us to do resurgence analysis in the plane of Borel transform of time. In particular, the analytic continuation of the Borel transform $\widehat u(\xi;a,b)$ is expected to reflect the Picard–Lefschetz geometry of
$$
\tau\mathcal E_{\C}: \PP_{a,b}(M_{\C})\to \C. 
$$

The first surprise comes from the critical points. Standard calculus of variation shows that critical points of $\mathcal E_{\C}$ are holomorphic geodesics in $M_{\C}$ from $a$ to $b$ (see Section \ref{sec:PL-problem-path}). They contain usual real geodesics inside $M$, as well as possible new holomorphic geodesics that go out of $M$ into $M_{\C}$ before reaching $b$. Let $a,b\in M$ be close and such that there exists a unique minimal geodesic in $M$ from $a$ to $b$.  The heat kernel $K(\tau;a,b)$ is constructed purely from the data on $M$, but the Borel transform $\widehat u(\xi;a,b)$ will detect geometric information on $M_{\C}$ that is invisible from $M$. For example, consider the set of Borel singularities of $\widehat u(\xi;a,b)$. As explained in Section \ref{sec:intro-Borel}, they correspond to critical values of $\mathcal E_{\C}$ relative to the energy of the minimal geodesic, and we shall observe those new holomorphic geodesics from $\widehat u(\xi;a,b)$. This can be viewed as the Picard–Lefschetz analogue of the quantum tunneling phenomenon.

As an example, consider the two dimensional hyperboloid space
\[
\HH_{\R}^2=\{-x_0^2+x_1^2+x_2^2=-1\mid x_i\in\R, x_0>0\}, \quad \dd s^2=-\dd x_0^2+\dd x_1^2+\dd x_2^2\big|_{\HH_{\R}^2}.
\] 
It has a natural complexification 
\[
\HH_{\C}^2:=\bigl\{z=(z_0,z_1,z_2)\in \C^3:\ -z_0^2+z_1^2+z_2^2=-1\bigr\}, \quad \dd s^2_{\C}=-\dd z_0^2+\dd z_1^2+\dd z_2^2\big|_{\HH_{\C}^2}.
\]
Consider two end-points  
$$
a=(1,0,0),~ b=(\cosh d,\sinh d,0) \in \HH_{\R}^2,
$$ 
where $d$ is the real geodesic distance between $a$ and $b$. Then the set of all holomorphic geodesics in $\HH_{\C}^2$ from $a$ to $b$ is indexed by $k\in \Z$ and explicitly given by (see Example \ref{eg:H^2-PL})
\begin{equation*}
\gamma_{{k},d}(t)=\bigl(\cosh(c_{k,d}t),\,\sinh(c_{k,d}t),\,0\bigr),
\qquad 0\le t\le \tau
\end{equation*}
where $c_{k,d}:=\frac{d+2\pi i k}{\tau}$. Only $\gamma_{0,d}(t)$ lies in $\HH_{\R}^2$ and all the other $\gamma_{{k},d}(t)$'s are invisible from $\HH_{\R}^2$. The critical values of the energy functional are 
$$
\tau\mathcal E_{\C}(\gamma_{{k},d})=\frac{(d+2\pi i k)^2}{4}.
$$ 

The exact heat kernel formula on $\HH^2_{\R}$ is known by
\begin{equation*}
K_{\HH_{\R}^{2}}(\tau;a,b)
=
\frac{\sqrt{2}e^{-\tau/4}}{(4\pi \tau)^{3/2}}\int_{d^2/4}^{\infty}\frac{e^{-w/\tau}}{\sqrt{\cosh\sqrt{4w}-\cosh d}}~\dd w
\end{equation*}
which has already the form of Laplace-transform. In fact, the Borel coordinate is related to $w$ by 
$$
\xi=w-d^2/4. 
$$
The Borel singularities occur at
$$
{\cosh\sqrt{4w}}=\cosh d, \quad \text{i.e.}\quad \xi=\frac{(d+2\pi i k)^2}{4}-\frac{d^2}{4}, \quad k\in \Z.
$$
They correspond precisely to the critical values of $\tau\mathcal E_{\C}$ relative to the real geodesic $\gamma_{{0},d}$
$$
\tau\mathcal E_{\C}(\gamma_{{k},d})-\tau\mathcal E_{\C}(\gamma_{{0},d})
$$
as expected. This example was explained to us by Kontsevich \cite{Kont,Kont2020}. We make precise the following Conjecture \ref{conj:singularity} which is essentially due to Kontsevich. 

\begin{conj}[\cite{Kont, Kont2020}]\label{conj:singularity} Singularities of the Borel transform of the heat kernel $K(\tau;a,b)$ correspond to (relative) critical values of the complexified energy functional $\tau \mathcal E_{\C}$. 
\end{conj}

The Borel transform of the  short-time heat kernel asymptotics in Conjecture \ref{conj:singularity}  is established by Theorem \ref{thm:local-borel-summability-introduction}. We prove a version of Conjecture \ref{conj:singularity} in Theorem \ref{thm:Borel-singularity-complex-geodesic} 
under the assumption of its WKB asymptotic expansion along analytic continuation.

We next consider the infinite-dimensional analogue of Picard–Lefschetz/Alien correspondence as explained in Section \ref{sec:intro-Alien}. We start with the Picard–Lefschetz side. Let us choose a hermitian metric $h$ on $M_\C$ which induces a $L^2$-hermitian inner product $<\cdot,\cdot>_\hbar$ on $\mathcal P_{a,b}(M_{\C})$ by 
$$
<\delta \gamma_1, \delta \gamma_2>_h:=\int_0^\tau h(\delta \gamma_1, \delta \gamma_2) dt, \qquad \delta \gamma_1, \delta \gamma_2 \in T_\gamma \mathcal P_{a,b}(M_{\C}).
$$
Let $\theta$ be a chosen phase. Consider the downward gradient flow of 
$$
\re(e^{-i\theta}\mathcal E_\C). 
$$
It can be described by a map 
\[
\gamma(t, s): [0,\tau] \times [0,+\infty) \to M_{\C},\qquad \gamma(0,s)=a, \ \gamma(\tau,s)=b
\]
satisfying the Morse flow equation  (see Section \ref{sec:PL-problem-path} for details)
\begin{equation}\label{eqn:morse-flow}
\partial_s\gamma=-\nabla_{L^2}\re(e^{-i\theta}\mathcal E_\C)(\gamma) \tag{$\dagger$}.
\end{equation}

Let us next consider the resurgence side. Assume $a,b$ are chosen close enough such that the heat kernel $K(\tau;a,b)$ has an asymptotic expansion in time $\tau$
$$
K(\tau;a,b)\sim \widetilde K_0= (4\pi \tau)^{-n/2}e^{-d_0^2/(4\tau)}\sum_{j\ge0}u_{0,j}\tau^j
$$
which is Borel summable by Theorem \ref{thm:local-borel-summability-introduction}. Here $d_0$ is the length of the minimal real geodesic $\gamma_0$ in $M$ from $a$ to $b$, and $u_0(a,a)=1$ by the standard heat kernel normalization. 

Let $\gamma_1$ be another holomorphic geodesic in $M_{\C}$ from $a$ to $b$, and the corresponding (complexified) length is $d_1$. By the above discussion, we expect a singularity of the Borel transform of $\tilde K_0$ at 
$$
\omega= \tau \mathcal E(\gamma_1)- \tau \mathcal E(\gamma_0)=\frac{d_1^2}{4}-\frac{d_0^2}{4}. 
$$
Let 
$$
\widetilde K_1= (4\pi \tau)^{-n/2}e^{-d_1^2/(4\tau)}\sum_{j\ge0}v_{1,j}\tau^j
$$
be the formal solution of the heat equation with $v_{1,0}$ normalized by the Jacobian factor along the holomorphic geodesic $\gamma_1$. This is obtained similarly to the standard heat kernel $\tilde K_0$ by solving the transport equation recursively but with the new phase factor $e^{-d_1^2/(4\tau)}$ (see Theorem \ref{thm:heat-kernel-expansion-complex-geodesic}). Such $\widetilde K_1$ is expected to be the asymptotic expansion of certain path integral over the Lefschetz thimble associated to $\gamma_1$. 

Based on the finite-dimensional Picard–Lefschetz/Alien correspondence, we conjecture the following result on the pointed Alien derivative of the heat kernel asymptotic expansion (see also Conjecture \ref{conj:alien-PL-integrality-heat}). 

\begin{conj}[Picard–Lefschetz/Alien correspondence for heat kernel]\label{conj:alien} The pointed Alien derivative of $\widetilde K_0$ at the singularity $\omega$ is 
$$
\dot\Delta_{\omega}^{+} \widetilde K_0=N \widetilde K_1
$$
where $N$ is the appropriate signed counting of solutions of 
$$
\gamma: [0,\tau]\times (-\infty,\infty)\to M_{\C}
$$
satisfying 
\begin{enumerate}
\item the downward gradient Morse flow equation \eqref{eqn:morse-flow} at the Stokes phase $\theta=\theta_*$ where
$$
\im(e^{-i\theta_*}\mathcal E_\C(\gamma_1))=\im(e^{-i\theta_*}\mathcal E_\C(\gamma_0)), \qquad \re(e^{-i\theta_*}\mathcal E_\C(\gamma_1))>\re(e^{-i\theta_*}\mathcal E_\C(\gamma_0))
$$
\item the boundary condition 
$$
\gamma(0,s)=a, \qquad \gamma(\tau,s)=b, \qquad \forall\,s\in\R
$$
\item the initial and final condition 
\[\lim_{s\rightarrow-\infty}\|\gamma(\cdot,s)-\gamma_0\|_{H_t^1}=0,\qquad \lim_{s\rightarrow+\infty}\|\gamma(\cdot,s)-\gamma_1\|_{H_t^1}=0.\]
\end{enumerate} 

\end{conj}

In the second part of this paper, we perform a concrete test of Conjecture \ref{conj:alien} on the hyperbolic plane $\HH_{\R}^2$. For each $k\in \Z$ we have a critical path $\gamma_{{k},d}$ and the corresponding Borel singularity 
$$
\omega_k=\frac{(d+2\pi i k)^2}{4}-\frac{d^2}{4}. 
$$
In Section \ref{sec:H^2-resurgence}, we compute explicitly each asymptotic series $\widetilde K_k$ and show that
\[
\dot\Delta_{\omega_k}^{+}\,\widetilde{K}_0
=2\,\widetilde{K}_k.
\]
By Conjecture \ref{conj:alien}, this predicts two connecting trajectories from $\gamma_k$ to $\gamma_0$ at the corresponding Stokes phase. In Section \ref{sec:H^2-flow}, we prove this prediction for $k=1$ and $d$ sufficiently small by solving the Morse flow equation via perturbation method. This furnishes a nontrivial test of Conjecture \ref{conj:alien}. 

\smallskip
\subsubsection*{Acknowledgments}

The authors would like to thank Maxim Kontsevich, Ryszard Nest, David Sauzin, Shanzhong Sun, Yifan Wu, Junrong Yan, Shing-Tung Yau for helpful discussions. S.~L. was supported by NSFC No.12571068.  Y.~L. was partially supported by BNSFC No.JR25001.  X.~T. was supported by BNSFC No.JR25001, BNSFC Youth Project Youth Project No.1254041 and NSFC Youth Project No.12501079. Part of this work was done while S.~L. was a visiting fellow of the International Centre for Mathematical Sciences  at Edinburgh in spring 2026, and he thanks the institute for its hospitality and provision of excellent working environment.

\part{Theory}

\section{Heat Kernel and Resurgence}\label{sec:heat-resurgence}

In this section, we establish the Borel-summability of heat kernel short-time asymptotic series on real analytic metric space. This allows us to do resurgence analysis in the plane of Borel transform of time, and set up the Picard–Lefschetz problem of Morse–Floer type on path spaces. 

\subsection{Borel-Laplace transform}\label{sec:Gevrey-Borel}
We will briefly review the Gevrey asymptotics and Borel-Laplace transform that will be used in our resurgence analysis. For more details, see \cite{Balser,Costin,Ecalle1981I,Ecalle1981II,Ecalle1985III,MitschiSauzin2016,SauzinSplitting}.
\begin{defn}\label{defn:series}
Let
$\widetilde f(\hbar)
=
\sum\limits_{m\ge0}a_m\hbar^m
\in \C[[\hbar]]$. Let $f(\hbar)$ be a holomorphic function defined in a sector
$$
\text{S}^{\theta_1,\theta_2,r}
=\left\{\hbar\in\C
\mid
0<|\hbar|<r,
\theta_1<\arg\hbar<\theta_2
\right\}. 
$$
\begin{itemize}
\item We say that $f$ admits $\widetilde f$ as an asymptotic
expansion in $\text{S}^{\theta_1,\theta_2,r}$, and write
\begin{equation*}
f(\hbar)\sim \widetilde f(\hbar),
\qquad
\hbar\to0,\quad \hbar\in\text{S}^{\theta_1,\theta_2,r},
\end{equation*}
if there exists constants $C_{N,\text{S}'}>0$ such that
\begin{equation}\label{eq:definitionasymptotic}
\left|
f(\hbar)
-
\sum_{m=0}^{N-1}a_m\hbar^m
\right|
\le
C_{N,\text{S}'}|\hbar|^N,
\qquad
\hbar\in \text{S}' =\overline{\text{S}^{\theta_1',\theta_2',r'}},
\end{equation}  for every
$N\ge0$ and for every closed subsector $\overline{\text{S}^{\theta_1',\theta_2',r'}}\Subset\text{S}^{\theta_1,\theta_2,r}$ ($0<r'<r,
\theta_1<\theta_1'<\theta_2'<\theta_2$). 

\item We say that $\widetilde f$ is a $1$--Gevrey series if
there are constants $C,A>0$ such that 
$$
|a_m|\le C A^m m!\quad \text{for} \quad m\ge0. 
$$
It is denoted by $\widetilde f\in\C[[\hbar]]_1$.
\item We say that $f$ admits $\widetilde f$ as a $1$--Gevrey asymptotic
expansion in $\text{S}$ if there are constants $C,A>0$ such that
$C_{N,\text{S}'}= C A^N N!$ in \eqref{eq:definitionasymptotic}.
\end{itemize}
\end{defn}

\begin{defn}
Let $\alpha\in\C$ with $\operatorname{Re}\alpha>0$. For a formal series $\widetilde g(\hbar)
=
\hbar^\alpha \widetilde f(\hbar)
=
\sum\limits_{m\ge0}a_m\hbar^{m+\alpha}
\in \hbar^\alpha\C[[\hbar]]$,
we define its Borel transform by
\begin{equation*}
\widehat g(\xi)
:=
\mathcal B\widetilde g(\xi)
=
\sum_{m\ge0}
\frac{a_m}{\Gamma(m+\alpha)}
\xi^{m+\alpha-1}
\in
\xi^{\alpha-1}\C[[\xi]].
\end{equation*}
If $\widetilde f \in\C[[\hbar]]_1$, then the above formal
series converges near origin, and the condition
$\operatorname{Re}\alpha>0$ ensures that $\widehat g$ is locally integrable
at $\xi=0$. Throughout
this definition, fractional powers of $\xi$ are defined using a principle branch
of $\log\xi$. We also use the convention
\begin{equation*}
\mathcal B(a_0)=a_0\delta_0,
\end{equation*}
where $\delta_0$ denotes the Dirac delta distribution at the origin in the
Borel plane. With this convention, the Borel transform extends to the case
$\alpha=0$ when the constant term is allowed: for $\widetilde f =\sum\limits_{n\geq 0} a_n \hbar^n\in\C[[\hbar]]_1$,
\[\widehat f:=\mathcal{B} \widetilde f = a_0\delta_0+\sum\limits_{n\geq0}\frac{a_n}{n!}\xi^n \in \C\delta_0 \oplus \C\{\xi\}.\]
\end{defn}

\begin{defn}\label{defn:Laplace}
Let $\theta\in\R$. Suppose that $\widehat f=\mathcal B\widetilde f$ is integrable at $0$ and it  admits
analytic continuation to a sectorial neighborhood of the ray
$e^{i\theta}\R_{\ge0}$ and has at most exponential growth there. The
Laplace transform of $\widehat f$ in direction $\theta$ is defined by
\begin{equation*}
\mathcal L^\theta\widehat f(\hbar)
:=
\int_{0}^{e^{i\theta}\infty}
e^{-\xi/\hbar}\widehat f(\xi)\,\dd\xi,
\end{equation*}
whenever the integral converges. We call
$\mathcal L^\theta\widehat f$ the Borel--Laplace sum of
$\widetilde f$ in the direction $\theta$. Here we use the convention that $\mathcal{L}^{\theta} a_0 \delta_0 := a_0$.
We denote the composition of Laplace transform and Borel transform by
\begin{equation*}
\mathcal S_\theta := \mathcal{L}^\theta \mathcal{B}.
\end{equation*}
\end{defn}

Let $\widetilde f(\hbar)=\sum\limits_{m\ge0}a_m\hbar^m\in\C[[\hbar]]_1$, and set
\begin{equation*}
\widetilde g(\hbar)=\hbar^\alpha\widetilde f(\hbar),
\end{equation*}
where either $\operatorname{Re}\alpha>0$, or $\alpha=0$ with the above
$\delta_0$ convention. It is well-known that if the regular part of the
Borel transform $\widehat g$ admits analytic continuation to a sectorial
neighborhood of the ray $e^{i\theta}\R_{\ge0}$, with at most
exponential growth there, then 
\begin{equation*}
\hbar^{-\alpha}\mathcal S_\theta\widetilde g(\hbar) \sim \sum_{m\ge0}a_m\hbar^m  \quad \quad \text{as } \ \hbar\rightarrow 0 \ \text{ in }\  \text{S}^{\theta-\frac{\pi}{2},\theta+\frac{\pi}2,r}
\end{equation*}
as a $1$--Gevrey asymptotic expansion in the sector for sufficiently small $r>0$.

\subsection{Borel summability of heat kernel asymptotics}\label{subsec:proof-borel-summability} We study the short-time asymptotic expansion of the heat kernel and its Borel summability. Throughout this subsection, the Gevrey parameter $\hbar$ used in previous subsection will be identified with the heat-time variable $\tau$\footnote{We will use $\tau$ as the time variable in the heat kernel expansion, while $t\in[0,\tau]$ denotes the parameter of path in Section \ref{sec:H^2-flow}.} (see Section \ref{sec:intro-MF}). 

Let $(M,g)$ be a smooth connected $n$--dimensional Riemannian manifold. We use the non--negative Laplace--Beltrami
operator $\Delta\ge 0$, acting on functions,
\[
\Delta f=-\mathrm{div}(\nabla f).
\]
When $(M,g)$ is complete with bounded geometry assumptions, there exists a unique heat
kernel $K(\tau;x,y)$ which is smooth in $(\tau,x,y)\in(0,\infty)\times
M\times M$. The fundamental local result is
\[
K(\tau;x,y)\sim(4\pi \tau)^{-n/2}\exp\Big(-\tfrac{d(x,y)^2}{4\tau}\Big)
\sum_{k\ge 0} u_k(x,y)\tau^k,\quad \tau\rightarrow0^+.
\]

\begin{rmk} The precise meaning of  this asymptotic expansion in the sense of Definition \ref{defn:series} is understood as
\begin{equation*}
(4\pi \tau)^{n/2}
\exp\!\left(\frac{d(x,y)^2}{4\tau}\right)
K(\tau;x,y)
\sim
\sum_{k\ge0}u_k(x,y)\tau^k \in \C[[\tau]].
\end{equation*}
\end{rmk}

On a local chart $\Omega$ with coordinate $(x^i)_{1\le i\le n}$, we can write the Laplace--Beltrami operator as
\[\Delta f=-\sum_{i,j}\partial_i(g^{ij}\partial_jf)+\sum_{i}a^i(x)\partial_if(x)+b(x)f(x),\]
where $a^i(x),b(x)$ are explicitly expressed in terms of the Riemannian metric $g$. If the metric $g$ is real analytic, then $a^i(x),b(x)$ are also real analytic. In this case, we have the following result.

\begin{thm}
\label{thm:local-borel-summability}
Let $(M,g)$ be real analytic. Let $V$ be a sufficiently small compact
convex neighborhood with the heat kernel asymptotic expansion 
\begin{equation}\label{eq:heatGevrey}
K(\tau;x,y)\sim (4\pi\tau)^{-n/2}e^{-d(x,y)^2/(4\tau)}\sum_{j\ge0}u_j(x,y)\tau^j, \quad (x,y)\in V\times V. 
\end{equation}
Then $\sum\limits_{j\ge0}u_j(x,y)\tau^{j}$
is $1$--Gevrey power series uniformly for $(x,y)\in V'\times V'$ with
$V'\Subset V$. 
\end{thm}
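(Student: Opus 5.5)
We will prove the Gevrey bound $|u_j(x,y)|\le C A^j j!$ directly from the Minakshisundaram--Pleijel transport recursion, using holomorphic extension of the coefficients and Cauchy estimates. The point is that the recursion loses two derivatives per step. For analytic data on nested complex domains, such a loss costs only a factorial in $j$, which is exactly $1$--Gevrey.

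\textbf{Step 1: the recursion.} Fix $y\in V$ and use normal coordinates $x=\exp_y(r\omega)$ centered at $y$. Write $\theta(x,y)$ for the Jacobian of $\exp_y$. The coefficients are then determined by
\[
u_0=\theta^{-1/2},\qquad u_j(x,y)=-\,u_0(x,y)\int_0^1 s^{j-1}\,\frac{(\Delta_x u_{j-1})(x_s,y)}{u_0(x_s,y)}\,\dd s,\quad x_s=\exp_y\bigl(s\exp_y^{-1}x\bigr),
\]
with the sign fixed by $\Delta\ge0$.

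\textbf{Step 2: holomorphic extension.} Since $g$ is real analytic, $d(x,y)^2$, $\theta$, $u_0$ and the coefficients $g^{ij},a^i,b$ of $\Delta$ extend holomorphically to a complex neighborhood $\widetilde V\times\widetilde V$ of $V\times V$ in $M_\C\times M_\C$. On that neighborhood $\theta$ has no zeros, provided $V$ is small, so that $\exp_y$ is a holomorphic diffeomorphism there. Work in the complex normal coordinates $z\in\C^n$ of $x$. Consider the polydisc-type domains $D_\rho=\{z:|z_i|<\rho\}$ (for $V$ small, $\rho$ in an interval $(\rho_1,\rho_0]$ suffices to cover $V'$). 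The segment map $z\mapsto sz$ with $s\in[0,1]$ preserves each $D_\rho$, so the recursion defines holomorphic $u_j$ on $D_{\rho_0}$, jointly holomorphic in $y$ as well.

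\textbf{Step 3: nested Cauchy estimates.} This is the main step. Set $M_j(\rho)=\sup_{D_\rho}|u_j|$. The operator $\Delta$ has bounded holomorphic coefficients on $D_{\rho_0}$, and $1/u_0$ is bounded there. The Cauchy estimate for second derivatives gives, for $\rho'<\rho$,
\[
M_j(\rho')\le \frac{B}{j}\,\frac{M_{j-1}(\rho)}{(\rho-\rho')^2},
\]
where the factor $1/j$ comes from $\int_0^1 s^{j-1}\dd s$. A naive iteration over $j$ equally spaced shells loses $(j/\delta)^{2j}$, which is too much: divided by $j!$ it still leaves roughly $j^j$, i.e. $2$--Gevrey. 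To reach $1$--Gevrey we exploit the weight $s^{j-1}$ more carefully, and this is where I expect the real difficulty. Standard Nagumo-norm or majorant arguments (as in Cauchy--Kovalevskaya for equations of the form $\partial_\tau u=Pu$ with $P$ of order two) give exactly $j!$ growth. Concretely, introduce the Nagumo-type weight $\|u\|_j=\sup_{\rho}(\rho_0-\rho)^{2j}M(\rho)$. With this weight the Cauchy loss from $\rho$ to $\rho'=\rho-(\rho_0-\rho)/(2j)$ is absorbed: the factor $(2j)^2/(\rho_0-\rho)^2$ is balanced by $(1+1/(2j))^{2j}\le e$. This yields $\|u_j\|_j\le (C'/j)\,j^2\|u_{j-1}\|_{j-1}$, i.e. $\|u_j\|_j\le C'^j j!$. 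Evaluating on $D_{\rho_1}$, which is at distance $\rho_0-\rho_1>0$ from the boundary and contains $V'$, gives $|u_j|\le C A^j j!$.

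\textbf{Step 4: uniformity.} The sup over $y\in V'$ is controlled because every constant above depends only on bounds of the holomorphic extensions over the compact set $\overline{V'}$. Hence $\sum u_j\tau^j$ is $1$--Gevrey uniformly on $V'\times V'$. The obstacle to verify carefully is the hidden dependence of the pulled-back radial integration on the shrinking domain: the map $z\mapsto sz$ must preserve the Nagumo weight, which holds because $D_\rho$ is star-shaped about $0$.
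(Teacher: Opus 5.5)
Your argument is correct and follows the same route as the paper: the transport formula, holomorphic extension of the coefficients, the gain $1/j$ from $\int_0^1 s^{j-1}\,\dd s$, and Cauchy estimates costing two derivatives per step.

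One remark in your Step~3 is wrong, though: iterating over $j$ equally spaced shells does \emph{not} give only $2$--Gevrey growth. It gives $M_j(\rho_0-\delta)\le (C/\delta^2)^j\,(j^{2j}/j!)\,M_0(\rho_0)$. Since $j!\ge (j/e)^j$ implies $j^{2j}\le e^{2j}(j!)^2$, this is at most $(Ce^2/\delta^2)^j\,j!\,M_0(\rho_0)$. In other words, the leftover factor $j^j\le e^j j!$ is still $1$--Gevrey, not $2$--Gevrey. This is exactly how the paper concludes in Lemma~\ref{lem:factorialU}. Your Nagumo-norm computation is valid, but it is an equivalent bookkeeping device rather than a necessary refinement.

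Where your version improves on the paper is the choice of domains. You keep $y$ real, which is legitimate since $\Delta_x$ acts only in $x$. You then use polydiscs $D_\rho$ in complex normal coordinates centered at $y$. These are star-shaped, so $z\mapsto sz$ preserves each $D_\rho$; the transport step costs no domain, and only the Cauchy step shrinks the radius.

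The paper instead works on tubes $\mathcal P_\rho(V')$ in both variables. In Lemma~\ref{lem:factorialU} it applies the transport bound between arbitrary radii $\rho''<\rho'$ with $\rho'-\rho''$ of order $\delta/N$. This requires complex geodesic segments issuing from the smaller tube to stay in the larger one with that small additive loss. Lemma~\ref{lem:recursionestimate}, however, provides the inclusion only for one fixed pair $\rho_1<\rho_2$. Your construction makes this point automatic.
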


\begin{rmk}
The Gevrey estimate of Theorem \ref{thm:local-borel-summability} is a local parametrix statement. In some concrete analytic heat equations, it
is supplied by Borel summability theorems such as those of Lutz--Miyake--Sch\"afke, Costin--Park--Takei and Harg\'e \cite{CPT,Har,Har2,LMS}.
\end{rmk}

The rest of this subsection is devoted to the proof of Theorem
\ref{thm:local-borel-summability}. It is well known that the coefficients in
the short-time heat kernel expansion agree with the coefficients obtained
from the formal solution of the heat equation after extracting the exponential
factor and imposing the standard normalization $u_0(x,x)=1$. Therefore, our
goal is to prove that the coefficients $u_j(x,y)$ obtained recursively from
the transport equations satisfy the $1$--Gevrey estimate
\begin{equation*}
|u_j(x,y)|\leq AB^j j!
\end{equation*}
uniformly for $(x,y)\in V'\times V'$, where $V'\Subset V$ and $A,B>0$ are
constants independent of $j$. 

\vskip 1em

\noindent \textbf{Parametrix, recursion relation and complex extension}
\vskip 1em

First, as in the smooth case, we have the short-time asymptotic expansion of $K(\tau;x,y)$ 
\[K(\tau;x,y)\sim (4\pi\tau)^{-n/2}e^{-d(x,y)^2/(4\tau)}\widetilde{u}(\tau;x,y),\qquad \widetilde{u}(\tau;x,y)=\sum_{i=0}^{\infty}u_i(x,y)\tau^i.\]
For simplicity, denote $H(\tau;x,y)=(4\pi \tau)^{-n/2}e^{-d(x,y)^2/(4\tau)}$. Then one can compute
\begin{equation}\label{eq:heateq}
(\partial_\tau+\Delta_x)(H\widetilde{u})=H\left(\partial_\tau \widetilde{u}+\frac{T}{\tau}\widetilde{u}+\Delta_x \widetilde{u}\right)
\end{equation}
where $T$ is a first-order differential operator of the form
\[
T\widetilde{u}=
\sum_{i,j}\frac12 g^{ij}(x)\partial_i d^2(x,y)\partial_j\widetilde{u}
+B(x,y)\widetilde{u},
\]
with an analytic scalar coefficient $B(x,y)=-\Delta_x\left(\frac{d(x,y)^2}{4}\right)-\frac n2$ on $V\times V$ determined by $g^{ij}$, $a^i(x)$, $b(x)$ and $d(x,y)$. In particular, the coefficients of the first differential operator $T$ are real analytic.

Expanding Equation \eqref{eq:heateq}, we get the recursion relation for $u_j(x,y)$:
\[
Tu_0(x,y)=0,\qquad (T+j+1)u_{j+1}(x,y)=-\Delta u_j(x,y).   
\]
In particular, the normalization $u_0(x,x)=1$ is forced by the delta initial condition of the heat kernel. These coefficient equations admit analytic solutions on $V\times V$.

\smallskip

The analytic hypothesis allows us to extend all coefficients holomorphically to a small complex neighborhood. That is, after shrinking $V$ once and for all, we may choose $\rho>0$ so that every $u_j$ constructed by the recursion extends holomorphically to
\[
\mathcal P_{\rho}(V'):=\bigcup_{(x,y)\in V'\times V'}P_{\rho}(x)\times P_{\rho}(y)\subset\C^n\times\C^n.
\]
Here $P_\rho(x)$ is the complex coordinate polydisc of radius $\rho$ centered at $x$, so that $\mathcal P_\rho(V')$ is a small complex polydisc neighborhood of $V'\times V'$.

\begin{lem}\label{lem:recursionestimate}
After shrinking $V$ if necessary, and fixing $V'\Subset V$, there exist radii $0<\rho_1<\rho_2$ and a constant $C_T>0$ such that the following holds: every coefficient $u_j$ extends holomorphically to $\mathcal P_{\rho_2}(V')$, and we have the uniform estimate
\begin{equation}\label{eq:Tbound-polydisc}
\|u_{j+1}\|_{\rho_1}
\le \frac{C_T}{j+1}\,\|\Delta_x u_j\|_{\rho_2},
\qquad
\|f\|_{\rho}:=\sup_{\mathcal P_{\rho}(V')}|f|.
\end{equation}
In particular, restricting to the real locus gives
\[
\sup_{V'\times V'}|u_{j+1}|
\le \frac{C_T}{j+1}\sup_{V\times V}|\Delta u_j|.
\]
\end{lem}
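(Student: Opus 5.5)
The plan is to solve the transport recursion explicitly by integrating along radial lines in (complexified) normal coordinates centred at $y$. The factor $1/(j+1)$ then comes from $\int_0^1 s^j\,\dd s$.

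\emph{Step 1: put $T$ in radial form.} Fix $y\in V$ and use Riemannian normal coordinates $X=\exp_y^{-1}(x)$. In these coordinates $\tfrac12 g^{ij}\partial_i d^2\,\partial_j = X\cdot\partial_X = r\partial_r$. Hence
\[
T = r\partial_r + B(x,y), \qquad B(y,y)=-\Delta_x\!\left(\tfrac{d^2}{4}\right)\Big|_{x=y}-\tfrac n2=0 .
\]
The solution of $Tu_0=0$ with $u_0(y,y)=1$ is the usual inverse square root of the Jacobian of $\exp_y$ (the van Vleck factor). It is real analytic and equals $1$ on the diagonal. After shrinking $V$, $u_0$ is bounded away from $0$ and $\infty$. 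Write $u_{j+1}=u_0 w_{j+1}$. Since $Tu_0=0$, the recursion becomes
\[
(r\partial_r + j+1)\,w_{j+1} = -\frac{\Delta_x u_j}{u_0}.
\]
Its unique solution that is regular at $X=0$ is
\[
w_{j+1}(X) = -\int_0^1 s^{j}\,\frac{\Delta_x u_j}{u_0}(sX,y)\,\dd s .
\]
Regularity at $X=0$ is exactly what singles out the heat-kernel coefficients; this is checked by differentiating $s^{j+1}w(sX)$ in $s$.

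\emph{Step 2: complexify.} Since $g$ is analytic, $(x,y)\mapsto \exp_y^{-1}(x)$ and $u_0^{\pm1}$ extend holomorphically to $\mathcal P_{\rho_2}(V')$ for $\rho_2$ small, uniformly in $y$, and $u_0$ stays nonvanishing there. The integral formula then makes sense verbatim for complex $X$, with $sX$ the complex radial contraction. It defines the holomorphic extension of $u_{j+1}$ by uniqueness of analytic continuation, so every $u_j$ extends to $\mathcal P_{\rho_2}(V')$ by induction.

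For the estimate, the key geometric claim is the following. For $(x,y)\in\mathcal P_{\rho_1}(V')$ and $s\in[0,1]$, the point with complex normal coordinate $s\exp_y^{-1}(x)$ lies in $\mathcal P_{\rho_2}(V')$, for suitable $\rho_1<\rho_2$. Granting this, take sup norms in the integral formula:
\[
\|u_{j+1}\|_{\rho_1}\le \|u_0\|_{\rho_1}\,\|u_0^{-1}\|_{\rho_2}\,\|\Delta_x u_j\|_{\rho_2}\int_0^1 s^j\,\dd s = \frac{C_T}{j+1}\|\Delta_x u_j\|_{\rho_2},
\]
with $C_T=\|u_0\|_{\rho_1}\|u_0^{-1}\|_{\rho_2}$, which is independent of $j$. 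The real statement follows by restricting to the real locus. There, convexity of $V$ makes the geodesic segment from $y$ to $x$, i.e. the curve $s\exp_y^{-1}(x)$, stay inside $V$.

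\emph{Main obstacle.} The hard part is the geometric claim: complex radial segments in normal coordinates must not leave the slightly larger polydisc neighbourhood. My first approach is perturbative. At real $(x,y)$ the segment lies in $V$ by convexity. The map $(x,y,s)\mapsto \exp_y(s\exp_y^{-1}(x))$ is holomorphic, and near $V\times V$ it is uniformly Lipschitz in $(x,y)$ (uniformly in $s\in[0,1]$). So a complex perturbation of size $\rho_1$ of the endpoints moves the segment by at most $L\rho_1$. Choosing $\rho_1<\rho_2/L$ then works, provided the holomorphic extension of $\exp^{-1}$ exists on a neighbourhood of uniform size; that is guaranteed after shrinking $V$ once, using compactness and the inverse function theorem. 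One also needs $\|u_0^{-1}\|_{\rho_2}$ finite, which follows since $u_0$ is close to $1$ on a small enough complex neighbourhood of the diagonal.
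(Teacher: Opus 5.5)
Your proposal is correct and is essentially the paper's proof. The paper also integrates the transport equation along the holomorphically extended geodesic $\Gamma_\sigma(x,y)$, bounds the ratio $u_0(x,y)/u_0(\Gamma_\sigma(x,y),y)$ by $C_T$, gets $1/(j+1)$ from $\int_0^1\sigma^j\,\dd\sigma$, and simply asserts the inclusion $(\Gamma_\sigma(x,y),y)\in\mathcal P_{\rho_2}(V')$ for $(x,y)\in\mathcal P_{\rho_1}(V')$ that you justify with your Lipschitz argument.

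Two points of domain bookkeeping need tightening, and both are equally implicit in the paper. First, your perturbation argument places the contracted segments near $V\times V'$, not near $V'\times V'$. To land in $\mathcal P_{\rho_2}(V')$ you should take $V'$ itself geodesically convex. Second, the induction extending every $u_j$ to one fixed domain only closes on a domain invariant under $X\mapsto sX$, such as a small complex ball in normal coordinates. The reason is that $\mathcal P_\rho(V')$ is sent only into the larger $\mathcal P_{L\rho}(V')$.
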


\begin{proof}
We work in a fixed analytic coordinate chart and still write $\Delta_x$ for the holomorphic extension of the Laplace--Beltrami operator.  

After shrinking $V$, we may assume that any two points of $V$ are joined by a unique real analytic geodesic depending holomorphically on the endpoints. 
Then in the chosen analytic coordinates,
the real analytic dependence of the geodesic on its endpoints extends to a
holomorphic map on a small complex
neighborhood of $V'\times V'$, satisfying the holomorphic geodesic equation (see Equation \eqref{eq:holomorphic-geodesic} for an explicit explanation). Denote such a holomorphic extension by
\[
\Gamma_\sigma(x,y),
\qquad
0\le \sigma\le 1.
\]
So $\Gamma_0(x,y)=x$ and $\Gamma_1(x,y)=y$. 

After possibly shrinking $V$ further and choosing $0<\rho_1<\rho_2$
small enough, we may assume that
\[
(\Gamma_\sigma(x,y),y)\in \mathcal P_{\rho_2}(V'),\qquad\text{for all }(x,y)\in \mathcal P_{\rho_1}(V'),~0\le \sigma\le 1.
\]
In the holomorphic parametrix construction, the coefficients $u_j$
are still determined by the transport equations 
\[
u_{j+1}(x,y)=\pm u_0(x,y)
\int_0^1\sigma^j\,
u_0(\Gamma_\sigma(x,y),y)^{-1}
\bigl(\Delta_x u_j\bigr)(\Gamma_\sigma(x,y),y)
\,\dd\sigma .
\]
The sign is irrelevant for the estimate later. The holomorphicity of each $u_j$ follows from this formula:  
\begin{itemize}
    \item The function $u_0$ is
holomorphic and non-vanishing on a sufficiently small complex
neighborhood of the diagonal.
    \item By induction, suppose that $u_j$
extends holomorphically to the relevant complex tube.  Then
$\Delta_xu_j$ is holomorphic there as well, because $\Delta_x$ has holomorphic coefficients in our analytic coordinates.  
   \item The function $(x,y)\mapsto \Gamma_\sigma(x,y)$ is holomorphic and depends
continuously on $\sigma\in[0,1]$.
\end{itemize}
Then the integrand above is holomorphic
in $(x,y)$ and continuous in $\sigma$.  Therefore the integral
defines a holomorphic function of $(x,y)$.  This proves recursively
that every $u_j$ extends holomorphically to
$\mathcal P_{\rho_2}(V')$, after shrinking the tube if necessary.

\vskip 0.1cm
It remains to prove the estimate.  Since $u_0$ is holomorphic and
non-vanishing on the chosen tube, the ratio $\frac{u_0(x,y)}{u_0(\Gamma_\sigma(x,y),y)}$
is uniformly bounded for $(x,y)\in \mathcal P_{\rho_1}(V'),\
0\le \sigma\le 1$. Set
\[
C_T
:=
\sup_{\substack{(x,y)\in \mathcal P_{\rho_1}(V')\\0\le\sigma\le1}}
\left|
\frac{u_0(x,y)}{u_0(\Gamma_\sigma(x,y),y)}
\right| .
\]
Using the transport formula and the inclusion $(\Gamma_\sigma(x,y),y)\in \mathcal P_{\rho_2}(V')$, 
we obtain
\[
\begin{aligned}
|u_{j+1}(x,y)|\le C_T\int_0^1
\sigma^j\left|(\Delta_x u_j)(\Gamma_\sigma(x,y),y)\right|
\,\dd\sigma\le C_T\|\Delta_x u_j\|_{\rho_2}
\int_0^1 \sigma^j\,\dd\sigma
=\frac{C_T}{j+1}\|\Delta_x u_j\|_{\rho_2}.
\end{aligned}
\]
Taking the supremum over
$(x,y)\in \mathcal P_{\rho_1}(V')$ gives 
$\|u_{j+1}\|_{\rho_1}
\le
\frac{C_T}{j+1}
\|\Delta u_j\|_{\rho_2}$.
Restricting to the real locus immediately gives
\[
\sup_{V'\times V'} |u_{j+1}|
\le\frac{C_T}{j+1}\sup_{V\times V}|\Delta_x u_j|,
\]
where the right-hand side may be taken over $V\times V$, because the
real geodesic segments joining points of $V'$ remain in $V$ after the initial shrinking of $V$. 
\end{proof}

\vskip 1em
\noindent \textbf{Cauchy estimates and the estimates for $u_j$ and the remainder}
\vskip 1em

\begin{lem}\label{lem:Cauchy}
Let $f$ be holomorphic on $P_\rho(z_0)\subset \C^n$ and suppose that $\sup_{P_\rho(z_0)}|f|\le M$.
Then for any multiindex $\alpha\in \mathbb{N}^n$,
\[
|\partial^\alpha f(z_0)|\le M~\frac{\alpha!}{\rho^{|\alpha|}}.
\]
More generally, for any $0<\rho'<\rho$, we have
\[\sup_{z\in P_{\rho'}(z_0)}|\partial^\alpha f(z)|\le M~\frac{\alpha!}{(\rho-\rho')^{|\alpha|}}\]
\end{lem}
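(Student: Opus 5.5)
The plan is the classical Cauchy integral formula argument, applied one variable at a time on the polydisc. For the first statement, fix $z_0$ and a multiindex $\alpha$, and pick any $0<r<\rho$. Because the closed polydisc $\overline{P_r(z_0)}$ sits inside $P_\rho(z_0)$, the iterated Cauchy formula on the distinguished boundary torus $T_r(z_0)=\{|\zeta_k-z_{0,k}|=r,\ k=1,\dots,n\}$ gives
\[
\partial^\alpha f(z_0)=\frac{\alpha!}{(2\pi i)^n}\int_{T_r(z_0)}\frac{f(\zeta)}{\prod_{k}(\zeta_k-z_{0,k})^{\alpha_k+1}}\,\dd\zeta_1\cdots \dd\zeta_n .
\]
This formula comes from applying the one-variable Cauchy formula successively in each coordinate, which is legitimate since $f$ is holomorphic, hence continuous, on a neighborhood of the closed polydisc. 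It is then differentiated under the integral sign.

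Next we bound the integral. On the torus we have $|f|\le M$ and $|\zeta_k-z_{0,k}|=r$, and each circle has length $2\pi r$. This gives
\[
|\partial^\alpha f(z_0)|\le \alpha!\,M\prod_k \frac{2\pi r}{2\pi\, r^{\alpha_k+1}}=M\frac{\alpha!}{r^{|\alpha|}}.
\]
Letting $r\uparrow\rho$ yields the claimed bound $M\alpha!/\rho^{|\alpha|}$. The limit is needed because $f$ is only assumed holomorphic and bounded on the open polydisc, so we cannot integrate directly on the boundary of $P_\rho(z_0)$.

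For the second statement, take any $z\in P_{\rho'}(z_0)$. Each coordinate satisfies $|z_k-z_{0,k}|<\rho'$, so by the triangle inequality the polydisc $P_{\rho-\rho'}(z)$ is contained in $P_\rho(z_0)$. Hence $\sup_{P_{\rho-\rho'}(z)}|f|\le M$. Applying the first statement centered at $z$ with radius $\rho-\rho'$ gives $|\partial^\alpha f(z)|\le M\alpha!/(\rho-\rho')^{|\alpha|}$, and taking the supremum over $z\in P_{\rho'}(z_0)$ finishes the proof.

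There is no real obstacle. The only points needing care are the limiting argument $r\uparrow\rho$ and the polydisc inclusion, which is where it matters that the paper's polydiscs are products of coordinate discs of equal radius.
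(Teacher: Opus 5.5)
Your proposal is correct and follows essentially the same route as the paper: the one-variable Cauchy integral formula iterated coordinatewise over the polydisc, with the shifted estimate obtained from polydiscs of radius $\rho-\rho'$ centered at points $z\in P_{\rho'}(z_0)$. Your limit $r\uparrow\rho$ and your use of the inclusion $P_{\rho-\rho'}(z)\subset P_\rho(z_0)$ for $z$ in the open polydisc make precise a point the paper glosses over, namely that it integrates over circles which may touch the boundary of the domain where $f$ is holomorphic.
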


\begin{proof}
First, let us prove the one-variable case. Assume $z_0=0$. Cauchy formula says, 
\[f^{(k)}(z)=\frac{k!}{2\pi i}\oint_{|\zeta-z|=r}\frac{f(\zeta)}{(\zeta-z)^{k+1}}\,\dd\zeta.\]
When $|z|\leq\rho'$, choose $r=\rho-\rho'$, then we have
\[|f^{(k)}(z)|\leq \frac{k!}{2\pi}\cdot(2\pi r)\cdot\frac{M}{r^{k+1}}=\frac{k!M}{r^k}=\frac{k!M}{(\rho-\rho')^k}.\]

For the multi-variable in the polydisc, just apply the one-variable Cauchy integral formula in each complex coordinate and iterate.
\end{proof}

The key lemma of the estimate for $u_j(x,y)$ is as follows:

\begin{lem}\label{lem:factorialU}
There exist constants $C_0,A_0>0$ (depending on $V'\subset V$) such that the coefficients satisfy
\begin{equation*}\label{eq:ujBound}
\|u_j\|_{\infty,V'\times V'} \le C_0~A_0^{j}~j!,
\quad j\ge0.
\end{equation*}
\end{lem}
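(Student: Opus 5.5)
The plan is to iterate Lemma \ref{lem:recursionestimate} and Lemma \ref{lem:Cauchy} along a nested family of complex tubes whose radii shrink with $j$. The essential point is that each step loses one factor $1/(j+1)$ from the $\sigma^j$ integration but gains two derivatives from $\Delta_x$. If the tube widths shrink like $1/j$, the two Cauchy derivatives cost about $j^2$. The net cost per step is then of order $j$, which compounds to a $j!$ growth.

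\textbf{Step 1 (a version of Lemma \ref{lem:recursionestimate} with variable radii).} Fix the maximal radius $R:=\rho_2$ on which $u_0$ is holomorphic and non-vanishing and on which the holomorphic geodesics $\Gamma_\sigma$ are defined. For radii $R\ge r>r'>0$, I want the estimate
\[
\|u_{j+1}\|_{r'}\le \frac{C_T}{j+1}\,\|\Delta_xu_j\|_{r''},\qquad r'\le r''<r .
\]
The shape is that of Lemma \ref{lem:recursionestimate}, but with $C_T$ uniform in the radii. This is where I expect the main obstacle. The transport formula evaluates $\Delta_xu_j$ at $\Gamma_\sigma(x,y)$, which lies on the complex geodesic between $x$ and $y$. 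If both $x$ and $y$ sit in tubes of radius $r'$, then $\Gamma_\sigma(x,y)$ should lie within distance about $r'+O(\text{small})$ of the real locus. I will argue this using convexity and the fact that $\Gamma_\sigma$ is holomorphic and equal to real geodesics on the real locus. Lipschitz continuity of $\Gamma_\sigma$ in the complexified endpoints gives $\Gamma_\sigma(x,y)\in P_{Lr'}$, with $L$ independent of $j$. Consequently, the radii should be compared after rescaling by the fixed constant $L$. Equivalently, I choose the convex set $V$ so small that $\Gamma_\sigma$ maps $\mathcal P_{r}(V')$ into a tube of radius $Lr$ around a slightly larger $V''$. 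The Cauchy step then only needs to lose a fixed fraction of the radius, which is harmless.

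\textbf{Step 2 (Cauchy estimate for $\Delta_x$).} Since $\Delta_x$ is a second-order operator with coefficients bounded on $\mathcal P_R$, Lemma \ref{lem:Cauchy} gives
\[
\|\Delta_xf\|_{r-\delta}\le \frac{C_\Delta}{\delta^2}\|f\|_r \qquad\text{for }0<\delta<r\le R,
\]
where the lower-order terms are absorbed because $\delta\le R$.

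\textbf{Step 3 (nested iteration).} Fix $j$ and target radius $r_*=R/2$ (after the Step 1 rescaling). Set $r_k=R-k\delta$ with $\delta=R/(2j)$ for $k=0,\dots,j$. Combining Steps 1 and 2 gives
\[
\|u_{k+1}\|_{r_{k+1}}\le \frac{C_TC_\Delta}{(k+1)\delta^2}\|u_k\|_{r_k}.
\]
Iterating $j$ times yields
\[
\|u_j\|_{R/2}\le \frac{(C_TC_\Delta)^j(2j)^{2j}}{j!\,R^{2j}}\|u_0\|_R .
\]
By Stirling's formula, $j^{2j}/j!\le e^{2j}\, j!$, so $\|u_j\|_{R/2}\le C_0A_0^j\,j!$ with $A_0=4e^2C_TC_\Delta/R^2$. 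Restricting to the real locus $V'\times V'$ proves Lemma \ref{lem:factorialU}.

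Choosing $\delta$ depending on the final index $j$ rather than on $k$ is the standard Nirenberg–Ovsyannikov trick, and it is exactly what produces $j!$ rather than $(j!)^2$.
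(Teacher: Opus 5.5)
Your Steps 2 and 3 are exactly the paper's argument. It too pairs the $1/(j+1)$ gain from the $\sigma^j$-integral with a $\delta^{-2}$ Cauchy loss, uses radii spaced by $\delta/N$ with $N$ the final index, and finishes with $N!\ge (N/e)^N$. The paper simply applies Lemma~\ref{lem:recursionestimate} at arbitrary radii $\rho''<\rho'$ with a uniform $C_T$, which that lemma does not provide. So you are right that this is the crux, but your Step~1 does not settle it.

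The inequality $\|u_{k+1}\|_{r_{k+1}}\le \frac{C_TC_\Delta}{(k+1)\delta^2}\|u_k\|_{r_k}$ with $r_k=r_{k+1}+\delta$ silently needs $L=1$. That is, the geodesic interpolation must map the radius-$r$ tube into itself for \emph{every} $r\le R$. A fixed Lipschitz constant $L>1$ is not harmless:
\begin{itemize}
\item It forces $r_k\ge Lr_{k+1}+\delta$, hence $r_0\ge L^j r_*$, and the enlarged base $V''$ would also grow at every step. So for large $j$ the scheme leaves the fixed tube of radius $R$.
\item Shrinking $r_*$ or $\delta$ with $j$ to compensate puts factors like $L^{j^2}$ into $\delta^{-2j}$ and destroys the $j!$ bound.
\end{itemize}
Moreover, $L>1$ genuinely occurs on your product tubes $\mathcal P_r(V')$, where both endpoints are complexified. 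Let $s\in[0,1]$ be the fraction of the way from $y$ to $x$ at which $\Delta_xu_j$ is sampled. The differential of the interpolation is $s\,w_x+(1-s)\,w_y+s(1-s)(E_xw_x+E_yw_y)$ with $\|E_x\|,\|E_y\|=O(\operatorname{diam}V)$, and the curvature term has no room to hide. For instance, on $S^2$ in latitude--longitude coordinates, the great circle through two points of equal latitude $\phi_0$ bulges poleward. The same happens after complexification with $\phi_0=ir$, which takes the midpoint out of the radius-$r$ tube.

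The repair is to complexify only $x$. Since $\Delta_x$ differentiates only in $x$, keep $y\in V'$ real as a parameter and take $V'$ geodesically convex. Run your iteration with $\|f\|_r:=\sup\{|f(x,y)|: x\in U_r,\ y\in V'\}$, where $U_r:=\bigcup_{p\in V'}P_r(p)$. Write $x=p+w$ with $p\in V'$ and $|w|_\infty<r$. The map $x\mapsto\Gamma_\sigma(x,y)$ is constant at $s=0$ and the identity at $s=1$. Hence its differential at $p$ is $sI+s(1-s)E$ with $\|E\|=O(\operatorname{diam}V)$, and its quadratic Taylor remainder is $O(s(1-s)|w|^2)$. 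Therefore
\[
|\Gamma_\sigma(p+w,y)-\Gamma_\sigma(p,y)|_\infty\le s|w|_\infty+C\,s(1-s)\,(\operatorname{diam}V+|w|_\infty)\,|w|_\infty\le |w|_\infty<r
\]
as soon as $C(\operatorname{diam}V+R)\le 1$, while $\Gamma_\sigma(p,y)\in V'$ by geodesic convexity. Thus every $U_r$ with $r\le R$ is mapped into itself and $C_T$ is uniform in $r$. Your Steps 2--3 then go through verbatim, and restricting to real $x$ gives the lemma.
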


\begin{proof}
First, for $0<\rho''<\rho'$, by Lemma \ref{lem:recursionestimate}, we have
\begin{equation}\label{eq:Tbound2}
\sup_{P_{\rho''}(x)\times P_{\rho''}(y)} |u_{j+1}|
 \le C_T\sup_{P_{\rho'}(x)\times P_{\rho'}(y)} \frac{|\Delta u_j|}{j+1}.
\end{equation}

\smallskip
Since $u_j$ is holomorphic on $P_\rho(x)\times P_\rho(y)$, by Cauchy estimate Lemma \ref{lem:Cauchy}, there exists a constant $C_L$ (depending on uniform bounds of $g^{ij},a^i,b$ on the polydisc) such that
\begin{equation}\label{eq:Lbound}
\sup_{P_{\rho'}(x)\times P_{\rho'}(y)} |\Delta_x u_j|
\le C_L(\rho-\rho')^{-2}
\sup_{P_{\rho}(x)\times P_{\rho}(y)} |u_j|.
\end{equation}

Choose $\rho\geq2\rho'-\rho''$, and combine inequalities \eqref{eq:Tbound2}--\eqref{eq:Lbound}, we get
\begin{align*}
\sup_{P_{\rho''}(x)\times P_{\rho''}(y)} |u_{j+1}|
 \le~&\frac{C_TC_L}{j+1}(\rho-\rho')^{-2}\sup_{P_{\rho}(x)\times P_{\rho}(y)} |u_j|\\
 \le~&\frac{4C_TC_L}{j+1}(\rho-\rho'')^{-2}\sup_{P_{\rho}(x)\times P_{\rho}(y)} |u_j|.
\end{align*}
\medskip
Now choose $N$ large enough, and choose a decreasing sequence of radii
\[\rho_0>\rho_1>\rho_2>\cdots>\rho_N\]
as follows
\[\rho_k=\rho_N+\frac{N-k}{N}\delta, \qquad \delta=\rho_0-\rho_N.\]
Then
\[\|u_N\|_{\rho_N}\leq\left(\frac{C}{\delta^2}\right)^N\frac{N^{2N}}{N!}\|u_0\|_{\rho_0}.\]
Using the weak version of Stirling formula
$N!\ge\left(\frac{N}{e}\right)^N$, we obtain
\[\|u_N\|_{\rho_N}\leq\left(\frac{Ce^2}{\delta^2}\right)^NN!\|u_0\|_{\rho_0}.\]
Then choosing $C_0=\|u_0\|_{\rho_0}$, $A_0=\frac{Ce^2}{\delta^2}$, we are done. 
\end{proof}

It remains to estimate the remainder.  Write
\[
K_N^{\rm loc}(\tau;x,y):=(4\pi\tau)^{-n/2}e^{-d(x,y)^2/(4\tau)}\sum_{j=0}^N u_j(x,y)\tau^j
=H(\tau;x,y)\sum_{j=0}^N u_j(x,y)\tau^j
\]
for the truncated local parametrix.  Then
\[
(\partial_\tau+\Delta_x)K_N^{\rm loc}(\tau;x,y)
=H(\tau;x,y)\tau^N E_N(x,y),
\]
where $E_N$ is an analytic function depending linearly on $\Delta_x u_N$ and lower-order terms.  Using Lemma \ref{lem:factorialU} and Cauchy
bounds for derivatives, we get
\[
\|E_N\|_{\infty,V_1\times V_1}\leq C_2 A_2^{N+1}(N+1)!,
\]
where $V'\Subset V_1\Subset V$.

We now localize the parametrix.  Choose a cutoff $\chi\in C_c^\infty(V\times V)$ which is identically one in a neighborhood of the diagonal over
$V_1\times V_1$, and in particular equals one on $V'\times V'$
after shrinking $V'$ if necessary.  Set
\[
\widehat K_N(\tau;x,y):=\chi(x,y)K_N^{\rm loc}(\tau;x,y).
\]
Then $\widehat K_N=K_N^{\rm loc}$ on $V'\times V'$.  Moreover,
\[
(\partial_\tau+\Delta)\widehat K_N=
\chi H\tau^N E_N+C_N,\qquad C_N:=[\Delta,\chi]K_N^{\rm loc}.
\]
Here $C_N$ is supported where derivatives of $\chi$ are nonzero;
hence it is supported away from the diagonal.

Let
\[
R_N(\tau;x,y):=K(\tau;x,y)-\widehat K_N(\tau;x,y).
\]
Since $\chi=1$ near the diagonal, $R_N$ has zero initial data.  Hence
\[
(\partial_\tau+\Delta)R_N=-\chi H\tau^N E_N
-C_N,\qquad\lim_{\tau\to0^+}R_N(\tau)=0 .
\]
By Duhamel's formula,
\[
\begin{aligned}
R_N(\tau;x,y)
=&-\int_0^\tau\int_M
K(\tau-t;x,z)\chi(z,y)H(t;z,y)t^N E_N(z,y)\,\dd z\,\dd t\\[0.1cm]
&-\int_0^\tau\int_MK(\tau-t;x,z)C_N(t;z,y)\,\dd z\,\dd t .
\end{aligned}
\]
The second integral is exponentially small as $\tau\to0^+$: 
\[
\int_0^\tau\int_MK(\tau-t;x,z)C_N(t;z,y)\,\dd z\,\dd t
=O(e^{-c/\tau}),\quad x,y\in V'
\]
for some $c>0$. 

For the first integral, the cutoff restricts the $z$-integration to a
compact subset of $V$, we have
\[
\int_MK(\tau-t;x,z)\chi(z,y)H(t;z,y)\,\dd z
\leq C H(\tau;x,y),\qquad 0<t<\tau .
\]
Therefore, for $x,y\in V'$,
\begin{align*}
|R_N(\tau;x,y)|&\leq C A_2^{N+1}(N+1)!H(\tau;x,y)
\int_0^\tau t^N\,\dd t +O(e^{-c/\tau})\\
&\leq C' A'^{N+1}(N+1)!\tau^{N+1-n/2}e^{-d(x,y)^2/(4\tau)}.
\end{align*}
Since $\widehat K_N=K_N^{\rm loc}$ on $V'\times V'$, this proves the
stated Gevrey remainder estimate in Theorem \ref{thm:local-borel-summability}.\qed

\subsection{Complexification}\label{sec: complexification}

Theorem \ref{thm:local-borel-summability} suggests that we could perform resurgence analysis of heat kernel on the plane of Borel transform of time. As explained in the introduction, this phenomenon is expected to reflect a version of Picard–Lefschetz theory on the complexified path space. A natural way to introduce such complexification is to extend the real-analytic manifold $(M,g)$ into $(M_\C,g_{\C})$ where $M_{\C}$ is a complex manifold, $g_{\C}$ is a holomorphic metric such that $M$ is a connected component of the real locus of $M_{\C}$ and $g$ is the restriction of $g_{\C}$. 



There are several closely related notions of complexification in the literature. Classical results of Bruhat--Whitney and Grauert imply that every paracompact real-analytic manifold admits a local/germ-level complexification, and one may often choose $M_{\C}$ Stein. There are two classical constructions for such complexifications, known as Grauert tubes. 
\begin{itemize}
\item[(1)] The first one, by Bruhat and Whitney \cite{BW}, is based on a gluing construction: complex analytic extensions of real analytic charts are glued together in a suitable way. 
\item[(2)] The second one was devised by Grauert \cite{Gr}, and developed by Akhiezzer and Gindikin \cite{AG}, Guillemin and Stenzel \cite{GS}, Lempert and Sz\"oke \cite{LS,Sz}, Patrizio and Wong \cite{PW}.
\end{itemize}

When $M$ is equipped with a real-analytic Riemannian metric $g$, and in local real-analytic coordinates $x=(x^1,\dots,x^n)$, write
\[
g=\sum_{i,j} g_{ij}(x)\,\dd x^i\dd x^j.
\]
Since the coefficients $g_{ij}$ are real analytic, they admit holomorphic extensions $g_{ij}^{\C}(z)$ to a small complex neighborhood. Hence one obtains a holomorphic symmetric $2$-tensor
\[
g_{\C}:=\sum_{i,j} g_{ij}^{\C}(z)\,\dd z^i\dd z^j ,
\]
which restricts to $g$ on the real locus. After shrinking the neighborhood if necessary, $\det(g_{ij}^{\C}(z))\neq 0$, so $g_{\C}$ is a holomorphic metric. This extension is canonical at the level of germs: different analytic charts or different local models give biholomorphically equivalent holomorphic metrics along $M$.

As in the real case, there is a unique torsion-free holomorphic connection $\nabla^{\C}$, also known as holomorphic Levi-Civita connection \cite{BiswasDumitrescuHRM},  satisfying
\[
\nabla^{\C} g_{\C}=0 .
\]
In local coordinates, its Christoffel symbols are the holomorphic extensions of the usual one:
\[
\Gamma^{k,\C}_{ij}
=
\frac12\sum_{\ell}(g_{\C})^{k\ell}
\bigl(
\partial_i g^{\C}_{j\ell}
+
\partial_j g^{\C}_{i\ell}
-
\partial_\ell g^{\C}_{ij}
\bigr).
\]
The geodesic equation also extends holomorphically:
\begin{equation}\label{eq:holomorphic-geodesic}
\ddot z^k+\sum_{i,j}\Gamma^{k,\C}_{ij}(z)\dot z^i\dot z^j=0.    
\end{equation}

Their solutions will be called holomorphic geodesics. 


\begin{rmk} 
For Lie groups, one also has
the universal complexification, characterized by a universal property for
holomorphic homomorphisms into complex Lie groups; see, for example,
\cite{Hochschild1965,Hochschild1966,HilgertNeeb2012,Neeb2000}.  For Riemannian manifolds with negative sectional curvature, the structure of a Grauert domain is obstructed to extend to the whole tangent bundle. There is an upper bound on the radius of a Grauert tube. See Lempert-Sz\"oke \cite{LS,Sz}. 
\end{rmk}


In this paper, we use a fixed admissible embedded complexification
\[
\iota:M\hookrightarrow M_{\mathbb C}.
\]
This means that $M$ is a connected component of the real locus of $M_{\C}$, locally modeled on
$(\mathbb C^n,\mathbb R^n)$, and that the metric $g$ extends to a
holomorphic metric $g_{\mathbb C}$ on $M_{\mathbb C}$.

\begin{assumption}\label{assumption}
We assume this complexification is sufficiently large for the analysis below: the
holomorphic Levi-Civita connection and the holomorphic geodesic flow are
defined on a domain containing all real endpoints and complexified initial
directions considered in the paper.  In this
relative sense, the complexification is holomorphically geodesically
complete for the family of geodesics under study.  We do not impose any
global maximality or uniqueness condition on $M_{\mathbb C}$ unless this
is explicitly stated.    
\end{assumption}


\begin{eg}[Compact symmetric spaces of rank 1]
The compact symmetric spaces of rank 1 are classified as follows:
\begin{itemize}
    \item[$(1)$] the $n$-sphere $S^n$, $n\geq2$;
    \item[$(2)$] the real projective space $\mathbb{RP}^n$, $n\geq2$;
    \item[$(3)$] the complex projective space $\mathbb{CP}^n$, $n\geq1$;
    \item[$(4)$] the quaternionic projective space $\mathbb{HP}^n$, $n\geq1$;
    \item[$(5)$] the Cayley projective plane $\mathbb{P}^2(\mathrm{Cayley})$.
\end{itemize}
Denote them by $M_i$ respectively according to the order. Their (good) complexifications\footnote{A good complexification is a smooth affine algebraic variety $U$ defined over $\R$ such that $U(\R)\cong M$ and $U(\R)\hookrightarrow U(\C)$ is a homotopy equivalence. Complexification $M_{\C}$ of $M$ with the condition that
\[\iota:M\hookrightarrow M_{\C} \quad\text{is a homotopy equivalence},\]
is also known as a minimal complexification.} are well known as follows (Morimoto-Nagano \cite{MN}, Huckleberry-Snow \cite{HS}, Azad \cite{Az}, Patrizio-Wong \cite{PW}).
\begin{itemize}
    \item[(I)] The standard models $M_{i,\C}$ of their complex affine algebraic manifolds are: 
\begin{itemize}
    \item[$(1)$] $M_{1,\C}=Q^n=$ complex affine hyperquadric, $n\geq2$;
    \item[$(2)$] $M_{2,\C}=\mathbb{P}^n(\C)-\bar{Q}^{n-1}=$ compact quadric, $n\geq2$;
    \item[$(3)$] $M_{3,\C}=\mathbb{P}^n(\C)\times\mathbb{P}^n(\C)-\mathbb{P}_{\infty}^{N-1}(\C)$, $n\geq1$, $N=(n+1)^2-1$;
    \item[$(4)$] $M_{4,\C}=\mathrm{Gr}(2,2n;\C)-\mathbb{P}_{\infty}^{N-1}(\C)$, $n\geq1$, $N=n(2n-1)-1$;
    \item[$(5)$] $M_{5,\C}$ is a 16 dimensional Stein manifold.
\end{itemize}
Here $\mathbb{P}_{\infty}^{N-1}(\C)$ is a hyperplane at infinity in $\mathbb{P}^N(\C)$.
   \item[(II)] There exists a real analytic strictly plurisubharmonic exhaustion $\rho_i:M_i^{\C}\rightarrow[1,\infty)$ such that 
\[M_i=\{\rho_i=1\},\qquad\dim_{\R}M_i=\dim_{\C}M_{i,\C}.\] Furthermore, the plurisubharmonic function $u_i=\cosh^{-1}\rho_i$ satisfies the homogeneous Monge-Amp\`ere equation on $M_{i,\C}-M_i$.  
\end{itemize}

\end{eg}

\begin{eg}[Kulkarni \cite{Ku}]
Let $G$ be a compact Lie group and $H$ be a closed subgroup of $G$. Then the homogeneous space $G/H$ admits a minimal affine algebraic complexification $G^{\C}/H^{\C}$. Furthermore, 
the hodge numbers of $G^{\C}/H^{\C}$ have the property
\[h^{p,q}(G^{\C}/H^{\C})=0,\quad p\neq q.\]
\end{eg}


\begin{eg}[Complexification of the real hyperbolic plane $\HH_{\R}^2$] This example will play a major role in this paper, and we will discuss it in some detail. We consider the following three models.
\begin{itemize}
\item[$(1)$] The upper half plane model: 
\[\HH_{\R}^2=\{z=x+iy\in\C\mid y>0\}\] 
with the metric
\[g=\frac{\dd x^2+\dd y^2}{y^2}=-\frac{4\dd z\,\dd\bar{z}}{(z-\bar{z})^2}.\]
\item[$(2)$] The symmetric space model: \[\HH_{\R}^2=SL(2,\R)/SO(2,\R)\]
with the ($SL(2,\R)$-invariant) metric.
\item[$(3)$] The hyperboloid model: 
\[\HH_{\R}^2=\{-x_0^2+x_1^2+x_2^2=-1\mid x_i\in\R, x_0>0\}\] 
with the metric
\[\dd s^2=-\dd x_0^2+\dd x_1^2+\dd x_2^2\big|_{\HH_{\R}^2}.\]
\end{itemize}

There are two well-known complexifications of $\HH_{\R}^2$.
\vskip 0.1cm
\noindent$\bullet~$ \textbf{Minimal complexification}

\vskip 0.1cm
For the upper half plane model, we consider the space 
\[X:=\HH_{\R}^2\times\HH_{\R}^{2-}=\{(z,w)\in\C^2\mid \im z>0,\im w<0\}\]
with the conjugation
\[\tau:X\rightarrow X, \quad (z,w)\mapsto(\bar{w},\bar{z})\]
and the holomorphic metric
\[g_{\C}=-2(\dd z\otimes \dd w+\dd w\otimes \dd z)/(z-w)^2.\]
We have the embedding 
\[\iota:\HH_{\R}^2\hookrightarrow X,\quad z\mapsto(z,\bar{z}),\]
satisfying $\iota^*g_{\C}=g$. Such $X$ is a $SL(2,\R)$-invariant Stein domain in $\HH_{\C}^2$ (see the definition below).  $SL(2,\R)$ acts on $X$ properly, and the metric $g_{\C}$ is a (the unique) $SL(2,\C)$-invariant holomorphic metric.

\vskip 0.2cm
\noindent$\bullet~$ \textbf{Affine complexification}
\vskip 0.1cm

Upon complexifying $SL(2,\R)$, $SO(2,\R)$, we obtain the affine complexification
\[\HH_{\C}^2=SL(2,\C)/SO(2,\C)\cong\{-z_0^2+z_1^2+z_2^2=-1\mid z_i\in\C\}\]
with the holomorphic metric
\[\dd s_{\C}^2=-\dd z_0^2+\dd z_1^2+\dd z_2^2\big|_{\HH_{\C}^2}.\]
The map
\[SL(2,\R)/SO(2,\R)\hookrightarrow SL(2,\C)/SO(2,\C),\qquad g\cdot SO(2,\R)\mapsto g\cdot SO(2,\C)\]
constitutes a $SL(2,\R)$-equivariant embedding which realizes $\HH_{\R}^2=SL(2,\R)/SO(2,\R)$ as one connected component of the totally real submanifold of $\HH_{\C}^2$. 

The two complexifications play rather different roles. The domain
\[
X=\HH_{\R}^2\times \HH_{\R}^{2-}
\]
is the minimal complexification: it is a Stein neighborhood of the real slice and corresponds to choosing one single-valued branch of the complexified geodesic geometry. By contrast, the affine quadric $\HH_{\C}^2$ is larger and contains the extra complex branches responsible for monodromy and the infinitely many holomorphic geodesics with the same real endpoints (see Example \ref{eg:complex-geodesics-H2}). In particular, $\HH^2_{\C}$ satisfies Assumption \ref{assumption} while $X$ does not. We have
\begin{itemize}
    \item $\HH_{\R}^2$ is a connected component of fixed point set of the complex conjugation in $\HH_{\C}^2$.
    \item $\iota:\HH_{\R}^2\hookrightarrow\HH_{\C}^2$ is not homotopy equivalence.
\end{itemize}

More concretely, consider the map
\[
\Psi:X\longrightarrow \HH_{\C}^2,\qquad
\Psi(z,w)=\left(
i\frac{1+zw}{z-w},\,
i\frac{z+w}{z-w},\,
i\frac{1-zw}{z-w}
\right)
\]
One checks that
\[
\Psi^*(\dd s_{\C}^2)= -\,\frac{2(\dd z\otimes \dd w+\dd w\otimes \dd z)}{(z-w)^2}=g_{\C}.
\]
Hence the holomorphic metric on the minimal model is simply the pullback of the affine holomorphic metric; in particular the holomorphic Levi-Civita connection on $X$ is the pullback of that of $\dd s_{\C}^2$.    

\end{eg}

\subsection{Picard–Lefschetz problem on complexified path spaces}\label{sec:PL-problem-path}
We are ready to discuss the Picard–Lefschetz problem on the complexified path space and connect it to the resurgence analysis of heat kernel. 


Let $M_{\C}$ be a complexified manifold of $M$ with an anti-holomorphic involution $\tau$, complex structure $J$, and
let $a,b\in M\subset M_{\C}$ be real endpoints. For variational problems, it is natural to work on the Sobolev path space
\[
\mathcal P_{a,b}(M_{\C})
=
\left\{
\gamma\in H^1([0,\tau],M_{\C}) :
\gamma(0)=a,\ \gamma(\tau)=b
\right\}.
\]
Since $H^1([0,\tau])\hookrightarrow C^0([0,\tau])$, $\gamma$ is continuous and the endpoint conditions are
well-defined. The complex structure on $M_{\C}$ induces a pointwise almost complex
structure on the path space. For 
$$
\xi\in T_\gamma\mathcal P_{a,b}(M_{\C})
=
H^1_0([0,\tau],\gamma^*TM_{\C}),
$$
the induced complex structure is defined by 
\[
(\mathbb J_\gamma \xi)(t)
:=
J_{\gamma(t)}\xi(t).
\]
This satisfies $\mathbb J_\gamma^2=-\mathrm{id}$. 

Let $g_{\C}$ be a holomorphic complex-bilinear metric on
$M_{\C}$.  We define the complexified energy by
\[
\mathcal E_{\C}(\gamma)
:=
\frac{1}{4}\int_0^\tau
g_{\C,\gamma(t)}(\dot\gamma(t),\dot\gamma(t))\,\dd t,\quad \gamma\in H^1.
\]
We choose $\frac{1}{4}$ for convenience of resurgence analysis. In local holomorphic coordinates $z=(z^1,\ldots,z^n)$,
$$
\mathcal E_{\C}(z)
=
\frac{1}{4}\int_0^\tau \sum_{i,j}
g_{ij}(z(t))\,\dot z^i(t)\dot z^j(t)\,\dd t,
$$
where the functions $g_{ij}$ are holomorphic.  
$\mathcal E_{\C}$ is holomorphic in the Banach-manifold sense:
in $H^1$-charts it is a holomorphic map from an open subset of a complex
Hilbert space to $\C$.
Equivalently, its first variation is complex-linear in the variation.
Indeed, for
$
\xi\in T_\gamma\mathcal P^1_{a,b}(M_{\C})
$, one has
\[
\dd\mathcal E_{\C,\gamma}(\xi)
=
\frac{1}{4}\int_0^\tau\sum_{i,j,k}
(\partial_k g_{ij})(\gamma(t))\,\xi^k(t)\dot\gamma^i(t)\dot\gamma^j(t)\,\dd t
+\frac{1}{2}
\int_0^\tau \sum_{i,j}
g_{ij}(\gamma(t))\,\dot\xi^i(t)\dot\gamma^j(t)\,\dd t.
\]
Moreover,
\[
\dd\mathcal E_{\C,\gamma}(\mathbb J_\gamma\xi)
=i\, \dd \mathcal E_{\C,\gamma}(\xi),
\]
since the coefficients $g_{ij}$ are holomorphic and the metric is
complex-bilinear. Thus we have arrived at the infinite-dimensional Picard–Lefschetz problem associated to the action
$$\mathcal E_{\C}: \PP_{a,b}(M_{\C})\to \C. 
$$
\begin{rmk}
As explained in the introduction, we can rescale the time by $\tau$ to normalize it to the unit interval $[0,1]$. Then $\tau$ plays the role of $\hbar$, and it is nature to use $\tau \mathcal E_{\C}$ for the holomorphic Morse function in order to match our convention in the finite dimensional case. Nevertheless, they lead to the same critical paths and Lefschetz thimbles. In the following discussions, we will use $\mathcal E_{\C}$ to derive the Morse flow equation and use the critical values of $\tau \mathcal E_{\C}$ to identify the Borel singularities. 
\end{rmk}

Let
$$
\gamma: [0,\tau]\to M_{\C}, \qquad \gamma \in \PP_{a,b}(M_{\C})
$$
be a critical path of the energy functional $\mathcal E_{\C}$. By the standard calculus of variation, $\gamma$ satisfies 
\[
\ddot z^k+\sum_{i,j}\Gamma^{k,\C}_{ij}(z)\dot z^i\dot z^j=0.
\]
where $z^i(t)$ represents $\gamma$ in local holomorphic coordinates. This is the complex analogue of geodesic equation, and solutions are holomorphic geodesics. Thus critical paths of $\mathcal E_{\C}$ corresponds precisely to holomorphic geodesics from $a$ to $b$. 


Let us now consider the downward gradient Morse flow associated to $\mathcal E_{\C}$. First we choose a hermitian metric $h$ on $M_\C$. It induces a $L^2$ hermitian inner product $<\cdot,\cdot>_\hbar$ on $\mathcal P_{a,b}(M_{\C})$ by 
$$
<\delta \gamma_1, \delta \gamma_2>_h:=\int_0^\tau h(\delta \gamma_1, \delta \gamma_2) dt, \quad  \text{for}\quad \delta \gamma_i\in T_\gamma\mathcal P_{a,b}(M_{\C}).
$$
Write
$
h=\sum\limits_{i,j}h_{i\bar j}\dd z^i \dd\bar z^{\bar j}
$ in a local holomorphic coordinates $z=(z^1,\ldots,z^n)$.  Then 
$$
<\delta \gamma_1, \delta \gamma_2>_h=\int_0^\tau \sum_{i,j} h_{i\bar j}(\gamma(t))\delta \gamma_1^i(t) \overline{\delta \gamma^{j}_2(t)}\,\dd t. 
$$

Let $\theta$ be a chosen phase. Consider the the downward gradient flow of 
$$
\re(e^{-i\theta}\mathcal E_\C). 
$$
It can be described by a map $\gamma(t, s): [0,\tau] \times [0,+\infty) \to M_{\C}$ satisfying the flow equation 
\[\partial_s\gamma=-\nabla_{L^2}\re(e^{-i\theta}\mathcal E_\C)(\gamma),\]
with the fixed endpoint conditions $\gamma(0,s)=a,\,\gamma(\tau,s)=b$. Here $\nabla_{L^2}$ is the gradient with respect to the above $L^2$-hermitian metric.

In a local holomorphic coordinate $\{z^i\}$ of $M_{\C}$, we have 
$
g_{\C}=\sum\limits_{i,j}g_{ij}\dd z^i \dd z^j,  \ h=\sum\limits_{i,j}h_{i\bar j}\dd z^i \dd\bar z^j. 
$
We also write the flow locally as $z^i(t,s)$. Then the above downward gradient Morse flow is explicitly 
\begin{equation*}
\frac{\partial z^j}{\partial s}=\frac{1}{2}\sum_{k,m}e^{i\theta}h^{j\bar k}\overline{g_{mk}\nabla^{\C}_{\dot z}\dot z^m}.
\end{equation*}
Here $\dot z^m=\frac{\partial z^m}{\partial t}$ and $\nabla^{\C}$ is the holomorphic Levi-Civita connection of the holomorphic metric $g_\C$. Both $h^{j\bar k}$ and $g_{mk}$ are evaluated at $\gamma(t,s)$.

\begin{prop}\label{prop:FG}
Let $z(s,\cdot)$ be a smooth solution of the flow equation. Then
\begin{equation}\label{eq:Fdecrease}
\frac{\dd}{\dd s}F_{\theta}\bigl(z(s,\cdot)\bigr)
=-\norm{\partial_s z(s,\cdot)}_{L^2_t}^2\le 0,
\qquad
\frac{\dd}{\dd s}G_{\theta}\bigl(z(s,\cdot)\bigr)=0.
\end{equation}
Here $\norm{- }_{L^2_t}^2=<-,->_{h}$. In particular, $G_{\theta}$ is constant along every finite-energy half-trajectory.
\end{prop}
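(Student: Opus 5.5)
Here $F_\theta=\re(e^{-i\theta}\mathcal E_\C)$ and $G_\theta=\im(e^{-i\theta}\mathcal E_\C)$, regarded as real functions on $\mathcal P_{a,b}(M_\C)$; the proposition is the path-space analogue of the finite-dimensional facts (1)--(2) recalled in Section \ref{sec:intro-Borel}. The plan is to combine the chain rule with the holomorphicity of $\mathcal E_\C$, namely the identity $\dd\mathcal E_{\C,\gamma}(\mathbb J_\gamma\xi)=i\,\dd\mathcal E_{\C,\gamma}(\xi)$ established above.

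\textbf{Step 1: chain rule.} For a smooth solution $z(s,\cdot)$, the variation $\partial_s z(s,\cdot)$ lies in $T_{z(s,\cdot)}\mathcal P_{a,b}(M_\C)=H^1_0$, because the endpoints are fixed. Hence
\[
\frac{\dd}{\dd s}\mathcal E_\C(z(s,\cdot))=\dd\mathcal E_{\C,z}(\partial_s z).
\]
This is justified by differentiating under the integral in the local formula for $\mathcal E_\C$. Concretely, integrating by parts in $t$ is legitimate since the boundary terms vanish, and this gives
\[
\dd\mathcal E_{\C,z}(\xi)=-\tfrac12\int_0^\tau g_\C(\xi,\nabla^\C_{\dot z}\dot z)\,\dd t,
\]
which is consistent with the explicit flow equation written above.

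\textbf{Step 2: the gradient.} Let $\langle\cdot,\cdot\rangle_{\R}=\re\langle\cdot,\cdot\rangle_h$ be the underlying real inner product. By definition, $\nabla_{L^2}F_\theta$ satisfies $\dd F_\theta(\xi)=\langle\nabla_{L^2}F_\theta,\xi\rangle_{\R}$. Since $\partial_s z=-\nabla_{L^2}F_\theta$,
\[
\frac{\dd}{\dd s}F_\theta(z)=\dd F_\theta(\partial_s z)=-\langle\nabla F_\theta,\nabla F_\theta\rangle_{\R}=-\norm{\partial_s z}^2_{L^2_t}\le 0.
\]

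\textbf{Step 3: the imaginary part.} Since $e^{-i\theta}\dd\mathcal E_\C$ is complex-linear with respect to $\mathbb J$, we get
\[
\dd G_\theta(\xi)=\im\bigl(e^{-i\theta}\dd\mathcal E_\C(\xi)\bigr)=-\re\bigl(e^{-i\theta}\dd\mathcal E_\C(\mathbb J\xi)\bigr)=-\dd F_\theta(\mathbb J\xi).
\]
Substituting $\xi=\partial_s z=-\nabla F_\theta$ yields
\[
\frac{\dd}{\dd s}G_\theta(z)=\dd F_\theta(\mathbb J\nabla F_\theta)=\langle\nabla F_\theta,\mathbb J\nabla F_\theta\rangle_{\R}.
\]
Because $h$ is Hermitian, $\langle v,\mathbb Jv\rangle_h=\overline{i}\,\norm{v}^2=-i\norm{v}^2$ is purely imaginary, so its real part vanishes and $\frac{\dd}{\dd s}G_\theta(z)=0$. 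Constancy of $G_\theta$ along finite-energy half-trajectories then follows by integrating in $s$.

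\textbf{Main obstacle.} The delicate point is rigor in infinite dimensions. One must check that the $L^2$-gradient of an $H^1$-functional makes sense along the flow: for a smooth solution, $\nabla F_\theta$ is given pointwise by the explicit formula above and lies in $L^2$. One must also check that $s\mapsto\mathcal E_\C(z(s,\cdot))$ is differentiable. I would handle both by working with smooth solutions, as the statement assumes, and differentiating under the integral with the boundary terms killed by $\partial_s z(0,s)=\partial_s z(\tau,s)=0$. The orthogonality $\re\langle v,\mathbb Jv\rangle_h=0$ is pointwise, so it carries over to $L^2$ without difficulty.
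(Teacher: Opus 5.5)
Your proposal is correct and follows essentially the paper's argument. The $F_\theta$ identity is the defining property of the downward $L^2$-gradient flow. The $G_\theta$ identity comes from holomorphicity of $\mathcal E_\C$ together with $\re\langle \mathbb J v,v\rangle_h=0$: the paper phrases the Cauchy--Riemann relation as $\nabla G_\theta=\mathbb J\nabla F_\theta$, and you phrase it equivalently as $\dd G_\theta(\xi)=-\dd F_\theta(\mathbb J\xi)$.
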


\begin{proof}
The first identity follows from the basic property of downward gradient flow. Since $\mathcal E_{\C}$ is holomorphic on $\PP_d^{\C}$, the Cauchy--Riemann equations imply that $\nabla G_{\theta(d)}=\mathbb J\nabla F_{\theta(d)}$. Therefore,
\begin{align*}
\frac{\dd}{\dd s}G_{\theta(d)}(z(s))
=&~\dd\,G_{\theta(d)}(z(s))(\partial_s z)
=\re<\nabla G_{\theta(d)},\partial_s z>_h\\
=&-\re<\mathbb J\nabla F_{\theta(d)},\nabla F_{\theta(d)}>_h=0.\qedhere
\end{align*}
\end{proof}

\begin{eg}
Take $M=\R^n$ with Euclidean metric $g$, and $M_{\C}=\C^n$ with holomorphic coordinate $z=(z^1,\ldots,z^n)$. The holomorphic metric and the standard hermitian metric are 
$g_{\C}=\sum\limits_{i=1}^n\dd z^i \dd z^i,\ h=\sum\limits_{i=1}^n\dd z^i \dd \bar{z}^i$. The holomorphic energy function is 
\[\mathcal E_{\C}(z)=\frac{1}{4}\int_0^{\tau}\sum_{k=1}^n\left(\dot{z}^k\right)^2\dd t.\]
In the flat case, the critical path (geodesic) is unique and given by the affine segment
$z(t)=a+\frac{t}{\tau}(b-a)$. The downward gradient Morse flow of $ \re(e^{-i\theta}\mathcal E_\C)$ is 
\[\partial_s z(s,t)=\frac{1}{2}e^{i\theta}\partial_t^2 \bar{z}(s,t),\qquad z(s,0)=a,~ z(s,\tau)=b.\]
\end{eg}

\smallskip
\begin{eg}\label{eg:H^2-PL} Consider the hyperboloid model: 
\[\HH_{\R}^2=\{-x_0^2+x_1^2+x_2^2=-1\mid x_i\in\R, x_0>0\}\]
with the complexification 
\[\HH_{\C}^2=\{-z_0^2+z_1^2+z_2^2=-1\mid z_i\in\C\}\]
and the holomorphic metric
\[\dd s_{\C}^2=-\dd z_0^2+\dd z_1^2+\dd z_2^2\big|_{\HH_{\C}^2}.\]

Let us denote\footnote{We regard $z,w\in\mathbb C^3$ as column vectors in matrix formulas.  Later,
when writing points or paths explicitly, we often use the horizontal notation
$z=(z_0,z_1,z_2)$ only for readability.  This should not cause confusion:
all products involving $\eta$, projections, and tangent equations are computed with column vectors.}
\[\langle z,w\rangle_{\eta}=z^T\eta w, \qquad z,w\in \C^3, \quad \eta:=\diag(-1,1,1)
\]
for the complex bilinear extension of the Lorentz form. Then
\[\HH_{\C}^2=\{z\in\C^3: \langle z, z\rangle_\eta=-1\}.\]
The holomorphic Levi-Civita connection on the quadric $\HH_{\C}^2$ is obtained by tangential projection from that on  $\C^3$. If $\gamma(t)$ is a holomorphic curve in $\HH_{\C}^2$, then
\[
\nabla^{\C}_{\dot\gamma}\dot\gamma=0
\quad\Longleftrightarrow\quad
\ddot\gamma \ \text{is normal to } T_{\gamma}\HH_{\C}^2 .
\]
Since
\[
T_{\gamma}\HH_{\C}^2=\{\xi\in \C^3:\ \langle\gamma,\xi\rangle_{\eta}=0\},
\]
the normal line is $\C\gamma$, and then the geodesic equation is
\[
\ddot\gamma=\lambda(t)\gamma .
\]
Differentiating $\langle\gamma,\gamma\rangle_{\eta}=-1$ twice gives
\[
\langle\gamma,\dot\gamma\rangle_{\eta}=0,
\qquad
\langle\gamma,\ddot\gamma\rangle_{\eta}=-\langle\dot\gamma,\dot\gamma\rangle_{\eta},
\]
hence $\lambda(t)=\langle\dot\gamma,\dot\gamma\rangle_{\eta}$. Along a geodesic this quantity is constant:
\[
\frac{d}{dt}\langle \dot\gamma,\dot\gamma\rangle_\eta
=2\langle \ddot\gamma,\dot\gamma\rangle_\eta
=2\lambda \langle \gamma,\dot\gamma\rangle_\eta=0.
\]
Denote this constant by $c$, then every holomorphic geodesic satisfies the linear ODE
\[
\ddot\gamma=c\,\gamma .
\]
Therefore, after choosing a branch $\mu$ with $\mu^2=c$, the holomorphic geodesics are
\[
\gamma(t)=\cosh(\mu t)\,a+\frac{\sinh(\mu t)}{\mu}\,v,
\qquad
v=\dot\gamma(0)\in T_a\HH_{\C}^2,\quad \langle v,v\rangle_{\eta}=\mu^2 .
\]

Now let $a,b\in \HH_{\R}^2$ and let $d=d(a,b)>0$ be their real hyperbolic distance, so that
\[
\langle a,b\rangle_{\eta}=-\cosh d.
\]
Define the unit tangent vector at $a$ pointing toward $b$ by
\[
u=\frac{b-\cosh d\, a}{\sinh d},
\qquad
\langle a,u\rangle_{\eta}=0,\quad \langle u,u\rangle_{\eta}=1.
\]
Then every holomorphic geodesic in the affine complexification joining $a$ to $b$ is contained in the complex $2$-plane $\C a\oplus \C u$ and is given by
\begin{align*}
\gamma_k(t)
=~&\cosh\!\Big(\frac{d+2\pi i k}{\tau}t\Big)\,a
+\sinh\!\Big(\frac{d+2\pi i k}{\tau}t\Big)\,u,
\qquad k\in\Z \\
=~&\cosh\!\Big(\frac{d+2\pi i k}{\tau}t\Big)\,a
+\frac{\sinh\!\big(\frac{d+2\pi i k}{\tau}t\big)}{\sinh d}\,
\bigl(b-\cosh d\,a\bigr).
\end{align*}
The corresponding energy is
\[
\mathcal E_{\C}(\gamma_{k})
=\frac{1}{4}\int_0^{\tau}g_{\HH_{\C}^2}(\gamma_k'(t),\gamma_k'(t))\,\dd t
=\frac{(d+2\pi i k)^2}{4\tau}.
\]

Let us now describe the Morse flow. Let $h$ denote the restriction of the standard ambient Hermitian metric on $\C^3$. Thus for any $\xi,\zeta\in T_{z}\HH_{\C}^2\,\subset\,\C^3$, their hermitian inner product is
\[
h(\xi,\zeta)=\xi\cdot \overline{\zeta}.
\]
The induced $L^2$-hermitian metric on the path space is
\[<\xi,\zeta>_h:=\int_0^{\tau} \xi(t)\cdot \overline{\zeta(t)}\,\dd t.
\]

Let $P_z$ be the $h$-orthogonal projection onto $T_z\HH_{\C}^2$, given by
\begin{equation*}\label{eq:projection}
P_z(v)=v-\frac{z^T\eta v}{|z|^2}\,\overline{\eta z},\qquad v\in\C^3.
\end{equation*}
Then the downward gradient Morse flow of $ \re(e^{-i\theta}\mathcal E_\C)$ becomes 
\[
\partial_s z=P_z\bigl(e^{i\theta}\eta\,\overline{z_{tt}}\bigr),
\qquad
z(s,0)=a,
\qquad
z(s,\tau)=b,
\]
with the constraint
\[\langle z(s,t), z(s,t)\rangle_{\eta}=-1.\]
\end{eg}

\subsection{Holomorphic geodesics and Borel singularities}
Now we consider the resurgence analysis of heat kernel. For $x,y$ in a sufficiently small compact convex neighborhood $V\subset M$, the expansion in Theorem \ref{thm:local-borel-summability} is the usual short-time expansion on the real
manifold. It is attached to the real minimal branch
\[
\Phi_0(x,y)=\frac{d(x,y)^2}{4},
\]
coming from the unique minimal real geodesic in a sufficiently small convex
neighborhood.

More generally, in the complexification $M_{\mathbb C}$, one may also consider asymptotic solutions of the heat equation attached to other holomorphic geodesic branches.

Throughout this subsection, unless explicitly stated otherwise, the variables $x$ and $y$ are allowed to be complex points of
$M_{\mathbb C}$. We write $\Delta_x^{\mathbb C}$ for the holomorphic extension of the Laplace operator in the $x$-variable. When $x,y\in M$, the notation $\Delta_x$ refers to the original real Laplacian.

Let
\[
E:TM_{\mathbb C}\longrightarrow M_{\mathbb C}\times M_{\mathbb C},
\qquad E(x,v)=\bigl(x,\exp_x^{\mathbb C}(v)\bigr),
\]
be the holomorphic endpoint map associated with the holomorphic
Levi-Civita connection of $g_{\mathbb C}$. For later use, we introduce the following definitions.

\begin{defn}
Let $x_0\in M_{\mathbb C}$ and $v_0\in T_{x_0}M_{\mathbb C}$.  Set
\[
y_0:=\exp_{x_0}^{\mathbb C}(v_0),\qquad
\gamma_0(t):=\exp_{x_0}^{\mathbb C}(t v_0),\qquad 0\leq t\leq 1.
\]

We say that $v_0$, or equivalently the geodesic segment $\gamma_0$, is non-conjugate at the endpoint if
\[
\dd_v\exp_{x_0}^{\mathbb C}\big|_{v=v_0}:
T_{x_0}M_{\mathbb C}\longrightarrow T_{y_0}M_{\mathbb C}
\]
is an isomorphism.  We say that $\gamma_0$ is non-conjugate (or non-caustic) along the segment, if
\[
\dd_v\exp_{x_0}^{\mathbb C}\big|_{v=t v_0}:
T_{x_0}M_{\mathbb C}\longrightarrow T_{\gamma_0(t)}M_{\mathbb C}
\]
is an isomorphism for every $0<t\leq 1$.
\end{defn}

By the holomorphic inverse function theorem, if
$(x_0,v_0)$ is no-conjugate, then there exists a connected open set $U_\gamma\subset TM_{\mathbb C}$ such that
\[
E_\gamma:=E|_{U_\gamma}:U_\gamma\longrightarrow \Omega_\gamma
\]
is biholomorphic onto its image
\[
\Omega_\gamma:=E_\gamma(U_\gamma)\subset M_{\mathbb C}\times M_{\mathbb C}.
\]
{If, moreover, the central segment $\gamma_0$ is non-conjugate along the
whole interval $0<t\leq 1$, then we shall shrink $U_\gamma$, if
necessary, so that for every $(x,v')\in U_\gamma$,
\[
\dd_v\exp_x^{\mathbb C}\big|_{v=t v'}
\]
is an isomorphism for all $0<t\leq 1$.  In this case we call $\Omega_\gamma$ a no-conjugate (or non-caustic) geodesic branch chart.}

\begin{defn}
With the above notations. We denote the inverse branch in the second variable by
\[
v_\gamma:\Omega_\gamma\longrightarrow TM_{\mathbb C},
\qquad v_\gamma(x,y)\in T_xM_{\mathbb C},
\]
so that
\[
y=\exp_x^{\mathbb C}\bigl(v_\gamma(x,y)\bigr).
\]
The associated holomorphic geodesic branch is defined to be
\[
\gamma_{x,y}(t):=\exp_x^{\mathbb C}\bigl(t\,v_\gamma(x,y)\bigr),\qquad
0\leq t\leq 1.
\]
Thus
\[
\gamma_{x,y}(0)=x,\qquad\gamma_{x,y}(1)=y,\qquad
\dot\gamma_{x,y}(0)=v_\gamma(x,y).
\]
\end{defn}

Furthermore, we define its squared complex length to be
\[
\ell_\gamma(x,y)^2:=\int_0^1g_{\mathbb C,\gamma_{x,y}(t)}
\bigl(\dot\gamma_{x,y}(t),\dot\gamma_{x,y}(t)\bigr)\,\dd t.
\]
By construction, the integrand is constant
in $t$:
\[
\frac{\dd}{\dd t}g_{\mathbb C,\gamma_{x,y}(t)}\bigl(\dot\gamma_{x,y}(t),\dot\gamma_{x,y}(t)\bigr)
=2g_{\mathbb C,\gamma_{x,y}(t)}\bigl(\nabla^{\mathbb C}_{\dot\gamma_{x,y}}\dot\gamma_{x,y},\dot\gamma_{x,y}\bigr)=0.
\]
Hence
\[
\ell_\gamma(x,y)^2=g_{\C,x}\bigl(v_\gamma(x,y),v_\gamma(x,y)\bigr).
\]

\begin{lem}\label{lem:eikonal-equation}
Let $\Omega_{\gamma}$, $v_{\gamma}$ be defined as above and let $\Phi_\gamma=\frac{\ell_{\gamma}(x,y)^2}{4}$, then it satisfies the holomorphic eikonal equation
\[
\Phi_\gamma=g_{\mathbb C}\bigl(\nabla_x^{\mathbb C}\Phi_\gamma,\nabla_x^{\mathbb C}\Phi_\gamma\bigr)
\qquad\text{on }~ \Omega_\gamma.
\]
Here $\nabla_x^{\mathbb C}\Phi_\gamma$ denotes the holomorphic gradient
with respect to $g_{\mathbb C}$ in the
$x$-variable. 
If $x,y\in M$ and
the branch is the real geodesic branch, then
$\nabla_x^{\mathbb C}$ restricts to the usual real gradient with respect to $g=g_{\mathbb C}|_M$.
\end{lem}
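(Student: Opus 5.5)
The plan is to prove the eikonal identity via the holomorphic first-variation formula (holomorphic Gauss lemma). Concretely, I will show that on $\Omega_\gamma$
\[
\nabla_x^{\C}\Phi_\gamma(x,y)=-\tfrac12\,v_\gamma(x,y).
\]
Granting this, the identity follows immediately:
\[
g_{\C}\bigl(\nabla_x^{\C}\Phi_\gamma,\nabla_x^{\C}\Phi_\gamma\bigr)=\tfrac14\,g_{\C,x}(v_\gamma,v_\gamma)=\tfrac14\,\ell_\gamma(x,y)^2=\Phi_\gamma,
\]
using the constancy of $g_{\C}(\dot\gamma_{x,y},\dot\gamma_{x,y})$ established just before the lemma.

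To compute the gradient, fix $y$ and take a holomorphic one-parameter family $x(\epsilon)$ in the first factor with $x(0)=x$ and $x'(0)=w\in T_xM_{\C}$. Since $v_\gamma$ is a holomorphic inverse branch on $\Omega_\gamma$, the geodesics $\gamma_\epsilon:=\gamma_{x(\epsilon),y}$ form a holomorphic family, and the map $(t,\epsilon)\mapsto\gamma_\epsilon(t)$ is holomorphic in $\epsilon$ and smooth in $t$. Differentiate
\[
4\Phi_\gamma(x(\epsilon),y)=\int_0^1 g_{\C}(\dot\gamma_\epsilon,\dot\gamma_\epsilon)\,\dd t
\]
at $\epsilon=0$. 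Metric compatibility and torsion-freeness of $\nabla^{\C}$ give $\partial_\epsilon g_{\C}(\dot\gamma,\dot\gamma)=2g_{\C}(\nabla^{\C}_t J,\dot\gamma)$, where $J=\partial_\epsilon\gamma_\epsilon|_{\epsilon=0}$. This manipulation is purely algebraic and holds verbatim in the complex-bilinear setting, exactly as in the first-variation formula written in Section~\ref{sec:PL-problem-path}. Integrating by parts and using the geodesic equation \eqref{eq:holomorphic-geodesic}, the interior term vanishes, leaving the boundary terms
\[
2\bigl[g_{\C}(J,\dot\gamma)\bigr]_0^1.
\]
We have $J(1)=0$ because $y$ is fixed, and $J(0)=w$, $\dot\gamma(0)=v_\gamma(x,y)$. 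Hence
\[
4\,\dd_x\Phi_\gamma(w)=-2\,g_{\C}(w,v_\gamma),
\]
so $\dd_x\Phi_\gamma=-\tfrac12 g_{\C}(v_\gamma,\cdot)$. Nondegeneracy of $g_{\C}$ then yields the claimed gradient.

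The main point needing care is justifying this variation: differentiability of the family in $\epsilon$, and the interchange of the $\epsilon$-derivative with the $t$-integral. Both follow from holomorphic dependence of solutions of \eqref{eq:holomorphic-geodesic} on initial data together with the biholomorphism $E_\gamma:U_\gamma\to\Omega_\gamma$, which makes $v_\gamma$ holomorphic. The endpoint non-conjugacy is exactly what is used here; non-conjugacy along the whole segment is not needed for this lemma. Since all the objects involved are holomorphic, it suffices to verify the identity for holomorphic variations, and complex-linearity of $\dd_x\Phi_\gamma$ then determines the holomorphic gradient.

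For the final assertion, restrict to $x,y\in M$ with the real branch. There $g_{\C}|_M=g$, the holomorphic extension of a real-analytic function has real differential on real tangent vectors, and $v_\gamma$ is real. Hence $\nabla^{\C}_x$ restricts to the Riemannian gradient, and the identity reduces to the classical $|\nabla_x d^2/4|^2=d^2/4$.
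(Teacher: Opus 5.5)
Your proposal is correct and follows the paper's proof essentially step for step. Both arguments take a holomorphic variation of the initial endpoint with $J(0)=w$ and $J(1)=0$, integrate by parts, use the geodesic equation to kill the interior term, and obtain $\nabla_x^{\C}\Phi_\gamma=-\tfrac12 v_\gamma$. The eikonal identity then follows from $\ell_\gamma^2=g_{\C,x}(v_\gamma,v_\gamma)$.
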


\begin{proof}
Fix $y$, and vary the initial endpoint $x$.  Let
$x_\varepsilon$ be a holomorphic curve with
\[
x_0=x, \qquad \left.\frac{\dd x_\varepsilon}{\dd\varepsilon}\right|_{\varepsilon=0}=\xi \in T_xM_{\C}.
\]
After shrinking the $\varepsilon$-disc, we may assume
\[
(x_\varepsilon,y)\in \Omega_\gamma,
\] 
then the chosen geodesic branch $v_{\gamma_{x_{\varepsilon},y}}$ depends
analytically on $\varepsilon$.  Hence there is an analytic $\varepsilon$--family of
geodesics
\[
\gamma_\varepsilon:[0,1]\to M_{\C},\qquad
\gamma_\varepsilon(0)=x_\varepsilon,\quad\gamma_\varepsilon(1)=y .
\]
Set
\[
J(t):=\left.\frac{\partial \gamma_\varepsilon(t)}
{\partial \varepsilon}\right|_{\varepsilon=0}.
\]
Then
\[
J(0)=\xi, \qquad J(1)=0.
\]
Differentiating the energy
\[
\ell_{\gamma_\varepsilon}(x_\varepsilon,y)^2=\int_0^1
g_{\C}(\dot\gamma_\varepsilon(t),
\dot\gamma_\varepsilon(t))\,\dd t
\]
at $\varepsilon=0$, we obtain
\begin{align*}
  \dd_x(\ell_\gamma^2)(\xi)
&=2\int_0^1g_{\C}(\nabla_t^{\C} J,\dot\gamma)\,\dd t  \\
&=2\Big[g_{\C}(J,\dot\gamma)
\Big]_{t=0}^{t=1}-2\int_0^1g_{\C}(J,\nabla_t^{\C}\dot\gamma)\,\dd t =-2\,g_{\C}(\xi,\dot\gamma(0)).
\end{align*}

It follows that
\[
\dd_x\Phi_\gamma(\xi)
=-\frac12 g_{\C}(\xi,\dot\gamma(0)),\quad\text{
equivalently,}\quad
\nabla_x^{\C}\Phi_\gamma=-\frac12 \dot\gamma(0).
\]
Thus,
\[
g_{\mathbb C}\bigl(\nabla_x^{\mathbb C}\Phi_\gamma,
\nabla_x^{\mathbb C}\Phi_\gamma\bigr)
=\frac14 g_{\mathbb C,x}\bigl(v_\gamma(x,y),v_\gamma(x,y)\bigr)
=\frac{\ell_\gamma(x,y)^2}{4}=\Phi_\gamma(x,y).\qedhere
\]
\end{proof}

\begin{thm}\label{thm:heat-kernel-expansion-complex-geodesic}
With the above notations and assumptions, {on a no-conjugate geodesic branch chart} $\Omega_{\gamma}$, there exists a unique $\C$-valued formal solution $\widetilde{K}_\gamma(\tau;x,y)$ to the heat equation 
\[(\partial_{\tau}+\Delta_x^{\C})\widetilde{K}_{\gamma}=0\]
of the form
\begin{equation}\label{eq:formal-expansion-gamma}
\widetilde{K}_\gamma(\tau;x,y)=(4\pi \tau)^{-n/2}
e^{-\Phi_\gamma(x,y)/\tau}\sum_{j\geq 0}u_{\gamma,j}(x,y)\tau^j.    
\end{equation}
normalized by $u_{\gamma,0}=J_\gamma^{-1/2}$ where $J_\gamma$ is the Jacobian of the complex exponential
map along the chosen holomorphic geodesic branch. 
Here $\Delta_x^{\mathbb C}$ denotes
the holomorphic extension of the non-negative Laplacian in the $x$-variable, and $\sum\limits_{j\geq 0}u_{\gamma,j}(x,y)\tau^j$ is a formal series in $\tau$. 
\end{thm}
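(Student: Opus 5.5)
We follow the real WKB/Hadamard parametrix construction of Section \ref{subsec:proof-borel-summability}, now with the holomorphic phase $\Phi_\gamma$ on the branch chart $\Omega_\gamma$. Write $H_\gamma:=(4\pi\tau)^{-n/2}e^{-\Phi_\gamma/\tau}$ and $\widetilde u_\gamma=\sum_j u_{\gamma,j}\tau^j$.

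\textbf{Step 1: reduce the heat equation to a recursion.} Conjugating $\partial_\tau+\Delta_x^{\C}$ by $H_\gamma$ gives, exactly as in \eqref{eq:heateq},
\[
(\partial_\tau+\Delta^{\C}_x)(H_\gamma\widetilde u_\gamma)=H_\gamma\Bigl(\tau^{-2}\bigl(\Phi_\gamma-g_{\C}(\nabla^{\C}_x\Phi_\gamma,\nabla^{\C}_x\Phi_\gamma)\bigr)\widetilde u_\gamma+\tau^{-1}T_\gamma\widetilde u_\gamma+\partial_\tau\widetilde u_\gamma+\Delta^{\C}_x\widetilde u_\gamma\Bigr).
\]
The transport operator is
\[
T_\gamma=2\,g_{\C}(\nabla^{\C}_x\Phi_\gamma,\nabla^{\C}_x\,\cdot)+\bigl(-\Delta^{\C}_x\Phi_\gamma-\tfrac n2\bigr).
\]
The $\tau^{-2}$ term vanishes identically by the holomorphic eikonal equation, Lemma \ref{lem:eikonal-equation}. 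Equating powers of $\tau$ then yields
\[
T_\gamma u_{\gamma,0}=0,\qquad (T_\gamma+j+1)\,u_{\gamma,j+1}=-\Delta^{\C}_x u_{\gamma,j}.
\]
All coefficients are holomorphic on $\Omega_\gamma$.

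\textbf{Step 2: read $T_\gamma$ along geodesics.} By the proof of Lemma \ref{lem:eikonal-equation}, $\nabla^{\C}_x\Phi_\gamma=-\tfrac12\dot\gamma_{x,y}(0)$. So the vector field $2\nabla^{\C}_x\Phi_\gamma$ is $-v_\gamma(x,y)$, which is tangent to the holomorphic geodesic branch. Parametrize points near $y$ as $x=\exp^{\C}_y(w)$, using the reversed branch, which is again a biholomorphism by non-conjugacy. Then $T_\gamma$ becomes the Euler-type operator $w\cdot\partial_w+\bigl(\tfrac12 w\cdot\partial_w\log J_\gamma\bigr)$. This uses the standard identity $\Delta^{\C}_x\Phi_\gamma=-\tfrac n2-\tfrac12\,w\cdot\partial_w\log J_\gamma$, where $J_\gamma$ is the holomorphic Jacobian $\det \dd\exp^{\C}_y$ (relative to $g_{\C}$ volume forms). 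We verify this identity by the holomorphic analogue of the real computation. Both sides are holomorphic identities among expressions built polynomially from $g_{\C}$, its derivatives and $\exp^{\C}$. They hold on the real locus near the diagonal, so they persist on the connected branch by analytic continuation along the non-conjugate family. Alternatively, one can repeat the Jacobi-field computation verbatim with $\nabla^{\C}$.

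\textbf{Step 3: solve the recursion.} With this form, $u_{\gamma,0}=J_\gamma^{-1/2}$ solves $T_\gamma u_{\gamma,0}=0$. Non-conjugacy along the whole segment guarantees $J_\gamma\neq0$ on the star-shaped family $\{(\exp^{\C}_y(sw),y):0\le s\le1\}$, so a holomorphic branch of the square root exists there. The higher coefficients are then given by the transport formula of Lemma \ref{lem:recursionestimate}:
\[
u_{\gamma,j+1}(x,y)=-u_{\gamma,0}(x,y)\int_0^1\sigma^j\,u_{\gamma,0}(\Gamma_\sigma,y)^{-1}(\Delta^{\C}_xu_{\gamma,j})(\Gamma_\sigma,y)\,\dd\sigma,
\]
with $\Gamma_\sigma=\exp^{\C}_y(\sigma w)$. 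This formula is holomorphic and regular at $\sigma=0$.

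\textbf{Step 4: uniqueness.} In the $w$-coordinate, the homogeneous equation $(T_\gamma+j+1)f=0$ becomes $(w\cdot\partial_w+j+1)(f/u_{\gamma,0})=0$. Since $j+1>0$, its only solution regular along the ray is $0$. For $j=0$, the solution is fixed by the normalization $u_{\gamma,0}=J_\gamma^{-1/2}$. Hence each $u_{\gamma,j}$ is uniquely determined.

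\textbf{Main obstacle.} The delicate point is Step 2. One needs the divergence identity for $\Delta^{\C}\Phi_\gamma$ on a genuinely complex branch, where the branch need not meet the diagonal in the $w$-chart. Also, the ray $\sigma w$, $\sigma\in[0,1]$, must stay inside $\Omega_\gamma$. The plan is to handle both by shrinking $U_\gamma$ as permitted in the definition of the non-conjugate branch chart, so that $U_\gamma$ is star-shaped in the fibre variable $v$. The square-root branch of $J_\gamma$ is fixed by continuity from the value $1$ at $\sigma=0$.
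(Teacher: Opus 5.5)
Your route is the paper's. You kill the $\tau^{-2}$ term with Lemma~\ref{lem:eikonal-equation}, write $T_\gamma=E+\tfrac12 E\log J_\gamma$ with $E=w\cdot\partial_w$ the Euler field in exponential coordinates at $y$, take $u_{\gamma,0}=J_\gamma^{-1/2}$, and solve radial ODEs. The paper's $(r,\omega)$ with $\omega\in Q^{\mathbb C}_y$ is just the polar form of your $w$. Your linear coordinates avoid the isotropic cone, and your ray integral plus Step~4 is a sharper uniqueness argument than the paper's ``fix the normalization on a transverse section''.

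\textbf{The gap: the domain fix.} Your way of handling the obstacle you flag does not work. $U_\gamma$ is a small neighbourhood of $(x_0,v_0)$ with $v_0\neq0$, so no shrinking makes it star-shaped in the fibre. The rays $\sigma w$, $\sigma\to0$, run to the diagonal, which lies outside $\Omega_\gamma$. As written, your ray formula, the branch of $J_\gamma^{-1/2}$ fixed at $\sigma=0$, and Step~4 therefore have no meaning on $\Omega_\gamma$.

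You must \emph{enlarge} instead:
\begin{itemize}
\item take a thin tube $\widetilde U\subset TM_{\mathbb C}$ around $\{(y_0,\sigma w_0):0\le\sigma\le1\}$, on which $(y,w)\mapsto(\exp^{\mathbb C}_y w,y)$ is only a local biholomorphism;
\item pull $\Delta^{\mathbb C}_x$ back to $\widetilde U$, which is legitimate since it is a local operator;
\item construct the $u_{\gamma,j}(y,w)$ on $\widetilde U$;
\item transfer them to $\Omega_\gamma$ through the local inverse near $w_0$.
\end{itemize}
This needs $\mathrm{d}\exp^{\mathbb C}_{y}$ to be invertible at $\sigma w_0$ for $0<\sigma\le1$, i.e.\ non-conjugacy measured from the $y$ end. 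The paper's definition is stated from the $x$ end, and for holomorphic metrics there is no Sturm or index-form monotonicity making the two equivalent, so state the hypothesis you actually use. Once on $\widetilde U$, Step~2 needs no continuation from the real diagonal. For the volume $J_\gamma\,\mathrm{d}w$, the divergence of $w\cdot\partial_w$ is $n+w\cdot\partial_w\log J_\gamma$, a purely local computation valid wherever $\exp^{\mathbb C}_y$ is a local biholomorphism.

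\textbf{Uniqueness needs your ray-regularity condition.} Step~4 proves uniqueness only among coefficients that continue holomorphically along $\widetilde U$ down to $w=0$. This is a genuine extra condition, not a consequence of holomorphy on $\Omega_\gamma$. By the eikonal equation, $2\nabla^{\mathbb C}_x\Phi_\gamma\cdot\nabla^{\mathbb C}_x\Phi_\gamma=2\Phi_\gamma$. Hence on a small $\Omega_\gamma$ where $\Phi_\gamma\neq0$ (e.g.\ any branch of $\mathbb H^2_{\mathbb C}$ with $d>0$, where $\ell_\gamma^2=(d+2\pi i k)^2$) one has $(T_\gamma+j)\bigl(u_{\gamma,0}\Phi_\gamma^{-j/2}\bigr)=0$. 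Adding a multiple of this to $u_{\gamma,j}$, $j\ge1$, and re-solving the later transport equations gives a different formal solution with the same $u_{\gamma,0}$.

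So the theorem's ``unique'' requires exactly your ray-regularity normalization, the analogue of ``regularity at $\ell=0$'' in Proposition~\ref{prop:H2-formal-heat-recursion}. With the enlarged domain and this condition made explicit, your argument is complete.
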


\begin{proof}
We look for a formal solution of the form
\begin{equation*}
\widetilde K_\gamma(\tau;x,y)
=
(4\pi\tau)^{-n/2}
e^{-\Phi_\gamma(x,y)/\tau}
\widetilde u_\gamma(\tau;x,y),
\qquad
\widetilde u_\gamma(\tau;x,y)
=
\sum_{j\ge0}u_{\gamma,j}(x,y)\tau^j .
\end{equation*}
Substituting this expression into the complexified heat operator gives
\begin{align*}
(\partial_\tau+\Delta_x^{\C})\widetilde K_\gamma
=
(4\pi\tau)^{-n/2}
e^{-\Phi_\gamma/\tau}
\Bigg[
\frac{\Phi_\gamma-g_{\mathbb C}\bigl(\nabla_x^{\mathbb C}\Phi_\gamma,\nabla_x^{\mathbb C}\Phi_\gamma\bigr)}{\tau^2}\widetilde u_\gamma  
+&\frac1\tau\left(
2\nabla_x^{\C}\Phi_\gamma\cdot\nabla_x^{\C}\widetilde u_\gamma-(\Delta_x^{\C}\Phi_\gamma)\widetilde u_\gamma
-\frac n2\widetilde u_\gamma\right) \\
+&\partial_\tau\widetilde u_\gamma
+\Delta_x^{\C}\widetilde u_\gamma
\Bigg].
\end{align*}
By Lemma \ref{lem:eikonal-equation}, the coefficient of the leading $\tau^{-2}$-order term vanishes. Define the transport operator
\begin{equation*}
T_\gamma:=2\nabla_x^{\C}\Phi_\gamma\cdot\nabla_x^{\C}-
\Delta_x^{\C}\Phi_\gamma-\frac n2.
\end{equation*}
Then the remaining expression becomes
\begin{equation*}
(\partial_\tau+\Delta_x^{\C})\widetilde K_\gamma
=(4\pi\tau)^{-n/2}
e^{-\Phi_\gamma/\tau}
\left[\frac1\tau T_\gamma\widetilde u_\gamma
+\partial_\tau\widetilde u_\gamma
+\Delta_x^{\C}\widetilde u_\gamma\right].
\end{equation*}

Now substitute
\begin{equation*}
\widetilde u_\gamma(\tau;x,y)
=\sum_{j\ge0}u_{\gamma,j}(x,y)\tau^j .
\end{equation*}
Comparing powers of $\tau$ gives the transport equations
\begin{equation*}
T_\gamma u_{\gamma,0}=0,\qquad 
\left(T_\gamma+j\right)u_{\gamma,j}
=-\Delta_x^{\C}u_{\gamma,j-1},~~\text{for}~~j\ge1.
\end{equation*}

It remains to see that these equations can be solved locally along the chosen holomorphic geodesic branch. Since the branch is assumed to be no-conjugate, every
$y$ near the chosen branch can be written uniquely in the form
\[
y=\exp_x^{\C}(v(x,y)),\qquad v(x,y)\in T_xM_{\C},
\]
with $v(x,y)$ depending analytically on $x$ and $y$. Near the chosen initial vector and away from the isotropic
cone $g_{\C}(v,v)=0$, choose a branch of the square root and set
\[
r:=\bigl(g_{\C}(v,v)\bigr)^{1/2},\qquad
\omega:=\frac{v}{r}.
\]
Then
\[
y=\exp_x^{\C}(r\omega).
\]
Here $r$ is a complex-valued radial variable and $\omega$ lies on the complex unit quadric
\[
Q_y^{\C}:=
\{\omega\in T_yM_{\C}: g_{\C}(\omega,\omega)=1\}.
\]
In these coordinates, we have
\begin{equation*}
\Phi_\gamma=\frac{r^2}{4},\qquad
2\nabla_x^{\C}\Phi_\gamma\cdot\nabla_x^{\C}
=r\partial_r.
\end{equation*}

Let $J_\gamma(r,\omega)$ be the reduced Jacobian of the complex exponential
map, defined by
\begin{equation*}
\dd\operatorname{vol}_{g_\C}
=r^{n-1}J_\gamma(r,\omega)\,\dd r\,\dd\omega.
\end{equation*}
Then, with our convention $\Delta^{\C}=-\operatorname{div}\nabla^{\C}$,
\begin{equation*}
\Delta_x^{\C}\Phi_\gamma
=-\frac n2-\frac r2\partial_r\log J_\gamma.
\end{equation*}
Therefore
\begin{equation*}
T_\gamma=r\partial_r+\frac r2\partial_r\log J_\gamma.
\end{equation*}
The leading transport equation becomes
\begin{equation*}
\left(r\partial_r+\frac r2\partial_r\log J_\gamma
\right)u_{\gamma,0}=0.
\end{equation*}
Hence locally
\begin{equation*}
u_{\gamma,0}(r,\omega)
=C_\gamma(\omega)J_\gamma(r,\omega)^{-1/2},
\end{equation*}
We fix the normalization by choosing the analytic function $C_\gamma(\omega)\equiv1$.

For $j\ge1$, the transport equation becomes
\begin{equation*}
\left(r\partial_r+\frac r2\partial_r\log J_\gamma+j
\right)u_{\gamma,j}=-\Delta_x^{\C} u_{\gamma,j-1}.
\end{equation*}
Since $J_\gamma$ is non-vanishing on the chosen branch, this is a first-order
linear ordinary differential equation in the radial variable with analytic 
coefficients. Once the normalization of $u_{\gamma,j}$ is fixed on a transverse section, it has a unique analytic solution along the branch.

By induction, all coefficients $u_{\gamma,j}$ are locally constructed and
are determined by the transport recursion. Therefore the formal expression
\begin{equation*}
\widetilde K_\gamma(\tau;x,y)
=(4\pi\tau)^{-n/2}e^{-\Phi_\gamma(x,y)/\tau}
\sum_{j\ge0}u_{\gamma,j}(x,y)\tau^j
\end{equation*}
satisfies
\[
(\partial_\tau+\Delta_x^{\C})\widetilde K_\gamma=0.\qedhere
\]
\end{proof}

\begin{rmk}
The natural normalization in Theorem \ref{thm:heat-kernel-expansion-complex-geodesic} agrees with the usual WKB normalization, and we shall call the formal expansion as the WKB-type formal solution or a WKB branch.    
\end{rmk}

To state a converse rigid statement, we introduce the following definition.
\begin{defn}\label{def:geometric-WKB}
Let $\Omega_{\gamma}$ be a no-conjugate geodesic branch chart, assume that there exist holomorphic functions
\[
\Psi_\kappa,\ u_{\kappa,j}\in \mathcal O(\Omega_{\gamma}),
\qquad j\geq 0,
\]
with $u_{\kappa,0}$ nowhere vanishing, such that
\[
\widetilde K_\kappa(\tau;x,y)=\frac{1}{(4\pi\tau)^{n/2}}
\exp\left(-\frac{\Psi_\kappa(x,y)}{\tau}\right)
\sum_{j\geq 0}u_{\kappa,j}(x,y)\tau^j
\]
solves
\[
\left(\partial_\tau+\Delta_x^{\mathbb C}\right)\widetilde K_\kappa=0
\]
as a formal series in $\tau$. We say that it is a geometric branch if the complex Lagrangian generated by
$\Psi_\omega$,
\[
\Lambda_{\Psi_\kappa}:=\left\{\bigl(x,\dd_x\Psi_\kappa(x,y);y,-\dd_y\Psi_\kappa(x,y)\bigr)
:(x,y)\in\Omega_{\gamma}\right\},
\]
is contained in the holomorphic geodesic relation. That is, on the no-conjugate chart $\Omega_\gamma$, there exists a holomorphic inverse branch $v_\gamma(x,y)$ of the endpoint map $E(x,v)=(x,\exp_x^{\mathbb C}v)$ such that
\[
y=\exp_x^{\mathbb C}v_\gamma(x,y),\quad\text{and}\quad
\nabla_x^{\mathbb C}\Psi_\kappa(x,y)=-\frac12 v_\gamma(x,y).
\]
\end{defn}

\begin{prop}\label{prop:phase-rigidity-geodesic-branch}
Let $\widetilde{K}_{\kappa}$ be a geometric branch in Definition \ref{def:geometric-WKB}. Then
\[
\Psi_\kappa(x,y)=\frac{\ell_\gamma(x,y)^2}{4}\qquad
\text{on }~ \Omega_\gamma.
\]
And we can normalize it as Theorem \ref{thm:heat-kernel-expansion-complex-geodesic}.
\end{prop}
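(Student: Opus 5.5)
Substituting $\widetilde K_\kappa$ into the heat operator, exactly as in the proof of Theorem \ref{thm:heat-kernel-expansion-complex-geodesic}, makes the coefficient of $\tau^{-2}$ equal to $\bigl(\Psi_\kappa-g_{\C}(\nabla_x^{\C}\Psi_\kappa,\nabla_x^{\C}\Psi_\kappa)\bigr)u_{\kappa,0}$. Since $u_{\kappa,0}$ is nowhere vanishing on $\Omega_\gamma$, the formal heat equation forces the eikonal equation $\Psi_\kappa=g_{\C}(\nabla_x^{\C}\Psi_\kappa,\nabla_x^{\C}\Psi_\kappa)$ on $\Omega_\gamma$. The geometric-branch hypothesis gives $\nabla_x^{\C}\Psi_\kappa=-\tfrac12 v_\gamma(x,y)$. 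Substituting this into the eikonal equation yields
\[
\Psi_\kappa(x,y)=\tfrac14\,g_{\C,x}\bigl(v_\gamma(x,y),v_\gamma(x,y)\bigr)=\tfrac14\,\ell_\gamma(x,y)^2,
\]
where the last equality is the constancy of $g_{\C}(\dot\gamma,\dot\gamma)$ along holomorphic geodesics established before Lemma \ref{lem:eikonal-equation}. This argument is purely algebraic and needs no integration constant.

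Two points need care. First, the hypothesis must refer to the same inverse branch $v_\gamma$ that defines $\ell_\gamma$ on the chart. Definition \ref{def:geometric-WKB} says ``there exists a holomorphic inverse branch''. On a connected no-conjugate chart, two inverse branches of the biholomorphism $E_\gamma$ with the same target coincide, so I would argue that $v_\gamma$ is the branch of the chart (shrinking if needed, using connectedness of $U_\gamma$). Second, as a consistency check I would compare with Lemma \ref{lem:eikonal-equation}, which gives $\nabla_x^{\C}\Phi_\gamma=-\tfrac12 v_\gamma$. Hence $\nabla_x^{\C}(\Psi_\kappa-\Phi_\gamma)=0$, so the difference depends only on $y$, and the algebraic identity above shows that this function of $y$ vanishes.

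For the normalization, once $\Psi_\kappa=\Phi_\gamma$ the $\tau^{-1}$ coefficient gives $T_\gamma u_{\kappa,0}=0$, with $T_\gamma$ the transport operator of Theorem \ref{thm:heat-kernel-expansion-complex-geodesic}. In the polar coordinates $(r,\omega)$ used there, this equation gives $u_{\kappa,0}=C(\omega)J_\gamma^{-1/2}$ with $C$ analytic and nonvanishing. Dividing $\widetilde K_\kappa$ by $C$ is not literally allowed, since $C$ depends on the point and division does not commute with $\Delta_x^{\C}$. Instead, I would say that the normalization ``can be chosen'': the space of such formal solutions with phase $\Phi_\gamma$ is parametrized by transversal data $C(\omega)$ together with the data for the higher $u_j$. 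Choosing $C\equiv 1$ recovers, by the uniqueness of the recursive transport solution, exactly $\widetilde K_\gamma$.

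The main obstacle is this last step: making precise that the only freedom is in the transversal data, and that the normalization matches Theorem \ref{thm:heat-kernel-expansion-complex-geodesic}. The phase identification itself is immediate.
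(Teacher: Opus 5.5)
Your argument for the phase is exactly the paper's proof: the $\tau^{-2}$ coefficient $\bigl(\Psi_\kappa-g_{\C}(\nabla_x^{\C}\Psi_\kappa,\nabla_x^{\C}\Psi_\kappa)\bigr)u_{\kappa,0}$ together with $u_{\kappa,0}\neq 0$ gives the eikonal equation, and inserting $\nabla_x^{\C}\Psi_\kappa=-\tfrac12 v_\gamma$ yields $\Psi_\kappa=\tfrac14 g_{\C,x}(v_\gamma,v_\gamma)=\tfrac14\ell_\gamma^2$. The paper gives no argument for the normalization clause, so your extra points are additions rather than a different route: the observation that $u_{\kappa,0}=C(\omega)J_\gamma^{-1/2}$, and that normalizing means choosing transversal data (also for the higher $u_{\kappa,j}$, which Theorem~\ref{thm:heat-kernel-expansion-complex-geodesic} leaves implicit), is a correct elaboration; the branch-identification step is unnecessary, since the $\gamma$ in the conclusion is by definition the one whose $v_\gamma$ appears in the hypothesis, and as phrased your uniqueness claim only covers branches valued in $U_\gamma$, because distinct inverse branches of $E$ (e.g.\ those of the $\gamma_k$ on $\HH^2_{\C}$) coexist over one chart.
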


\begin{proof}
Substituting
\[
\widetilde K_\kappa=(4\pi\tau)^{-n/2}e^{-\Psi_\kappa/\tau}
\sum_{j\geq 0}u_{\kappa,j}\tau^j
\]
into $(\partial_\tau+\Delta_x^{\mathbb C})\widetilde K_\kappa=0$ and
taking the coefficient of $\tau^{-2}$ gives
\[
\left(\Psi_\kappa-g_{\mathbb C}(\nabla_x^{\C}\Psi_\kappa,\nabla_x^{\C}\Psi_\kappa)\right)u_{\kappa,0}=0.
\]
Since $u_{\kappa,0}$ is nowhere vanishing, $\Psi_\kappa$ satisfies the holomorphic eikonal equation
\[
\Psi_\kappa=g_{\mathbb C}(\nabla_x^{\C}\Psi_\kappa,\nabla_x^{\C}\Psi_\kappa).
\]
Combined with the geometric condition, in particular,
\[\nabla_x^{\mathbb C}\Psi_\kappa(x,y)=-\frac12 v_\gamma(x,y),\]
we get
\[
\Psi_{\kappa}=g_{\mathbb C}(\nabla_x^{\C}\Psi_\kappa,\nabla_x^{\C}\Psi_\kappa)=\frac14 g_{\mathbb C,x}(v_\gamma(x,y),v_\gamma(x,y))
=\frac{1}{4}\ell_{\gamma}(x,y)^2.\qedhere\]
\end{proof}

Now we fix $x,y\in M\subset M_{\C}$ and fix a minimal real geodesic branch $\gamma_0$ from $x$ to $y$, with phase
\[
\Phi_0(x,y)=\frac{\ell_0(x,y)^2}{4}.
\]
We have proven in Theorem \ref{thm:local-borel-summability} that the heat kernel has a $1$--Gevrey formal expansion
\[
\widetilde K_0(\tau;x,y)
=\frac{1}{(4\pi\tau)^{n/2}}
e^{-\Phi_0(x,y)/\tau}\widetilde{u}_0(\tau;x,y),\qquad \widetilde{u}_0(\tau;x,y)=
\sum_{k\geq 0}u_{0,k}(x,y)\tau^k.
\]
Let
\[
\widehat u_0(\zeta;x,y)
=\sum_{k\geq 0}u_{0,k}(x,y)\frac{\zeta^k}{k!}
\]
be the Borel transform of $\widetilde{u}_0$.

\begin{thm}\label{thm:Borel-singularity-complex-geodesic}
With the above notations, suppose that $\widehat u_0$ admits analytic continuation in the Borel
plane, and suppose that $\zeta=\omega(x,y)$ is a non-zero {integrable} singularity.
After taking a small Hankel contour
$\Gamma_\omega$ around $\zeta=\omega(x,y)$, assume that the expression
\[
\widetilde K(\tau;x,y)
=
\frac{1}{(4\pi \tau)^{n/2}}
e^{-\Phi_0(x,y)/\tau}
\frac{1}{\tau}
\int_{\Gamma_\omega}
e^{-\zeta/\tau}
\widehat u_0(\zeta;x,y)\,\dd\zeta
\]
has a non-zero geometric branch of the complexified heat equation attached to a holomorphic geodesic $\gamma$
\[
\widetilde K(\tau;x,y)
\sim
\frac{1}{(4\pi\tau)^{n/2}}
e^{-\Psi_\kappa(x,y)/\tau}
\sum_{k\geq0}u_{\kappa,j}(x,y)\tau^j,
\]
Then
\[
\omega(x,y)=\Psi_\kappa(x,y)-\Phi_0(x,y)=\frac{1}{4}\bigl(l_{\gamma}^2(x,y)-l_0^2(x,y)\bigr).
\]
\end{thm}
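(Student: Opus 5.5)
The plan is to compute the exponential weight of the Hankel-contour integral directly and then compare it with the weight of the assumed geometric branch. Write $\widehat u_0(\zeta;x,y)=\widehat\phi(\zeta-\omega;x,y)$ near $\zeta=\omega$, where $\widehat\phi$ is the germ recentred at the origin. Then shift $\zeta=\omega+\zeta'$ in the Hankel integral:
\[
\frac1\tau\int_{\Gamma_\omega}e^{-\zeta/\tau}\widehat u_0(\zeta)\,\dd\zeta
=e^{-\omega/\tau}\,\frac1\tau\int_{\Gamma_0}e^{-\zeta'/\tau}\widehat\phi(\zeta')\,\dd\zeta'.
\]
Because $\omega$ is an integrable singularity, the recentred Hankel integral has a formal expansion in powers of $\tau$ (possibly with a prefactor $\tau^{\alpha}$ and logarithmic corrections). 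Its leading term is non-zero exactly when the variation of $\widehat u_0$ at $\omega$ is non-zero, and this is guaranteed by the hypothesis that $\widetilde K$ is a non-zero branch. Consequently
\[
\widetilde K(\tau;x,y)=(4\pi\tau)^{-n/2}e^{-(\Phi_0+\omega)/\tau}\,\widetilde v(\tau;x,y),
\]
where $\widetilde v$ is a non-zero series of sub-exponential type.

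Next, compare exponents with the assumed asymptotic form. Two expansions $e^{-A/\tau}\widetilde v$ and $e^{-B/\tau}\widetilde w$, with $\widetilde v,\widetilde w$ non-zero and sub-exponential, can agree only if $A=B$. If $A\neq B$, the ratio behaves like $e^{(B-A)/\tau}$ times a sub-exponential factor. Along the direction where the Laplace/Hankel integral is taken, this ratio is either exponentially large or exponentially small, which contradicts $\widetilde v$ and $\widetilde w$ having non-zero leading terms of power type. It follows that
\[
\Phi_0+\omega=\Psi_\kappa \qquad\text{pointwise on the real locus.}
\]
Since both sides depend analytically on $(x,y)$, being analytic continuations in the parameters, the identity also holds on $\Omega_\gamma$. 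This gives $\omega=\Psi_\kappa-\Phi_0$.

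The last step applies Proposition \ref{prop:phase-rigidity-geodesic-branch}. Because $\widetilde K_\kappa$ is a geometric branch attached to $\gamma$, that proposition yields $\Psi_\kappa=\ell_\gamma^2/4$. Together with $\Phi_0=\ell_0^2/4$ (Lemma \ref{lem:eikonal-equation} for the real branch), this gives $\omega=\tfrac14(\ell_\gamma^2-\ell_0^2)$.

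I expect the main obstacle to be the exponent comparison. The asymptotic relation ``$\sim$'' must be interpreted so that exponential weights are uniquely determined. This needs a fixed sector in the $\tau$-plane where $\re\bigl((\Psi_\kappa-\Phi_0-\omega)/\tau\bigr)\neq0$ unless the difference vanishes, together with the fact that the recentred Hankel integral is genuinely non-zero and carries no hidden exponential factor. I would settle this by working on a ray $\arg\tau=\arg\omega$ and using the integrability of the singularity to bound the Hankel integral above and below by powers of $\tau$. Uniqueness of the exponent then follows from the fact that any $e^{c/\tau}$ with $c\neq0$ cannot be asymptotic to a non-zero power series on such a ray.
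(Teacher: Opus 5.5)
Your proposal is correct and follows essentially the same route as the paper: shift $\zeta=\omega+\xi$ to read off the exponential weight $e^{-(\Phi_0+\omega)/\tau}$, identify it with $e^{-\Psi_\kappa/\tau}$, and then apply Proposition \ref{prop:phase-rigidity-geodesic-branch} to get $\Psi_\kappa=\ell_\gamma^2/4$. The paper takes the identification of exponential weights for granted, so your uniqueness argument and your analytic-continuation step to $\Omega_\gamma$ are extra detail rather than a different proof. One small correction to that argument: on the single ray $\arg\tau=\arg\omega$ the factor $e^{c/\tau}$ can be purely oscillatory rather than exponentially large or small, so either run the comparison on an open sector or use your last-paragraph oscillation argument.
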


\begin{proof}
The exponential factor in the Hankel contribution is
\[
e^{-\Phi_0(x,y)/\tau}e^{-\zeta/\tau}.
\]
Since the Hankel contour $\Gamma_\omega$ is taken around the singularity
$\zeta=\omega(x,y)$, write locally $\zeta=\omega(x,y)+\xi$, 
\[
e^{-\Phi_0(x,y)/\tau}e^{-\zeta/\tau}=e^{-(\Phi_0(x,y)+\omega(x,y))/\tau}e^{-\xi/\tau}.
\]
Thus the WKB phase carried by the Hankel contribution is
\[
\Psi_\kappa(x,y)=\Phi_0(x,y)+\omega(x,y).
\]
By Proposition \ref{prop:phase-rigidity-geodesic-branch},
\[
\omega(x,y)=\Psi_\kappa(x,y)-\Phi_0(x,y)=\frac{1}{4}\bigl(l_{\gamma}^2(x,y)-l_0^2(x,y)\bigr).\qedhere
\]
\end{proof}

We next turn to several examples to clarify how the above correspondence between Borel-plane singularities and complexified geodesic branches is realized in concrete situations.
\smallskip
\begin{eg}[The hyperbolic plane $\HH_{\R}^2$]
\label{eg:complex-geodesics-H2}
The exact heat kernel formula on $\HH^2_{\R}$ is known
\begin{equation*}
K_{\HH_{\R}^{2}}(\tau;a,b)
=
\frac{\sqrt{2}e^{-\tau/4}}{(4\pi \tau)^{3/2}}\int_{d^2/4}^{\infty}\frac{e^{-w/\tau}}{\sqrt{\cosh\sqrt{4w}-\cosh d}}~\dd w
\end{equation*}
which has already the form of Laplace-transform. The Borel coordinate is related to $w$ by 
$$
\xi=w-d^2/4. 
$$
The Borel singularities appear at
$$
{\cosh\sqrt{4w}}=\cosh d, \quad \text{i.e.}\quad \xi=\frac{(d+2\pi i k)^2}{4}-\frac{d^2}{4}, \quad k\in \Z.
$$
They correspond precisely to the critical values of $\tau\mathcal E_{\C}$ relative to the real geodesic $\gamma_{{0},d}$
$$
\tau\mathcal E_{\C}(\gamma_{{k},d})-\tau\mathcal E_{\C}(\gamma_{{0},d})
$$
as computed in Example \ref{eg:H^2-PL}. A detailed resurgence and Picard–Lefschetz analysis of this example will be given in Section \ref{sec:H^2-resurgence} and Section \ref{sec:H^2-flow}.

For general even-dimensional hyperbolic spaces, explicit Borel summation / analytic continuation statements were analyzed in \cite{Du}.  A resurgence analysis of Borel singularities of hyperbolic metric on Riemann surfaces, within the Picard–Lefschetz framework,  is also studied in \cite{Wu-thesis}.  

\end{eg}

\begin{eg}[The sphere $S^2$, $0<d(x,y)<\pi$]
The heat kernel of $S^2$ is radial, and has the spectral expansion
\[
K_{S^2}(\tau;r)
=
\frac{1}{4\pi}
\sum_{\ell=0}^{\infty}
(2\ell+1)e^{-\tau\ell(\ell+1)}P_\ell(\cos r),\qquad r:=d_{S^2}(x,y)\in(0,\pi).
\]
To read off the Borel-plane singularities, it is more transparent to use the following Abel--Poisson integral representation:
\[
K_{S^2}(\tau;r)=\frac{\sqrt{2}\,e^{\tau/4}}{(4\pi\tau)^{3/2}}\sum_{k\in\mathbb Z}(-1)^k
\int_r^\pi \frac{(\phi+2\pi k)
\exp\!\left[-\frac{(\phi+2\pi k)^2}{4\tau}\right]}
{\sqrt{\cos r-\cos\phi}}
\,\dd\phi.
\]
Indeed, the exponential weights in the Abel--Poisson formula are of the form
\[
\exp\!\left[-\frac{(\phi+2\pi k)^2}{4\tau}\right].
\]
Near the lower endpoint $\phi=r$, the $k$-th summand has action
\[
S_k(r)=\frac{(r+2\pi k)^2}{4}.
\]
Thus the shortest geodesic sector corresponds to $k=0$, and the other terms correspond to the other geodesic branches on the same great circle, with lengths
\[
\ell_k=|r+2\pi k|,\qquad k\in\mathbb Z.
\]
Therefore the action difference between the $k$-th geodesic branch and the shortest one is
\[
\omega_k(r)=S_k(r)-S_0(r)=
\frac{(r+2\pi k)^2-r^2}{4}
=\pi k r+\pi^2 k^2.
\]
These are precisely the expected locations of the Borel singularities of the perturbative sector associated with the shortest geodesic.

\begin{rmk}
Unlike the affine complexification of $\HH_{\R}^2$, the affine complexification of $S^2$ does not produce additional holomorphic geodesics between real endpoints. The branches indexed by $k$ are simply the real winding geodesics on $S^2$, so this example records the real multi-geodesic contribution rather than a new complexified one.    
\end{rmk}

\begin{rmk}
The restriction $0<r<\pi$ avoids the diagonal $r=0$ and the cut locus $r=\pi$. At $r=0$, closed geodesic degeneracies occur and the explicit computation relating geodesic information to Borel singularities will be computed in Example \ref{eg:r=0}. At $r=\pi$, the shortest geodesic is no longer isolated, since $x$ and $y$ are antipodal. 
\end{rmk}
\end{eg}

\begin{eg}[The sphere $S^2$, $d(x,y)=0$]\label{eg:r=0}
For $r=0$, we can compute explicitly as follows. The heat trace of $S^2$ is known by
\[
Z_{S^2}(\tau)
:=
\operatorname{Tr}(e^{-\tau\Delta_{S^2}})
=
\sum_{\ell=0}^{\infty}(2\ell+1)e^{-\ell(\ell+1)\tau}.
\]
It is convenient to work with the shifted heat trace
\[
\Phi_{S^2}(\tau)
:=e^{-\tau/4}Z_{S^2}(\tau)
=\sum_{\ell=0}^{\infty}
2\left(\ell+\frac12\right)e^{-(\ell+\frac12)^2\tau}.
\]
Set $f_\tau(x):=2x e^{-\tau x^2}$, then
\[
\Phi_{S^2}(\tau)=\sum_{\ell=0}^{\infty}
f_\tau\left(\ell+\frac12\right).
\]
We apply the midpoint Euler--Maclaurin formula
\[
\sum_{\ell=0}^{\infty} f\left(\ell+\frac12\right)
\sim
\int_0^\infty f(x)\,\dd x
-\sum_{m\geq 1}
\frac{B_{2m}(\frac12)}{(2m)!}
f^{(2m-1)}(0),
\]
where $B_{2m}(x)$ denotes the Bernoulli polynomial.  In the present case,
\[
\int_0^\infty f_\tau(x)\,\dd x=\frac1\tau,\qquad 
f_\tau^{(2m-1)}(0)
=2\frac{(-\tau)^{m-1}}{(m-1)!}(2m-1)!.
\]
Substituting this into the Euler--Maclaurin formula gives
\[
\Phi_{S^2}(\tau)
\sim\frac1\tau+\sum_{m\geq 1}
\frac{(-1)^m B_{2m}(\frac12)}{m!}\tau^{m-1}=\frac{1}{\tau}\left(1+\frac{(-1)^m B_{2m}(\frac12)}{m!}
\tau^{m}\right).
\]
Set $\phi_{S^2}(\tau):=\tau\Phi_{S^2}(\tau)$, then
\[
\widehat{\phi}_{S^2}(\xi)
=1+\sum_{m\geq 1}
\frac{(-1)^m B_{2m}(\frac12)}{(m!)^2}\xi^m.
\]

Using the identities of Bernoulli polynomials
\[
B_{2m}\left(\frac12\right)
=\left(2^{1-2m}-1\right)B_{2m},\qquad
B_{2m}
=(-1)^{m+1}
\frac{2(2m)!}{(2\pi)^{2m}}\zeta(2m).
\]
and the formula of the Dirichlet eta function
\[
\eta(s):=(1-2^{1-s})\zeta(s)
=\sum_{k\geq 1}\frac{(-1)^{k-1}}{k^s},
\]
we obtain
\[
\widehat{\phi}_{S^2}(\xi)=1+2\sum_{m\geq 1}
\frac{\binom{2m}{m}}{4^m}
\left(\sum_{k\geq 1}\frac{(-1)^{k-1}}{(k^2\pi^2)^m}\right)\xi^m.
\]
Interchanging the two sums for $|\xi|<\pi^2$,
gives
\[
\widehat{\phi}_{S^2}(\xi)
=1+2\sum_{k\geq 1}(-1)^{k-1}
\sum_{m\geq 1}\frac{\binom{2m}{m}}{4^m}\left(\frac{\xi}{k^2\pi^2}\right)^m.
\]
Using the generating function
\[
\sum_{m\geq 0}\frac{\binom{2m}{m}}{4^m}x^m
=(1-x)^{-1/2},
\]
we arrive at the explicit Borel expression
\[
\widehat{\phi}_{S^2}(\xi)=1+2\sum_{k\geq 1}(-1)^{k-1}\left[\left(1-\frac{\xi}{k^2\pi^2}\right)^{-1/2}
-1\right].
\]
It shows that $\widehat{\phi}_{S^2}(\xi)$ has algebraic branch points at $k^2\pi^2$, $k\in\Z_{\ge1}$. Furthermore, 
\[
\operatorname{Sing}\bigl(\widehat{\phi}_{S^2}\bigr)
=\{\,k^2\pi^2:\ k\in\Z_{\ge1}\,\}.
\]

With the Laplace kernel $e^{-\xi/\tau}$, the
corresponding exponentially small terms have the form
\[
e^{-k^2\pi^2/\tau}=e^{-(2k\pi)^2/(4\tau)}.
\]
where $(2k\pi)^2/(4\tau)$ is the geodesic action associated with the $k$-fold iterate of a
closed geodesic of length $2\pi$ on the unit sphere.  Thus in the Borel plane, we can see the correspondence
\[
\text{closed geodesic length }2\pi k
\quad\Longleftrightarrow\quad
\text{Borel singularity } \xi=k^2\pi^2
\quad\Longleftrightarrow\quad
e^{-k^2\pi^2/\tau}.
\]
\end{eg}

\begin{rmk} For the compact rank-one symmetric spaces, the method relating heat-kernel
exponential sectors to Borel singularities is similar to the above on $S^2$, i.e the basic action-difference formula for the singularity locations. The difference may appear in the amplitudes and in the local type of the Borel singularities, such as the transverse Jacobi field mulitiplicities and the Stokes coefficients.
\end{rmk}

\begin{eg}
The higher-rank analogue of the hyperbolic plane
$SL(2,\R)/SO(2,\R)$ is
\[
M_n=SL(n,\R)/SO(n,\R).
\]
We use the standard real model
\[
M_n\cong \mathcal P_n^{\R}=
\{Q\in \operatorname{Sym}_n(\R): Q>0,\ \det Q=1\},
\qquad gSO(n,\R)\longmapsto gg^T.
\]
Its affine complexification is
\[
SL(n,\C)/SO(n,\C)
\simeq\mathcal P_n^{\C}
:=\{Q\in \operatorname{Sym}_n(\C):\det Q=1\}.
\]
Let
\[
\mathfrak a=
\left\{H=\operatorname{diag}(h_1,\ldots,h_n):\sum_{j=1}^n h_j=0\right\}.
\]
Up to the Weyl group, a point of $M_n$ may be written as
\[
Q=\exp(2H_0),\qquad H_0\in\mathfrak a^+ .
\]
With the normalization
\[
\langle A,B\rangle=\operatorname{tr}(AB),\qquad A,B\in\mathfrak a,
\]
the principal real geodesic from $I$ to $\exp(2H_0)$ is
\[
Q_0(t)=\exp(2tH_0),\qquad 0\leq t\leq 1,
\]
and its squared length is, up to the fixed global normalization of the metric,
\[
L_0^2=\operatorname{tr}(H_0^2).
\]

The affine complexification remembers the multi-valuedness of the matrix
logarithm.  If
\[
m=(m_1,\ldots,m_n)\in\mathbb Z^n,\qquad\sum_{j=1}^n m_j=0,
\]
then
\[
Q_m(t)=\exp(2t H_m),
\qquad H_m:=H_0+\pi i\,\operatorname{diag}(m_1,\ldots,m_n),\quad 0\leq t\leq 1,
\]
is a holomorphic geodesic branch joining the same endpoints in
$\mathcal P_n^{\C}$.  Its complex squared length is
\[
L_m^2=\operatorname{tr}(H_m^2).
\]
Consequently the affine model contains the full lattice of holomorphic geodesic
actions
\[
\left\{\frac14\,\operatorname{tr}(H_m^2):m\in\mathbb Z^n,\ \sum_j m_j=0\right\}.
\]
Relative to the principal real branch, the corresponding action
differences are
\[
\omega_m=\frac14\left(\operatorname{tr}(H_m^2)-\operatorname{tr}(H_0^2)\right).
\]

The same lattice is visible from the heat kernel.  Let $K_\tau(Q)$ be the
heat kernel based at the identity, and write its radial part as
$K_\tau(\exp(2H))$.  For a noncompact symmetric space $G/K$, the Abel
transform sends the radial heat equation to a Euclidean heat equation on
$\mathfrak a$, with the usual $\rho$-shift:
\[
\mathcal A K_\tau(H)
=e^{-|\rho|^2\tau}
(4\pi\tau)^{-r/2}
\exp\left(-\frac{|H|^2}{4\tau}\right),
\qquad r=\dim\mathfrak a=n-1 .
\]
Hence the heat kernel is recovered by applying the inverse Abel transform:
\[
K_\tau(\exp(2H))
=\mathcal A^{-1}
\left[e^{-|\rho|^2\tau}
(4\pi\tau)^{-r/2}
\exp\left(-\frac{|\cdot|^2}{4\tau}\right)
\right](H).
\]
For $SL(n,\R)/SO(n,\R)$, the inverse Abel transform has
Abel-type kernels with square-root branch factors
\[
\bigl(\sinh(\xi_a-h_j)\bigr)^{-1/2},
\]
so its complex singular hyperplanes are
\[
\xi_a-h_j\in \pi i\mathbb Z.
\]
These are exactly the hyperplanes produced by the same logarithmic
multi-valuedness $H_0\mapsto H_m$.
Equivalently, the possible Borel singular locations, relative to
the principal branch, are governed by the same action differences
\[
\omega_m=\frac14\left(\operatorname{tr}(H_m^2)-\operatorname{tr}(H_0^2)\right),\qquad
H_m=H_0+\pi i\,\operatorname{diag}(m_1,\ldots,m_n).
\]
Similarly, this lattice of extra branches is not visible in the minimal or germ-level complexification.  
\end{eg}

\subsection{Heat asymptotics and Alien operator}\label{sec:heat-alien}

We briefly recall the alien-operator convention used below, following
\cite{LiLiTang2026}.  For systematic accounts of alien calculus and resurgence
theory, we refer the reader to Écalle's foundational trilogy
\cite{Ecalle1981I,Ecalle1981II,Ecalle1985III} and to the monograph of
Mitschi and Sauzin \cite{MitschiSauzin2016}. 

Suppose that $\widehat f(\xi)$ is a holomorphic germ near $0$ obtained by the Borel transform of a $1$--Gevrey series. Let $\gamma_\omega^+$ be a continuation path approaching a small
neighborhood of $\omega$ from the positive side of the relevant Stokes ray,
and let $\gamma_\omega^-$ be obtained from $\gamma_\omega^+$ by adjoining a
small clockwise loop around $\omega$. Suppose that $\widehat{\phi}$ admits analytic
continuation along $\gamma_\omega^{\pm}$ to a singular point $\omega\in\C^*$ and denote the resulting holomorphic germ near $\omega$ by $\cont_{\gamma_\omega^{\pm}} f$. The local variation is
\[
\var_{\omega}^+(\widehat f)
:=\cont_{\gamma_\omega^{+}}\widehat f-
\cont_{\gamma_\omega^{-}}\widehat f .
\]
It measures the local monodromy of the analytically continued Borel germ at $\omega$. See Figure \ref{fig:alien-operator-schematic} below.

\begin{figure}[H]
\centering
\begin{tikzpicture}[>=Latex, scale=0.62]

\coordinate (O) at (0,0);
\coordinate (W) at (5.2,3.5);
\coordinate (WW) at (7.8,5.27);
\coordinate (M1) at (2.6,1.75);
\coordinate (M2) at (1.71,1.13);

\fill[green!20] (5.6,3.7) circle (0.38);
\draw[green!50!black] (5.6,3.7) circle (0.38);
\node[green!40!black] at (6,4.5) {$D$};

\filldraw[black] (O) circle (1.3pt);
\node[below left] at (O) {$0$};

\draw[dashed, thick] (O) -- (WW);
\draw[->,thick]
    (0.38,0.00)
    .. controls (1.2,0.05) and (3.1,0.85) ..
    (4,1.95)
    .. controls (4.62,2.55) and (4.70,5.00) .. (5.4,3.9)
    ;
\draw[->,thick]
    (0.38,0.00)
    .. controls (1.2,0.05) and (3.2,0.75) ..
    (5.35,2.7)
    .. controls (5.62,3) and (5.8,3.2) .. (5.6,3.7)
    ;
\draw[red, line width=1.1pt]
    ($(W)+(-0.12,-0.12)$) -- ($(W)+(0.12,0.12)$);
\draw[red, line width=1.1pt]
    ($(W)+(-0.12,0.12)$) -- ($(W)+(0.12,-0.12)$);
\draw[red, line width=1.1pt]
    ($(M1)+(-0.12,-0.12)$) -- ($(M1)+(0.12,0.12)$);
\draw[red, line width=1.1pt]
    ($(M1)+(-0.12,0.12)$) -- ($(M1)+(0.12,-0.12)$);
\draw[red, line width=1.1pt]
    ($(M2)+(-0.12,-0.12)$) -- ($(M2)+(0.12,0.12)$);
\draw[red, line width=1.1pt]
    ($(M2)+(-0.12,0.12)$) -- ($(M2)+(0.12,-0.12)$);
\node[below,red] at (W) {$w$};
\node[above] at (6.05,1.58) {$\gamma_w^+$};
\node[above] at (4.65,4.58) {$\gamma_w^-$};
\end{tikzpicture}
\caption{The path $\gamma_\omega^+$ analytically continues the Borel germ
from the point near origin to a small disc $D$ behind the singular point $\omega$,
passing to the right of the preceding singularities on the same ray.  The
path $\gamma_\omega^-$ is obtained from $\gamma_\omega^+$ by adjoining a
small clockwise loop around $\omega$ inside $D$.}
\label{fig:alien-operator-schematic}
\end{figure}

\begin{defn}
\label{def:positive-alien-operator}
With the above notation and assumption, we define
\[
\Delta_{\omega}^{+}
=
\mathcal B^{-1}\circ
\tau_{-\omega}\circ
\operatorname{var}_{\omega}^{+}\circ
\mathcal B : \ \hbar^\alpha\C[[\hbar]]
\longrightarrow
\hbar^\beta\C[[\hbar]]
\]
if $\tau_{-\omega}\operatorname{var}_\omega^+(\widehat f)
\in
\xi^{\beta-1}\C\{\xi\}$ where $(\tau_{-\omega}\widehat\varphi)(\xi)=\widehat\varphi(\xi+\omega)$. Here $\beta\in\C$ satisfies that $\re\beta>0$.
\end{defn}

The corresponding pointed alien operator is
\begin{equation}\label{eq:pointed-positive-alien-operator}
\dot\Delta_{\omega}^{+}
:=e^{-\omega/\hbar}\Delta_{\omega}^{+}.
\end{equation}

For formal objects such as short-time heat-kernel WKB expansions, we use the
following convention.  The alien calculation is performed on the 
$1$--Gevrey power series, while the exponential factor is treated as the action
label and the common algebraic heat-kernel prefactor is carried through the
calculation unchanged: \footnote{In \'Ecalle's general theory
of resurgent functions, the restriction to integrable singularities is
removed by working with a larger space of singularities, equipped with a
suitable convolution product; see, for example,
\cite{Ecalle1981I,SauzinSplitting}.  In this larger setting, the exponential
factor is regarded as a formal action label in the transseries grading, and
is not part of the local Borel singularity on which the alien calculation is
performed.  Algebraic factors such as $\hbar^{-n/2}$ do have Borel
transforms, but these are generally non-integrable singularities supported at
the Borel origin, for instance derivatives of the convolution unit $\delta$
in the case of negative integral powers, or more general 
singularities for fractional powers.  Multiplication by such algebraic
factors on the $\hbar$-side corresponds to convolution with these origin
singularities on the Borel side.  This convolution is compatible with the
operations used in the alien calculation: analytic continuation, taking the
local variation at a non-zero point $\omega$, translation back to the
origin, and the inverse Borel transform.  Since the algebraic factor carries
no additional monodromy at $\omega\neq0$, it is recovered unchanged after
these operations. In the applications in this paper, this full singularity formalism gives the same
result as the simplified convention.
We therefore do not spell out the full formalism here.}

\[
\dot\Delta_\omega^+\big(
(4\pi\hbar)^{-n/2}
e^{-\Phi_0/\hbar}
\widetilde u(\hbar)\big)
=
(4\pi\hbar)^{-n/2}
e^{-(\Phi_0+\omega)/\hbar}
\Delta_\omega^+\widetilde u(\hbar),
\]
whenever the right-hand side is defined. The exponential factor restores the action shift carried by the Borel
singularity.  For instance, if a formal solution with exponential weight
$e^{-\Phi_0/\hbar}$ has a Borel singularity at
\[
\omega=\Phi_1-\Phi_0,
\]
then $\dot\Delta_{\omega}^{+}$ produces a contribution with weight $e^{-\Phi_1/\hbar}$.

\begin{lem}
\label{lem:pointed-alien-commutation}
Let $\re\alpha,\re\beta>0$. Let
$\widehat\phi(\xi,x)\in \xi^{\alpha-1}\C\{\xi\}$ and
assume that its Borel inverse $\widetilde\phi(\hbar,x)$ satisfies $\mathcal{B}\Delta^+_{\omega(x)}\widetilde\phi
\in
\xi^{\beta-1}\C\{\xi\}$ for fixed $x$. Assume that $\widehat\phi$, $\omega(x)$, and the chosen continuation paths
$\gamma_x^\pm$ depend analytically on $x$, in the sense that
\begin{equation}\label{eq:commuteswithpartialx}
\partial_x\circ \operatorname{cont}_{\gamma_x^\pm}
=
\operatorname{cont}_{\gamma_x^\pm}\circ \partial_x,
\qquad
\partial_x\circ \mathcal{B}
=
\mathcal{B} \circ \partial_x .
\end{equation}
Then
\begin{equation}
\bigl[\hbar^2\partial_\hbar,\dot\Delta_{\omega(x)}^{+}\bigr]\widetilde\phi=0,
\qquad
\bigl[\partial_x,\dot\Delta_{\omega(x)}^{+}\bigr]\widetilde\phi=0.
\end{equation}
\end{lem}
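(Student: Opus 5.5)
The plan is to move everything to the Borel plane, where both operators become elementary, and to check that the exponential prefactor $e^{-\omega/\hbar}$ in \eqref{eq:pointed-positive-alien-operator} exactly absorbs the shift terms created by the translation $\tau_{-\omega}$ in Definition \ref{def:positive-alien-operator}.

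\emph{First identity.} I would first record the dictionary for $\hbar^2\partial_\hbar$. For $\re\alpha>0$ one has $\hbar^2\partial_\hbar\,\hbar^{m+\alpha}=(m+\alpha)\hbar^{m+\alpha+1}$. Its Borel transform is
\[
\frac{(m+\alpha)\,\xi^{m+\alpha}}{\Gamma(m+\alpha+1)}=\xi\cdot\frac{\xi^{m+\alpha-1}}{\Gamma(m+\alpha)} ,
\]
so $\mathcal B\circ\hbar^2\partial_\hbar=\xi\cdot\mathcal B$. The same holds in the $\delta_0$ convention, since $\xi\delta_0=0$.

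Multiplication by the entire function $\xi$ commutes with analytic continuation along $\gamma^\pm_\omega$, and hence with $\var^+_\omega$. Translation gives $\tau_{-\omega}(\xi\,\widehat f)=(\xi+\omega)\,\tau_{-\omega}\widehat f$. Applying $\mathcal B^{-1}$ therefore yields
\[
\Delta^+_\omega(\hbar^2\partial_\hbar\widetilde\phi)=(\hbar^2\partial_\hbar+\omega)\Delta^+_\omega\widetilde\phi .
\]
On the other side, $\hbar^2\partial_\hbar\bigl(e^{-\omega/\hbar}\Delta^+_\omega\widetilde\phi\bigr)=e^{-\omega/\hbar}(\omega+\hbar^2\partial_\hbar)\Delta^+_\omega\widetilde\phi$. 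The two expressions agree, which gives $[\hbar^2\partial_\hbar,\dot\Delta^+_{\omega}]\widetilde\phi=0$.

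\emph{Second identity.} Now $\omega=\omega(x)$ moves with $x$. Write $\widehat\psi_x:=\var^+_{\omega(x)}\widehat\phi(\cdot,x)$, a germ at $\omega(x)$. Using \eqref{eq:commuteswithpartialx}, $\partial_x$ commutes with $\mathcal B$, with the continuations, and hence with $\var^+$. By the chain rule,
\[
\partial_x\bigl[\widehat\psi_x(\xi+\omega(x))\bigr]=(\partial_x\widehat\psi_x)(\xi+\omega(x))+\omega'(x)\,(\partial_\xi\widehat\psi_x)(\xi+\omega(x)) .
\]
The first term is $\tau_{-\omega}\var^+_\omega\mathcal B(\partial_x\widetilde\phi)$.

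For the second term I need the dictionary $\mathcal B^{-1}\partial_\xi=\hbar^{-1}\mathcal B^{-1}$ on $\xi^{\beta-1}\C\{\xi\}$. This follows from $\partial_\xi\,\xi^{m+\beta-1}/\Gamma(m+\beta)=\xi^{m+\beta-2}/\Gamma(m+\beta-1)$, interpreted with the $\delta_0$ convention when $m+\beta-1=0$. Together these give
\[
\partial_x\Delta^+_\omega\widetilde\phi=\Delta^+_\omega(\partial_x\widetilde\phi)+\frac{\omega'(x)}{\hbar}\Delta^+_\omega\widetilde\phi .
\]
Differentiating the pointed operator then gives
\[
\partial_x\bigl(e^{-\omega/\hbar}\Delta^+_\omega\widetilde\phi\bigr)=e^{-\omega/\hbar}\Bigl(-\frac{\omega'}{\hbar}\Delta^+_\omega\widetilde\phi+\partial_x\Delta^+_\omega\widetilde\phi\Bigr)=\dot\Delta^+_\omega(\partial_x\widetilde\phi),
\]
because the $\omega'/\hbar$ terms cancel.

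\emph{Main obstacle.} The delicate step is justifying the chain-rule computation for the variation germ. The singular point $\omega(x)$, the disc $D$, and the paths $\gamma_x^\pm$ all move with $x$. One must check that $\widehat\psi_x(\xi)$ is jointly analytic in $(x,\xi)$ near $(x_0,\omega(x_0))$, and that $\partial_\xi$ of the translated variation remains in a class, namely $\xi^{\beta-2}\C\{\xi\}$ plus a possible $\delta_0$ term, on which $\mathcal B^{-1}$ is defined.

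For the first point I would fix $x_0$ and, for $x$ nearby, deform $\gamma_x^\pm$ isotopically inside a fixed tubular neighbourhood avoiding the other singularities. Joint analyticity then follows from the hypothesis \eqref{eq:commuteswithpartialx} and analytic dependence on parameters.

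For the second point I would use the assumption $\mathcal B\Delta^+_{\omega(x)}\widetilde\phi\in\xi^{\beta-1}\C\{\xi\}$ directly. Term-by-term differentiation of this convergent expansion produces the required form, so the dictionary for $\partial_\xi$ applies.
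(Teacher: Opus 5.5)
Your proposal is correct and follows the paper's proof essentially step for step. The paper calls the first identity immediate from $\mathcal B(\hbar^2\partial_\hbar\widetilde\phi)=\xi\widehat\phi$; your relation $\Delta^+_\omega\hbar^2\partial_\hbar=(\hbar^2\partial_\hbar+\omega)\Delta^+_\omega$ spells out the cancellation against $\hbar^2\partial_\hbar e^{-\omega/\hbar}=\omega e^{-\omega/\hbar}$ that it leaves unsaid, and for the second identity the paper uses exactly your chain rule together with $\partial_\xi\mathcal B=\mathcal B\hbar^{-1}$.

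One caveat applies to you and the paper alike. Both that dictionary and $\Delta^+_\omega(\partial_x\widetilde\phi)$ itself must be read in the enlarged singularity space throughout, not via term-by-term differentiation, because $\partial_x\widehat\phi$ acquires a $(\xi-\omega)^{\beta-2}$ singularity that is non-integrable when $\re\beta\le1$. Otherwise, for $\beta=1$, the $\delta_0$ you insert in the second term is not balanced by the $\delta_0$ that the simple pole of $\partial_x\widehat\phi$ contributes to the first term.
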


\begin{proof}
The first commutation is immediate from the definition of the alien operator and
\[
\mathcal B\bigl(\hbar^2\partial_\hbar\widetilde\phi\bigr)
=\xi\,\widehat\phi.
\]
We prove the second commutation.  Here we use the enlarged space of
Borel singularities discussed in the preceding footnote, so that for any $\widetilde f \in \hbar^{\beta}\C[[\hbar]]_1$,
\[
\partial_\xi\mathcal B\widetilde f
=
\mathcal{B}\hbar^{-1} \widetilde f.
\]
By the assumptions \eqref{eq:commuteswithpartialx} and the chain rule, 
\[
\partial_x
\mathcal B(\Delta_{\omega(x)}^+\widetilde\phi)
=\mathcal B\bigl(\Delta_{\omega(x)}^+(\partial_x\widetilde\phi)\bigr)
+(\partial_x\omega(x))\,\partial_\xi
\mathcal B(\Delta_{\omega(x)}^+\widetilde\phi).
\]
Here the second term on the right comes from the shift $\xi \to \xi+ \omega(x)$. Applying $\mathcal B^{-1}$, we obtain
\[
\partial_x\Delta_{\omega(x)}^+\widetilde\phi
=\Delta_{\omega(x)}^+(\partial_x\widetilde\phi)
+\frac{\partial_x\omega(x)}{\hbar}\Delta_{\omega(x)}^+\widetilde\phi .
\]
Therefore
\begin{align*}
\partial_x
\bigl(\dot\Delta_{\omega(x)}^+\widetilde\phi\bigr)
&=\partial_x\bigl(e^{-\omega(x)/\hbar}\Delta_{\omega(x)}^+\widetilde\phi\bigr)  \\
&=-\frac{\partial_x\omega(x)}{\hbar}
e^{-\omega(x)/\hbar}
\Delta_{\omega(x)}^+\widetilde\phi
+e^{-\omega(x)/\hbar}
\partial_x\Delta_{\omega(x)}^+\widetilde\phi  \\
&=-\frac{\partial_x\omega(x)}{\hbar}
e^{-\omega(x)/\hbar}
\Delta_{\omega(x)}^+\widetilde\phi
+e^{-\omega(x)/\hbar}\left(\Delta_{\omega(x)}^+(\partial_x\widetilde\phi)
+\frac{\partial_x\omega(x)}{\hbar}
\Delta_{\omega(x)}^+\widetilde\phi
\right)  \\
&=e^{-\omega(x)/\hbar}
\Delta_{\omega(x)}^+(\partial_x\widetilde\phi)
=\dot\Delta_{\omega(x)}^+(\partial_x\widetilde\phi). \qedhere
\end{align*} 
\end{proof}

\begin{cor}
\label{cor:alien-heat-solution-action-gap}
Let
\[
\widetilde K_0(\hbar;x,y)
=
(4\pi\hbar)^{-n/2}
e^{-\Phi_0(x,y)/\hbar}
\widetilde u_0(\hbar;x,y)
\]
be the formal heat solution associated with the minimal real geodesic branch,
as in Theorem~\ref{thm:local-borel-summability}. 
Assume that $\dot\Delta_\omega^+$ is defined on
$\widetilde K_0$ in the sense in Lemma~\ref{lem:pointed-alien-commutation}, and that the hypotheses \eqref{eq:commuteswithpartialx} hold for every local coordinate
direction $\partial_{x_i}$ appearing in $\Delta_x^{\C}$.  Then
\[
\left(\partial_\hbar+\Delta_x^{\C}\right)
\bigl(\dot\Delta_\omega^+\widetilde K_0\bigr)
=0.
\]
Moreover, let $\gamma$ be a no-conjugate holomorphic geodesic and assume that $\omega=\Phi_\gamma-\Phi_0 $. Then there exists a coefficient $S_{0\gamma}^+\in\C$ such that
\[
\dot\Delta_{\Phi_\gamma-\Phi_0}^{+}\widetilde K_0
=
S_{0\gamma}^+\,\widetilde K_\gamma ,
\]
where $\widetilde K_\gamma(\hbar;x,y)
=
(4\pi\hbar)^{-n/2}
e^{-\Phi_\gamma(x,y)/\hbar}
\widetilde u_\gamma(\hbar;x,y)$ is the normalized formal heat solution with phase $\Phi_\gamma$, uniquely
determined by the transport recursion and the chosen normalization as in Theorem~\ref{thm:heat-kernel-expansion-complex-geodesic}.
\end{cor}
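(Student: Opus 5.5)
The plan has two parts: first show that $\dot\Delta_\omega^+\widetilde K_0$ is again a formal heat solution, then identify it with a multiple of $\widetilde K_\gamma$ using the uniqueness of the transport recursion.

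\emph{Heat equation.} Since $\widetilde K_0$ is the formal expansion of the heat kernel, $(\partial_\hbar+\Delta_x^{\C})\widetilde K_0=0$ holds as an identity of formal objects. We want to move $\dot\Delta_\omega^+$ past the heat operator.
\begin{itemize}
\item By the convention stated after Definition~\ref{def:positive-alien-operator}, the prefactor $(4\pi\hbar)^{-n/2}$ is carried through unchanged. So we may work with $e^{-\Phi_0/\hbar}\widetilde u_0$, or equivalently with its Borel transform $\widehat u_0(\xi-\Phi_0)$ in the enlarged singularity space of the footnote.
\item Write $\hbar^2\partial_\hbar$ for the time derivative, using $\partial_\hbar=\hbar^{-2}\cdot\hbar^2\partial_\hbar$. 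Multiplication by $\hbar^{-1}$ corresponds to $\partial_\xi$ on the Borel side, which commutes with $\operatorname{var}_\omega^+$ and with translation. Hence multiplication by $\hbar^{-2}$ commutes with $\dot\Delta_\omega^+$.
\item Lemma~\ref{lem:pointed-alien-commutation} gives $[\hbar^2\partial_\hbar,\dot\Delta_\omega^+]=0$ and $[\partial_{x_i},\dot\Delta_\omega^+]=0$ for each coordinate direction.
\item Multiplication by the holomorphic coefficients $g^{ij}(x),a^i(x),b(x)$ of $\Delta_x^{\C}$ is $\hbar$-independent, so it commutes with Borel transform, analytic continuation along $\gamma_x^\pm$ and variation.
\end{itemize}
Combining these, $\dot\Delta_\omega^+$ commutes with $\partial_\hbar+\Delta_x^{\C}$, and the first claim follows. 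The same commutation handles the case where $\omega=\omega(x,y)$ depends on $x$: this is exactly the pointed version in the lemma, where the term $\partial_x\omega/\hbar$ cancels.

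\emph{Identification with $\widetilde K_\gamma$.} Set $\omega=\Phi_\gamma-\Phi_0$. By construction, $\dot\Delta_\omega^+\widetilde K_0$ has the form
\[
(4\pi\hbar)^{-n/2}e^{-\Phi_\gamma/\hbar}\,\hbar^{\beta}\sum_{j\ge0}w_j(x,y)\hbar^j ,
\]
where $\beta$ is determined by the local type of the singularity. I would then substitute this ansatz into the heat equation and compare powers.
\begin{itemize}
\item The $\hbar^{\beta-2}$ coefficient reproduces the eikonal equation, which holds by Lemma~\ref{lem:eikonal-equation}.
\item The next order gives $(T_\gamma+\beta)w_0=0$.
\item Higher orders give $(T_\gamma+\beta+j)w_j=-\Delta_x^{\C}w_{j-1}$.
\end{itemize}
In the radial coordinates from the proof of Theorem~\ref{thm:heat-kernel-expansion-complex-geodesic}, $T_\gamma=r\partial_r+\tfrac r2\partial_r\log J_\gamma$. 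Solving the leading equation gives $w_0=C(\omega)r^{-\beta}J_\gamma^{-1/2}$.

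Holomorphy of $w_0$ along the branch as $r$ varies forces $\beta$ to be a nonnegative integer. I expect to show $\beta=0$ by matching with the normalization of $\widetilde K_0$: the integrable singularity of $\widehat u_0$ has an exponent compatible with the prefactor $(4\pi\hbar)^{-n/2}$, so no extra power of $\hbar$ appears.

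\emph{Main obstacle.} The hard step is showing that the angular factor $C(\omega)$ is constant, so that $w_0=S\,u_{\gamma,0}$ for a single number $S$. My approach is:
\begin{itemize}
\item Apply $\dot\Delta^+$ also to the $y$-variable heat equation, using symmetry of $\widetilde K_0$ and the same commutation argument with $\partial_{y_j}$. This gives the analogous transport equation in $y$.
\item The two transport equations together force $C$ to be annihilated by all tangential derivatives, so $C$ is locally constant on the connected chart $\Omega_\gamma$. Call this constant $S_{0\gamma}^+$.
\end{itemize}

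\emph{Conclusion.} With $w_0=S_{0\gamma}^+u_{\gamma,0}$ fixed, each higher equation $(T_\gamma+j)w_j=-\Delta_x^{\C}w_{j-1}$ has a unique solution that is regular along the branch. The homogeneous solutions $r^{-j}J_\gamma^{-1/2}$ are singular at $r=0$, or are excluded by the same two-variable argument. By induction $w_j=S_{0\gamma}^+u_{\gamma,j}$ for all $j$. This proves $\dot\Delta_{\Phi_\gamma-\Phi_0}^+\widetilde K_0=S_{0\gamma}^+\widetilde K_\gamma$.
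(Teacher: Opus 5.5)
Your first half is the paper's argument: commute $\dot\Delta^+_\omega$ past $\partial_\hbar+\Delta^{\C}_x$ via Lemma~\ref{lem:pointed-alien-commutation}, using $\hbar^{-1}\leftrightarrow\partial_\xi$ and letting the coefficients of $\Delta^{\C}_x$ act pointwise. The gap is in the identification with $\widetilde K_\gamma$. There you try to recover the normalization (the power $\hbar^\beta$, the angular factor $C$, and the homogeneous parts of the $w_j$) from the transport equations, and the steps you use fail.

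\emph{The power $\beta$ and the point $r=0$.} On a no-conjugate branch chart $\Omega_\gamma$ one has $r=\ell_\gamma(x,y)\neq0$ by construction, since the chart is taken near $v_0$ and away from the isotropic cone $g_{\C}(v,v)=0$. For a non-principal branch $r$ is not even close to $0$: for $\HH^2_{\R}$ one has $r\approx d+2\pi i k$ with $k\neq0$, even as $d\to0$. Hence $r^{-\beta}J_\gamma^{-1/2}$ and $r^{-j}J_\gamma^{-1/2}$ are holomorphic on $\Omega_\gamma$ for every $\beta\in\C$ and every $j$. So holomorphy gives no integrality of $\beta$, and ``singular at $r=0$'' is not available to discard homogeneous solutions. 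Since $(T_\gamma+\beta)w_0=0$ is solvable for every $\beta$, the heat equation does not determine $\beta$ at all. It is the exponent of the Borel singularity of $\widehat u_0$ at $\omega$, i.e.\ information about the germ at $\omega$, not something fixed by the normalization of $\widetilde u_0$ at $\xi=0$.

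\emph{The angular factor.} Your fix for the main obstacle also fails when $n\ge2$. Write $T^x_\gamma:=T_\gamma$ and $T^y_\gamma$ for its $y$-analogue. By Lemma~\ref{lem:eikonal-equation} the derivative part of $T^x_\gamma$ is $2\nabla_x\Phi_\gamma\cdot\nabla_x=-v_\gamma(x,y)\cdot\nabla_x$, and $T^y_\gamma$ differentiates only along the geodesic direction at $y$. These two fields span the $2$-dimensional fibres of the map sending $(x,y)$ to the unparametrized geodesic through $x$ and $y$. So the two leading transport equations only force $C$ to be constant along those fibres, and every function on the $(2n-2)$-dimensional local space of geodesics survives. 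Because $\partial_\hbar+\Delta_x$ and $\partial_\hbar+\Delta_y$ commute, the joint system is compatible at every order; at leading order this reads $[T^x_\gamma,T^y_\gamma]=T^y_\gamma-T^x_\gamma$. The same freedom therefore reappears in the homogeneous part of every $w_j$. The joint PDE system does not fix its solution up to a single constant, and your induction cannot close. It does close for $n=1$, or under an isotropy reducing everything to functions of $\ell$; that is what the $\HH^2_{\R}$ section uses, together with regularity in $\ell$.

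\emph{What is needed.} The paper does not derive these points from the heat equation. It takes the WKB form of the singular contribution, with prefactor $(4\pi\hbar)^{-n/2}$, no extra power of $\hbar$, and geometric phase, from the hypothesis of Theorem~\ref{thm:Borel-singularity-complex-geodesic}. It takes the proportionality from the normalized solution of Theorem~\ref{thm:heat-kernel-expansion-complex-geodesic}. In the hyperbolic case the normalization is checked directly on the explicit germ $\bigl(\ell_k/2\sinh\ell_k\bigr)^{1/2}\zeta^{-1/2}$. To complete your argument you need input of this kind about the local Borel germ at $\omega$: its exponent, and a leading coefficient equal to a constant times $J_\gamma^{-1/2}$. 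The two heat equations alone cannot supply it.
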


\begin{proof}
This follows directly from Lemma~\ref{lem:pointed-alien-commutation}, which shows that the alien operator commutes with the scaled heat operator, together with Theorem \ref{thm:Borel-singularity-complex-geodesic}, which identifies the relevant singularity as the action gap $\omega=\Phi_\gamma-\Phi_0$, and Theorem~\ref{thm:heat-kernel-expansion-complex-geodesic}, which gives the normalized formal heat solution $\widetilde K_\gamma$.
\end{proof}


Whenever the proportionality relation
\[
\dot\Delta_{\Phi_\gamma-\Phi_0}^{+}\widetilde K_0
=
S_{0\gamma}^{+}\,\widetilde K_\gamma
\]
holds, we call $S_{0\gamma}^{+}$ the Stokes coefficient from $\widetilde K_0$ to 
$\widetilde K_\gamma$.  By
Corollary~\ref{cor:alien-heat-solution-action-gap}, once the sector
$\widetilde K_\gamma$ is fixed by the chosen normalization, this coefficient
is a well-defined complex number.

The expected integrality of $S_{0\gamma}^{+}$ is a jump number of Lefschetz thimble rather than a purely formal consequence of alien calculus.  In
finite-dimensional exponential integrals, the number computeded from alien culculus agrees with the Picard--Lefschetz wall-crossing coefficient; after
choosing orientations, this coefficient is an intersection number, or
equivalently the signed count of direct connecting trajectories at the
corresponding Stokes phase \cite{LiLiTang2026}. 

\begin{conj}
\label{conj:alien-PL-integrality-heat}
Let $(M,g)$ be real analytic, and let $(M_{\C},g_{\C})$ be a
complexification satisfying our assumption \ref{assumption}.  Fix two endpoints $x,y$, and let
$\{\gamma_i\}_{i\in I}$ be the relevant no-conjugate holomorphic geodesic from $x$ to $y$.  Denote their phases by $\Phi_i(x,y):=\frac{\ell_{\gamma_i}(x,y)^2}{4}$. We conjecture that each $\gamma_i$ determines a normalized formal heat
solution of the form
\[
\widetilde K_i(\hbar;x,y)
=
(4\pi\hbar)^{-n/2}
e^{-\Phi_i(x,y)/\hbar}
\widetilde u_i(\hbar;x,y),
\qquad
\widetilde u_i(\hbar;x,y)\in \C[[\hbar]],
\]
where $\widetilde u_i$ is a $1$--Gevrey formal series. Moreover, there exists coefficient $S_{ij}^{+}\in\Z$ such that
\[
\dot\Delta_{\Phi_j-\Phi_i}^{+}\widetilde K_i
=
S_{ij}^{+}\,\widetilde K_j.
\]
\end{conj}

The hyperbolic heat-kernel model studied in the next section provides the
first test case for Conjecture~\ref{conj:alien-PL-integrality-heat}.  Starting
from the principal real WKB sector $\widetilde K_0$, we will compute the
pointed alien contribution at the first nontrivial action gap.  The
resurgent calculation gives the coefficient (see Section \ref{sec:H^2-resurgence})
\[
S_{01}^{+}=2.
\]
On the Picard--Lefschetz side, for two sufficiently close real endpoints, the
path-space Morse analysis shows that the corresponding wall crossing is
governed by two direct connecting trajectories.  In this sense, the
hyperbolic model gives a concrete local verification of the expected
integrality mechanism.

\part{A case study of $\mathbb{H}_{\R}^2$ }

The rest of this paper is devoted to a case study of  the hyperbolic space $\mathbb{H}_{\R}^2$ in the context of the Picard–Lefschetz problem and heat kernel resurgence outlined in Section \ref{sec:heat-resurgence}. We start with the resurgence analysis in Section \ref{sec:H^2-resurgence}. In Section \ref{sec:H^2-flow}, we perform the Picard–Lefschetz analysis and compare with the resurgence computations.

\section{Resurgence analysis}\label{sec:H^2-resurgence}

In this section, we analyze the resurgent structure of the hyperbolic heat kernel
through the complex length branches of the hyperbolic distance.  We first
construct, directly from the radial heat equation, a normalized formal solution
$\widetilde{K}_k$ associated with each branch
$\ell_k=d+2\pi i k$, as described in Theorem \ref{thm:heat-kernel-expansion-complex-geodesic}.  

We then return to the principal real-geodesic expansion and study its Borel
singularities.  At the $k$-th singular point $\omega_k(d)$, the local Borel-plane
variation produces the same normalized formal object
$\widetilde{K}_k$, with an overall coefficient $2$.  Equivalently,
\[
\dot\Delta_{\omega_k(d)}^{+}\,
\widetilde{K}_0=2\,\widetilde{K}_k.
\]
Thus the hyperbolic Stokes constant is $2$, in direct analogy with the
finite-dimensional alien/Picard--Lefschetz correspondence discussed in
\cite{LiLiTang2026}.

\subsection{Short time asymptotic expansion}

Let $a,b\in\HH_{\R}^{2}$, and write $d$ for their geodesic distance $d=d(a,b)$. By homogeneity and isotropy, the heat kernel depends only on the time $\tau$ and $d$. 

\smallskip
As we explained in the introduction, $\tau$ plays the role of $\hbar$ in the saddle problem of exponential integral. In the remainder of this section, we shall use $\hbar$ to denote the time $\tau$  to align with our finite-dimensional discussion in \cite{LiLiTang2026}. Thus the heat kernel will be denoted instead by 
\[K_{\HH_{\R}^{2}}(\hbar,d).\]

A classical integral representation is
\begin{equation*}\label{eq:H2-heat-kernel-s}
K_{\HH_{\R}^{2}}(\hbar,d)
=\frac{\sqrt{2}\,e^{-\hbar/4}}{(4\pi\hbar)^{3/2}}
\int_{d}^{\infty}
\frac{s\,e^{-s^{2}/4\hbar}}
{\sqrt{\cosh s-\cosh d}}
\,\dd s.
\end{equation*}
After the change of variables
\[
w=\frac{s^{2}}{2},
\qquad
\dd w=s\,\dd s,
\]
this becomes
\begin{equation}\label{eq:H2-heat-kernel-w}
K_{\HH_{\R}^{2}}(\hbar,d)
=\frac{\sqrt{2}\,e^{-\hbar/4}}{(4\pi\hbar)^{3/2}}
\int_{d^{2}/2}^{\infty}
\frac{e^{-w/2\hbar}}
{\sqrt{\cosh\sqrt{2w}-\cosh d}}
\,\dd w.
\end{equation}
We refer to \cite{Du} for a resurgence analysis of this representation.

\smallskip

The general $1$--Gevrey discussion of Section \ref{subsec:proof-borel-summability} provides the
formal framework.  In the present hyperbolic model, the exact representation
\eqref{eq:H2-heat-kernel-w} makes the relevant Borel transform explicit.

We use the half-integer Borel convention
\[
\widetilde\phi(\hbar)
=\sum_{n\ge0}c_n\,\hbar^{n+\frac12}
\quad \mapsto\quad
\widehat\phi(\xi)
=\mathcal B\widetilde\phi(\xi)
:=\sum_{n\ge0}
\frac{c_n}{\Gamma\!\left(n+\frac12\right)}
\,\xi^{\,n-\frac12}.
\]
If $\widehat\phi$ admits analytic continuation along the ray 
$e^{i\theta}\R_+$, without singularity on that ray, and has at most
exponential growth there, then we set
\begin{equation*}\label{eq:directional-laplace}
\bigl(\mathcal L^\theta\widehat\phi\bigr)(\hbar)
:=\int_0^{e^{i\theta}\infty}
e^{-\xi/\hbar}\widehat\phi(\xi)\,\dd\xi.
\end{equation*}
Then, on
\[
S_\theta=\left\{\hbar\in\C^*:\left|\arg\hbar-\theta\right|<\frac{\pi}{2}\right\},
\]
we have uniformly on every proper subsector,
\begin{equation}\label{eq:asymptoticBL}
\bigl(\mathcal L^\theta\widehat\phi\bigr)(\hbar)\sim\sum_{n\ge0}c_n\,\hbar^{n+\frac12},
\qquad\hbar\to0.
\end{equation}

\smallskip

Returning to \eqref{eq:H2-heat-kernel-w}, set
\[w=\frac{d^2}{2}+2\xi.\]
Then
\begin{equation}\label{eq:H2-heat-kernel-borel-laplace}
K_{\HH_{\R}^{2}}(\hbar,d)
=\frac{2\sqrt{2}\,e^{-\hbar/4}}{(4\pi\hbar)^{3/2}}
\exp\!\left(-\frac{d^2}{4\hbar}\right)
\bigl(\mathcal L^0\widehat\phi_d\bigr)(\hbar),
\end{equation}
where
\begin{equation}\label{eq:H2-borel-germ}
\widehat\phi_d(\xi)
=\left(\cosh\sqrt{d^2+4\xi}-\cosh d\right)^{-1/2}.
\end{equation}
Near $\xi=0$,
\[
\widehat\phi_d(\xi)
=\left(\frac{d}{2\sinh d}\right)^{1/2}
\xi^{-1/2}\sum_{n\ge0}c_n(d)\,\xi^n,
\qquad c_0(d)=1.
\]
Applying \eqref{eq:asymptoticBL} gives
\begin{equation*}\label{eq:H2-short-time-expansion}
K_{\HH_{\R}^{2}}(\hbar,d)
\sim
\frac{e^{-\hbar/4}}{4\pi\hbar}
\exp\!\left(-\frac{d^{2}}{4\hbar}\right)
\left(\frac{d}{\sinh d}\right)^{1/2}
\sum_{n\ge0}a_n(d)\,\hbar^n,
\qquad
\hbar\to0^+,
\end{equation*}
with
\begin{equation*}\label{eq:H2-an-cn-relation}
a_n(d)=\frac{\Gamma\left(n+\frac12\right)}{\sqrt{\pi}}\,c_n(d).
\end{equation*}

The coefficients $a_n(d)$ admit a closed generating formula.  Define
\[
q(d):=\frac{\sinh d}{d},
\qquad\mathcal D:=\frac1d\frac{\dd}{\dd d}.
\]
Since
\[
\cosh\sqrt{d^2+4\xi}=e^{2\xi\mathcal D}\cosh d,
\]
we obtain
\[
\cosh\sqrt{d^2+4\xi}-\cosh d
=2\xi\,q(d)\left(1+\sum_{m\ge1}
\frac{(2\xi)^m}{(m+1)!}
\frac{\mathcal D^m q(d)}{q(d)}\right).
\]
Consequently,
\begin{equation*}\label{eq:H2-an-closed-formula}
a_n(d)=\frac{\Gamma\!\left(n+\frac12\right)}{\sqrt{\pi}}\,[\xi^n]\,\left(1+\sum_{m\ge1}
\frac{(2\xi)^m}{(m+1)!}\frac{\mathcal D^m q(d)}{q(d)}\right)^{-1/2},\qquad n\ge0.
\end{equation*}
In particular,
\[a_0(d)=1.\]
For $d=0$, the formula is understood by regular continuation from $d>0$.

\subsection{Formal objects from the radial heat equation}

We now define the formal objects associated with the complex length branches
directly from the radial heat equation.

Let
\[
\ell_k:=d+2\pi ik,\qquad\Phi_k(d):=\frac{\ell_k^2}{4},
\qquad k\in\Z.
\]
The branch $k=0$ is the real branch, for which $\ell_0=d$ and
$\Phi_0(d)=d^2/4$.

In accordance with Theorem \ref{thm:heat-kernel-expansion-complex-geodesic}, we have the following explicit construction. 

\begin{prop}[Formal heat objects and transport recursion]
\label{prop:H2-formal-heat-recursion}

For each branch $\ell=\ell_k$, there is a formal object of the form
\begin{equation}\label{eq:H2-formal-heat-object-k}
\widetilde{K}_k(\hbar,d)
=
\frac{e^{-\hbar/4}}{4\pi\hbar}
\exp\!\left(-\frac{\Phi_k(d)}{\hbar}\right)
\left(\frac{\ell_k}{\sinh\ell_k}\right)^{1/2}
\sum_{n\ge0}A_n(\ell_k)\hbar^n,
\end{equation}
obtained by substituting it 
into the radial heat equation 
\[
(\partial_\hbar+\Delta_{\mathrm{rad},\ell})
\widetilde{K}=0,\qquad 
\Delta_{\mathrm{rad},\ell}=-\partial_\ell^2-\coth \ell\,\partial_\ell.
\]
The coefficients are uniquely determined by
\begin{equation}\label{eq:H2-transport-recursion-An}
A_0(\ell)=1,\qquad \left(\ell\partial_\ell+n\right)A_n(\ell) 
=\mathcal R_\ell A_{n-1}(\ell),\ \text{for} \  n\geq1,
\end{equation}
where
\[
\mathcal R_\ell:=\partial_\ell^2+\frac1\ell\partial_\ell
+\frac{1}{4\sinh^2\ell}-\frac{1}{4\ell^2}.
\]
\end{prop}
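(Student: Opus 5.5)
The plan is a direct conjugation computation on the radial operator, followed by solving the resulting first-order recursion with the regularity condition at $\ell=0$ fixing the constants. Write $J(\ell):=\sinh\ell/\ell$, so that $(\ell/\sinh\ell)^{1/2}=J^{-1/2}$ is the reduced Jacobian normalization of Theorem \ref{thm:heat-kernel-expansion-complex-geodesic}. Substitute
\[
\widetilde K=\frac{e^{-\hbar/4}}{4\pi\hbar}\,e^{-\ell^2/4\hbar}\,J(\ell)^{-1/2}\,v(\hbar,\ell),\qquad v=\sum_{n\ge0}A_n(\ell)\hbar^n,
\]
into $(\partial_\hbar+\Delta_{\mathrm{rad},\ell})\widetilde K=0$. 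I will peel off the factors one at a time, each step being a conjugation of the operator.

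\textbf{Step 1: the prefactor and the phase.} The factor $e^{-\hbar/4}$ shifts $\partial_\hbar$ to $\partial_\hbar-\tfrac14$, and the factor $\hbar^{-1}$ adds $-\tfrac1\hbar$. Conjugation by $e^{-\ell^2/4\hbar}$ replaces $\partial_\ell$ with $\partial_\ell-\tfrac{\ell}{2\hbar}$ and contributes $+\tfrac{\ell^2}{4\hbar^2}$ from $\partial_\hbar$. The $\hbar^{-2}$ terms then cancel; this is exactly the eikonal equation of Lemma \ref{lem:eikonal-equation} in the radial variable. The $\hbar^{-1}$ terms assemble into the first-order operator
\[
\ell\partial_\ell+\tfrac12(\ell\coth\ell-1).
\]

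\textbf{Step 2: the Jacobian factor.} I conjugate by $J^{-1/2}$. Since $J'/J=\coth\ell-1/\ell$, we have
\[
\ell\partial_\ell\circ J^{-1/2}=J^{-1/2}\Bigl(\ell\partial_\ell-\tfrac12(\ell\coth\ell-1)\Bigr).
\]
So the order-$\hbar^{-1}$ part becomes simply $\ell\partial_\ell$; this is the content of the normalization $u_{\gamma,0}=J_\gamma^{-1/2}$. For the order-$\hbar^0$ part I will check that
\[
J^{1/2}\bigl(\partial_\ell^2+\coth\ell\,\partial_\ell\bigr)J^{-1/2}+\tfrac14=\mathcal R_\ell .
\]
The first-order coefficient is $\coth\ell-J'/J=1/\ell$. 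The zeroth-order potential is a routine computation with $J$, and verifying that it equals $\frac{1}{4\sinh^2\ell}-\frac1{4\ell^2}-\frac14$ is the one calculation to carry out carefully. The equation then reads
\[
\hbar^{-1}\ell\partial_\ell v+\partial_\hbar v-\mathcal R_\ell v=0,
\]
and comparing coefficients of $\hbar^{n-1}$ gives $A_0'=0$ at $n=0$ and \eqref{eq:H2-transport-recursion-An} for $n\ge1$.

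\textbf{Step 3: uniqueness and existence of the $A_n$.} This is the main point, since $\ell\partial_\ell+n$ has kernel $\ell^{-n}$. I impose regularity at $\ell=0$, as for the real branch, and take $A_0=1$. The solution is then uniquely
\[
A_n(\ell)=\ell^{-n}\int_0^\ell s^{n-1}\,(\mathcal R_sA_{n-1})(s)\,\dd s .
\]
By induction I show that each $A_n$ is even and holomorphic near $0$. The function $\frac1{4\sinh^2\ell}-\frac1{4\ell^2}$ is even and holomorphic at $0$, and $\frac1\ell\partial_\ell$ preserves even holomorphic germs, so $\mathcal R_\ell$ preserves them too. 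Next I continue analytically along the segment $t\mapsto t\ell_k$, $t\in[0,1]$. Since $d>0$, this segment has nonzero real part for $t>0$, so it avoids the poles $i\pi\Z\setminus\{0\}$ of $\sinh^{-2}$ and the zeros of $\sinh$ where $J^{-1/2}$ branches. Hence each $A_n(\ell_k)$ is well defined by the same integral formula, and so is $(\ell_k/\sinh\ell_k)^{1/2}$ with the branch continued from $J(0)=1$. This gives uniqueness within the class of solutions regular along the branch, and at $k=0$ it recovers the real expansion with $a_n(d)=A_n(d)$.
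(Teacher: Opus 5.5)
Your proposal is correct and follows essentially the same route as the paper: substitute the ansatz to reduce the heat equation to $(\hbar\partial_\hbar+\ell\partial_\ell)A=\hbar\,\mathcal R_\ell A$, then solve $(\ell\partial_\ell+n)A_n=\mathcal R_\ell A_{n-1}$ by the integral formula, with regularity at $\ell=0$ killing the $\ell^{-n}$ kernel. You also supply details the paper leaves implicit: the explicit conjugation identity producing the potential $\frac{1}{4\sinh^2\ell}-\frac{1}{4\ell^2}$ (your computation checks out), the evenness/holomorphy induction, and a specific continuation path $t\mapsto t\ell_k$ avoiding $i\pi\Z\setminus\{0\}$, which makes precise the paper's ``evaluate at $\ell=\ell_k$''.
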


\begin{proof}
Work first with the auxiliary variable $\ell$, and write
\[
\widetilde{K}(\hbar,\ell)
=\frac{e^{-\hbar/4}}{4\pi\hbar}
e^{-\ell^2/4\hbar}\left(\frac{\ell}{\sinh\ell}\right)^{1/2}A(\hbar,\ell).
\]
Substitution into
\[
(\partial_\hbar+\Delta_{\rad,\ell})\widetilde{K}=0
\]
gives, after cancellation of the exponential and leading amplitude factors,
\[
\left(\hbar\partial_\hbar+\ell\partial_\ell\right)A=\hbar\,\mathcal R_\ell A.
\]
Writing
\[
A(\hbar,\ell)=\sum_{n\ge0}A_n(\ell)\hbar^n
\]
and comparing powers of $\hbar$ gives
\[
\ell\partial_\ell A_0=0,\qquad
(\ell\partial_\ell+n)A_n=\mathcal R_\ell A_{n-1}\quad(n\ge1).
\]
We fix the normalization by taking $A_0=1$.

For uniqueness, suppose $A_{n-1}$ is already fixed.  Then
\[
(\ell\partial_\ell+n)A_n=F_n(\ell)
\]
has the general solution
\[
A_n(\ell)=\ell^{-n}
\left(C_n+\int_0^\ell u^{n-1}F_n(u)\,du\right).
\]
Regularity at $\ell=0$ forces $C_n=0$.  Hence $A_n$ is uniquely
determined at each order.  Finally, evaluate at $\ell=\ell_k$.
\end{proof}

\begin{rmk}\label{rmk:Normalizationconvetion}
The factor $\frac{\ell_k}{\sinh\ell_k}$ is the Jacobian of the complexified exponential map. This agrees with the normalization convention in Theorem \ref{thm:heat-kernel-expansion-complex-geodesic}. It also agrees with the usual WKB normalization.  In
the WKB expansion at the holomorphic geodesic branch $\gamma_k$, one first
extracts the exponential factor $e^{-\Phi_k(d)/\hbar}$ and the one-loop
determinant
\[
\left[\frac{\det_D(-\partial_t^2)}
{\det_D(-\partial_t^2+\ell_k^2)}\right]^{1/2}
=\left(\frac{\ell_k}{\sinh\ell_k}\right)^{1/2}.
\]
The remaining WKB series is then normalized to be $1+O(\hbar)$.  
\end{rmk}

\subsection{Borel singularities and pointed alien operators in the hyperbolic case}

We now apply the alien-operator convention above to the principal Borel germ
$\widehat\phi_d$ in \eqref{eq:H2-borel-germ}.  
We have seen in Example \ref{eg:complex-geodesics-H2} that the nonzero Borel singularities are
\[\omega_k(d)=\frac{(d+2\pi i k)^2-d^2}{4},\qquad k\in\Z\setminus\{0\}.\]
Using the notation introduced in Section \ref{sec:heat-alien},
\[\omega_k(d)=\Phi_k(d)-\Phi_0(d).\]
Thus the Borel singularities occur precisely at the phase differences between
the principal real geodesic and the non-principal holomorphic geodesic branches.

Writing
\[\xi=\omega_k(d)+\zeta,\]
for any $k\in\mathbb{Z}\setminus\{0\}$, the translated local germ of $\widehat\phi_d$ at $\omega_k(d)$ is therefore
\[\widehat\phi_{d,k}(\zeta)
:=\left(\cosh\sqrt{\ell_k^2+4\zeta}
-\cosh\ell_k\right)^{-1/2}.\]
The branch is fixed by analytic continuation of $\widehat\phi_d$ along the
positive path $\gamma_{\omega_k}^{+}$. (This notation should not be confused with our notation for geodesics.
) See Figure \ref{fig:hyperbolicsingularity} below.

\begin{figure}[H]
\centering
\resizebox{0.67\textwidth}{!}{%
\begin{tikzpicture}[scale=0.1,>=stealth]
\def\dpar{1.2}

\coordinate (w0)  at (0,0);
\coordinate (w1)  at ({-pi*pi*1^2},{ pi*\dpar*1});
\coordinate (w2)  at ({-pi*pi*2^2},{ pi*\dpar*2});
\coordinate (w3)  at ({-pi*pi*3^2},{ pi*\dpar*3});
\coordinate (wm1) at ({-pi*pi*1^2},{-pi*\dpar*1});
\coordinate (wm2) at ({-pi*pi*2^2},{-pi*\dpar*2});
\coordinate (wm3) at ({-pi*pi*3^2},{-pi*\dpar*3});

\def\tend{2.12}
\coordinate (zend) at ({-pi*pi*\tend*\tend},{pi*\dpar*\tend});


\draw[gray,dashed,thick,domain=-11.5:11.5,samples=200,smooth]
  plot ({-(\x*\x)/(\dpar*\dpar)}, {\x});

\draw[red,thick] ($(w0)+(-0.35,-0.35)$) -- ($(w0)+(0.35,0.35)$);
\draw[red,thick] ($(w0)+(-0.35,0.35)$) -- ($(w0)+(0.35,-0.35)$);

\foreach \P in {w1,w2,w3,wm1,wm2,wm3}{
  \draw[red,thick] ($(\P)+(-0.45,-0.45)$) -- ($(\P)+(0.45,0.45)$);
  \draw[red,thick] ($(\P)+(-0.45,0.45)$) -- ($(\P)+(0.45,-0.45)$);
}

\node[red,above right] at (w0) {$0$};

\node[red,above right] at (w1) {$\omega_1(d)$};
\node[red,below right] at (w2) {$\omega_2(d)$};
\node[red,above right] at (w3) {$\omega_3(d)$};

\node[red,below right] at (wm1) {$\omega_{-1}(d)$};
\node[red,below right] at (wm2) {$\omega_{-2}(d)$};
\node[red,below right] at (wm3) {$\omega_{-3}(d)$};

\draw[black,thick,->]
  (w0)
  .. controls (-3.5,1.8) and (-7.5,4.8) ..
  (-10.5,5.7)
  .. controls (-20,8.8) and (-33,10.5) ..
  (-43.0,8.9)
  .. controls (-44.0,8.45) and (-44.25,8.18) ..
  (zend);

\node[black,above] at (-18,8.9) {$\gamma_{\omega_2}^+$};

\draw[green!60!black,thick] (zend) circle [radius=1.9];

\fill[black] (zend) circle (1.1pt);

\end{tikzpicture}}
\caption{Borel singularities of $\widehat{\phi}_d$ on the parabola
$\omega_k(d)=\frac{(d+2\pi i k)^2-d^2}{4}$, together with the path
$\gamma_{\omega_2}^+$.}
\label{fig:hyperbolicsingularity}
\end{figure}

Since $\widehat\phi_{d,k}$ has a square-root singularity at
$\zeta=0$, a small clockwise loop around $\omega_k(d)$ changes its sign.
With the positive variation convention,
\begin{equation}\label{eq:H2-local-variation}
\tau_{-\omega_k(d)}\,
\var_{\omega_k(d)}^{+}\widehat\phi_d
=
2\,\widehat\phi_{d,k}.
\end{equation}
Applying the inverse Borel transform gives
\[
\Delta_{\omega_k(d)}^{+}\widetilde\Psi_0
=2\,\widetilde\Psi_{d,k},\qquad
\widetilde\Psi_{d,k}
:=\mathcal B^{-1}\widehat\phi_{d,k},
\qquad \widetilde\Psi_{0}
:=\mathcal B^{-1}\widehat\phi_{d}.
\]

The leading term of $\widehat\phi_{d,k}$ is
\[
\widehat\phi_{d,k}(\zeta)=\left(\frac{\ell_k}{2\sinh\ell_k}\right)^{1/2}\zeta^{-1/2}\bigl(1+O(\zeta)\bigr),
\]
and hence
\[
\widetilde\Psi_{d,k}(\hbar)
=\sqrt{\pi}\left(\frac{\ell_k}{2\sinh\ell_k}
\right)^{1/2}\hbar^{1/2}\bigl(1+O(\hbar)\bigr).
\]
After restoring the common prefactor in
\eqref{eq:H2-heat-kernel-borel-laplace}, this reproduces the one-loop
normalization of $\widetilde{K}_k$ in
Remark \ref{rmk:Normalizationconvetion}.  To compare the full WKB formal
solutions, we shall use the following commutation property.

\begin{prop}
\label{prop:H2-pointed-alien-stokes-constant}
Fix $k\in\Z\setminus\{0\}$.  Let
$\widetilde{K}_k$ be the WKB-type formal solution attached to
$\gamma_k$, normalized as in
\eqref{eq:H2-formal-heat-object-k} and \eqref{eq:H2-transport-recursion-An}.  Then
\[
\dot\Delta_{\omega_k(d)}^{+}\,
\widetilde{K}_0=2\,\widetilde{K}_k.
\]
Equivalently, the positive Stokes constant from the principal real branch to
the $k$-th holomorphic geodesic branch is
$
S_{0k}^{+}=2.
$ In particular,
\[
\dot\Delta_{\omega_1(d)}^{+}\,
\widetilde{K}_0=2\,\widetilde{K}_1.
\]
\end{prop}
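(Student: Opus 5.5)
The plan is to compute $\dot\Delta^{+}_{\omega_k(d)}\widetilde K_0$ explicitly from the exact Borel germ $\widehat\phi_d$. The key structural observation is that the translated germ $\widehat\phi_{d,k}(\zeta)$ is the same analytic expression as $\widehat\phi_d(\xi)$, with $d$ replaced by the complex length $\ell_k=d+2\pi i k$. We then identify the resulting formal series with $\widetilde K_k$ through the uniqueness statement in Proposition~\ref{prop:H2-formal-heat-recursion}.

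\textbf{Step 1 (alien derivative of the amplitude).} By \eqref{eq:H2-heat-kernel-borel-laplace} and the convention of Section~\ref{sec:heat-alien}, the prefactor $\frac{2\sqrt2\,e^{-\hbar/4}}{(4\pi\hbar)^{3/2}}$ is carried along unchanged. The exponential $e^{-\Phi_0/\hbar}$ is shifted to $e^{-\Phi_k/\hbar}$, since $\omega_k=\Phi_k-\Phi_0$. By \eqref{eq:H2-local-variation} we get $\Delta^+_{\omega_k}\widetilde\Psi_0=2\widetilde\Psi_{d,k}$.

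\textbf{Step 2 (expansion of the translated germ).} Near $\zeta=0$ the function $\zeta\mapsto \cosh\sqrt{\ell^2+4\zeta}-\cosh\ell$ is analytic in $(\ell,\zeta)$, so the expansion below is analytic in $\ell$ wherever $\sinh\ell/\ell\neq0$. Repeating the computation that produced $c_n(d)$ and $a_n(d)$, with $\ell$ as a complex parameter, gives
\[
\widehat\phi_{d,k}(\zeta)=\Bigl(\tfrac{\ell_k}{2\sinh\ell_k}\Bigr)^{1/2}\zeta^{-1/2}\sum_{n\ge0}c_n(\ell_k)\zeta^n .
\]
Applying $\mathcal B^{-1}$ to this and multiplying by the prefactor, the constant becomes
\[
2\sqrt2\,(4\pi\hbar)^{-3/2}\sqrt\pi\,\hbar^{1/2}\,2^{-1/2}=(4\pi\hbar)^{-1}.
\]
Hence
\[
\dot\Delta^+_{\omega_k}\widetilde K_0=2\,\frac{e^{-\hbar/4}}{4\pi\hbar}\,e^{-\Phi_k/\hbar}\Bigl(\tfrac{\ell_k}{\sinh\ell_k}\Bigr)^{1/2}\sum_n a_n(\ell_k)\hbar^n ,
\]
where $a_n(\ell)$ is given by the closed formula, now with $\mathcal D=\ell^{-1}\partial_\ell$ and $q(\ell)=\sinh\ell/\ell$.

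\textbf{Step 3 (identification $a_n=A_n$).} For real $d>0$, $K_{\HH^2_\R}$ solves the radial heat equation and its short-time expansion is the real WKB series. Its coefficients $a_n(d)$ are regular at $d=0$, since $q$ is even and entire and $\mathcal D$ preserves even entire functions. They therefore satisfy \eqref{eq:H2-transport-recursion-An} with $a_0=1$. The regularity argument in the proof of Proposition~\ref{prop:H2-formal-heat-recursion} then forces $a_n=A_n$ on a real neighbourhood of $[0,\infty)$. Both sides are analytic in $\ell$ along the continuation path, so the identity theorem gives $a_n(\ell_k)=A_n(\ell_k)$. Combined with Step 2, this yields $\dot\Delta^+_{\omega_k}\widetilde K_0=2\widetilde K_k$.

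Alternatively, Corollary~\ref{cor:alien-heat-solution-action-gap} already says the left side solves the heat equation with phase $\Phi_k$. In that case only the leading term is needed, and Step 2 supplies it.

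\textbf{Main obstacle: consistency of branches.} The path $\gamma^+_{\omega_k}$ in the $\xi$-plane induces a path $\ell(\xi)=\sqrt{d^2+4\xi}$ from $d$ to $\ell_k$ in the $\ell$-plane. Along this path we must check that the square root defining $\ell$ lands on $+\ell_k$ rather than $-\ell_k$. We must also check that the branch of $(\ell/\sinh\ell)^{1/2}$ reached agrees with the one fixed in \eqref{eq:H2-formal-heat-object-k}, and that the induced path avoids the zeros $\ell\in i\pi\Z\setminus\{0\}$ of $\sinh\ell$, where $A_n$ may fail to be analytic. I would handle this by tracking the explicit path of Figure~\ref{fig:hyperbolicsingularity}. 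It stays on the positive side of the parabola of singularities, so $\mathrm{Re}\,\ell>0$ along it, and the path $d\to\ell_k$ can be deformed to a segment in the right half-plane. That region contains no zeros of $\sinh\ell$ and no branch cuts of either square root. The sign of the variation (the factor $2$ rather than $-2$) then follows from the clockwise-loop convention, exactly as in \eqref{eq:H2-local-variation}.
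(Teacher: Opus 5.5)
Your argument is correct, but Step~3 takes a genuinely different route from the paper's.

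\textbf{Paper's route.} The paper takes only the leading term from \eqref{eq:H2-local-variation}. It then invokes Lemma~\ref{lem:pointed-alien-commutation} to show that $\dot\Delta^+_{\omega_k(d)}$ commutes with $\hbar^2\partial_\hbar$ and $\partial_d$, so $\dot\Delta^+_{\omega_k(d)}\widetilde K_0$ is again a formal solution of the radial heat equation with phase $\Phi_k$. It concludes by the uniqueness statement of Proposition~\ref{prop:H2-formal-heat-recursion}: phase, one-loop factor, and regularity at $\ell=0$ determine the series.

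\textbf{Your route.} You read off the whole translated germ. Since $d^2+4\omega_k=\ell_k^2$ and $\cosh\ell_k=\cosh d$, it is literally $\widehat\phi$ with $d\mapsto\ell_k$, so its coefficients are $a_n(\ell_k)$. The heat equation is then needed only for the actual kernel on the real axis, where uniqueness gives $a_n=A_n$. Single-valued meromorphy of $a_n$ (poles only at $\ell\in i\pi\Z\setminus\{0\}$) carries this identity to $\ell_k$.

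\textbf{Comparison.} Your route bypasses the hypotheses of Lemma~\ref{lem:pointed-alien-commutation}, namely commuting $\partial_d$ with continuation and the enlarged singularity formalism. It also makes explicit why the paper's regularity condition at $\ell=0$ is available, even though that point is $d=-2\pi i k$, away from where the heat equation was derived. The paper's route is more structural, and it is the one that would survive without a closed-form Borel germ.

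\textbf{The branch paragraph needs repair.} The paper does not address this point at all, so this is not something you are missing relative to it. The multivalued object is $\widehat\phi_d=(\cosh\ell-\cosh d)^{-1/2}$ itself, and its branch points $\ell=d+2\pi i j$ lie inside $\re\ell>0$. So the absence of zeros of $\sinh\ell$ there does not fix the sign. For $|k|\ge2$ the segment $[d,\ell_k]$ runs through the intermediate branch points, and passing any one of them on the other side flips the sign of the continuation. The clockwise-loop convention only yields $\var^+=2\,\cont_{\gamma^+}$; it does not decide whether $\cont_{\gamma^+}$ is plus or minus the normalized germ.

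\textbf{How to settle the sign.} Factor $\cosh\ell-\cosh d=2\sinh\frac{\ell+d}{2}\sinh\frac{\ell-d}{2}$. The first factor has no zeros on $\re\ell>0$. The second vanishes simply at $\ell_j$, with $\sinh\frac{\ell-d}{2}\approx(-1)^j\frac{\ell-\ell_j}{2}$ there. Track arguments along the path of Figure~\ref{fig:hyperbolicsingularity}, which stays in $\re\ell>d$. For $k\ge1$ this gives $\cont_{\gamma^+}\widehat\phi_d(\omega_k+\zeta)=+\bigl(\tfrac{\ell_k}{2\sinh\ell_k}\bigr)^{1/2}\zeta^{-1/2}(1+O(\zeta))$ precisely when $(\ell_k/\sinh\ell_k)^{1/2}$ is the branch continued from $\ell=d$ through $\re\ell>0$. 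That branch equals $(-1)^k$ times the principal value. So the constant $2$, rather than $-2$, is tied to that branch choice in \eqref{eq:H2-formal-heat-object-k}, and you should state the choice explicitly.
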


\begin{proof}
By \eqref{eq:H2-local-variation},
\[
\tau_{-\omega_k(d)}\,
\var_{\omega_k(d)}^{+}\widehat\phi_d
=2\,\widehat\phi_{d,k}.
\]
Using the leading expansion of $\widehat\phi_{d,k}$ and restoring the
prefactor in \eqref{eq:H2-heat-kernel-borel-laplace}, we obtain
\begin{equation}\label{eq:H2-pointed-alien-leading-solution}
\dot\Delta_{\omega_k(d)}^{+}\,
\widetilde{K}_0
=2\,\frac{e^{-\hbar/4}}{4\pi\hbar}
\exp\!\left(-\frac{\Phi_k(d)}{\hbar}
\right)\left(\frac{\ell_k}{\sinh\ell_k}
\right)^{1/2}\bigl(1+O(\hbar)\bigr).
\end{equation}
Hence, after division by $2$, the pointed alien contribution has the same
phase and the same one-loop leading factor as
$\widetilde{K}_k$.

It remains to identify the full formal power series.  The principal WKB
solution $\widetilde{K}_0$ satisfies the radial heat equation, or
equivalently
\[
\mathfrak H\widetilde{K}_0=0,\qquad
\mathfrak H:=\hbar^2\partial_\hbar
-\hbar^2\left(\partial_d^2+\coth d\,\partial_d
\right).
\]
By Lemma \ref{lem:pointed-alien-commutation},
$\dot\Delta_{\omega_k(d)}^{+}$ commutes with both $\hbar^2\partial_\hbar$ and $\partial_d$.  Therefore
\[
\mathfrak H
\left(\dot\Delta_{\omega_k(d)}^{+}\widetilde{K}_0\right)=0.
\]
Thus the pointed alien contribution is again a WKB-type formal solution of
the radial heat equation, now with phase $\Phi_k(d)$.

Moreover, after dividing by $2$, its reduced formal power series is obtained
from the local germ $\widehat\phi_{d,k}$.  After extracting the universal
factor
\[
\left(\frac{\ell_k}{2\sinh\ell_k}
\right)^{1/2}\zeta^{-1/2},
\]
the remaining Taylor coefficients are regular at $\ell=0$.  Hence this
formal solution satisfies the same regularity normalization used to define
$\widetilde{K}_k$.

The phase, the one-loop leading factor, and the regularity condition at
$\ell=0$ uniquely determine the WKB-type formal solution attached to
$\gamma_k$.  Comparing with
\eqref{eq:H2-pointed-alien-leading-solution} proves
\[
\dot\Delta_{\omega_k(d)}^{+}\,
\widetilde{K}_0=2\,\widetilde{K}_k.\qedhere
\]
\end{proof}

\begin{cor}
\label{cor:H2-pointed-alien-general-transition}
Let $j,m\in\Z$ with $j\neq m$, and set
\[
\omega_{j\to m}(d):=\Phi_m(d)-\Phi_j(d).
\]
Then
\begin{equation}\label{eq:H2-general-pointed-alien-transition}
\dot\Delta_{\omega_{j\to m}(d)}^{+}\,
\widetilde{K}_j=2\,\widetilde{K}_m.
\end{equation}
Thus every direct pointed alien transition between two distinct complex
geodesic branches carries the same Stokes constant $2$.
\end{cor}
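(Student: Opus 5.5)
The plan is to reduce Corollary \ref{cor:H2-pointed-alien-general-transition} to the same local computation used in Proposition \ref{prop:H2-pointed-alien-stokes-constant}, now carried out with the $j$-th branch as the base point. The key observation is that all branches live on one global Borel-type function. The map $w\mapsto(\cosh\sqrt{2w}-\cosh d)^{-1/2}$ depends only on $w$, and each branch $\ell_j$ satisfies $\cosh\ell_j=\cosh d$. In the variable $\xi=w/2-\Phi_j(d)$, the local germ of $\widetilde K_j$ at its own origin is therefore
\[
\widehat\phi_{d,j}(\zeta)=\left(\cosh\sqrt{\ell_j^2+4\zeta}-\cosh\ell_j\right)^{-1/2}.
\]
This is exactly the translated germ of $\widehat\phi_d$ at $\omega_j(d)$ (for $j=0$ it is $\widehat\phi_d$ itself). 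So the first step is to define $\widetilde K_j$ via $\mathcal B^{-1}\widehat\phi_{d,j}$ with the prefactor of \eqref{eq:H2-heat-kernel-borel-laplace}. By the argument of Proposition \ref{prop:H2-pointed-alien-stokes-constant} (phase, one-loop factor $(\ell_j/\sinh\ell_j)^{1/2}$, regularity), this coincides with the normalized object \eqref{eq:H2-formal-heat-object-k}.

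Next we locate the singularities of $\widehat\phi_{d,j}$. They occur where $\sqrt{\ell_j^2+4\zeta}=\pm\ell_m$, that is, at $\zeta=\Phi_m-\Phi_j=\omega_{j\to m}(d)$. Near such a point, write $\zeta=\omega_{j\to m}+\eta$; then $\ell_j^2+4\zeta=\ell_m^2+4\eta$. Along the positive continuation path $\gamma^+_{\omega_{j\to m}}$, starting from the branch of the square root near $\ell_j$, the continued germ is $\pm\widehat\phi_{d,m}(\eta)$. The inner function $\sqrt{\cdot}$ may arrive at $\ell_m$ or at $-\ell_m$, but $\cosh$ is even, so the expression $\cosh\sqrt{\ell_m^2+4\eta}$ is unchanged. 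The outer square root contributes at most a global sign, which is absorbed by the positive-side branch convention exactly as in \eqref{eq:H2-local-variation}. Since this is a square-root singularity, the clockwise loop flips the sign, giving
\[
\tau_{-\omega_{j\to m}}\operatorname{var}^+_{\omega_{j\to m}}\widehat\phi_{d,j}=2\,\widehat\phi_{d,m}.
\]

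The final step is to apply $\mathcal B^{-1}$ and restore the prefactor and the exponential label $e^{-\Phi_j/\hbar}\cdot e^{-\omega_{j\to m}/\hbar}=e^{-\Phi_m/\hbar}$. This yields $2\widetilde K_m$ at leading order. The full formal series is then identified either directly, since the germ is literally $\widehat\phi_{d,m}$, or through Lemma \ref{lem:pointed-alien-commutation} and the uniqueness in Proposition \ref{prop:H2-formal-heat-recursion}, as in the proof of Proposition \ref{prop:H2-pointed-alien-stokes-constant}.

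The main obstacle is the branch bookkeeping. When $\sqrt{\ell_j^2+4\zeta}$ is continued along $\gamma^+$, it might pass through the intermediate points $\pm\ell_{m'}$ and other branches, and it might reach $-\ell_m$ instead of $\ell_m$. The sign of the outer square root must then be checked to match the chosen normalization $(\ell_m/\sinh\ell_m)^{1/2}$, taken with the principal branch. The first thing to try is to note that $(\ell/\sinh\ell)$ is even in $\ell$, so both endpoints give the same leading factor. The remaining sign is then fixed by the positive-side convention, which defines the local variation so that the coefficient comes out as $+2$; I would verify this by a local expansion $\cosh\sqrt{\ell_m^2+4\eta}-\cosh\ell_m\approx 2\eta\,\sinh\ell_m/\ell_m$.
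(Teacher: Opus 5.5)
Your route is the paper's. You re-center at $\gamma_j$ and observe that $\widehat\phi_{d,j}$ is a branch of the single function $(\cosh\sqrt{\ell_j^2+4\zeta}-\cosh d)^{-1/2}$. Its singularities are simple square-root points at the $\omega_{j\to n}(d)$. The inner root is harmless because $u\mapsto\cosh\sqrt u$ is entire. Moreover, the points $\omega_{j\to n}(d)=\pi(n-j)\bigl(id-\pi(n+j)\bigr)$ lie on pairwise distinct rays, so $\gamma^+$ is just the segment and your worry about intermediate branch points is unfounded. You then take the factor $2$ from the sign flip under the loop and identify the full series through Proposition~\ref{prop:H2-formal-heat-recursion}.

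The gap is the sign. You rightly single it out as the main obstacle, but then dispose of it incorrectly. The positive-side convention does not let you \emph{choose} the sign. Once $\widetilde K_j$ and $\widetilde K_m$ are fixed, continuation along $\gamma^+$ decides whether $\tau_{-\omega_{j\to m}(d)}\operatorname{var}^{+}_{\omega_{j\to m}(d)}\widehat\phi_{d,j}$ is $+2$ or $-2$ times the reduced germ of $\widetilde K_m$. Your local expansion $\cosh\sqrt{\ell_m^2+4\eta}-\cosh\ell_m\approx 2\eta\sinh\ell_m/\ell_m$ only recovers the modulus $2$.

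Worse, the signs cannot all be $+$, so the step fails and not merely for lack of detail. First, $S_{jm}S_{mj}$ is unchanged if $\widetilde K_j$ or $\widetilde K_m$ is rescaled. Now compare two continuations along the same segment.
\begin{itemize}
\item Forward: continue the reduced germ of $\widetilde K_j$ out to $\omega_{j\to m}(d)$, then along the $\gamma^+$ half-turn around $\omega_{j\to m}(d)$.
\item Backward: continue the reduced germ of $\widetilde K_m$ back to $0$, then along the $\gamma^+$ half-turn around $0$.
\end{itemize}
The segment contributions cancel. What remains is that $S_{jm}S_{mj}/4$ equals the product of the two signs by which a principal-branch germ $\zeta^{-1/2}$ differs from its continuation across the respective half-turn. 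The two half-turns sweep complementary half-circles of directions, and $\omega_{j\to m}(d)\notin\R$. Hence exactly one of them crosses the principal cut of $\zeta^{-1/2}$ in the half-integer Borel convention, which forces $S_{jm}S_{mj}=-4$.

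In particular, if $S_{0m}=2$ as in Proposition~\ref{prop:H2-pointed-alien-stokes-constant}, then $\dot\Delta^{+}_{\omega_{m\to 0}(d)}\widetilde K_m=-2\,\widetilde K_0$. So your argument establishes only $\dot\Delta^{+}_{\omega_{j\to m}(d)}\widetilde K_j=\pm 2\,\widetilde K_m$, with $S_{mj}=-S_{jm}$. The paper's one-line re-centering proof has the same blind spot. A uniform $+2$ is possible only if the branch of $(\ell_m/\sinh\ell_m)^{1/2}$ in $\widetilde K_m$ may depend on the transition, which contradicts the fixed normalization in the statement.
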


\begin{proof}
Let $\widehat\phi_{d,j}$ be the reduced Borel germ associated with
$\widetilde K_j$.  We use Proposition \ref{prop:H2-pointed-alien-stokes-constant} after re-centering the
construction at the branch $\gamma_j$. Then the branch $\gamma_m$ appears at the Borel-plane
displacement
\[
\omega_{j\to m}(d)=\Phi_m(d)-\Phi_j(d).
\]
The radial heat equation, the transport recursion, and the normalization
$A_0=1$ are unchanged under this re-centering.  Therefore
Proposition \ref{prop:H2-pointed-alien-stokes-constant} gives
\[
\tau_{-\omega_{j\to m}(d)}\operatorname{var}^{+}_{\omega_{j\to m}(d)}\widehat\phi_{d,j}
=2\,\widehat\phi_{d,m}.
\]
In particular, the singularity at $\omega_{j\to m}(d)$ is of square-root type by
Proposition \ref{prop:H2-pointed-alien-stokes-constant}, with $\gamma_j$ as the reference branch.

After inverse Borel transform and insertion of the pointed factor, the exponential weight changes from $e^{-\Phi_j(d)/\hbar}$ to
$e^{-\Phi_m(d)/\hbar}$.
Now the translated reduced germ is $2\widehat\phi_{d,m}$, and the one-loop
factor and the transport normalization are the same as those used in the
definition of $\widetilde K_m$.  Hence
\[
\dot\Delta^{+}_{\omega_{j\to m}(d)}\widetilde K_j
=2\,\widetilde K_m.\qedhere
\]
\end{proof}

The coefficient $2$ is the resurgent quantity that will later be compared
with the connecting-trajectory count for the corresponding pair of critical
paths in the complexified path-space Morse theory.

\section{Picard–Lefschetz analysis}\label{sec:H^2-flow}
In this section, we will study the $\HH_{\R}^2$ model from Picard-Leschetz side on the complexified path space. Let us first recall necessary notations from Example \ref{eg:H^2-PL}. Let
\[
\HH_{\C}^2:=\bigl\{z=(z_0,z_1,z_2)\in \C^3:\ -z_0^2+z_1^2+z_2^2=-1\bigr\},
\qquad
\eta:=\diag(-1,1,1).
\]
After a real Lorentz isometry, we normalize the two real endpoints as
\[
a=(1,0,0),
\qquad
b_d=(\cosh d,\sinh d,0).
\]
We consider the endpoint-fixed Sobolev path space
\[
\PP_d^{\C}:=\PP_{a,b_d}(\mathbb{H}_{\C}^2)=\bigl\{z\in H^1([0,\tau],\HH_{\C}^2)\mid z(0)=a,\ z(\tau)=b_d\bigr\}
\]
equipped with the action by the holomorphic energy functional
$$
S(z)=\frac12\int_0^{\tau} z_t^T\eta z_t\,\dd t=\frac{1}{2}\int_0^{\tau}\langle z_t,z_t\rangle_{\eta}\,\dd t.
$$
\begin{rmk} The factor $1/2$ is the traditional normalization for the energy. In Section \ref{sec:PL-problem-path}, we have used the factor $1/4$ to define $\mathcal E_{\C}$ so that the exponential $\exp(-\mathcal E_{\C})$ leads to the heat kernel.  Since
$$
S=2\mathcal E_{\C},
$$
they give rise to equivalent Morse flows. We choose to work with $S$ instead in this section for Picard–Lefschetz and Morse flow analysis. It simplifies expressions in various computations. 
\end{rmk}

Let $\theta$ be a chosen phase.  Define the twisted real part and imaginary part of $S$ by
\begin{align*}
F_{\theta}(z)=&\re\bigl(e^{-i\theta}S(z)\bigr),
\qquad\quad
G_{\theta}(z)=\im\bigl(e^{-i\theta}S(z)\bigr).    
\end{align*}
The downward Morse flow system of $F_{\theta}$ is computed in Example \ref{eg:H^2-PL} and given by
\begin{equation}\label{eq:flow}
\begin{cases}
\partial_sz=P_z(e^{i\theta}\eta \bar{z}_{tt})=e^{i\theta}\left(\eta\bar{z}_{tt}-\frac{z^T\bar{z}_{tt}}{|z|^2}\eta\bar{z}\right)\\[0.1cm]
\langle z(s,t),z(s,t)\rangle_{\eta}=z(s,t)^T\eta z(s,t)=-1\\[0.1cm]
z(s,0)=a,~z(s,\tau)=b_d
\end{cases}.
\end{equation}

\subsection{The holomorphic geodesics and the complex Jacobi operator}

For each $j\in\Z$, set
\[
c_{j,d}:=\frac{d+2\pi i j}{\tau}.
\]
As shown in Example \ref{eg:H^2-PL}, for each $j$, there corresponds a holomorphic geodesics joining $a$ to $b_d$ by  
\begin{equation}\label{eq:gammajd}
\gamma_{j,d}(t)=\bigl(\cosh(c_{j,d}t),\,\sinh(c_{j,d}t),\,0\bigr),
\qquad 0\le t\le \tau.
\end{equation}
Moreover,
\[
\frac{\dd^2}{\dd t^2}\,\gamma_{j,d}=c_{j,d}^2\gamma_{j,d}.
\]

Along $\gamma_{j,d}$, introduce the tangent and normal fields
\[
T_{j,d}(t):=\frac1{c_{j,d}}\,\frac{\dd}{\dd t}\gamma_{j,d}(t)
=\bigl(\sinh(c_{j,d}t),\,\cosh(c_{j,d}t),\,0\bigr),
\qquad
N:=(0,0,1).
\]
We have
\[
\ip{\gamma_{j,d}}{T_{j,d}}_{\eta}=0,
\qquad
\ip{\gamma_{j,d}}{N}_{\eta}=0,
\qquad
\ip{T_{j,d}}{T_{j,d}}_{\eta}=1,
\qquad
\ip{N}{N}_{\eta}=1.
\]
Therefore every tangent variation along $\gamma_{j,d}$ can be written as
\[
\xi(t)=\alpha(t)T_{j,d}(t)+\beta(t)N,
\qquad
\alpha(0)=\alpha(\tau)=\beta(0)=\beta(\tau)=0.
\]

\begin{prop}\label{prop:nondegenerate}
The $\C$-linear second-variation operator of the holomorphic action at $\gamma_{j,d}$ is
\begin{equation*}\label{eq:Ajd}
L_{j,d}^{\C}(\alpha,\beta)=\bigl(-\alpha_{tt},\,-\beta_{tt}+c_{j,d}^2\beta\bigr)
\end{equation*}
on the Dirichlet domain
\[
\mathrm{Dom}(L_{j,d}^{\C})=\bigl(H^2\cap H^1_0([0,\tau],\C)\bigr)\oplus \bigl(H^2\cap H^1_0([0,\tau],\C)\bigr).
\]
Equivalently,
\[
L_{j,d}^{\C}(\alpha T_{j,d}+\beta N)=(-\alpha_{tt})T_{j,d}+(-\beta_{tt}+c_{j,d}^2\beta)N.
\]
Furthermore, for $d>0$, $\ker L_{j,d}^{\C}=\{0\}$, hence every critical path $\gamma_{j,d}$ is nondegenerate. 
\end{prop}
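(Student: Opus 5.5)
We compute the second variation directly from the explicit geometry of $\HH_{\C}^2\subset\C^3$. Take a holomorphic one-parameter family of paths $z_\varepsilon\in\PP_d^{\C}$ with $z_0=\gamma_{j,d}$ and $\partial_\varepsilon z_\varepsilon|_{0}=\xi$, where $\xi=\alpha T_{j,d}+\beta N$ with Dirichlet conditions. Because the constraint $\ip{z_\varepsilon}{z_\varepsilon}_\eta=-1$ is not linear, write $z_\varepsilon=\gamma+\varepsilon\xi+\tfrac{\varepsilon^2}{2}\zeta+O(\varepsilon^3)$. Differentiating the constraint twice gives $\ip{\gamma}{\zeta}_\eta=-\ip{\xi}{\xi}_\eta$. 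We then expand
\[
S(z_\varepsilon)=S(\gamma)+\varepsilon\int_0^\tau\ip{\dot\gamma}{\dot\xi}_\eta\,\dd t+\frac{\varepsilon^2}{2}\int_0^\tau\Bigl(\ip{\dot\xi}{\dot\xi}_\eta+\ip{\dot\gamma}{\dot\zeta}_\eta\Bigr)\dd t+O(\varepsilon^3).
\]
In the last term we integrate by parts; $\zeta$ vanishes at the endpoints. Using $\ddot\gamma=c_{j,d}^2\gamma$, this term becomes $-c_{j,d}^2\int\ip{\gamma}{\zeta}_\eta=c_{j,d}^2\int\ip{\xi}{\xi}_\eta$. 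Hence the Hessian quadratic form is
\[
Q(\xi)=\int_0^\tau\Bigl(\ip{\dot\xi}{\dot\xi}_\eta+c_{j,d}^2\ip{\xi}{\xi}_\eta\Bigr)\dd t .
\]
This form is complex bilinear, since everything is holomorphic, and it is independent of the choice of $\zeta$.

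Next we substitute the frame. We have $\dot T_{j,d}=c_{j,d}\gamma_{j,d}$ and $\dot N=0$, and $\gamma$, $T$, $N$ are $\eta$-orthogonal with $\ip{\gamma}{\gamma}_\eta=-1$. This gives $\dot\xi=\dot\alpha T+\alpha c\gamma+\dot\beta N$, so
\[
\ip{\dot\xi}{\dot\xi}_\eta=\dot\alpha^2-c^2\alpha^2+\dot\beta^2, \qquad \ip{\xi}{\xi}_\eta=\alpha^2+\beta^2 .
\]
Therefore $Q=\int(\dot\alpha^2+\dot\beta^2+c^2\beta^2)\,\dd t$: the $\alpha^2$ terms cancel, reflecting flat curvature along the tangential direction. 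Polarizing and integrating by parts with Dirichlet conditions, we read off $L^{\C}_{j,d}(\alpha,\beta)=(-\alpha_{tt},-\beta_{tt}+c^2\beta)$ on the stated $H^2\cap H^1_0$ domain. Elliptic regularity makes this the natural closed operator associated with the form.

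For nondegeneracy, the kernel equations are $\alpha_{tt}=0$ with $\alpha(0)=\alpha(\tau)=0$, which forces $\alpha=0$, and $\beta_{tt}=c^2\beta$. The latter has solutions $\beta=A\sinh(ct)+B\cosh(ct)$. The condition $\beta(0)=0$ gives $B=0$, and $\beta(\tau)=0$ then requires $\sinh(c\tau)=0$, i.e. $c\tau=d+2\pi ij\in\pi i\Z$. Since $d>0$ is real and nonzero, this fails, so $\beta=0$. The case $c=0$ does not occur, again because $d>0$. The main point requiring care is the second-order constraint correction $\zeta$: it is what produces the $c^2\ip{\xi}{\xi}_\eta$ term. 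Alternatively, one could derive the same operator as the holomorphic Jacobi operator $-\nabla_t^2\xi-R(\xi,\dot\gamma)\dot\gamma$ for constant curvature $-1$. This route gives $-\alpha''$ in the tangential direction and $-\beta''+c^2\beta$ normally, and can serve as a cross-check.
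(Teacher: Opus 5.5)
Your proposal is correct and takes essentially the same route as the paper. Both arguments expand the constrained variation to second order and use $\langle\gamma,\zeta\rangle_\eta=-\langle\xi,\xi\rangle_\eta$ together with $\ddot\gamma=c_{j,d}^2\gamma$; substituting the frame $(T_{j,d},N)$ then gives $\int_0^\tau(\alpha_t^2+\beta_t^2+c_{j,d}^2\beta^2)\,\dd t$, and the Dirichlet kernel equations are solved explicitly (the paper writes $(-1)^j\sinh d$ where in fact $\sinh(d+2\pi ij)=\sinh d$, a slip that changes nothing and that your zero-set argument avoids).
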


\begin{proof} 
Consider an arbitrary endpoint-fixed variation of $\gamma_{j,d}$ parametrized by small $\varepsilon$
\[u(\varepsilon,t)\in \HH_{\C}^2,\qquad u(0,t)=\gamma_{j,d}(t).\]
Let
\[\xi(t)=\partial_{\varepsilon}u(0,t),\qquad \zeta(t)=\partial_{\varepsilon}^2u(0,t)\]
be the first and second order variations, which satisfy the boundary condition
\[\xi(0)=\xi(\tau)=0,\qquad \zeta(0)=\zeta(\tau)=0.\]
Then we can compute
\begin{align*}
    \frac{\dd^2}{\dd\varepsilon^2}S(u(\varepsilon))\big|_{\varepsilon=0}=~&\int_0^{\tau}\left(\langle\xi_t,\xi_t\rangle_{\eta}+\langle(\gamma_{j,d})_t,\zeta_t\rangle_{\eta}\right)\,\dd t\\
    =~&\int_0^{\tau}\langle\xi_t,\xi_t\rangle_{\eta}\dd t-\int_0^{\tau}\langle(\gamma_{i,d})_{tt},\zeta\rangle_{\eta}\,\dd t\\
    =~&\int_0^{\tau}\langle\xi_t,\xi_t\rangle_{\eta}\dd t-c_{j,d}^2\int_0^{\tau}\langle\gamma_{j,d},\zeta\rangle_{\eta}\,\dd t\\
       =~&\int_0^{\tau}(\alpha_t^2+\beta_t^2-c_{j,d}^2 \alpha^2)\ \dd t-c_{j,d}^2\int_0^{\tau}\langle\gamma_{j,d},\zeta\rangle_{\eta}\,\dd t.
\end{align*}

Since $u(\varepsilon,t)\in\HH_{\C}^2$, 
\[\langle u(\varepsilon,t),u(\varepsilon,t)\rangle_{\eta}=-1.\]
Taking the derivatives with respect to $\varepsilon$ twice, we obtain
\[\langle\gamma_{j,d},\zeta\rangle_{\eta}+\langle\xi,\xi\rangle_{\eta}=0.\]
Thus for $\xi(t)=\alpha(t)T_{j,d}(t)+\beta(t)N$, we arrive at
\[\delta^2S_{\gamma_{j,d}}(\xi,\xi)=\int_0^{\tau}\left(\alpha_t^2+\beta_t^2+c_{j,d}^2\beta^2\right)\,\dd t=\int_0^{\tau}[\alpha(-\alpha_{tt})+\beta(-\beta_{tt}+c_{j,d}^2\beta)]\,\dd t.\]
This shows the formula for $L_{j,d}^{\C}$. 

We next show $\ker L_{j,d}^{\C}=\{0\}$. The tangential equation $-\alpha_{tt}=0$ with Dirichlet boundary conditions has only the trivial solution. Solutions of the normal equation $\beta_{tt}=c_{j,d}^2\beta$
take the form 
\[\beta(t)=c_1 e^{c_{j,d}t}+c_2 e^{-c_{j,d}t}.\] 
Imposing $\beta(0)=0$ gives $c_2=-c_1$, hence
\[
\beta(t)=2c_1\sinh(c_{j,d}t).
\]
Now $\beta(\tau)=0$ implies
\[
2c_1\sinh(c_{j,d}\tau)=2c_1\sinh(d+2\pi ij)=(-1)^j\times 2c_1\sinh d=0.
\]
Since $d>0$, $\sinh d\ne0$, we get $c_1=0$, that is, $\beta\equiv 0$.
\end{proof}

\begin{rmk}
The normal Dirichlet spectra of $L_{j,d}^{\C}$ are
\[\left(\frac{n\pi}{\tau}\right)^2+c_{j,d}^2,\quad n\ge1.\]
At $d=0$, the normal block becomes degenerate exactly when
\[
\Bigl(\frac{n\pi}{\tau}\Bigr)^2+\Bigl(\frac{2\pi ij}{\tau}\Bigr)^2=0,
\]
that is, when $n=2|j|$. In this case 
$$
\gamma_{j,d}(t)=\bigl(\cos(\frac{jt}{\tau}),\,i \sin(\frac{jt}{\tau}),\,0\bigr)
$$
lies on the sphere $S^2\subset \HH_{\C}^2$, and this is the familiar Morse--Bott phenomenon on $S^2$. This observation will play an essential role in our perturbation analysis in Section \ref{sec:case d=0}. 
\end{rmk}

\subsection{Linearized flow equation}

In this subsection, we will analyze linearization of the downward gradient flow equation of $F_{\theta}$ at the critical path $\gamma_{j,d}$, which plays an important role in the study of local structure of the Lefschetz thimble in the following subsections. 

\begin{prop}\label{prop:linearization}
At the critical path $\gamma_{j,d}$, the linearization of the flow equation \eqref{eq:flow} is
\begin{equation}\label{eq:linflow}
\partial_s \xi + A_{j,d,\theta}\xi=0,
\end{equation}
with
\[
A_{j,d,\theta}:=e^{i\theta}\,\mathcal{C}\,M_{\gamma_{j,d}}^{-1}\,L_{j,d}^{\C},\qquad M_{\gamma_{j,d}}=\begin{pmatrix}
    |\gamma_{j,d}(t)|^2  & 0\\
    0 & 1
\end{pmatrix}.\]
Here $\mathcal C$ denotes coefficientwise complex conjugation in the frame $(T_{j,d},N)$. In particular, for $d>0$
\[
\ker A_{j,d,\theta}=\ker L_{j,d}^{\C}=0.
\]
\end{prop}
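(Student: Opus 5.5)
The plan is to avoid differentiating the projection formula in \eqref{eq:flow} term by term. Instead I use the general fact that, at a critical point of a function on a Hilbert manifold, the linearization of the negative gradient vector field is minus the Hessian, turned into an operator by the (here $L^2$-hermitian) metric. Since $\gamma_{j,d}$ is a critical point of $S$, the flow vector field $-\nabla_{L^2}F_\theta$ vanishes at $\gamma_{j,d}$. Its linearization is therefore intrinsic, with no connection terms and no dependence on how the constraint $\langle z,z\rangle_\eta=-1$ is handled. Concretely, write $V(z)=-\nabla_{L^2}F_\theta(z)$ and take a variation $u(\varepsilon,\cdot)$ of $\gamma_{j,d}$ with first-order term $\xi$. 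Then $\partial_\varepsilon V(u(\varepsilon))|_{\varepsilon=0}=:-A\xi$ is determined by
\[
\re\langle A\xi,\zeta\rangle_h=\mathrm{Hess}\,F_\theta(\xi,\zeta)=\re\bigl(e^{-i\theta}\,\delta^2S_{\gamma_{j,d}}(\xi,\zeta)\bigr)
\]
for all tangent variations $\zeta$. The first equality uses $V(\gamma_{j,d})=0$, so the variation of the metric and of the projection $P_z$ contributes nothing. The second uses the fact that $S$ is holomorphic, so $F_\theta=\re(e^{-i\theta}S)$.

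Next I insert the computation from Proposition~\ref{prop:nondegenerate}. Polarizing the quadratic form obtained there gives the complex bilinear form
\[
\delta^2S(\xi,\zeta)=\int_0^\tau\bigl[(L_{j,d}^{\C}\xi)_T\,\alpha'+(L_{j,d}^{\C}\xi)_N\,\beta'\bigr]\dd t
\]
for $\xi=\alpha T_{j,d}+\beta N$ and $\zeta=\alpha' T_{j,d}+\beta' N$, after integrating by parts with Dirichlet conditions. The overall normalization is the one fixed by $S=\frac12\int\langle z_t,z_t\rangle_\eta$. I then compute the hermitian Gram matrix of the frame. The vector $N=(0,0,1)$ is $h$-orthogonal to $T_{j,d}$, because the third component of $T_{j,d}$ vanishes. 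Also $|N|^2=1$ and $|T_{j,d}|^2=|\sinh(c_{j,d}t)|^2+|\cosh(c_{j,d}t)|^2=|\gamma_{j,d}(t)|^2$. Hence
\[
\langle a_1T+a_2N,\alpha'T+\beta'N\rangle_h=\int_0^\tau\bigl(|\gamma_{j,d}|^2a_1\overline{\alpha'}+a_2\overline{\beta'}\bigr)\dd t,
\]
which is exactly the matrix $M_{\gamma_{j,d}}$.

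Comparing the two expressions for all $\alpha',\beta'$ (testing with $\zeta$ and $i\zeta$ recovers complex equality from real parts) gives $|\gamma_{j,d}|^2\overline{a_1}=e^{-i\theta}(L^{\C}_{j,d}\xi)_T$ and $\overline{a_2}=e^{-i\theta}(L^{\C}_{j,d}\xi)_N$. Since $M_{\gamma_{j,d}}$ is real, this reads $A\xi=e^{i\theta}\mathcal C M_{\gamma_{j,d}}^{-1}L^{\C}_{j,d}\xi$, which is \eqref{eq:linflow}. As a cross-check, I would also differentiate $P_z(e^{i\theta}\eta\bar z_{tt})$ directly at $z=\gamma_{j,d}$. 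Here $\eta\bar\gamma_{tt}=\bar c_{j,d}^2\,\overline{\eta\gamma}$ is purely normal, so it is killed by $P_\gamma$, and the terms coming from varying $P_z$ should cancel against the normal part of $\eta\bar\xi_{tt}$, leaving the $\bar c^2\bar\beta$ term.

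The kernel statement is then immediate. The operators $e^{i\theta}$ and $\mathcal C$ are bijective, and $M^{-1}_{\gamma_{j,d}}$ is invertible since $|\gamma_{j,d}|^2\ge|\langle\gamma,\gamma\rangle_\eta|=1$. Hence $\ker A_{j,d,\theta}=\ker L^{\C}_{j,d}$, which is $\{0\}$ for $d>0$ by Proposition~\ref{prop:nondegenerate}. The main obstacle is the first step. I need to justify carefully that, in the constrained Sobolev setting with the frame-dependent metric, the variation of the gradient at a zero equals the metric dual of the Hessian. I also need to check the factor conventions between $\delta^2S$ and $L^{\C}_{j,d}$ (a possible factor of $2$ or $\tfrac12$). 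If that bookkeeping becomes delicate, I would fall back on the direct differentiation of the projection formula to settle the constants.
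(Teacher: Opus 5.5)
Your proposal is correct, and it reaches the formula by a different route from the paper.

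The paper differentiates $V(z)=P_z(\eta\bar z_{tt})$ directly at $\gamma_{j,d}$. It computes $D\lambda_\gamma(\xi)=\frac{\gamma^T\bar T}{|\gamma|^2}\bar\alpha_{tt}+2\bar c\,\bar\alpha_t$ using $\gamma_{tt}=c^2\gamma$. The identity $\eta\bar T-\frac{\gamma^T\bar T}{|\gamma|^2}\eta\bar\gamma=\frac{1}{|\gamma|^2}T$ then gives $DV_\gamma(\xi)=\frac{\bar\alpha_{tt}}{|\gamma|^2}T+(\bar\beta_{tt}-\bar c^2\bar\beta)N$. In that computation the weight $|\gamma|^2$ comes from projecting $\eta\bar T$ back onto $T_\gamma\HH^2_{\C}$.

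You instead use that, at a zero of $-\nabla F_\theta$, the linearization is the $\re\langle\cdot,\cdot\rangle_h$-dual of the intrinsic Hessian. You then import the second variation from Proposition \ref{prop:nondegenerate}. From this you read off $M_{\gamma_{j,d}}$ as the $h$-Gram matrix of $(T_{j,d},N)$, and $\mathcal C$ as the mismatch between a hermitian pairing and a complex-bilinear Hessian. Your route explains the structure of $A_{j,d,\theta}$ and reuses work already done. The paper's computation is self-contained and needs no Hilbert-manifold input.

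Two clarifications are worth making.
\begin{itemize}
\item The variation of $P_z$ inside $V$ does contribute to $DV_\gamma$. It equals $-\bar c^2P_\gamma(\eta\bar\xi)=-\bar c^2\bigl(\frac{\bar\alpha}{|\gamma|^2}T+\bar\beta N\bigr)$. This cancels the tangential term $\bar c^2\bar\alpha\,T/|\gamma|^2$ produced by $T_{tt}=c^2T$ and supplies the $-\bar c^2\bar\beta N$ term. So your cross-check description (cancellation against a normal part) is slightly off.
\item In your route, that same contribution is carried by the constraint term $-c^2\int\langle\gamma,\zeta\rangle_\eta$ of the second variation. What genuinely drops out, because $V(\gamma_{j,d})=0$ and $\gamma_{j,d}$ is critical, is only the variation of the test field and of the metric.
\end{itemize}

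The constants also work out with no stray factor. Equation \eqref{eq:flow} is exactly $-\nabla F_\theta$ for $S=\tfrac12\int\langle z_t,z_t\rangle_\eta$ with the metric $\re\langle\cdot,\cdot\rangle_h$. Moreover, $\frac{d^2}{d\varepsilon^2}S$ equals $\int[\alpha(-\alpha_{tt})+\beta(-\beta_{tt}+c^2\beta)]\,\dd t$.

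Finally, your justification of the kernel claim, via invertibility of $M_{\gamma_{j,d}}$ from $|\gamma|^2\ge|\langle\gamma,\gamma\rangle_\eta|=1$, fills in a point the paper leaves implicit.
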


\begin{proof}
Set
\[
V(z):=P(\eta\bar{z}_{tt})=\eta\bar{z}_{tt}-\lambda(z)\eta\bar{z},\qquad \lambda(z):=\frac{z^T\bar{z}_{tt}}{|z|^2}.\]
To simplify notations, we also denote
\[\gamma(t)=\gamma_{j,d}(t),\quad~ c=c_{j,d},\quad~ T=T_{j,d}(t),\quad~ \xi=\alpha T+\beta N.\]

Linearizing \eqref{eq:flow} at $\gamma=\gamma_{j,d}$ along $\xi$, we obtain
\[DV_{\gamma}(\xi)=\eta\,\bar{\xi}_{tt}-D\lambda_{\gamma}(\xi)\eta\,\bar\gamma-\bar{c}^2\,\eta\,\bar{\xi}.\]
Using $\gamma_{tt}=c^2\gamma$, we compute $D\lambda_{\gamma}(\xi)$ as follows:
\begin{align*}
    D\lambda_{\gamma}(\xi)=~&\frac{\gamma^T\bar{\xi}_{tt}+\bar{c}^2\,\xi^T \bar \gamma}{|\gamma|^2}-\frac{\bar c^2(\xi^T \bar \gamma+ \gamma^T \bar \xi)}{|\gamma|^2}=~\frac{\gamma^T\bar{\xi}_{tt}-\bar{c}^2\,\gamma^T\,\bar{\xi}}{|\gamma|^2}\\
    =~&\frac{\gamma^T(\bar{\alpha}_{tt}+\bar{c}^2\bar{\alpha})\bar{T}+2\gamma^T\bar{c}\bar{\alpha}_t\bar{\gamma}+\gamma^T\bar{\beta}_{tt}N-\bar{c}^2\bar \alpha\,\gamma^T\,\bar{T}}{|\gamma|^2}\\
    =~&\frac{\gamma^T\bar{T}}{|\gamma|^2} \bar{\alpha}_{tt}+2\bar{c}\,\bar{\alpha}_t.
\end{align*}
Using the identity
\[\cosh^2(ct)-\sinh^2(ct)=1,\]
we can compute
\[\eta\bar{T}-\frac{\gamma^T\bar{T}}{|\gamma|^2} \eta \bar \gamma=\frac{1}{|\gamma|^2}(\sinh(ct),\cosh(ct),0)=\frac{1}{|\gamma|^2}T.\]
Substituting these results into $DV_{\gamma}(\xi)$, 
we obtain
\[DV_{\gamma}(\xi)=\frac{\bar{\alpha}_{tt}}{|\gamma|^2}T+\left(\bar{\beta}_{tt}-\bar{c}^2\bar{\beta}\right)N.\qedhere\]
\end{proof}

The following standard notion of exponential dichotomy for a linear evolution equation and the  Lyapunov--Perron stable-manifold theorem for semilinear evolution equations with an exponential dichotomy are useful in our later discussion. We state them as follows. For more details, see \cite{Henry1981, Lin1986, PSS1997, Schnaubelt1999}.

\begin{defn}\label{def:exponential-dichotomy}
Let $A:D(A)\subset H\to H$ be a $\R$-linear closed operator on a real
Hilbert space $H$.  We say that the linear equation
\[
 \partial_s\xi+A\xi=0
\]
has an exponential dichotomy with splitting
\[
 H=E^s\oplus E^u
\]
and gap $\mu_*>0$ if there exist bounded projections
\[
 \Pi^s:H\to E^s,\qquad \Pi^u:H\to E^u,
 \qquad \Pi^s+\Pi^u=\mathrm{Id},
\]
such that $E^s$ and $E^u$ are invariant for the linear flow, and the
following estimates hold:
\[
 \|e^{-sA}\Pi^s\|_{H\to H}\le C e^{-\mu_*s},
 \qquad s\ge0,
\]
and
\[
 \|e^{sA}\Pi^u\|_{H\to H}\le C e^{-\mu_*s},
 \qquad s\ge0.
\]
Here $e^{-sA}\Pi^s$ denotes the forward evolution on the stable subspace,
whereas $e^{sA}\Pi^u$ denotes the backward evolution on the unstable
subspace.
\end{defn}

\begin{lem}[\cite{Henry1981,PSS1997}]\label{lem:Lyapunov--Perron}Let $A$ be a $\R$-linear closed operator on a Hilbert space $H$, with
domain continuously embedded in a stronger space $H^1$, and assume that
the linear equation
\[
 \partial_s\xi+A\xi=0
\]
admits an exponential dichotomy as in Definition \ref{def:exponential-dichotomy}. Consider
\begin{equation}\label{eq:abstract-hyperbolic-end}
 \partial_s\xi+A\xi=N(\xi),
 \qquad
 N(0)=0,\qquad DN(0)=0,
\end{equation}
where $N$ is $C^1$ from a small $H^1$-ball to $H$, and satisfies
\[
 \|N(\xi_1)-N(\xi_2)\|_{H}
 \le
 C_N\bigl(\|\xi_1\|_{H^1}+\|\xi_2\|_{H^1}\bigr)
 \|\xi_1-\xi_2\|_{H^1}.
\]
Then every solution of \eqref{eq:abstract-hyperbolic-end} satisfying
\[
 \lim_{s\to+\infty}\|\xi(s)\|_{H^1}=0
\]
actually converges exponentially: for every $0<\mu<\mu_*$, after increasing
$s_0$, there exists $C_\mu>0$ such that
\[
 \|\xi(s)\|_{H^1}\le C_\mu e^{-\mu(s-s_0)}
 \qquad (s\ge s_0).
\]  
Moreover the small initial data of all such decaying
half-trajectories form a $C^1$ graph over $E^s$, tangent to $E^s$ at
the origin.
\end{lem}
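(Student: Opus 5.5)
The plan is the standard Lyapunov--Perron fixed-point argument for a hyperbolic equilibrium. Two things are needed: exponential convergence of every decaying solution, and the $C^1$ graph structure of the set of their initial data over $E^s$.

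\textbf{Step 1: integral formulation.} Let $\xi$ be a solution of \eqref{eq:abstract-hyperbolic-end} on $[s_0,\infty)$ with $\|\xi(s)\|_{H^1}\to0$. Apply $\Pi^s$ and $\Pi^u$ and use variation of constants. The stable component is integrated forward from $s_0$. The unstable component is integrated backward from $+\infty$; the boundary term $e^{-(s-S)A}\Pi^u\xi(S)$ tends to zero as $S\to\infty$ by the dichotomy estimate and boundedness of $\xi$. This gives
\[
\xi(s)=e^{-(s-s_0)A}\Pi^s\xi(s_0)+\int_{s_0}^{s}e^{-(s-r)A}\Pi^sN(\xi(r))\,\dd r-\int_{s}^{\infty}e^{(r-s)A}\Pi^uN(\xi(r))\,\dd r .
\]
Conversely, any bounded solution of this integral equation solves \eqref{eq:abstract-hyperbolic-end}.

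\textbf{Step 2: contraction in a weighted space.} Fix $0<\mu<\mu_*$. Consider the space
\[
X_\mu=\bigl\{\xi:\ \sup_{s\ge s_0}e^{\mu(s-s_0)}\|\xi(s)\|_{H^1}<\infty\bigr\}.
\]
Let $\mathcal T_{p}$ denote the right-hand side of the integral equation with prescribed stable datum $p=\Pi^s\xi(s_0)\in E^s$.

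The dichotomy is stated in $H$, but the nonlinearity is controlled only from $H^1$ to $H$. So we need smoothing estimates of the form
\[
\|e^{-sA}\Pi^s\|_{H\to H^1}\le C s^{-1/2}e^{-\mu_*s},\qquad s>0,
\]
and the analogous estimate for $e^{sA}\Pi^u$. These are the standard parabolic estimates for the sectorial operator $A$ used in Henry's framework; in the application $A$ is a conjugated Laplacian, so they hold there. The factor $s^{-1/2}$ is integrable, and $\int_0^\infty r^{-1/2}e^{-(\mu_*-\mu)r}\,\dd r<\infty$.

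With these estimates and the quadratic Lipschitz bound on $N$, $\mathcal T_p$ maps the ball of radius $2C\|p\|$ in $X_\mu$ to itself and is a contraction, provided $\|p\|$ and the ball radius are small.

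\textbf{Step 3: exponential decay.} Given a decaying solution, first increase $s_0$ until $\sup_{s\ge s_0}\|\xi(s)\|_{H^1}\le\delta$ with $\delta$ small. Run the same contraction in the unweighted space $X_0$. Uniqueness of the fixed point there, together with the inclusion $X_\mu\subset X_0$, shows that $\xi$ coincides with the $X_\mu$ fixed point having the same $p$. Hence
\[
\|\xi(s)\|_{H^1}\le C_\mu e^{-\mu(s-s_0)},\qquad s\ge s_0 .
\]
This is the uniqueness-transfer step, and I expect it to be the main obstacle. It is where the $H\to H^1$ smoothing has to be controlled carefully, because the dichotomy hypothesis alone is stated only in $H$.

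\textbf{Step 4: the graph.} The fixed point $\xi_p$ depends $C^1$ on $p$, by the uniform contraction principle; $N$ is $C^1$ with $DN(0)=0$. Define
\[
h(p)=\Pi^u\xi_p(s_0)=-\int_{s_0}^\infty e^{(r-s_0)A}\Pi^uN(\xi_p(r))\,\dd r .
\]
Then the set of small initial data of decaying half-trajectories is exactly $\{p+h(p)\}$, a $C^1$ graph over $E^s$. Since $\|N(\xi_p)\|=O(\|p\|^2)$, we get $h(0)=0$ and $Dh(0)=0$, so the graph is tangent to $E^s$ at the origin.
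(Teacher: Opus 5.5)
Your Lyapunov--Perron scheme is the standard argument behind the references the paper cites; the paper gives no proof of its own. You are also right that a dichotomy in $H$ together with $N:H^1\to H$ does not close the fixed point by itself. The gap is in how you close it.

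\textbf{The smoothing estimate cannot come from sectoriality.} You take the $H\to H^1$ smoothing from sectoriality of $A$, but the operators this lemma is applied to are not sectorial. In Proposition~\ref{prop:morse-end-C0} the operator is $A_\infty(\xi,\zeta)=(-\xi_{tt},\zeta_{tt})$. At $\theta=0$ it is $B_d\oplus(-B_d)$ on $T_{\gamma_{0,d}}\PP_d^{\R}\oplus iT_{\gamma_{0,d}}\PP_d^{\R}$, by \eqref{eq:uvlineaer}. In general, the conjugation $\mathcal C$ in $A_{j,d,\theta}$ makes the real spectrum symmetric, $\pm|\phi|$.

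For a sectorial operator the spectrum in $\{\re\lambda<0\}$ is bounded, so $A|_{E^u}$ would be bounded. Here $A|_{E^u}$ is an unbounded negative operator and $E^u$ is infinite-dimensional. The equation $\partial_s\xi+A\xi=0$ is a backward heat equation on $E^u$ and a forward one on $E^s$. So neither $A$ nor $-A$ generates a semigroup on $H$, and the sectorial estimates of \cite{Henry1981} are unavailable.

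What the argument needs is smoothing of the two half-evolutions separately:
\[
\|e^{-sA}\Pi^s\|_{H\to H^1}+\|e^{sA}\Pi^u\|_{H\to H^1}\le C s^{-1/2}e^{-\mu_*s},\qquad s>0 .
\]
This does hold in the applications, essentially mode by mode, since $A|_{E^s}$ and $-A|_{E^u}$ act by $|\phi_n|\sim n^2$. It is the ill-posed setting of \cite{PSS1997}. It must be added as a hypothesis, because it does not follow from Definition~\ref{def:exponential-dichotomy}.

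\textbf{A second estimate is missing.} With only the $s^{-1/2}$ bound, $e^{\mu(s-s_0)}\|e^{-(s-s_0)A}p\|_{H^1}$ behaves like $(s-s_0)^{-1/2}\|p\|_H$ as $s\downarrow s_0$. So the linear term of $\mathcal T_p$ need not lie in $X_\mu$, or even in $X_0$. You also need $\Pi^s$ bounded on $H^1$ and $\|e^{-sA}\Pi^s\|_{H^1\to H^1}\le Ce^{-\mu_*s}$, with $\|p\|$ read as $\|p\|_{H^1}$. This is what Step~3 uses to get $\|\Pi^s\xi(s_0)\|_{H^1}\le C\delta$. It also means the graph lies over $E^s\cap H^1$ with the $H^1$ topology.

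Once both estimates are in place, Steps~1--4 go through as written. The uniqueness transfer in Step~3 is then routine rather than the main obstacle. What it characterizes is the set of half-trajectories that stay in the small $H^1$-ball, which is the correct reading of the lemma's last sentence.
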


\subsection{Lefschetz thimble for generic $\theta$} We now describe Lefschetz thimbles of the energy functional on $\mathcal{P}_d^{\C}$ for generic phase $\theta$ to prepare for the Picard–Lefschetz analysis. 

\begin{defn}\label{def:thimble}
Fix a critical path $\gamma_{j,d}$ and a phase $\theta$. Let $\mathcal M_j^{\theta,+}$ be the set of maps
\[
z:[0,+\infty)\longrightarrow \PP_d^{\C}
\]
such that, in any local chart along the image,
\[
z\in C^0([0,+\infty);H^1_t)\cap C^1((0,+\infty);L^2_t)\cap C^0((0,+\infty);H^2_t\cap H^1_{0,t}),
\]
and $z$ satisfies
\begin{equation}
\begin{cases}
\partial_s z=P_z\bigl(e^{i\theta}\eta\,\bar{z}_{tt}\bigr)
\quad\text{ in\, }L^2([0,\tau])~\text{ for every }s>0,
\\[0.1cm]
z(s,0)=a,
\qquad
z(s,\tau)=b_d,
\\[0.1cm]
\lim\limits_{s\to+\infty}\Norm{z(s,\cdot)-\gamma_{j,d}(\cdot)}_{H^1([0,\tau])}=0,
\\[0.1cm]
\int_0^{+\infty}\norm{\partial_s z(s,\cdot)}_{L^2_t}^2\,\dd s<\infty.
\end{cases}
\end{equation}
The Lefschetz thimble attached to $\gamma_{j,d}$ is
\begin{equation*}\label{eq:Jjdef}
\mathcal{J}_j^{\theta}:=\ev(\mathcal M_j^{\theta,+})
=\{z(0,\cdot): z\in \mathcal M_j^+\}\subset \PP_d^{\C}.
\end{equation*}
\end{defn}

\begin{rmk}
One can also formulate the half-cylinder problem first in a weak class with
\[
z\in L^\infty_{\mathrm{loc}}([0,+\infty);H^1_t)\cap H^1_{\mathrm{loc}}([0,+\infty);L^2_t).
\]
The equation is tested against compactly supported tangent test fields, together with the same endpoint condition, finite-energy condition, and strong $H^1_t$ asymptotic condition. Whenever the image stays in a compact subset of $\HH_{\C}^2$, the local coordinate systems used later in the no-escape analysis provide the required bootstrap on every finite strip, so such a weak solution upgrades to the strong class above. Therefore, in the following we will work directly with strong solutions and strong $H^1_t$ asymptotics.
\end{rmk}

From Definition \ref{def:thimble}, we have
\begin{itemize}
    \item For every $z_0\in\mathcal{J}_j^{\theta}$,
    \[G_{\theta}(z_0)=G_{\theta}(\gamma_{j,d}),\qquad F_{\theta}(z_0)\ge F_{\theta}(\gamma_{j,d});\]
    \item We impose the genericity assumption that the twisted phase separates the imaginary parts of the critical values:
    \[G_{\theta}(\gamma_{j,d})\neq G_{\theta}(\gamma_{k,d}),\qquad\text{whenever}~j\ne k.\]
    With this genericity assumption, for any $j\ne k$, there is no finite-energy solution
    \[z:\R\times[0,\tau]\rightarrow\HH_{\C}^2\]
    of \eqref{eq:flow} such that 
    \[\lim_{s\rightarrow-\infty}\Norm{z(s,\cdot)-\gamma_{j,d}}_{H^1([0,\tau])}=0,
    \qquad
    \lim_{s\rightarrow+\infty}\Norm{z(s,\cdot)-\gamma_{k,d}}_{H^1([0,\tau])}=0.\]
    It follows that the thimbles $\mathcal{J}_j^{\theta}$ are pairwise disjoint, and each critical path determines exactly one thimble.
\end{itemize}

\begin{prop}\label{prop:expconv}
Let $d>0$, and let $z\in\mathcal M_j^{\theta,+}$. Then there exist constants $C,\mu>0$ such that
\[
\norm{z(s,\cdot)-\gamma_{j,d}}_{H^1_t}
\le Ce^{-\mu s}
\qquad s\gg0.
\]
Moreover, one can take $\mu$ smaller than
\[
\mu_{j,d,*}:=\min\left\{
\inf_n \phi^{\tan}_{n,j,d},\,
\inf_n |\phi^{\mathrm{nor}}_{n,j,d}|\right\},
\]
where $\phi^{\tan}_{n,j,d}$ are the positive eigenvalues of
\[
-\alpha_{tt}=\phi\,|\gamma_{j,d}(t)|^2\alpha,
\qquad \alpha(0)=\alpha(\tau)=0,
\]
and
\[
\phi^{\mathrm{nor}}_{n,j,d}=
\left(\frac{n\pi}{\tau}\right)^2+c_{j,d}^2 .
\]
\end{prop}

\begin{proof}
Choose exponential coordinates near $\gamma_{j,d}$ and write
\[
z(s,\cdot)=\exp_{\gamma_{j,d}}^{\C}\xi(s,\cdot).
\]
Since $z(s,\cdot)\to\gamma_{j,d}$ in $H^1_t$, for all sufficiently large
$s$ the trajectory lies in this chart. In these coordinates the equation
has the form
\[
\partial_s\xi+A_{j,d,\theta}\xi=N_{\theta}(\xi),\qquad
N_{\theta}(0)=0,\qquad DN_{\theta}(0)=0,
\]
where $A_{j,d,\theta}$ is from Proposition \ref{prop:linearization}, and $N_{\theta}$ satisfies the standard quadratic estimate in $H^1_t$.

Set
\[B_{j,d}=M_{\gamma_{j,d}}^{-1}L^{\C}_{j,d},\qquad B_{j,d}^{\#}=\mathcal{C}B_{j,d}\mathcal{C}.\]
Then as $\R$-linear operators,
\[A_{j,d,\theta}^2=B_{j,d}^{\#}B_{j,d},\]
and in particular, the real spectrum of $A_{j,d,\theta}$ is obtained from the complex spectrum of $B_{j,d}$ as follows:
\[\phi\in \sigma_{\C}(B_{j,d})~~\Rightarrow~~ \pm|\phi|\in \sigma_{\R}(A_{j,d,\theta}).
\]

\begin{itemize}
    \item The tangential eigenvalues of $B_{j,d}$ are the eigenvalues
$\phi^{\mathrm{tan}}_{n,j,d}$ of the weighted Dirichlet problem
\[
-\alpha_{tt}(t)=\phi^{\mathrm{tan}}_{n,j,d}
|\gamma_{j,d}(t)|^2\alpha(t)=\phi_{n,j,d}^{\tan}\cosh\left(\frac{2dt}{\tan}\right)\alpha(t),
\qquad\alpha(0)=\alpha(\tau)=0.
\]
They are independent of $j$ and positive. In particular, the minimal eigenvalue satisfies

\[\phi_{1,d}^{\tan}:=\phi_{1,j,d}^{\tau}\ge\frac{1}{\cosh(2d)}\left(\frac{\pi}{\tau}\right)^2>0.\]
\item The normal eigenvalues are
\[
\phi^{\mathrm{nor}}_{n,j,d}=
\left(\frac{n\pi}{\tau}\right)^2+c_{j,d}^2,
\qquad n\ge1.
\]
In particular, $|\phi^{\mathrm{nor}}_{n,j,d}|\rightarrow+\infty$ as $n\rightarrow\infty$. By Proposition \ref{prop:nondegenerate}, none of the normal eigenvalues vanishes when $d>0$. 
\end{itemize}

Then
\[
\operatorname{dist}\bigl(0,\sigma_{\R}(A_{j,d,\theta})\bigr)\ge \mu_{j,d,*}:=\min\left\{\inf_{n\ge1}\phi^{\mathrm{tan}}_{n,j,d},
\inf_{n\ge1}\left|\phi^{\mathrm{nor}}_{n,j,d}\right|\right\}>0.
\]
Thus $A_{j,d,\theta}$, regarded as an operator on the underlying real
Hilbert space, has an exponential dichotomy with gap bounded from
below by any number smaller than $\mu_{j,d,*}$.

By Lemma \ref{lem:Lyapunov--Perron}, for every
$0<\mu<\mu_{j,d,*}$, after increasing $s_0$ there exists
$C_\mu>0$ such that
\[
\|\xi(s,\cdot)\|_{H^1_t}\le C_\mu e^{-\mu(s-s_0)}\qquad(s\ge s_0).
\]
Finally, the equation and elliptic estimates for the $\R$-linear elliptic
operator $A_{j,d,\theta}$ bootstrap the decay to higher Sobolev norms on the end. Since the
exponential chart is smooth and has identity differential at the origin, the
same $H^1_t$-estimate holds for
$z(s,\cdot)-\gamma_{j,d}$. 
\end{proof}

Immediately, we have the local graph result for the thimble germ.

\begin{cor}\label{cor:localthimble}
Let $d>0$.  There exists a neighborhood $\mathcal{U}$ of $\gamma_{j,d}$ in
$\PP^{\C}_d$, a real Hilbert splitting
\[
T_{\gamma_{j,d}}\PP^{\C}_d
=E^s_{j,d,\theta}\oplus E^u_{j,d,\theta},
\]
where $E^s_{j,d,\theta}$ (or $E^u_{j,d,\theta}$ respectively) is the Hilbert direct sum of the decaying (or growing respectively) real
lines of the scalar mode equations, and a $C^1$ map
\[
\Psi_{j,d,\theta}:B_\rho(E^s_{j,d,\theta})\longrightarrow E^u_{j,d,\theta},
\qquad\Psi_{j,d,\theta}(0)=0,\qquad D\Psi_{j,d,\theta}(0)=0,
\]
such that
\[
\mathcal{J}_j^{\theta}\cap \mathcal{U}=
\left\{\exp_{\gamma_{j,d}}^{\C}\bigl(\xi^s+\Psi_{j,d,\theta}(\xi^s)\bigr):\xi^s\in B_\rho(E^s_{j,d,\theta})\right\}.
\]
where $B_\rho(E^s_{j,d,\theta})$ denotes the open ball of radius $\rho$
centered at the origin in $E^s_{j,d,\theta}$ with respect to the Hilbert norm. In particular,
\[
T_{\gamma_{j,d}}\mathcal{J}_j^{\theta}=E^s_{j,d,\theta}.
\]
\end{cor}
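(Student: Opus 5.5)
The natural route is to reduce Corollary \ref{cor:localthimble} to the stable-manifold part of Lemma \ref{lem:Lyapunov--Perron}, applied in the exponential chart around $\gamma_{j,d}$ already used in the proof of Proposition \ref{prop:expconv}. Writing $z=\exp^{\C}_{\gamma_{j,d}}\xi$ on a small $H^1_t$-neighbourhood $\mathcal U$, the flow \eqref{eq:flow} becomes $\partial_s\xi+A_{j,d,\theta}\xi=N_\theta(\xi)$. Here $A_{j,d,\theta}$ is the $\R$-linear operator of Proposition \ref{prop:linearization}, and $N_\theta$ is quadratic in the sense required by Lemma \ref{lem:Lyapunov--Perron}. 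The quadratic estimate comes from smoothness of $P_z$ and of the chart, together with $H^1\hookrightarrow C^0$ in one variable.

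The first step is to make the splitting explicit. In the frame $(T_{j,d},N)$ the operator decouples into a tangential and a normal block. On each block we expand in eigenfunctions of $B_{j,d}=M^{-1}_{\gamma_{j,d}}L^{\C}_{j,d}$: weighted Dirichlet eigenfunctions in the tangential part, and $\sin(n\pi t/\tau)$ in the normal part. On a single complex mode $\beta=w\,e_n$ with eigenvalue $\phi_n$, the linear equation reads $\partial_s w+e^{i\theta}\phi_n\bar w=0$. This is a real $2\times2$ system with eigenvalues $\pm|\phi_n|$. Its decaying real line is $\{w: e^{i\theta}\phi_n\bar w=|\phi_n|w\}$ and its growing line is the $i$-rotate of that line. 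This is the ``scalar mode'' decomposition in the statement.

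Define $E^s_{j,d,\theta}$ and $E^u_{j,d,\theta}$ as the closed Hilbert direct sums of these decaying and growing lines. The spectral bound $\operatorname{dist}(0,\sigma_\R(A_{j,d,\theta}))\ge\mu_{j,d,*}>0$ from the proof of Proposition \ref{prop:expconv}, together with $\mu_{j,d,*}>0$, gives the dichotomy estimates of Definition \ref{def:exponential-dichotomy} modewise with uniform constants. The projections $\Pi^{s},\Pi^{u}$ are then bounded.

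The main obstacle is this uniformity of the projections. The tangential eigenbasis is orthogonal only for the weighted inner product with weight $|\gamma_{j,d}|^2$, which is comparable to the flat one. The normal block is non-selfadjoint for $j\neq0$, because $c_{j,d}^2$ is complex, but it is diagonal in the sine basis. Hence the angle between each pair of lines is exactly $\pi/2$ in the induced real structure, and the constants are controlled by the weight bounds $1\le|\gamma_{j,d}|^2\le\cosh(2d)$ on the tangential part and are trivial on the normal part. I would check this carefully in the $H^1_t$ norm, where the normal sines are still orthogonal.

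Given the dichotomy, Lemma \ref{lem:Lyapunov--Perron} produces a $C^1$ map $\Psi_{j,d,\theta}:B_\rho(E^s)\to E^u$ with $\Psi(0)=0$ and $D\Psi(0)=0$. Its graph is exactly the set of small initial data whose forward solutions decay in $H^1_t$.

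It remains to identify this graph with $\mathcal J^\theta_j\cap\mathcal U$. If $z\in\mathcal M^{\theta,+}_j$ and $z(0)\in\mathcal U$, then after shrinking $\mathcal U$ the trajectory stays in the chart: if it left, the decrease of $F_\theta$ combined with $F_\theta\ge F_\theta(\gamma_{j,d})$ along the thimble limits the total excursion. Its $H^1_t$ convergence then places $z(0)$ on the graph. Conversely, initial data on the graph generate solutions that decay exponentially by Proposition \ref{prop:expconv}. These solutions have finite energy, since $\int\|\partial_sz\|^2=F_\theta(z(0))-F_\theta(\gamma_{j,d})$, and they have the required regularity by parabolic-type smoothing of the $\R$-linear elliptic operator. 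Hence they lie in $\mathcal M^{\theta,+}_j$. Finally, $T_{\gamma_{j,d}}\mathcal J^\theta_j=E^s_{j,d,\theta}$ follows from $D\Psi(0)=0$ and the fact that the exponential chart has identity differential at the origin.
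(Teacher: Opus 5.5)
This is essentially the paper's argument: the paper reads the corollary off directly from the exponential dichotomy obtained in the proof of Proposition~\ref{prop:expconv} and the stable-manifold clause of Lemma~\ref{lem:Lyapunov--Perron} in the exponential chart at $\gamma_{j,d}$, and your modewise construction of $E^s_{j,d,\theta}\oplus E^u_{j,d,\theta}$ only makes this explicit (the normal-mode scalar equation should read $\partial_s w+e^{i\theta}\overline{\phi_n}\,\overline{w}=0$, which changes nothing). The one step you add beyond the paper is that thimble trajectories starting in $\mathcal U$ never leave the chart, and it is only sketched: decrease of $F_\theta$ plus a gradient inequality $\|\nabla F_\theta\|_{L^2_t}\gtrsim|F_\theta-F_\theta(\gamma_{j,d})|^{1/2}$ bounds the $L^2_t$-length of the trajectory, not its $H^1_t$-excursion; however, the paper tacitly identifies $\mathcal J_j^\theta\cap\mathcal U$ with the local stable set (as it states explicitly for $\mathcal J_0$), so this does not change the comparison.
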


\begin{rmk} We can regard this local graph description as the infinite-dimensional analogue
of the holomorphic Morse lemma. In the path-space setting, the complex Hessian is replaced by the
Jacobi operator $L^{\C}_{j,d}$, and the $\R$-linear linearized flow is governed
by
\[
A_{j,d,\theta}=e^{i\theta(d)}\,\mathcal{C}\,M_{\gamma_{j,d}}^{-1}\,L^{\C}_{j,d}.
\]
Its spectral decomposition plays the role of
the diagonal quadratic normal form. 
\end{rmk}

\subsection{Real path space as a thimble}
In this subsection, we study the Lefschetz thimble attached to the real geodesic $\gamma_{0,d}$ for the  $\theta=0$ flow, that is,
\begin{equation}\label{eq:flow0}
\partial_s z=-\nabla F_0(z)=P_z(\eta\bar{z}_{tt}). 
\end{equation}
For simplicity, we will write $\mathcal J_0$ instead of $\mathcal J_0^0$ and prove (Theorem \ref{thm:global-real-thimble})
\[\mathcal{J}_0:=\{z(0)\mid z\in\mathcal{M}_0^{0,+}\}=\PP_d^{\R}\subset\PP_d^{\C}.\]
Thus the starting real path space $\PP_d^{\R}$ for the heat kernel can be viewed as one Lefschetz thimble on $\PP_d^{\C}$. It is interesting to see whether other Lefschetz thimbles $\mathcal{J}_j$ have heat type interpretations. 

\subsubsection{The real slice and the real thimble}
Recall that $a,b_d\in\HH_{\R}^2$, and we can work on the real path space with endpoints $a,b_d$, that is,
\[\PP_d^{\R}:=\PP_{a,b_d}(\HH_{\R}^2)=\{x\in H^1([0,\tau],\HH_{\R}^2)\mid x(0)=a,x(\tau)=b_d\}.\]

\begin{prop}\label{prop:real-invariant}
At $\theta=0$, for every real path $x\in \PP^{\R}_d$, we have
\[
P_x(\eta x_{tt})\in T_x \HH^2_{\R}
\]
pointwise in $t$.  The restriction of the flow equation \eqref{eq:flow} at $\theta=0$ to
$\PP^{\R}_d$ is the endpoint-fixed heat flow on
$\HH^2_{\R}$.
\end{prop}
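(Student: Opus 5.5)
The plan is to verify both claims by direct computation in the ambient space $\C^3$, using only the explicit projection $P_z$ from Example \ref{eg:H^2-PL} and the flow \eqref{eq:flow0}. The heat flow is then identified as the downward gradient flow of the real energy whose velocity is the tension field $\nabla_t x_t$.

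\textbf{Tangency.} Fix $x\in\PP^{\R}_d$ with enough regularity that $x_{tt}$ makes sense pointwise. Since $x$ is real, $\bar x=x$ and $\bar x_{tt}=x_{tt}$. At $\theta=0$ the velocity is
\[
P_x(\eta x_{tt})=\eta x_{tt}-\frac{x^T x_{tt}}{|x|^2}\,\eta x .
\]
This vector is manifestly real. To check that it lies in $T_x\HH^2_{\C}=\{v:\langle x,v\rangle_\eta=0\}$, I use $\eta^2=\mathrm{Id}$ and $|x|^2=x^Tx$ for real $x$:
\[
x^T\eta\,P_x(\eta x_{tt})=x^Tx_{tt}-\frac{x^Tx_{tt}}{|x|^2}\,x^Tx=0 .
\]
A real vector in $T_x\HH^2_\C$ lies in $T_x\HH^2_\R$, which proves the first claim. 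It follows that the vector field of \eqref{eq:flow0} is tangent to the real slice. Consequently, for real initial data the flow stays in $\PP^{\R}_d$ by uniqueness, and the endpoints $a,b_d$ are preserved.

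\textbf{Identification with the heat flow.} Two structures need to be matched. First, on real paths $F_0(x)=S(x)=\frac12\int_0^\tau\langle x_t,x_t\rangle_\eta\,\dd t$, which is exactly the Riemannian energy of $x$ in $\HH^2_\R$. Second, for real tangent vectors the hermitian product $\re\,h$ restricts to the Euclidean dot product, so the $L^2$ metric restricts to the $L^2$ metric on $T_x\PP^\R_d$ induced by the ambient Euclidean structure. A gradient that is tangent to a submanifold coincides with the gradient of the restricted function, so by the tangency step $-\nabla F_0|_{\PP^\R_d}=-\nabla(F_0|_{\PP^\R_d})$. The restricted flow is therefore the endpoint-fixed $L^2$ gradient flow of the energy of paths in $\HH^2_\R$, that is, the endpoint-fixed heat flow.

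To make the tension field explicit, I will pair the velocity with a real tangent field $v$, using $x^T\eta v=0$:
\[
\bigl(P_x(\eta x_{tt})\bigr)\cdot v=(\eta x_{tt})\cdot v=\langle x_{tt},v\rangle_\eta .
\]
The Levi-Civita connection on the hyperboloid is the $\eta$-orthogonal projection of the ambient derivative. Hence $\nabla_tx_t=x_{tt}-\langle x_t,x_t\rangle_\eta\,x$, and because $\langle x,v\rangle_\eta=0$ this gives $\langle x_{tt},v\rangle_\eta=g(\nabla_tx_t,v)$. Integrating by parts against $v$ with $v(0)=v(\tau)=0$ recovers $-\dd E(v)$, confirming the gradient interpretation. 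It also shows that the velocity is the tension field $\nabla_t x_t$, represented as a tangent vector through the chosen metric.

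\textbf{Expected obstacle.} The pointwise algebra is routine. The step needing care is regularity and invariance: the velocity must be defined for the solution class of Definition \ref{def:thimble}, namely $H^2_t$ for $s>0$. In addition, I must argue that uniqueness of the flow with real data forces the solution to stay real. I would handle this by observing that complex conjugation composed with the flow maps solutions to solutions at $\theta=0$, because the equation has real coefficients when $\theta=0$. Uniqueness then gives $z=\bar z$ for real initial data.
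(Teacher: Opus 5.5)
Your proposal is correct and follows the paper's proof. At $\theta=0$ the vector $P_x(\eta x_{tt})$ has real coefficients and lies in $T_x\HH^2_{\C}$, hence in $T_x\HH^2_{\R}$; on $\PP^{\R}_d$ the functional $F_0$ is the real energy, so the restricted equation is its downward $L^2$-gradient flow for the metric induced from the Euclidean structure on $\R^3$, which is exactly how the paper reads ``heat flow''. Your explicit $\eta$-orthogonality check and the identity $P_x(\eta x_{tt})\cdot v=g(\nabla_t x_t,v)$ only spell out what the paper leaves implicit.

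One side remark: your claim that ``real data stay real by uniqueness'' is not needed here, since the statement only concerns tangency of the vector field. If you do want it, note that the full $\theta=0$ system is not forward parabolic: the imaginary part obeys a backward heat equation, as in $v_s=B_dv$ in the paper. So that uniqueness has to come from a Lions--Malgrange backward-uniqueness argument, like the one the paper uses for Theorem~\ref{thm:global-real-thimble}, not from standard well-posedness.
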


\begin{proof}
Let $x\in \PP^{\R}_d$.  Then the vector
$\eta x_{tt}$ is real.  Moreover, in the projection formula
\[
P_x(\eta x_{tt})=\eta x_{tt}-\frac{x^T x_{tt}}{|x|^2}\eta x,
\]
all coefficients are real.  Hence $P_x(\eta x_{tt})$ is a real tangent
vector to $\HH^2_{\R}$.  Therefore the full
$\theta=0$ equation \eqref{eq:flow0} restricts to an evolution equation on the real slice
$\PP^{\R}_d$. 

On $\PR$ the functional $S$ is real, and $F_0$ is the real energy
\[
F_0(x):=\frac12\int_0^\tau g_{\HH_{\R}^2}(x_t,x_t)\,\dd t.
\]
Therefore, the restricted equation is the downward $L^2$-gradient flow of $F_0$. Here the $L^2$ metric on $\PP^{\R}_d$ is induced from the standard metric on $\R^3$ restricting to $\HH_{\R}^2$. 
\end{proof}

To understand this real flow, let us recall the following necessary existence and compactness theorem.

\begin{thm}[{\cite{ES,Ham}}]
\label{thm:eells-sampson-closed}
Let $(M,g)$ be a compact Riemannian manifold possibly with boundary and let $(N,h)$ be a compact
Riemannian manifold with non-positive sectional curvature. If $\partial M\neq\varnothing$, fix a
smooth boundary value $\varphi:\partial M\to N$ and an initial map
$u_0\in C^\infty(M,N)$ with $u_0|_{\partial M}=\varphi$.

Consider the Dirichlet harmonic-map heat flow
\[
\partial_s u=\operatorname{tr}_g\nabla du,
\qquad u(s,\cdot)|_{\partial M}=\varphi,\qquad 
u(0,\cdot)=u_0.
\]
Then the following hold.

\begin{enumerate}
\item[\textup{(1)}]
\textup{(Eells--Sampson--Hamilton theorem)}
The heat flow has a unique smooth solution
\[
u:[0,+\infty)\times M\to N .
\]
Moreover, the energy is non-increasing along the flow.

\item[\textup{(2)}]
\textup{(Eells--Sampson compactness theorem)} There exist a sequence $s_j\to+\infty$ and a smooth harmonic map
$u_\infty:M\to N$ such that, after passing to a subsequence,
\[
u(s_j,\cdot)\longrightarrow u_\infty
\qquad\text{strongly in }H^1(M,N).
\]
Moreover, the convergence is smooth on $M$. 
\end{enumerate}
\end{thm}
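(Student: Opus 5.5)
The plan is the classical Eells--Sampson argument, with Hamilton's boundary estimates added for the Dirichlet case. Nothing from the present paper is needed beyond the hypotheses. First I would embed $N$ isometrically in some $\R^K$ (Nash). This rewrites the flow as a semilinear parabolic system
\[
\partial_s u-\Delta_g u=A(u)(du,du),
\]
where $A$ is the second fundamental form of $N\subset\R^K$, with Dirichlet data $\varphi$ and initial data $u_0$. Standard linear parabolic Schauder theory plus a contraction argument gives a unique smooth solution on a maximal interval $[0,s_{\max})$. Since $u_0|_{\partial M}=\varphi$ and $u_0$ is smooth, the compatibility conditions hold to the order needed. A maximum-principle argument on $|u|^2$-type quantities, or the nearest-point projection, shows that the solution stays in $N$. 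Uniqueness follows from a Gronwall estimate on the difference of two solutions in $\R^K$.

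Energy monotonicity comes from differentiating $E(u)=\frac12\int_M|du|^2$. Integrating by parts, and using that $\partial_s u=0$ on $\partial M$, gives
\[
\frac{\dd}{\dd s}E(u(s))=-\int_M|\partial_s u|^2\,\dd v_g\le 0 .
\]
Consequently $\int_0^{s_{\max}}\|\partial_s u\|_{L^2}^2\,\dd s\le E(u_0)$. Long-time existence reduces to a uniform bound on the energy density $e(u)=\frac12|du|^2$. The key tool is the Bochner formula
\[
(\partial_s-\Delta)e(u)=-|\nabla du|^2-\langle \mathrm{Ric}^M(du),du\rangle+\sum\langle R^N(du_i,du_j)du_j,du_i\rangle .
\]
Because $N$ has non-positive sectional curvature, the last term is $\le 0$, so $(\partial_s-\Delta)e\le Ce$ with $C$ depending only on $\mathrm{Ric}^M$. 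In the closed case the parabolic maximum principle (or Moser iteration, as Eells--Sampson do) then bounds $\sup e$ by $e^{Cs}$ times initial data, and in fact uniformly via the energy bound.

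With boundary, I expect the main obstacle to be the boundary gradient estimate. Following Hamilton, I would build barriers near $\partial M$ from the harmonic extension of $\varphi$ together with the convexity of $\rho(p,\cdot)^2$ on the universal cover. Non-positive curvature makes $\mathrm{dist}^2$ to a fixed harmonic comparison map a subsolution. This controls the normal derivative of $u$ at $\partial M$ uniformly in $s$, so the maximum principle for $e$ can be applied with controlled boundary values. Once $|du|$ is bounded uniformly on $[0,s_{\max})$, parabolic bootstrapping (Schauder) bounds all $C^k$ norms, the solution extends past $s_{\max}$, and hence $s_{\max}=\infty$. This proves (1).

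For (2), the finiteness of $\int_0^\infty\|\partial_s u\|_{L^2}^2\,\dd s$ gives a sequence $s_j\to\infty$ with $\|\partial_s u(s_j)\|_{L^2}\to 0$. The uniform $C^k$ bounds from (1), which are independent of $s$ thanks to the curvature sign, together with Arzel\`a--Ascoli yield a subsequence with $u(s_j)\to u_\infty$ in every $C^k(M)$. This in particular gives strong $H^1$ convergence. Passing to the limit in $\partial_s u=\operatorname{tr}_g\nabla du$ gives $\operatorname{tr}_g\nabla du_\infty=0$ with $u_\infty|_{\partial M}=\varphi$, so $u_\infty$ is a smooth harmonic map, as claimed.
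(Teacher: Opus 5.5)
The paper gives no proof of this statement; it only cites \cite{ES,Ham}, and your outline follows the classical Eells--Sampson/Hamilton route. As written, though, the boundary case has a genuine gap, at exactly the step you flag as the main obstacle. The identity
\[
(\partial_s-\Delta)\,\rho^2(u,v)=-\operatorname{tr}_g\mathrm{Hess}\,\rho^2\bigl((du,dv),(du,dv)\bigr)-\bigl\langle\nabla_2\rho^2,\tau(v)\bigr\rangle,\qquad \tau(v):=\operatorname{tr}_g\nabla dv,
\]
makes $\rho^2(u,v)$ a subsolution only when $v$ is a harmonic map \emph{into $N$}, with $v|_{\partial M}=\varphi$, in the relative homotopy class of $u_0$. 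That is precisely the object part (2) is supposed to produce, so the argument is circular. A componentwise harmonic extension of $\varphi$ into $\R^K$ does not help either: it does not take values in $N$, so neither $\rho$ nor the convexity coming from non-positive curvature applies to it.

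The fix is to compare with $u_0$ itself. Measure $\rho(u(s,x),u_0(x))$ on the universal cover along the track $\sigma\mapsto u(\sigma,x)$. Convexity of the distance on $\widetilde N\times\widetilde N$ then gives $(\partial_s-\Delta)\rho(u,u_0)\le\|\tau(u_0)\|_{L^\infty}$ (use $\sqrt{\rho^2+\varepsilon^2}$ near the diagonal), and $\rho=0$ both on $\partial M$ and at $s=0$. Take $\psi$ with $\Delta\psi=-\|\tau(u_0)\|_{L^\infty}$ and $\psi|_{\partial M}=0$, in your sign convention for $\Delta$. The maximum principle yields $\rho(u(s,x),u_0(x))\le\psi(x)\le C\operatorname{dist}(x,\partial M)$. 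This gives a uniform $C^0$ bound in $\widetilde N$ and $|\partial_\nu u|\le|\partial_\nu u_0|+C$ on $\partial M$, both uniformly in $s$.

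Two further points need correction.

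\textbf{Compatibility.} Your sentence that the compatibility conditions hold is false. The condition $u_0|_{\partial M}=\varphi$ is only the zeroth-order one. Since $\partial_s u=0$ on $\partial M$, the first-order condition is $\tau(u_0)|_{\partial M}=0$, and smoothness at the corner requires all the iterated ones. Without them $u$ is not even $C^1$ at $\{0\}\times\partial M$. For example, take $M=[0,1]$, $N=S^1$ flat, and a lift with $u_0''(0)\neq0$: then $\partial_s u(s,0)=0$ for $s>0$, while $\partial_s u(0,x)=u_0''(x)$ for $x\in(0,1)$. So the statement as quoted overclaims. What one actually proves is a solution that is smooth on $(0,\infty)\times M$ and continuous on $[0,\infty)\times M$, or smooth up to $s=0$ under compatibility hypotheses. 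The short-time theory must therefore be set up without first-order compatibility.

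\textbf{Uniform bounds in (2).} In (2) you invoke $C^k$ bounds independent of $s$. With boundary, you only obtained them from the maximum principle for $(\partial_s-\Delta)e\le Ce$, which allows growth like $\exp(Cs)$ when $\mathrm{Ric}^M$ has a negative part. An $s$-independent gradient bound needs the parabolic mean-value (Moser) inequality, run up to $\partial M$ for $(e-K)_+$ with $K$ the uniform boundary bound, combined with $\int_{s-1}^{s}E(u(\sigma))\,\dd\sigma\le E(u_0)$. After that, your bootstrapping and subsequence argument go through.
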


We are going to apply Eells-Sampson-Hamilton Theorem \ref{thm:eells-sampson-closed} to $M=[0,\tau]$ and $N=\HH_{\R}^2$ to describe the thimble in the real slice. Even though Theorem \ref{thm:eells-sampson-closed} is stated for compact targets, our trajectory has a no-escape bound:
\[
d_{H^2_{\R}}(x(s,t),a)\le
\int_0^\tau |x_t(s,r)|\,\dd r\le
\sqrt{\tau}\|x_t(s,\cdot)\|_{L^2_t}\le
\sqrt{2\tau F(x_0)}.
\]
Our result is
 
\begin{prop}\label{prop:real-thimble}
At $\theta=0$, we have
\[\mathcal{J}_{0}^{\R}:=\mathcal{J}_0\,\cap\,\mathcal{P}_d^{\R}=\mathcal{P}_d^{\R}.\]
\end{prop}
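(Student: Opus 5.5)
We need to show two inclusions. The inclusion $\mathcal J_0\cap\PP_d^{\R}\subset\PP_d^{\R}$ is trivial, so the content is the reverse: every real path $x_0\in\PP_d^{\R}$ is the initial point of a half-trajectory in $\mathcal M_0^{0,+}$. The plan is to use the real slice as an invariant set (Proposition \ref{prop:real-invariant}), run the real harmonic-map heat flow from $x_0$, and show that it converges in $H^1_t$ to $\gamma_{0,d}$.

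\textbf{Existence of the real flow.} By Proposition \ref{prop:real-invariant}, the $\theta=0$ flow \eqref{eq:flow0} restricted to $\PP_d^{\R}$ is the endpoint-fixed heat flow on $\HH^2_{\R}$. For $x_0\in\PP_d^{\R}$, the no-escape bound confines the flow to the compact ball $\overline{B}(a,R)\subset\HH^2_{\R}$ with $R=\sqrt{2\tau F_0(x_0)}$. We then modify the target outside a larger ball to obtain a compact nonpositively curved target (or apply the Hamilton version with convex boundary). This lets us invoke Theorem \ref{thm:eells-sampson-closed}(1) to get a global smooth solution $x(s,t)$.

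Two technical points arise here:
\begin{enumerate}
\item If $x_0$ is only in $H^1$, we first approximate by smooth paths. Alternatively, we use parabolic smoothing for the one-dimensional semilinear heat equation to get $x\in C^0([0,\infty);H^1_t)$ with the regularity required in Definition \ref{def:thimble}. Continuity at $s=0$ in $H^1$ follows from the energy identity.
\item The energy identity of Proposition \ref{prop:FG} gives $\int_0^\infty\|\partial_s x\|_{L^2_t}^2\,\dd s\le F_0(x_0)<\infty$, which is the finite-energy condition.
\end{enumerate}

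\textbf{Convergence to $\gamma_{0,d}$.} By Theorem \ref{thm:eells-sampson-closed}(2), along some sequence $s_j\to\infty$ we have $x(s_j)\to u_\infty$ in $H^1$, where $u_\infty$ is a real geodesic from $a$ to $b_d$. Since $\HH^2_{\R}$ is simply connected with negative curvature, the only such geodesic is $\gamma_{0,d}$. To upgrade subsequential convergence to convergence of the full trajectory, we use that $\gamma_{0,d}$ is a nondegenerate critical point. By Proposition \ref{prop:nondegenerate} its Hessian is in fact positive on real variations: $\int(\alpha_t^2+\beta_t^2+d^2\tau^{-2}\beta^2)$. Near $\gamma_{0,d}$ this gives a Łojasiewicz-type (in fact Morse) inequality $\|\nabla F_0\|\ge c\,(F_0-F_0(\gamma_{0,d}))^{1/2}$.

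Alternatively, we can use convexity of energy on Hadamard targets directly. Then $F_0(x(s))$ decreases to $F_0(\gamma_{0,d})$, and once the trajectory enters the neighborhood it never leaves. Indeed, the length of the trajectory in $L^2$ is controlled by $\int\|\partial_s x\|\le C\sqrt{F_0-F_0(\gamma_{0,d})}$, and this yields full $H^1_t$ convergence. Alternatively, once $x(s_j)$ is $H^1$-close to $\gamma_{0,d}$, the stable manifold description of Lemma \ref{lem:Lyapunov--Perron} and Corollary \ref{cor:localthimble} applies. The real slice is tangent to $E^s_{0,d,0}$, because on real variations the operator $A_{0,d,0}=M^{-1}L^{\C}_{0,d}$ is positive. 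Hence nearby real data lie in the local thimble and converge exponentially.

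\textbf{Conclusion and main obstacle.} With these properties established, $x\in\mathcal M_0^{0,+}$ and $x_0=x(0)\in\mathcal J_0$, which gives $\PP_d^{\R}\subset\mathcal J_0$ and hence the equality.

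I expect the main obstacle to be the passage from subsequential convergence to full convergence in $H^1_t$, together with matching the exact regularity class of Definition \ref{def:thimble} for rough $H^1$ initial data. To handle this, I would first try the positivity of the real Hessian combined with the gradient-inequality argument, using the exponential-dichotomy estimate of Proposition \ref{prop:expconv} only as a fallback.
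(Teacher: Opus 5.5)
Your proposal follows the paper's proof: restrict to the invariant real slice via Proposition \ref{prop:real-invariant}, obtain global existence and subsequential $H^1$ convergence from Eells--Sampson--Hamilton using the no-escape bound, and identify the limit as $\gamma_{0,d}$ by uniqueness of geodesics in the Hadamard manifold $\HH^2_{\R}$.

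Your addition is the upgrade from subsequential to full $H^1_t$ convergence, which Definition \ref{def:thimble} requires and which the paper's proof leaves implicit. Your primary route supplies it: positivity of the real Hessian gives coercivity of $F_0-F_0(\gamma_{0,d})$ in $H^1$ near $\gamma_{0,d}$, and together with monotonicity of $F_0$ and $H^1$-continuity of the flow this traps the trajectory and forces convergence. Your stable-manifold fallback, as phrased, would additionally need the equivariance argument of Proposition \ref{prop:equivariant-local-stable} or linearized stability inside the invariant real slice, because tangency of $\PP_d^{\R}$ to $E^s_{0,d,0}$ alone does not put nearby real data on the graph.
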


\begin{proof}
Fix arbitrary initial path $x_0\in \PP^{\R}_d$. 
By Proposition~\ref{prop:real-invariant}, the real heat-flow trajectory is also a solution of the full $\theta=0$ equation, viewed inside $\PP^{\C}_d$.

Now, $\HH^2_{\R}$ is complete, simply connected and has non-positive
sectional curvature, by Theorem \ref{thm:eells-sampson-closed}, the endpoint-fixed heat flow exists smoothly for all $s\ge0$, and there exists a sequence $s_n\to+\infty$ such that $x(s_n,\cdot)$ converges,
after passing to a subsequence, in $H^1$ and smoothly on compact subintervals
of $(0,\tau)$, to a harmonic map $x_\infty$.  In our one-dimensional
source case, 
$x_\infty$ is a geodesic in $\HH^2_{\R}$ from $a$ to $b_d$. Since $\HH^2_{\R}$ is a Hadamard manifold, by Cartan--Hadamard theorem, the geodesic with fixed
endpoints $a$ and $b_d$ is unique. Hence
\[
 x_\infty=\gamma_{0,d}\in \mathcal{J}^{\R}_0.\qedhere
\]
\end{proof}

\subsubsection{Conjugation symmetry and the local stable manifold}

Now the linearization of the full $\theta=0$ flow  \eqref{eq:linflow} at $\gamma_{0,d}$, becomes
\[\partial_s\xi+B_d\bar{\xi}=0,\qquad B_d:=M_{\gamma_{0,d}}^{-1}L_{0,d}^{\C}.\]
In the real frame
\[T_{0,d}(t)=\left(\sinh\frac{dt}{\tau},\cosh\frac{dt}{\tau},0\right),
\qquad
N=(0,0,1),
\]
if $\xi=\alpha T_{0,d}+\beta N$, then
\[
B_d(\alpha,\beta)
=
\left(
-\frac{1}{|\gamma_{0,d}(t)|^2}\alpha_{tt},
\,
-\beta_{tt}+\frac{d^2}{\tau^2}\beta
\right).
\]

Furthermore write $\xi=u+iv$, with $u,v\in T_{\gamma_{0,d}}\PP_d^{\R}$, then
\[u_s+i v_s=-B_d u+iB_dv.\]
Since $B_d$ is a differential operator with real-valued coefficients, taking real and imaginary parts yields
\begin{equation}\label{eq:uvlineaer}
u_s=-B_
du,\qquad v_s=B_dv.
\end{equation}
In particular, the forward stable and unstable spaces are give by
\[E^s=T_{\gamma_{0,d}}\PP_d^{\R}, \qquad E^u=iT_{\gamma_{0,d}}\PP_d^{\R}.\]

Applying Corollary \ref{cor:localthimble} to the Lefchetz $\mathcal{J}_0$, i.e. to the $\theta=0$ full flow near $\gamma_{0,d}$, there exists a neighborhood
$\mathcal{U}\subset \PP^{\C}_d$ of $\gamma_{0,d}$ such that the set of points whose forward
flow remains in $\mathcal{U}$ and converges to $\gamma_{0,d}$ is a $C^1$ local stable
manifold
\[
\mathcal{J}_0\cap \mathcal{U}=
\left\{
\exp_{\gamma_{0,d}}^{\C}\bigl(u+f(u)\bigr):u\in B_\rho(T_{\gamma_{0,d}}\PP_d^{\R})\right\},
\]
where
\[
f:B_\rho(T_{\gamma_{0,d}}\PP_d^{\R})\longrightarrow iT_{\gamma_{0,d}}\PP_d^{\R},
\qquad
f(0)=0,\qquad Df(0)=0.
\]

\medskip
Define the anti-holomorphic involution
\[
\tau:\PP_d^{\C}\to\PP_d^{\C},
\qquad
\tau(z)=\overline z.
\]
Its fixed-point set is exactly $\PR$.
Because $F_0(\overline z)=F_0(z)$, the vector field $X(z):=-\nabla F_0(z)$ satisfies
\begin{equation}\label{eq:tau-equiv}
X(\tau z)=D\tau\,X(z).
\end{equation}
Thus the full flow is $\tau$-equivariant. With the above analysis, we have the following local result.

\begin{prop}\label{prop:equivariant-local-stable}
There exists a neighborhood $\mathcal U$ of $\gamma_{0,d}$ in $\PP_{d}^{\C}$ such that
\[
\mathcal{J}_0\cap\mathcal U=\PR\cap\mathcal U.
\]
Equivalently, the local stable manifold of $\gamma_{0,d}$ for the full complexified flow is exactly the real slice. Furthermore, if $z\in\mathcal{M}_0^+$, then there exists $S\ge 0$ such that
\[
z(s)\in\PR\qquad\forall~ s\ge S.
\]
\end{prop}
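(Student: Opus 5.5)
The plan has two halves. First, the real slice lies in the local stable manifold. Second, uniqueness of the graph representation in Corollary \ref{cor:localthimble} forces the two to coincide. The eventual-reality statement then follows from the local statement together with exponential convergence.

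\emph{Step 1: the real slice is contained in the thimble.} By Proposition \ref{prop:real-thimble} every $x\in\PR$ flows under the real heat flow to $\gamma_{0,d}$, and by Proposition \ref{prop:real-invariant} this is also a trajectory of the full $\theta=0$ flow \eqref{eq:flow0}. Hence $\PR\subset\mathcal J_0$. For $x$ in a small $H^1$-neighbourhood of $\gamma_{0,d}$ we also need the forward orbit to stay in $\mathcal U$. We get this from the linearization \eqref{eq:uvlineaer}: on $T_{\gamma_{0,d}}\PR$ the operator $B_d$ is positive self-adjoint in the weighted $L^2$ inner product. Its spectrum is bounded below by $\mu_{0,d,*}>0$, as computed in Proposition \ref{prop:expconv}. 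Since $F_0$ restricted to $\PR$ is a strict local minimum with nondegenerate Hessian, a standard Lyapunov argument applies: $F_0-F_0(\gamma_{0,d})$ is comparable to $\|x-\gamma_{0,d}\|_{H^1}^2$ near the minimum and decreases along the flow. This keeps small real data in a small ball for all $s\ge0$. Consequently $\PR\cap\mathcal U'\subset\mathcal J_0\cap\mathcal U$ for a smaller neighbourhood $\mathcal U'$.

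\emph{Step 2: graph uniqueness.} Write real points near $\gamma_{0,d}$ in exponential coordinates. Since $\exp^{\C}_{\gamma_{0,d}}$ maps real tangent vectors to real paths, $\PR\cap\mathcal U'=\{\exp^{\C}_{\gamma_{0,d}}(u):u\in B_\rho(T_{\gamma_{0,d}}\PR)\}$. This is itself a graph over $E^s=T_{\gamma_{0,d}}\PR$ with zero component in $E^u=iT_{\gamma_{0,d}}\PR$. Corollary \ref{cor:localthimble} says $\mathcal J_0\cap\mathcal U$ is the graph of $f$ over the same ball, and each point of $\mathcal J_0\cap\mathcal U$ has a unique decomposition $u+f(u)$. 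Comparing with Step 1, every $u$ in the ball gives a point $\exp(u)\in\mathcal J_0\cap\mathcal U$, so $\exp(u)=\exp(u'+f(u'))$ for some $u'$. Injectivity of the chart forces $u'=u$ and $f(u)=0$. Therefore $f\equiv0$ and $\mathcal J_0\cap\mathcal U=\PR\cap\mathcal U$, after shrinking to a common ball.

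A cleaner alternative uses the $\tau$-equivariance \eqref{eq:tau-equiv}. The involution $\tau$ preserves $\mathcal J_0\cap\mathcal U$ and acts as $u+iv\mapsto u-iv$ in the chart, so it maps the graph of $f$ to the graph of $-f$. Uniqueness of the stable graph then gives $f=-f$. I would mention both and use whichever is more robust.

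\emph{Step 3: eventual reality.} Take $z\in\mathcal M_0^+$. Since $z(s)\to\gamma_{0,d}$ in $H^1$, by Proposition \ref{prop:expconv} there is $S$ with $z(s)\in\mathcal U$ for all $s\ge S$. Each such $z(s)$ has forward orbit in $\mathcal U$ converging to $\gamma_{0,d}$, so $z(s)\in\mathcal J_0\cap\mathcal U=\PR\cap\mathcal U$.

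The main obstacle is the subtle point in Step 2. The local stable manifold in Lemma \ref{lem:Lyapunov--Perron} is characterized as the set of points whose forward orbit stays in $\mathcal U$. So Step 1 must genuinely control real orbits uniformly, not merely their eventual convergence. I expect the Lyapunov/coercivity estimate for $F_0$ near the nondegenerate minimum in $H^1$ to settle this. One also needs $\mathcal U$ chosen as a ball in the chart compatible with both descriptions.
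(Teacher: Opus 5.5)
Your proof is correct. The argument you call the ``cleaner alternative'' is exactly the paper's proof. The paper takes the graph $\mathcal J_0\cap\mathcal U=\{\exp^{\C}_{\gamma_{0,d}}(u+f(u))\}$ from Corollary~\ref{cor:localthimble} in the splitting $T_{\gamma_{0,d}}\PR\oplus iT_{\gamma_{0,d}}\PR$. It notes that $\tau$ acts as $(u,v)\mapsto(u,-v)$, uses the equivariance \eqref{eq:tau-equiv} to get $\tau$-invariance of the local stable manifold, and concludes $f=-f$ from single-valuedness of the graph.

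Your primary route is different. You first prove $\PR\cap\mathcal U'\subset\mathcal J_0\cap\mathcal U$ dynamically, using flow-invariance of $\PR$ (Proposition~\ref{prop:real-invariant}) and a Lyapunov/coercivity estimate at the nondegenerate real minimum. You then read off $f=0$ from injectivity of the chart.

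What the symmetry argument buys is that it is purely local and makes your Step~1 unnecessary. Take $\mathcal U$ to be the chart image of a product ball $B_\rho(E^s)\times B_\rho(E^u)$. Then $f\equiv0$ already gives
\[
\mathcal J_0\cap\mathcal U=\exp^{\C}_{\gamma_{0,d}}\bigl(B_\rho(E^s)\bigr)=\PR\cap\mathcal U,
\]
where the second equality comes from conjugation-equivariance and injectivity of the chart. So the ``main obstacle'' you flag, uniform control of real forward orbits, is absorbed into the stable-manifold theorem and never has to be checked by hand.

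Your route, conversely, uses only single-valuedness of the graph, not its $\tau$-invariance, and it gives explicit control of the real orbits. For it you need only the local Lyapunov argument, which also supplies full rather than subsequential convergence. You do not need the global Proposition~\ref{prop:real-thimble}, whose proof as written gives only a convergent subsequence.

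Your Step~3 supplies the eventual-reality clause, which the paper's proof leaves implicit. There, plain $H^1$ convergence $z(s)\to\gamma_{0,d}$ is enough, so Proposition~\ref{prop:expconv} is not needed.
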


\begin{proof}
By the local stable-manifold statement as above, the local
stable set near $\gamma_{0,d}$ is a $C^1$ graph over the stable space:
\[\mathcal{J}_0\cap\mathcal{U}
=\{(u,v)\mid v=f(u)\},
\]
where the coordinates are taken with respect to
\[
T_{\gamma_{0,d}}\PP^{\C}_d
=E^s_{0,d,0}\oplus E^u_{0,d,0}
=T_{\gamma_{0,d}}\PP^{\R}_d\oplus iT_{\gamma_{0,d}}\PP^{\R}_d,
\]
and
\[
f(0)=0,\qquad Df(0)=0.
\]
Now $\tau$ acts in these coordinates by
\[
\tau(u,v)=(u,-v).
\]
The local stable manifold is unique among $C^1$ invariant graphs tangent to $E^s$, and the flow is $\tau$-equivariant by \eqref{eq:tau-equiv}. Therefore $\mathcal{J}_0\cap\mathcal{U}$ is $\tau$-invariant. If $(u,f(u))$ lies on the graph, then so does
\[
\tau(u,f(u))=(u,-f(u)).
\]
Since the graph is single-valued over $u$, it follows that $f(u)=-f(u)$, hence $f\equiv 0$.
Therefore
\[
\mathcal{J}_0\cap\mathcal{U}=\PR\cap\mathcal U. \qedhere
\]
\end{proof}

The following standard backward uniqueness will help us analyze the flow for $0\le s\le S$.

\begin{prop}[{\cite{LM1960}}]
\label{prop:backward-uniqueness-parabolic-system}
Let
\[w : [s_0,S]\times[0,1]\longrightarrow \R^N
\]
be a sufficiently regular solution of the uniformly parabolic linear system
\[\partial_s w-A(s,t)\partial_t^2 w
=B(s,t)\partial_t w+C(s,t)w
\]
on the compact cylinder $[s_0,S]\times[0,1]$, where the coefficients are bounded and $A(s,t)$ is uniformly positive definite. Assume that $w$ satisfies homogeneous boundary conditions at $t=0,1$. If
\[w(S,\cdot)=0,\]
then
\[w\equiv 0,\quad \text{on }~ [s_0,S]\times[0,1].\]
\end{prop}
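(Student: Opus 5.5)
The plan is to use the log-convexity (Agmon--Nirenberg) method in the abstract form of Lions--Malgrange \cite{LM1960}. I will read ``sufficiently regular'' as: $w\in C^0([s_0,S];H^1_0)\cap L^2(s_0,S;H^2)$ with $\partial_s w\in L^2(s_0,S;L^2)$. I will also read the hypotheses as including that $A(s,t)$ is symmetric and Lipschitz in $s$ (and $C^1$ in $t$). These are the standing assumptions in \cite{LM1960}, and they hold for the tangential/normal systems to which we apply the proposition. Without them I do not expect the statement to hold in this generality.

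\textbf{Step 1 (abstract form).} Write $H=L^2([0,1];\R^N)$ and set $\mathcal A(s)w:=-\partial_t\bigl(A(s,t)\partial_t w\bigr)$ with Dirichlet conditions. By symmetry of $A$ this operator is self-adjoint and coercive on $H^1_0$. The system then becomes
\[
w'+\mathcal A(s)w=f(s),\qquad
\|f(s)\|_{H}\le K\bigl(\|w\|_{H}+\|\partial_tw\|_{H}\bigr)\le K\bigl(\|w\|_H+\langle \mathcal A(s)w,w\rangle^{1/2}\bigr),
\]
where $f$ collects the $B\partial_tw$ and $Cw$ terms together with the lower-order commutator term $(\partial_tA)\partial_tw$.

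\textbf{Step 2 (Dirichlet quotient).} Argue by contradiction and suppose $w(s_1)\neq 0$ for some $s_1\in[s_0,S)$. Let $s_2\in(s_1,S]$ be the first zero after $s_1$. On $[s_1,s_2)$ define the frequency
\[
\Lambda(s):=\frac{\langle\mathcal A(s)w,w\rangle}{\|w\|^2}.
\]
Differentiate and use self-adjointness:
\[
\tfrac12\tfrac{\dd}{\dd s}\langle\mathcal A w,w\rangle=\langle \mathcal A w,w'\rangle+\tfrac12\langle \mathcal A'(s)w,w\rangle,
\]
where the Lipschitz hypothesis in $s$ gives $|\langle\mathcal A'w,w\rangle|\le K\langle\mathcal A w,w\rangle+K\|w\|^2$. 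Substitute $w'=-\mathcal A w+f$ and write $\mathcal A w=\Lambda w+(\mathcal A-\Lambda)w$. The standard Cauchy--Schwarz computation then gives
\[
\Lambda'\le -\,\|(\mathcal A-\Lambda)w\|^2/\|w\|^2+\|f\|^2/\|w\|^2+K(\Lambda+1)\le K'(\Lambda+1),
\]
so $\Lambda(s)\le (\Lambda(s_1)+1)e^{K'(s-s_1)}$ stays bounded on $[s_1,s_2)$.

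\textbf{Step 3 (conclusion).} Compute
\[
\tfrac12\tfrac{\dd}{\dd s}\log\|w\|^2=-\Lambda+\langle f,w\rangle/\|w\|^2\ge -\Lambda-K(1+\Lambda^{1/2}).
\]
This is bounded below on $[s_1,s_2)$, hence $\log\|w(s)\|^2$ is bounded below there, which contradicts $w(s_2)=0$. Therefore $w\equiv0$ on $[s_1,S]$ for every such $s_1$, that is, on the whole cylinder.

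\textbf{Main obstacle.} The delicate step is Step 2. The non-self-adjoint lower-order part $B\partial_t$ and the $s$-dependence of $\mathcal A(s)$ have to be absorbed into a differential inequality that is linear in $\Lambda$. This is exactly where the symmetry of the principal coefficient and its Lipschitz regularity in $s$ are used. If $A$ were not symmetric, I would first try to conjugate by a smooth matrix gauge that symmetrizes the principal part. Failing that, I would switch to a Carleman estimate with weight $e^{\lambda(S-s)^{-1}}$, accepting heavier computations.
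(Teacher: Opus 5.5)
Your argument is correct under the hypotheses you make explicit (a symmetric, uniformly positive principal coefficient $A$, Lipschitz in $s$ and $C^1$ in $t$). The paper gives no proof of its own and only cites \cite{LM1960}, so your Dirichlet-quotient (log-convexity) argument is in effect a self-contained proof of the cited result in its natural abstract setting.

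You are also right that the printed hypothesis ``the coefficients are bounded'' is too weak as stated. Some regularity of the principal part in $s$ is genuinely needed: backward uniqueness is known to fail in general when the self-adjoint principal part depends only H\"older-continuously on time (Miller's counterexample). Your strengthened reading is therefore the correct repair of the statement.
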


Now for any $z\in \mathcal{M}_0^+$, since the full $\theta=0$ flow \eqref{eq:flow0} is $\tau$-equivariant,  $\tau(z)$ is also a solution of the full system. We apply the backward uniqueness result Proposition  \ref{prop:backward-uniqueness-parabolic-system} for parabolic systems 
to the difference $w=z-\tau(z)$, regarded as a real-valued system. Combined with Proposition \ref{prop:equivariant-local-stable}, we have $w\equiv0$, that is, taking $s_0=0$,
\[\tau(z)(s)=z(s),\quad \forall\,s\in[0,+\infty).\]
In particular, $z(0)\in\PP_d^{\R}$. Then we arrive at the following theorem.

\begin{thm}\label{thm:global-real-thimble}
At $\theta=0$, we have
\[
\mathcal{J}_0=\PP_d^{\R}.
\]
Equivalently, the stable Lefschetz thimble attached to the real geodesic
$\gamma_{0,d}$ consists precisely of all real paths. 
\end{thm}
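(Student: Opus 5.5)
The theorem follows from proving two inclusions, $\PP_d^{\R}\subset\mathcal J_0$ and $\mathcal J_0\subset\PP_d^{\R}$, each assembled from results already established.

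For $\PP_d^{\R}\subset\mathcal J_0$, fix $x_0\in\PP_d^{\R}$. By Proposition \ref{prop:real-invariant}, the endpoint-fixed harmonic map heat flow on $\HH^2_{\R}$ starting at $x_0$ solves the full $\theta=0$ system \eqref{eq:flow0}. Theorem \ref{thm:eells-sampson-closed} applies, using the no-escape bound to replace compactness of the target, and gives global smooth existence. Proposition \ref{prop:real-thimble} then shows that $H^1$-subsequential limits equal $\gamma_{0,d}$.

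To place the trajectory in $\mathcal M_0^{0,+}$, we must upgrade subsequential convergence to full convergence $\lim_{s\to\infty}\|x(s)-\gamma_{0,d}\|_{H^1}=0$. The energy $F_0$ is nonincreasing and bounded below, so $\int_0^\infty\|\partial_s x\|_{L^2}^2\,\dd s\le F_0(x_0)-F_0(\gamma_{0,d})<\infty$, which is the finite-energy condition. Once $x(s_n)$ enters the neighborhood $\mathcal U$ of Proposition \ref{prop:equivariant-local-stable}, it lies in $\PR\cap\mathcal U=\mathcal J_0\cap\mathcal U$, the local stable manifold. By Lemma \ref{lem:Lyapunov--Perron} and Proposition \ref{prop:expconv}, the forward orbit of such a point converges exponentially to $\gamma_{0,d}$ in $H^1$. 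Flow uniqueness (Theorem \ref{thm:eells-sampson-closed}(1)) identifies that orbit with the tail of $x$. The regularity classes required in Definition \ref{def:thimble} follow from parabolic smoothing for $s>0$ and continuity in $H^1$ at $s=0$. Hence $x_0=\ev(x)\in\mathcal J_0$.

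For $\mathcal J_0\subset\PP_d^{\R}$, take $z\in\mathcal M_0^{0,+}$. By the last assertion of Proposition \ref{prop:equivariant-local-stable}, there is an $S$ with $z(s)\in\PR$ for $s\ge S$, so $w:=z-\overline z$ vanishes at $s=S$. The conjugate $\overline z=\tau(z)$ also solves \eqref{eq:flow0} by \eqref{eq:tau-equiv}. Subtracting the two equations and writing $w$ in real components gives a linear system $\partial_s w=A\partial_t^2w+B\partial_tw+Cw$ with homogeneous Dirichlet data at $t=0,\tau$, whose coefficients are built from difference quotients of the nonlinearity along the segment between $z$ and $\overline z$.

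The main obstacle is verifying that this system is uniformly parabolic on $[0,S]\times[0,\tau]$. The principal part of \eqref{eq:flow0} is $z\mapsto P_z(\eta\bar z_{tt})$. It is $\R$-linear but not complex-linear, and $\eta$ is indefinite, so parabolicity is not immediate. I would argue via the real-$2\times2$-block structure, as in the proof of Proposition \ref{prop:expconv}: in a local holomorphic frame the principal symbol is $\xi\mapsto M^{-1}\overline{\xi}$ composed with $-\partial_t^2$, and this is elliptic but mixes the real and imaginary parts with opposite signs. Consequently $w$, which is purely imaginary ($\overline w=-w$), satisfies a backward-parabolic equation exactly where $z$ is real. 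For that reason I would instead apply the uniqueness theorem of \cite{LM1960} in its form for systems with real-analytic or smooth coefficients and Hilbert-space-valued energy estimates (Agmon–Nirenberg type log-convexity), which also handles the forward/backward mixed structure arising from a gradient of a real functional. Compactness of the image of $z$ on $[0,S]$, which follows from $C^0([0,S];H^1_t)$, supplies the bounded coefficients. Granting this, $w\equiv0$ on $[0,S]$, so $z(0)=\overline{z(0)}\in\PR$, which completes the proof.
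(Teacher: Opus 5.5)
Your decomposition is the paper's own. You get $\PP_d^{\R}\subset\mathcal J_0$ from Propositions \ref{prop:real-invariant} and \ref{prop:real-thimble}, and $\mathcal J_0\subset\PP_d^{\R}$ from the tail statement of Proposition \ref{prop:equivariant-local-stable} followed by unique continuation for $w=z-\tau(z)$. Your upgrade from subsequential to full $H^1$ convergence, via the real local stable manifold, is correct and supplies a step the paper leaves implicit.

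Your parabolicity objection is also correct, and it applies to the paper's proof as well. The paper feeds $w$ ``regarded as a real-valued system'' into Proposition \ref{prop:backward-uniqueness-parabolic-system}, which needs a uniformly positive definite $A$. But by \eqref{eq:uvlineaer} the imaginary directions evolve by $v_s=B_dv$, so the purely imaginary $w$ obeys a backward-parabolic system near $\PR$.

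The gap is that your replacement is only granted. An Agmon--Nirenberg log-convexity argument requires the operator governing $w$ to be symmetric, modulo relatively bounded terms, for a fixed or uniformly equivalent inner product. You do not exhibit this, and for the ambient difference quotient it is not evident: the averaged coefficient $\bigl(\int_0^1P_{\bar z+r(z-\bar z)}\,\dd r\bigr)\eta$ is not symmetric. Citing \cite{LM1960} does not help either, since, as used in Proposition \ref{prop:backward-uniqueness-parabolic-system}, it assumes uniform parabolicity. Finally, compactness of $z([0,S])$ does not bound your coefficients: they involve $z_{tt}$, which Definition \ref{def:thimble} controls only for $s>0$.

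The missing idea is that backward-parabolicity is the \emph{favourable} direction for propagating $w(S)=0$ to earlier times, and you only ever need the equation near real points. The argument runs as follows.
\begin{itemize}
\item Set $s_*:=\inf\{s\ge0:\ z(s')\in\PR \text{ for all } s'\ge s\}$. This set is closed, because $z\in C^0([0,\infty);H^1_t)$ and $\PR=\operatorname{Fix}\tau$. It contains $[S,\infty)$.
\item Suppose $s_*>0$. Then for $s\in[s_*-\epsilon,s_*]\subset(0,\infty)$ the path $z(s)$ is $C^0$-close to the real path $z(s_*)$.
\item The principal part of the linear equation for $Y=\im z$ (obtained from $F(\bar z)=\overline{F(z)}$ exactly as you describe) depends only on $z$. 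At a real path $x$ it is $Y\mapsto -P_x\eta\,Y_{tt}$, a backward heat operator on $T_x\HH^2_{\R}$.
\item Hence on this interval the $Y$-system is uniformly backward-parabolic. Coercivity is only needed on the constraint bundle $\langle\re z,Y\rangle_\eta=0$, where $Y$ lives. The lower-order coefficients are controlled by $\sup_s\|z(s)\|_{H^2_t}$.
\item In the reversed time $\sigma=s_*-s$ the system is forward-parabolic, with $Y|_{\sigma=0}=0$ and Dirichlet data. The basic $L^2$ energy estimate and Gronwall then give $Y\equiv0$ on $[s_*-\epsilon,s_*]$, contradicting the choice of $s_*$.
\end{itemize}
Hence $s_*=0$ and $z(0)\in\PR$. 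This needs neither Lions--Malgrange nor log-convexity.
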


\subsection{Prediction on two connecting trajectories}
We now start to consider the prediction (Conjecture \ref{conj:alien}) on the connecting trajectories from resurgence analysis. By Proposition \ref{prop:H2-pointed-alien-stokes-constant}, the Alien coefficient $2$ predicts that there should be exactly two solutions of downward Morse flow connecting $\gamma_{k,d}$ to $\gamma_{0,d}$ ($k\neq 0$) at the Stokes phase. We pick $k=1$ and verify this prediction for $d$ sufficiently small. This provides a nontrivial test of Conjecture \ref{conj:alien}.

The Stokes phase $\theta(d)$ for flows from $\gamma_{1,d}$ to $\gamma_{0,d}$ satisfies 
\[G_{\theta(d)}(\gamma_{1,d})=G_{\theta(d)}(\gamma_{0,d}),\qquad F_{\theta(d)}(\gamma_{1,d})\ge F_{\theta(d)}(\gamma_{0,d}).\]
Recall that
\[\gamma_{j,d}=\left(\cosh \frac{(d+2\pi ij)t}{\tau},\,\sinh \frac{(d+2\pi ij)t}{\tau},\,0 \right), \quad j=0,1.\]
Thus the above imaginary and real part constraints solve $\theta(d)$ to be
\begin{equation}\label{eq:thetad}
\theta(d)=\pi-\arctan\frac{d}{\pi}.    
\end{equation}
Then we will study the possible morse flow with the asymptotic conditions $\gamma_{1,d}$ and $\gamma_{0,d}$. In the following, we will call the case $j=0$ the lower end and the case $j=1$ the upper end.

\vskip 0.2cm

We write
\[
\mathcal{M}_d(a,b_d;\gamma_{0,d},\gamma_{1,d})
\]
for the moduli space of finite-energy classical solutions $z:\R\times[0,\tau]\to \HH_{\C}^2$ of \eqref{eq:flow} such that
\begin{align*}
\Norm{z(s,\cdot)-\gamma_{1,d}}_{H^1([0,\tau])}\to 0\quad\text{as }s\to -\infty,
\\
\Norm{z(s,\cdot)-\gamma_{0,d}}_{H^1([0,\tau])}\to 0\quad\text{as }s\to +\infty,
\end{align*}
modulo translation $s\mapsto s+\sigma$. We are going to prove
\begin{itemize}
    \item for $d=0$, (Theorem \ref{thm:exact-two-d0})
    \[
\#\mathcal{M}_0(a,a;\gamma_{0,0},\gamma_{1,0})=2.
\]
   \item for $0<d\ll1$, (Theorem \ref{thm:exact-two-small-d})
   \[
\#\mathcal{M}_d(a,b_d;\gamma_{0,d},\gamma_{1,d})=2.
\]
\end{itemize}
The small-$d$ Theorem \ref{thm:exact-two-small-d} is obtained by continuation from the two distinguished $d=0$ trajectories together with a no-escape/compactness argument.

\subsection{Case: $d=0$}\label{sec:case d=0}
{
In this subsection we prove the $d=0$ count
\[
\#\mathcal M_0(a,a;\gamma_{0,0},\gamma_{1,0})=2.
\]
The argument proceeds in two steps. First, we study the equation on a flow-invariant slice
$\Sigma \subset \mathbb H^2_{\mathbb C}$, which is isomorphic to $S^2$, and obtain exactly two solutions there. Second, we regard the same equation as a problem in the full space $\mathbb H^2_{\mathbb C}$, which we also call it the $d=0$ full system. By combining tail rigidity with backward uniqueness, we show that every full-space solution is forced to lie in the slice. Therefore, the full $d=0$ system has no solutions other than the two slice solutions. 

Finally we also analyze the $d=0$ regularity problem and construct an invertible argumented linear operator which serves for $d$-perturbation in the next subsection.}
 
\subsubsection{Heat flow on $S^2$}\label{sec:heat flow on S2}
Let
\[
\rho(z_0,z_1,z_2):=(\overline z_0,-\overline z_1,-\overline z_2)
\]
be the anti-holomorphic involution of $\HH^2_{\C}$. Let $S^2$ be the unit sphere in $\R^3$. Then we can identify 
\[
\Sigma:=\operatorname{Fix}(\rho)=Q(S^2),\qquad Q:=\operatorname{diag}(1,-i,-i).
\]

\begin{prop}
At $d=0$ and $\theta(0)=\pi$, the Morse vector field
\[
P_z(e^{i\pi}\eta \overline{z}_{tt})
=P_z(-\eta\overline{z}_{tt})
\]
is $\rho$-equivariant. The map
\[
Q:\Omega_aS^2\longrightarrow \mathcal P^{\C}_0\cap \Omega_a\Sigma,
\qquad x\longmapsto Qx,
\]
identifies the restricted Morse functional $F_{\pi}(Qx)$ with the based-loop energy
\[
E_{S^2}(x)=\frac12\int_0^\tau |x_t|^2\,\dd t.
\]
In particular, the restriction of the $d=0,\theta=\pi$ flow to $\Sigma$ is the endpoint-fixed heat flow on $S^2$,
\[
x_s=\nabla_t x_t,\qquad x(s,0)=x(s,\tau)=a.
\]
\end{prop}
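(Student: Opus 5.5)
The plan is to verify the three claims directly from the explicit formulas in \eqref{eq:flow}, using the linear algebra of the matrices $\eta$, $Q$ and $R:=\diag(1,-1,-1)$. Note that $\rho(z)=R\bar z$, that $R$ is real, orthogonal and commutes with $\eta$, and that $R^T\eta R=\eta$. On the other side, $Q$ is unitary and satisfies $Q^T\eta Q=\diag(-1,-1,-1)=-\Id$.

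\emph{Step 1 ($\rho$-equivariance).} Write $V(z):=-P_z(\eta\bar z_{tt})=-\eta\bar z_{tt}+\frac{z^T\bar z_{tt}}{|z|^2}\eta\bar z$. Substituting $\rho z=R\bar z$ and using $(R\bar z)^TRz_{tt}=\bar z^Tz_{tt}$ and $|R\bar z|=|z|$, one gets
\[
V(\rho z)=R\Bigl(-\eta z_{tt}+\tfrac{\bar z^Tz_{tt}}{|z|^2}\eta z\Bigr)=R\,\overline{V(z)}=D\rho\,V(z).
\]
The boundary data $a=(1,0,0)$ is $\rho$-fixed, and $\rho$ preserves the constraint $\langle z,z\rangle_\eta=-1$. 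Hence $\rho$ maps solutions to solutions, and the vector field is tangent to $\operatorname{Fix}(\rho)$ along $\operatorname{Fix}(\rho)$.

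\emph{Step 2 (identification of $\Sigma$ and of the functional).} A point $z$ is $\rho$-fixed iff $z_0\in\R$ and $z_1,z_2\in i\R$, i.e.\ iff $z=Qx$ with $x\in\R^3$. The constraint then reads $-1=\langle Qx,Qx\rangle_\eta=x^TQ^T\eta Qx=-|x|^2$, so $\Sigma=Q(S^2)$. Also $Qa=a$, so $Q$ maps $\Omega_aS^2$ onto $\PP^{\C}_0\cap\Omega_a\Sigma$. For $z=Qx$,
\[
S(Qx)=\tfrac12\int_0^\tau x_t^TQ^T\eta Qx_t\,\dd t=-E_{S^2}(x),
\]
hence $F_\pi(Qx)=\re(e^{-i\pi}S(Qx))=E_{S^2}(x)$.

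\emph{Step 3 (restricted flow).} Since $Q$ is unitary, the Hermitian metric $h$ restricted to $T\Sigma=Q(TS^2)$ is the pushforward of the Euclidean metric on $TS^2$, so the $L^2$ metrics correspond under $Q$. By Step 1, at points of $\PP^{\C}_0\cap\Omega_a\Sigma$ the full gradient $\nabla F_\pi$ is tangent to the slice. It therefore coincides with the gradient of the restriction $F_\pi|_\Sigma$, which by Step 2 is $Q\,\nabla E_{S^2}$. Thus $z=Qx$ solves the $\theta=\pi$ flow iff $x_s=-\nabla_{L^2}E_{S^2}(x)=\nabla_tx_t$, where $\nabla_tx_t$ is the tangential projection $x_{tt}+|x_t|^2x$ of $x_{tt}$. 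As a consistency check I would also compute this directly: substitute $\bar z=\bar Qx$ and use $-\eta\bar Q=Q$ and $z^T\bar z_{tt}=x^Tx_{tt}=-|x_t|^2$ to obtain $V(Qx)=Q(x_{tt}+|x_t|^2x)$.

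The only delicate point is the sign bookkeeping in Step 3, namely checking that the orientation $e^{i\pi}$ turns the \emph{downward} flow of $F_\pi$ into the \emph{forward} heat flow. The direct computation in Step 3 settles this.
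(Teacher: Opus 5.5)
Your proposal is correct and follows essentially the same route as the paper. You prove $\rho$-equivariance through $R=\diag(1,-1,-1)$ and the projection formula, use $Q^T\eta Q=-\Id$ to get $F_\pi(Qx)=E_{S^2}(x)$, and use the unitarity of $Q$ to transfer the downward gradient flow to the heat flow. Your two additions only make explicit what the paper leaves implicit: that a gradient tangent to $\Sigma$ coincides with the gradient of the restriction, and the direct check $V(Qx)=Q(x_{tt}+|x_t|^2x)$ via $-\eta\bar Q=Q$, which confirms the sign.
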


\begin{proof}
We check invariance of the slice. Denote $R=\diag(1,-1,-1)$.
Using the projection formula
\[
P_z(W)=W-\frac{z^T\eta W}{|z|^2}\eta\overline z,
\]
and $R^T\eta R=\eta$, $|R\bar{z}|^2=|z|^2$,  the tangent map $D\rho_z$ of $\rho_z$ satisfies
\[
D\rho_z(P_zW)=R\overline{P_zW}=P_{\rho z}(D\rho_zW).
\]
Taking $W=-\eta\overline z_{tt}$, we get
\[
D\rho_zP_{z}(-\eta\bar{z}_{tt})
=
P_{\rho z}\bigl(-\eta\,\overline{(\rho z)}_{tt}\bigr).
\]
Hence the vector field is $\rho$-equivariant. Therefore its fixed-point set $\Sigma$ is invariant under the flow.

For $z=Qx$, we have
\[
z_t^T\eta z_t
=
-\dot x_0^2+(-i\dot x_1)^2+(-i\dot x_2)^2
=
-|x_t|^2.
\]
Therefore for $\theta=\pi$,
\[
F_{\pi}(Qx)=\re(-S(Qx))
=-
\frac12\int_0^\tau z_t^T\eta z_t\,\dd t
=
\frac12\int_0^\tau |x_t|^2\,\dd t=E_{S^2}(x).
\]
Because $Q$ is unitary for the induced Hermitian metric, the downward gradient flow of $F_\pi$ on the slice $\Sigma$ corresponds to the downward $L^2$-gradient flow of $E_{S^2}$ on $\Omega_aS^2$. Thus the restricted equation is
\begin{equation*}
 x_s=\nabla_t x_t=x_{tt}+|x_t|^2x,
 \qquad x(s,0)=x(s,\tau)=a.
\end{equation*}
And \[
\gamma_{0,0}\leftrightarrow x_0(t)\equiv a,
\qquad
\gamma_{1,0}\leftrightarrow x_{1,*}(t)=\Bigl(\cos\frac{2\pi t}{\tau},-\sin\frac{2\pi t}{\tau},0\Bigr). \qedhere
\]
\end{proof}

We next formulate the slice problem in the natural parabolic class on $S^2$.

\begin{defn}
Let $\mathcal{X}^\Sigma$ be the set of maps
\[
x:\R\times[0,\tau]\longrightarrow S^2
\]
such that, in any local chart along the image,
\[
x\in C^0_{\mathrm{loc}}(\R;H^1([0,\tau]))
\cap C^1_{\mathrm{loc}}(\R;L^2([0,\tau]))
\cap C^0_{\mathrm{loc}}(\R;H^2([0,\tau])),
\]
and $x$ satisfies 
\begin{equation}\label{eq:sphere-heat}
\begin{cases}
 x_s=\nabla_t x_t=x_{tt}+|x_t|^2x \quad \text{in }~L^2([0,\tau]) \text{ for every }~s\in\R,\\[0.1cm]
x(s,0)=x(s,\tau)=a,
\\[0.1cm]
\int_0^{+\infty}\norm{\partial_s x(s,\cdot)}_{L^2_t}^2\,\dd s<\infty.  
\end{cases}    
\end{equation}
The slice moduli space $\mathcal{M}^\Sigma_0$ consists of those $x\in \mathcal{X}^\Sigma$ such that
\[
\|x(s,\cdot)-x_{1,*}\|_{H^1([0,\tau])}\longrightarrow 0
\qquad (s\to-\infty),
\]
and
\[
\|x(s,\cdot)-x_0\|_{H^1([0,\tau])}\longrightarrow 0
\qquad (s\to+\infty),
\]
modulo the $\R$-translation in $s$.
\end{defn}

Let $T(t)$ be the unit tangent field of $x_{1,*}$ along the great circle and let $N(t)$ be a unit normal field in $T S^2$ along $x_{1,*}$.
As before, every Dirichlet variation field can be written as
\[
\xi(t)=\alpha(t)T(t)+\beta(t)N(t),
\qquad
\alpha(0)=\alpha(\tau)=\beta(0)=\beta(\tau)=0.
\]

\begin{lem}\label{lem:hessian-once-round}
The Hessian of the based-path energy at $x_{1,*}$ is
\[
\mathrm{Hess}_{x_{1,*}}E_{S^2}(\xi,\xi)
=\frac12\int_0^\tau\Bigl(\Abs{\alpha_t}^2+\Abs{\beta_t}^2-\Bigl(\frac{2\pi}{\tau}\Bigr)^2\Abs{\beta}^2\Bigr)\,\dd t.
\]
Furthermore, write 
\[
\alpha(t)=\sum_{n\ge 1} a_n\sin\frac{n\pi t}{\tau},
\qquad
\beta(t)=\sum_{n\ge 1} b_n\sin\frac{n\pi t}{\tau},
\]
then
\[
\mathrm{Hess}_{x_{1,*}}E_{S^2}
=\frac{\pi^2}{4\tau}\sum_{n\ge 1}\bigl(n^2 a_n^2+(n^2-4)b_n^2\bigr).
\]
And we have
\begin{enumerate}[label=\textup{(\roman*)}]
\setlength{\itemsep}{-1pt}
    \item all tangential modes have positive Hessian eigenvalues;
    \item in the normal sector the $n=1$ mode $\sin\frac{\pi t}{\tau}N(t)$ has negative Hessian eigenvalue;
    \item the normal $n=2$ mode $\sin\frac{2\pi t}{\tau}\,N(t)$ has zero Hessian eigenvalue;
    \item all normal modes with $n\ge 3$ have positive Hessian eigenvalues.
\end{enumerate}
In particular $x_{1,*}$ is Morse--Bott with one unstable direction and one Bott direction.
\end{lem}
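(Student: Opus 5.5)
I will compute the second variation of $E_{S^2}$ at the closed great circle $x_{1,*}$ directly from the standard second variation formula for the energy on a Riemannian manifold, and then diagonalize it with the Dirichlet sine basis.

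\textbf{Second variation.} For a geodesic $x$ with Dirichlet variation $\xi$ (so no boundary terms), the formula is
\[
\mathrm{Hess}_xE(\xi,\xi)=\tfrac12\int_0^\tau\bigl(|\nabla_t\xi|^2-\langle R(\xi,x_t)x_t,\xi\rangle\bigr)\,\dd t,
\]
with the factor $\tfrac12$ matching the paper's normalization of $E_{S^2}$. Along $x_{1,*}$ the speed is constant, $|x_t|=2\pi/\tau$, and the frame $(T,N)$ is parallel: on $S^2$, $N=x\times T$ is parallel along a great circle, and $T$ is parallel because $x_{1,*}$ is a geodesic. Hence $\nabla_t\xi=\alpha_tT+\beta_tN$. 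The unit sphere has sectional curvature $1$, so
\[
\langle R(\xi,x_t)x_t,\xi\rangle=|x_t|^2|\xi^\perp|^2=(2\pi/\tau)^2|\beta|^2,
\]
and the tangential component $\alpha T$ does not contribute. This gives the first formula. As a cross-check, I can redo the computation extrinsically, as in Proposition \ref{prop:nondegenerate}, using $x_{tt}=-|x_t|^2x$ and $\langle x,\zeta\rangle=-|\xi|^2$. Either route is routine.

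\textbf{Diagonalization.} Expand $\alpha,\beta$ in $\sin(n\pi t/\tau)$, which is an orthogonal basis of $H^1_0$. Parseval gives
\[
\int_0^\tau\sin^2\tfrac{n\pi t}{\tau}\,\dd t=\tfrac\tau2,\qquad \int_0^\tau\Bigl|\tfrac{\dd}{\dd t}\sin\tfrac{n\pi t}{\tau}\Bigr|^2\dd t=\tfrac\tau2\bigl(\tfrac{n\pi}{\tau}\bigr)^2.
\]
Substituting these, the Hessian becomes
\[
\tfrac12\cdot\tfrac\tau2\cdot\tfrac{\pi^2}{\tau^2}\sum_{n\ge1}\bigl(n^2a_n^2+(n^2-4)b_n^2\bigr)=\tfrac{\pi^2}{4\tau}\sum_{n\ge1}\bigl(n^2a_n^2+(n^2-4)b_n^2\bigr).
\]
For complex coefficients, read $a_n^2$ and $b_n^2$ as moduli squared. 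Items (i) to (iv) then follow from the signs of $n^2$ and $n^2-4$: $n^2>0$ for all $n$, while $n^2-4$ is negative at $n=1$, zero at $n=2$, and positive for $n\ge3$.

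\textbf{Morse--Bott property.} This is the only non-formal step. It is not enough to note that the kernel is one-dimensional; I must show the kernel is tangent to a critical manifold. The kernel is spanned by $\sin(2\pi t/\tau)N$. I will exhibit the family of rotations of $S^2$ about the axis through $a$: each rotation fixes $a$ and maps $x_{1,*}$ to another once-round great circle through $a$ with the same energy. This gives a circle of critical points whose derivative at $x_{1,*}$ is the Jacobi field
\[
\beta(t)N=\tfrac{\dd}{\dd\phi}\bigl(R_\phi x_{1,*}\bigr)\Big|_{\phi=0}.
\]
Since $\langle a,x_{1,*}(t)\rangle=\cos(2\pi t/\tau)$, the distance from $x_{1,*}(t)$ to the rotation axis is $|\sin(2\pi t/\tau)|$. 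So this field is proportional to $\sin(2\pi t/\tau)N$, and it spans the kernel. The critical set near $x_{1,*}$ is therefore a smooth circle whose tangent equals the kernel, which is the Morse--Bott condition. By (ii), the index is $1$. I expect the main care to be needed in sign and normalization bookkeeping, and in checking that the rotation field has exactly the $n=2$ sine profile.
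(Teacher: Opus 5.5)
Your proposal is correct and follows the paper's route. The paper's proof just invokes the standard second variation formula on $S^2$ (or the extrinsic computation of Proposition~\ref{prop:nondegenerate}), and your rotation-field identification of the kernel with the tangent to the Bott family $\mathcal{C}_B$ matches the discussion right after the lemma. One cosmetic correction: for $E_{S^2}=\frac12\int_0^\tau|x_t|^2\,\dd t$ the second derivative $\frac{\dd^2}{\dd\varepsilon^2}E_{S^2}$ has no factor $\frac12$, so the $\frac12$ in the lemma comes from the convention $\mathrm{Hess}=\frac12\,\frac{\dd^2}{\dd\varepsilon^2}E_{S^2}$, not from the normalization of $E_{S^2}$; this changes none of the sign conclusions.
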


\begin{proof} The proof is similar to the computation in Propositions \ref{prop:nondegenerate}, \ref{prop:linearization}, or just follows from the standard second variation formula on $S^2$. 
\end{proof}

Let $\mathcal{C}_B$ denote the Bott family of once-round great circles based at $a$
which contains $x_{1,*}$. By Lemma \ref{lem:hessian-once-round}, the zero mode $\sigma_B$ of the Hessian at
$x_{1,*}$ is precisely the normal $n=2$ Jacobi field. Geometrically, 
\[
T_{x_{1,*}}\mathcal{C}_B=\operatorname{Span}\{\sigma_B\}.
\]
If $E^u$ denotes the negative eigenspace and $E^s$ the positive eigenspace, then
\[
T_{x_{1,*}}\Omega_a S^2
=
E^u\oplus \operatorname{Span}\{\sigma_B\}\oplus E^s,
\qquad
\dim E^u=1.
\]
We fix a local codimension-one slice transverse to the Bott family, that is
\[
\mathcal S^-\subset \Omega_aS^2,\qquad
T_{x_{1,*}}\mathcal S^-=E^u\oplus E^s.
\]

Now in the transverse slice $\mathcal S^-$, the Hessian at $x_{1,*}$ has exactly one
negative eigenvalue and no kernel. Then we obtain
a one-dimensional local strong unstable manifold
\[
W^{uu}_{\mathrm{loc}}(x_{1,*};\mathcal S^-)
\]
tangent to $E^u$ at $x_{1,*}$. Since it is one-dimensional, removing the endpoint
$x_{1,*}$ leaves exactly two local connected components, which we denote by $x_+$ and $x_-$.

\begin{prop}\label{prop:S2-global-existence}
Each of the two local trajectory germs $x_+$ and $x_-$ extends uniquely to a smooth
finite-energy solution
\[
x_\pm:\R\times[0,\tau]\longrightarrow S^2
\]
of
\[
x_s=\nabla_t x_t,\qquad x(s,0)=x(s,\tau)=a,
\]
such that
\begin{align*}
\|x_\pm(s,\cdot)-x_{1,*}\|_{H^1([0,\tau])}&\longrightarrow 0
\qquad (s\to-\infty),\\
\|x_\pm(s,\cdot)-x_0\|_{H^1([0,\tau])}&\longrightarrow 0
\qquad (s\to+\infty).
\end{align*}
\end{prop}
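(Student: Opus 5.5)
We follow each germ forward in $s$ using the Eells--Sampson--Hamilton theory, and use the energy to decide where it ends up. The only closed geodesics available as limits are constant loops and iterated great circles, and the energy levels sort them out. The backward end ($s\to-\infty$) is given by construction.

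\textbf{Backward end.} By construction, $x_\pm$ lie on the one-dimensional strong unstable manifold $W^{uu}_{\mathrm{loc}}(x_{1,*};\mathcal S^-)$. The negative Hessian mode is the normal $n=1$ mode, with eigenvalue proportional to $(\pi/\tau)^2-(2\pi/\tau)^2=-3(\pi/\tau)^2$, and it is separated from the kernel by a spectral gap. The Lyapunov--Perron theorem (Lemma~\ref{lem:Lyapunov--Perron}) applied in reverse time therefore gives exponential $H^1$ convergence $x_\pm(s)\to x_{1,*}$ as $s\to-\infty$. One caveat: we have to check that the strong unstable manifold of the full flow stays in $\mathcal S^-$ only up to the Bott direction. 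Its tangent line $E^u$ is transverse to $T\mathcal C_B$, so the germ is still well defined as a solution of the unrestricted heat flow.

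\textbf{Forward extension.} Pick $s_0$ with $x_\pm(s_0)$ smooth and $E_{S^2}(x_\pm(s_0))<E_{S^2}(x_{1,*})=2\pi^2/\tau$. The energy strictly decreases along the strong unstable direction, so such $s_0$ exists. For the one-dimensional source $[0,\tau]$, the Dirichlet harmonic map heat flow into $S^2$ exists globally and smoothly. This does not come from Theorem~\ref{thm:eells-sampson-closed}, since curvature is positive, but from the fact that in one dimension the energy bound controls $\|x_t\|_{L^2}$. The equation $x_s=x_{tt}+|x_t|^2x$ is then subcritical, and standard parabolic bootstrap gives $H^2$ bounds on finite intervals, so no blow-up occurs. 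Uniqueness follows from the Lipschitz dependence of the semilinear equation in $H^1$. Finite energy follows from
\[
\int_{\R}\|\partial_s x\|^2_{L^2_t}\,\dd s = E(x_{1,*})-\lim_{s\to\infty}E(x(s))\le 2\pi^2/\tau .
\]

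\textbf{Forward limit.} The monotone energy gives $\int\|\partial_sx\|^2<\infty$, so along a sequence $s_n\to\infty$ we have $\partial_s x(s_n)\to0$ in $L^2$. With the uniform $H^1$ bound, compactness yields subsequential $H^1$ convergence (after the $H^2$ bootstrap) to a geodesic loop based at $a$. Every geodesic loop at $a$ on the unit sphere is either constant or a $k$-fold great circle, with energy $2k^2\pi^2/\tau$. The limit energy is strictly below $2\pi^2/\tau$, so the limit must be the constant loop $x_0$.

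\textbf{Full convergence.} The constant loop is a nondegenerate minimum: its Hessian is $-\partial_t^2$ with Dirichlet conditions, which is positive definite. A Łojasiewicz argument, or simply the stable-manifold part of Lemma~\ref{lem:Lyapunov--Perron}, then upgrades the subsequential convergence to full exponential convergence $x_\pm(s)\to x_0$ in $H^1$.

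\textbf{Main obstacle.} The hard step is the backward asymptotics at the degenerate Morse--Bott critical point. There we must check that the germ picked out inside $\mathcal S^-$ genuinely solves the heat flow and converges to $x_{1,*}$ itself, rather than drifting along $\mathcal C_B$. I would handle this by invoking the strong unstable manifold theorem for normally hyperbolic critical manifolds, using the gap between the eigenvalue $-3(\pi/\tau)^2$ and $0$. This gives exponential convergence to the specific point $x_{1,*}$.
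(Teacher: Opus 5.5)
Your proposal is correct and follows essentially the same route as the paper: the unstable-manifold germ at $x_{1,*}$ gives the backward end, the forward heat flow exists globally from a point with energy below $E_{S^2}(x_{1,*})=2\pi^2/\tau$, and comparison with the energies $2k^2\pi^2/\tau$ of the nonconstant based geodesic loops forces the forward limit to be the constant loop $x_0$. Two small remarks. First, the paper upgrades to full $H^1$ convergence simply from $E_{S^2}(x(s,\cdot))\downarrow 0$ and the Poincar\'e inequality, so no {\L}ojasiewicz or stable-manifold step is needed. Second, at the upper end the justification should be the strong unstable manifold of the normally hyperbolic Bott family, as in your last paragraph, rather than Lemma~\ref{lem:Lyapunov--Perron}; that lemma requires a spectral gap at $0$ and so does not apply at $x_{1,*}$ because of the Bott kernel.
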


\begin{proof}
We prove the statement for one of the two local branches; the other one is identical.

First consider the negative end $s\rightarrow-\infty$. Applying the parabolic unstable-manifold theorem to the transverse slice $\mathcal S^-$, the local strong unstable branch is represented by
a solution defined for $s\ll0$, and it converges exponentially to $x_{1,*}$: for some $\delta>0$,
\[
\|x(s,\cdot)-x_{1,*}\|_{H^1_t}\le Ce^{\delta s}
\qquad (s\to-\infty).
\]

Then consider the positive end $s\rightarrow+\infty$. Choose now a finite time $s_0$ on this local branch and use $x(s_0,\cdot)$ as initial
data for the 
flow
\[
x_s=\nabla_t x_t,\qquad x(s,0)=x(s,\tau)=a.
\]
Since the source $[0,\tau]$ and the target $S^2$ are compact, by the standard continuation criterion
for semilinear parabolic equations, the endpoint-fixed heat flow on $S^2$ has a unique smooth solution on $[s_0,+\infty)$. The extended solution has finite energy:
\[
\int_{s_0}^{+\infty}\|x_s(s,\cdot)\|_{L^2_t}^2\,\dd s=
E_{S^2}(x(s_0,\cdot))-\lim_{s\to+\infty}E_{S^2}(x(s,\cdot))\le E_{S^2}(x(s_0,\cdot)).
\]
Together with the exponential convergence on the negative end, this gives
\[
\int_{-\infty}^{+\infty}\|x_s(s,\cdot)\|_{L^2_t}^2\,\dd s<\infty.
\]

It remains to identify the forward limit and to prove strong $H^1$-convergence. Since
$E(x(s,\cdot))$ is decreasing and bounded below, the limit
\[
E_{S^2,\infty}:=\lim_{s\to+\infty}E_{S^2}(x(s,\cdot))
\]
exists. Also, finite energy implies that there is a sequence $s_j\to+\infty$ such that
\[
\|x_s(s_j,\cdot)\|_{L^2_t}\longrightarrow0.
\]
The uniform energy bound, the compactness of $S^2$, and parabolic estimates on the
shifted strips $[s_j-1,s_j+1]\times[0,\tau]$ imply, after passing to a subsequence, that
\[
x(s_j+\sigma,t)\longrightarrow x_\infty(t)
\]
smoothly on compact subsets of $(-1,1)\times[0,\tau]$, with the limit independent of
$\sigma$. Passing to the limit in the equation gives
\[
\nabla_t (x_\infty)_t=0,\qquad x_\infty(0)=x_\infty(\tau)=a.
\]
Therefore $x_\infty$ is a based geodesic loop on $S^2$. It follows that $x_{\infty}$ has length $2m\pi$ for some $m\in\Z_{\ge1}$ if $x_{\infty}$ is nonconstant. Thus,
\[
\frac12\int_0^\tau |(x_\infty)_t|^2\,\dd t
=\frac{(2m\pi)^2}{2\tau}
\ge
\frac{2\pi^2}{\tau}=E_{S^2}(x_{1,*}).
\]
On the other hand,  for every finite
$s_0$ on the unstable branch, we have 
\[
E_{S^2}(x(s_0,\cdot))<E_{S^2}(x_{1,*})\le E_{S^2,\infty}.
\]
Contradiction. Thus the limiting geodesic must be constant:
\[
x_\infty=x_0,\qquad x_0(t)\equiv a.
\]
Thus every subsequential forward limit is the constant loop $x_0$, and consequently
\[
E_\infty=E(x_0)=0.
\]

Finally, since 
\[
\|x_t(s,\cdot)\|_{L^2_t}^2=2E(x(s,\cdot))\longrightarrow0.
\]
and 
\[
x(s,t)-x_0(t)=x(s,t)-a=\int_0^t x_t(s,r)\,dr.
\]
We have
\[
\|x(s,\cdot)-x_0\|_{L^2_t}
\le
C_\tau \|x_t(s,\cdot)\|_{L^2_t}
\longrightarrow0.
\]
Furthermore, we obtain
\[
\|x(s,\cdot)-x_0\|_{H^1_t}\longrightarrow0
\qquad (s\to+\infty). \qedhere
\]
\end{proof}

Thus, the slice moduli space $\mathcal{M}_0^{\Sigma}$ consists of exactly two elements, denoted by $[x_+]$, $[x_-]$. Equivalently, the $d=0$ flow on $\Sigma$ has exactly two finite-energy trajectories from $\gamma_{1,0}$ to $\gamma_{0,0}$, modulo $s$-translation, denoted by $[\nu_{0,+}]$, $[\nu_{0,-}]$.

\subsubsection{The full $d=0$ system}\label{subsec:full-d=0}
For the full system at $d=0$, we introduce real variables
\[
z_0=q_0+ip_0,
\qquad
z_1=p_1-iq_1,
\qquad
z_2=p_2-iq_2.
\]
Then $q=(q_0,q_1,q_2)\in \R^3$ and $p=(p_0,p_1,p_2)\in \R^3$, and the quadric equation becomes
\begin{equation}\label{eq:qp-constraints}
q\cdot q-p\cdot p=1,
\qquad
q\cdot p=0.
\end{equation}

At $d=0$, the fixed-endpoint-conditions are
\begin{equation}\label{eq:qp-boundary}
q(s,0)=q(s,\tau)=a,
\qquad
p(s,0)=p(s,\tau)=0.
\end{equation}
The slice $\Sigma=Q(S^2)$ is exactly given by $p\equiv 0$. At $\theta=\pi$, 
   \[F_{\pi}(q,p)=\frac{1}{2}\int_0^{\pi}(|q_t|^2-|p_t|^2)\,\dd t.\]

\begin{lem}\label{lem:qp-full-flow}
At $\theta=\pi$, the full downward $L^2$-gradient equation in $(q,p)$ becomes
\begin{equation}
\label{eq:qp-full-flow}
\begin{cases}
 q_s=q_{tt}+\lambda q+\mu p,\\
 p_s=-p_{tt}-\lambda p+\mu q,
\end{cases}
\end{equation}
where $\lambda,\mu$ are determined by
\begin{equation}
\label{eq:lambda-mu-pi}
\lambda M=q\cdot(q_s-q_{tt})- p\cdot(p_s+p_{tt}),
\qquad\mu M=p\cdot(q_s-q_{tt})+q\cdot(p_s+p_{tt}),
\end{equation}
with
\[
 M:=|q|^2+|p|^2\geq 1.
\]
Along classical trajectories,
\[
\frac{\dd}{\dd s}F_\pi(q(s,\cdot),p(s,\cdot))=
-\int_0^\tau \bigl(|q_s|^2+|p_s|^2\bigr)\,\dd t .
\]
\end{lem}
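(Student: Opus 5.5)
The plan is to rewrite the $\theta=\pi$ flow \eqref{eq:flow} in the real variables $(q,p)$. It then becomes a constrained $L^2$-gradient flow in $\R^6$, with $\lambda,\mu$ acting as Lagrange multipliers.

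\textbf{Step 1 (real coordinates).} The substitution $z_0=q_0+ip_0$, $z_j=p_j-iq_j$ for $j=1,2$ is an $\R$-linear isometry $\R^6\to\C^3$. Each coordinate pair $(q_j,p_j)$ is sent to a complex number of modulus $\sqrt{q_j^2+p_j^2}$. Hence the real part of the ambient Hermitian metric $h$ becomes the Euclidean inner product on $\R^6$, and $|z|^2=|q|^2+|p|^2=M$.

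Expanding $-z_0^2+z_1^2+z_2^2$ in these variables gives real part $-(q\cdot q-p\cdot p)$ and imaginary part $-2\,q\cdot p$. So the quadric $\langle z,z\rangle_\eta=-1$ is exactly \eqref{eq:qp-constraints}. On it $M=1+2|p|^2\ge 1$.

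The same expansion applied to $z_t$ gives $\re(z_t^T\eta z_t)=-|q_t|^2+|p_t|^2$. Therefore
\[
F_\pi=\re(-S)=\tfrac12\int_0^\tau\bigl(|q_t|^2-|p_t|^2\bigr)\,\dd t .
\]
The endpoints $a=(1,0,0)$ translate into \eqref{eq:qp-boundary}.

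\textbf{Step 2 (constrained gradient).} Consider $F_\pi$ on the ambient space $\R^6$ with the $L^2$ metric and Dirichlet variations. Integrating by parts, its downward gradient is $(q_{tt},-p_{tt})$.

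The constraint manifold has pointwise normal space spanned by the gradients of $q\cdot q-p\cdot p$ and of $q\cdot p$. These are proportional to $(q,-p)$ and $(p,q)$. The two vectors are Euclidean-orthogonal, since $q\cdot p-p\cdot q=0$, and each has squared norm $M$.

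The downward gradient on the constrained path space is the tangential projection of the ambient one. Subtracting the normal part gives exactly \eqref{eq:qp-full-flow}, with real multipliers $\lambda,\mu$. This is the same projection $P_z$ as in Example \ref{eg:H^2-PL}, now written in real form.

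To identify the multipliers, I will pair the equation with $(q,-p)$ and with $(p,q)$. Tangency of $(q_s,p_s)$ to the constraints forces these pairings to vanish, because differentiating \eqref{eq:qp-constraints} in $s$ gives $q\cdot q_s-p\cdot p_s=0$ and $p\cdot q_s+q\cdot p_s=0$. Using orthogonality of the two normals and their common norm $M$ then yields \eqref{eq:lambda-mu-pi}. As a cross-check, I will also confirm directly from \eqref{eq:flow} that $e^{i\pi}\eta\bar z_{tt}$ corresponds to $(q_{tt},-p_{tt})$ under the isometry.

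\textbf{Step 3 (energy identity).} Differentiate under the integral:
\[
\frac{\dd}{\dd s}F_\pi=\int_0^\tau\bigl(q_t\cdot q_{st}-p_t\cdot p_{st}\bigr)\,\dd t .
\]
Integrate by parts in $t$. The boundary terms vanish because $q_s=p_s=0$ at $t=0,\tau$ by \eqref{eq:qp-boundary}. This gives
\[
\frac{\dd}{\dd s}F_\pi=-\int_0^\tau\bigl(q_{tt}\cdot q_s-p_{tt}\cdot p_s\bigr)\,\dd t .
\]
Now substitute $q_{tt}=q_s-\lambda q-\mu p$ and $-p_{tt}=p_s+\lambda p-\mu q$. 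The multiplier terms become $\lambda(q\cdot q_s-p\cdot p_s)+\mu(p\cdot q_s+q\cdot p_s)$, which is zero by the tangency relations from Step 2. What remains is $-\int_0^\tau(|q_s|^2+|p_s|^2)\,\dd t$.

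\textbf{Main obstacle.} The main risk is sign bookkeeping. This includes the orientation convention $z_j=p_j-iq_j$, the sign $e^{i\pi}=-1$, and the fact that $\bar z$ enters \eqref{eq:flow}. These must all combine to the $+q_{tt}$, $-p_{tt}$ pattern of \eqref{eq:qp-full-flow}. Regularity is not an issue: for classical trajectories the integrations by parts are justified directly.
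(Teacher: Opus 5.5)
Your proposal is correct and follows essentially the same route as the paper. You compute the ambient downward gradient $(q_{tt},-p_{tt})$, project off the normals $(q,-p)$ and $(p,q)$, identify $\lambda,\mu$ by pairing, and get the energy identity by integrating by parts and using the $s$-derivatives of the constraints. The only difference is order: the paper pairs $(q_s-q_{tt},\,p_s+p_{tt})$ with the normals to get \eqref{eq:lambda-mu-pi} directly, and only afterwards uses tangency to reach the closed form $\lambda M=-(q\cdot q_{tt}+p\cdot p_{tt})$, $\mu M=q\cdot p_{tt}-p\cdot q_{tt}$, which you obtain first; your isometry check and cross-check against \eqref{eq:flow} are extra verification the paper leaves implicit.
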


\begin{proof}
For a tangent variation $(\dot q,\dot p)$, the constraints give
\[
q\cdot \dot q-p\cdot \dot p=0,\qquad
q\cdot \dot p+p\cdot \dot q=0.
\]
Moreover, using the fixed endpoint condition,
\[
\dd F_\pi(q,p)(\dot q,\dot p)=\int_0^\tau (q_t\cdot \dot q_t-p_t\cdot \dot p_t)\,\dd t
=\int_0^\tau (-q_{tt}\cdot \dot q+p_{tt}\cdot \dot p)\,\dd t .
\]
Thus the ambient downward gradient is $(q_{tt},-p_{tt})$. Considering the normal directions
\[
(q,-p), \qquad (p,q),
\]
the full downward gradient of $(q,p)$ has the form
\[
(q_s,p_s)=(q_{tt},-p_{tt})+\lambda(q,-p)+\mu(p,q).
\]
Here $\lambda,\mu$ can be determined as follows. Along this flow, using $q\cdot p=0$, we have
\begin{align*}
    &q\cdot(q_s-q_{tt})-p\cdot(p_s+p_{tt})=q\cdot(\lambda q+\mu p)-p\cdot(-\lambda p+\mu q)=\lambda(|q|^2+|p|^2),\\
    &p\cdot(q_s-q_{tt})+q\cdot(p_s+p_{tt})=p\cdot(\lambda q+\mu p)+q\cdot(-\lambda p+\mu q)=\mu(|q|^2+|p|^2).
\end{align*}

Set
\[
 M=|q|^2+|p|^2\geq |q|^2-|p|^2=1.
\]
Taking the $s$-derivative of \eqref{eq:qp-constraints}, we have
\[
 q\cdot q_s-p\cdot p_s=0,
 \qquad
 q\cdot p_s+p\cdot q_s=0,
\]
Then we can rewrite \eqref{eq:lambda-mu-pi} as
\[
 \lambda M
 =
 -(q\cdot q_{tt}+p\cdot p_{tt}),
 \qquad
 \mu M
 =
 q\cdot p_{tt}-p\cdot q_{tt}.
\]

Finally, along a classical solution,
\begin{align*}
\frac{\dd}{\dd s}F_\pi(q(s,\cdot),p(s,\cdot))
=~&\int_0^\tau (-q_{tt}\cdot q_s+p_{tt}\cdot p_s)\,\dd t.\\
=~&-\int_0^\tau (|q_s|^2+|p_s|^2)\,\dd t+
\int_0^\tau\lambda(q\cdot q_s-p\cdot p_s)+\mu(p\cdot q_s+q\cdot p_s)\,\dd t\\
=~&-\int_0^\tau \bigl(|q_s|^2+|p_s|^2\bigr)\,\dd t. \qedhere
\end{align*}
\end{proof}

\begin{prop}\label{prop:morse-end-C0}
Let $(q,p)$ be a classical finite-energy solution of
\eqref{eq:qp-boundary}--\eqref{eq:qp-full-flow} on
$[S_*,+\infty)\times [0,\tau]$.  Assume that
\begin{equation}\label{eq:H1-positive-end}
\Norm{(q(s,\cdot),p(s,\cdot))-(a,0)}_{H^1([0,\tau])}\to 0
\qquad\text{as }s\to +\infty .
\end{equation}
Then for every integer $k\ge 0$, there exist constants
$C_k>0$, $\delta>0$, and $S_0\ge S_*$ such that
\[
\Norm{(q(s,\cdot),p(s,\cdot))-(a,0)}_{C^k([0,\tau])}
\le C_k e^{-\delta(s-S_0)}
\qquad (s\ge S_0).
\]
In particular,
\[
\Norm{\lambda(s,\cdot)}_{C^k([0,\tau])}
+
\Norm{\mu(s,\cdot)}_{C^k([0,\tau])}
\le C_k' e^{-\delta(s-S_0)}
\qquad (s\ge S_0).
\]
\end{prop}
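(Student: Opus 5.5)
The plan is to reduce the claim to the exponential decay statement of Proposition \ref{prop:expconv} at the critical path $\gamma_{0,0}$, and then bootstrap the regularity with parabolic estimates.

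\textbf{Step 1: the spectral gap at $(a,0)$.} At $d=0$ the critical path $\gamma_{0,0}$ is the constant path $(a,0)$. In the $(q,p)$ variables of Lemma \ref{lem:qp-full-flow}, the linearization of \eqref{eq:qp-full-flow} at $(a,0)$ is computed as follows. Since $q_{tt}=0$ and $p=0$ there, the multipliers $\lambda,\mu$ vanish to first order, except for terms fixed by the constraint. Tangent variations satisfy $a\cdot\dot q=0$ and $a\cdot \dot p=0$. We therefore expect the linearized system to be
\[
\dot q_s=\dot q_{tt},\qquad \dot p_s=-\dot p_{tt},
\]
acting on Dirichlet fields in $a^\perp$. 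This is the operator $A_{0,0,\pi}$ of Proposition \ref{prop:linearization}: here $c_{0,0}=0$ and $|\gamma_{0,0}|=1$, so $B=-\partial_t^2$ in both the tangential and normal blocks. Hence the real spectrum of $A_{0,0,\pi}$ is $\{\pm (n\pi/\tau)^2\}$. This gives an exponential dichotomy with stable space equal to the $q$-directions, unstable space equal to the $p$-directions, and gap $\mu_*=(\pi/\tau)^2>0$. Note that $\gamma_{0,0}$ itself is nondegenerate; only $\gamma_{1,0}$ is Bott-degenerate.

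\textbf{Step 2: exponential decay in $H^1$.} Write the solution in a chart around $(a,0)$, for instance as a graph over the tangent space. This is valid for $s$ large by \eqref{eq:H1-positive-end} together with $H^1\hookrightarrow C^0$. The equation then takes the form $\partial_s\xi+A\xi=N(\xi)$. The nonlinearity $N$ contains the terms $\lambda q$, $\mu p$ and the chart corrections. It is quadratic, with $N$ involving at most $\xi\,\xi_{tt}$ and $\xi_t^2$, so it maps a small $H^1\cap H^2$ ball to $L^2$ with the Lipschitz bound required in Lemma \ref{lem:Lyapunov--Perron}, taking $H=L^2$ and the strong space $H^2\cap H^1_0$. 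Lemma \ref{lem:Lyapunov--Perron}, applied exactly as in the proof of Proposition \ref{prop:expconv}, then gives $\|\xi(s)\|_{H^1}\le Ce^{-\delta(s-S_0)}$ for any $\delta<\mu_*$.

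\textbf{Step 3: bootstrap to $C^k$ (the main obstacle).} The system is backward parabolic in $p$, so the usual forward smoothing cannot be applied to the whole system. The obstacle is therefore upgrading $H^1$ decay to $C^k$ decay. My plan is to work on unit strips $[s-1,s+1]\times[0,\tau]$ and exploit the splitting in two directions.
\begin{itemize}
\item For the $q$-component, apply interior-in-$s$ forward parabolic estimates, using the data on $[s-1,s]$.
\item For the $p$-component, apply backward parabolic estimates, i.e.\ forward estimates in the reversed variable $-s$, using the data on $[s,s+1]$. This is available because the solution is assumed classical on all of $[S_*,\infty)$.
\end{itemize}
The coupling terms $\lambda q+\mu p$ and $\mu q-\lambda p$ are lower order once the multipliers are rewritten via $\lambda M=-(q\cdot q_{tt}+p\cdot p_{tt})$ and $\mu M=q\cdot p_{tt}-p\cdot q_{tt}$. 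Their $q_{tt}$ and $p_{tt}$ terms are multiplied by quantities that vanish at $(a,0)$, after using $a\cdot q_{tt}$-type constraint identities, so they can be absorbed by smallness. Each iteration of Schauder or $L^2$-energy estimates then controls $\|\cdot\|_{C^{k+1}}$ on a slightly shorter strip by $\|\cdot\|_{C^k}$ (initially $H^1$) on a larger strip. The exponential factor survives with a shifted $S_0$, and Sobolev embedding converts the resulting $H^{k+2}$ bounds into $C^k$ bounds.

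\textbf{Step 4: bounds on the multipliers.} The $C^k$ bounds on $\lambda$ and $\mu$ follow from the expressions above. The denominator satisfies $M\ge1$, and each numerator is bilinear in $(q,p)$ and its second $t$-derivatives. Since $q_{tt}$ and $p_{tt}$ decay exponentially in $C^k$ and $q,p$ are bounded, the multipliers inherit the same exponential rate.
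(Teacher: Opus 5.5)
This is essentially the paper's proof. The paper also works in a chart at $(a,0)$ and eliminates $\lambda,\mu$ to get $v_s+A_\infty v=R(v)$ with $A_\infty(\xi,\zeta)=(-\xi_{tt},\zeta_{tt})$, which is exactly your forward-$q$/backward-$p$ linearization; it then uses $0\notin\operatorname{spec}A_\infty$ and Lemma~\ref{lem:Lyapunov--Perron} for $H^1$ exponential decay, bootstraps on unit strips to $H^m$ and hence $C^k$, and reads off $\lambda,\mu$ from their explicit formulas. The paper only calls the bootstrap standard, so your attention to its mixed forward/backward character goes beyond it; note, however, that the multipliers feed $p_{tt}$ into the $q$-equation and $q_{tt}$ into the $p$-equation at top order (with small coefficients), so absorption does not close as you wrote it, with one-sided strips shifted in opposite directions for $q$ and $p$, and instead closes cleanly with a two-sided cutoff $\chi(s)$ and the identity $\|(\partial_s+A_\infty)(\chi v)\|_{L^2}^2=\|\partial_s(\chi v)\|_{L^2}^2+\|A_\infty(\chi v)\|_{L^2}^2$, which holds for the self-adjoint $s$-independent $A_\infty$ regardless of the signs of its spectrum.
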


\begin{proof}
By the strong $H^1$-asymptotics and the embedding
$H^1([0,\tau])\hookrightarrow C^0([0,\tau])$, after increasing $S_0$ we can
work in the fixed local chart near $(a,0)$ used above. Write
\[
 (q,p)=\Phi(v),\qquad v=(\xi,\zeta),\qquad \Phi(0)=(a,0).
\]
The boundary conditions give
\[
 \xi(s,0)=\xi(s,\tau)=0,
 \qquad
 \zeta(s,0)=\zeta(s,\tau)=0 .
\]

We first eliminate $\lambda$ and $\mu$. Lemma \ref{lem:qp-full-flow} says
\[
 \lambda
 =
 -\frac{q\cdot q_{tt}+p\cdot p_{tt}}{|q|^2+|p|^2},
 \qquad
 \mu
 =
 \frac{q\cdot p_{tt}-p\cdot q_{tt}}{|q|^2+|p|^2},\qquad
 |q|^2+|p|^2\geq 1 .
\]
Substituting these expressions into \eqref{eq:qp-boundary}--\eqref{eq:qp-full-flow} gives a closed equation
for $v$ of the form
\[
 v_s+A_\infty v=R(v),
\]
where
\[
 A_\infty(\xi,\zeta)=(-\xi_{tt},\zeta_{tt})
\]
with Dirichlet boundary conditions on $[0,\tau]$, and $R$ is the
smooth nonlinear remainder. In particular,
\[
 DR(0)=0 .
\]
Since $0\notin \operatorname{spec}(A_\infty)$, by  Lemma \ref{lem:Lyapunov--Perron}, there exists constants $C>0$, $\delta>0$.
\[\|v(s,\cdot)\|_{H^1([0,\tau])}<Ce^{-\delta(s-s_0)}\]
Applying the standard parabolic bootstrapping estimates on unit strips, the $H^1$-exponential decay upgrades to $H^m$-exponential decay for every $m\ge 1$. Therefore,
after increasing $S_0$ and the constants if necessary, we obtain
\[
\|v(s,\cdot)\|_{H^m([0,\tau])}
 \le C_m e^{-\delta(s-S_0)},
 \qquad s\ge S_0 .
\]
Since the chart is smooth and fixes the origin, this implies, by Sobolev embedding on $[0,\tau]$, that for every integer $k\geq 0$,
\[
 \|(q(s,\cdot),p(s,\cdot))-(a,0)\|_{C^k([0,\tau])}
 \leq
 C_k e^{-\delta(s-S_0)}
 \qquad
 (s\geq S_0).
\]

It remains only to translate this decay into decay of the multipliers. Using the
explicit formulas above and the $C^{k+2}$-decay of $(q,p)-(a,0)$, we have
\[
 \|\lambda(s,\cdot)\|_{C^k([0,\tau])}
 +
 \|\mu(s,\cdot)\|_{C^k([0,\tau])}
 \leq
 C'_k e^{-\delta(s-S_0)}
 \qquad
 (s\geq S_0).\qedhere
\]
\end{proof}

\begin{lem}\label{lem:forward-tail-rigidity}
Let $(q,p)$ be a classical finite-energy solution of \eqref{eq:qp-boundary}--\eqref{eq:qp-full-flow} on $[S_*,+\infty)\times [0,\tau]$, and assume that
\[
\Norm{(q(s,\cdot),p(s,\cdot))-(a,0)}_{H^1([0,\tau])}\to 0
\qquad\text{as }s\to +\infty.
\]
Then there exists $S_0\ge S_*$ such that
\[
p(s,t)\equiv 0
\qquad\text{for all }~s\ge S_0,
\quad 0\le t\le \tau.
\]
\end{lem}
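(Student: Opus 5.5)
The plan is a direct energy argument on the $p$-component, using the constraint $q\cdot p=0$ to kill the coupling term, with an equivariance argument as a fallback. The idea is that the $p$-equation in \eqref{eq:qp-full-flow} is a \emph{backward} heat equation to leading order. Along a forward-decaying tail, its $L^2$-norm should therefore be non-decreasing, and the only way to both be non-decreasing and tend to zero is to vanish identically.

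\textbf{Step 1 (decay of the coefficients).} Apply Proposition \ref{prop:morse-end-C0}. It gives $S_0\ge S_*$ and $\delta>0$ such that $(q,p)\to(a,0)$ exponentially in every $C^k([0,\tau])$ norm, and $\norm{\lambda(s,\cdot)}_{C^0}\le C'e^{-\delta(s-S_0)}$ for $s\ge S_0$. Enlarging $S_0$, we may assume
\[
\sup_{t}|\lambda(s,t)|\le \frac{\pi^2}{2\tau^2},\qquad s\ge S_0 .
\]
The solution is classical, and by the proposition it is smooth in $t$ with exponential control. This justifies differentiating $\norm{p(s,\cdot)}_{L^2_t}^2$ in $s$ and integrating by parts in $t$.

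\textbf{Step 2 (energy identity for $p$).} Set $\Phi(s):=\frac12\int_0^\tau|p(s,t)|^2\,\dd t$. From $p_s=-p_{tt}-\lambda p+\mu q$ we get
\[
\Phi'(s)=\int_0^\tau\bigl(-p\cdot p_{tt}-\lambda|p|^2+\mu\,(p\cdot q)\bigr)\,\dd t .
\]
The key observation is that the constraint \eqref{eq:qp-constraints} gives $p\cdot q\equiv0$ pointwise, so the $\mu$-term, the only genuinely coupled term, drops out. Integrating by parts with the Dirichlet conditions $p(s,0)=p(s,\tau)=0$ from \eqref{eq:qp-boundary} yields
\[
\Phi'(s)=\norm{p_t}_{L^2_t}^2-\int_0^\tau\lambda|p|^2\,\dd t .
\]
The Dirichlet Poincar\'e inequality gives $\norm{p_t}_{L^2}^2\ge(\pi/\tau)^2\norm{p}_{L^2}^2$. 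Combined with Step 1, this gives $\Phi'(s)\ge \frac{\pi^2}{\tau^2}\Phi(s)\ge0$ for $s\ge S_0$.

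\textbf{Step 3 (conclusion).} Gronwall's inequality gives $\Phi(s)\ge e^{\pi^2(s-S_0)/\tau^2}\Phi(S_0)$ for $s\ge S_0$. On the other hand, $\Phi(s)\to0$ as $s\to+\infty$ by the $H^1$ (hence $L^2$) convergence $p(s,\cdot)\to0$. Hence $\Phi(S_0)=0$, and by the same argument started at any $s\ge S_0$, $\Phi(s)=0$. Therefore $p\equiv0$ on $[S_0,+\infty)\times[0,\tau]$.

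I expect the main obstacle to be justifying Step 1 uniformly, namely that $\lambda$ becomes small in sup-norm. This rests entirely on Proposition \ref{prop:morse-end-C0}, whose $C^2$-decay of $(q,p)$ feeds the explicit formula for $\lambda$. If one wanted to avoid the full $C^k$ bootstrap, a fallback is the equivariance argument of Proposition \ref{prop:equivariant-local-stable}. In that argument, $\rho$ acts by $(q,p)\mapsto(q,-p)$ and the flow is equivariant. At $(a,0)$ the linearization $A_\infty(\xi,\zeta)=(-\xi_{tt},\zeta_{tt})$ has stable space $\{\zeta=0\}$. The Lyapunov--Perron graph $\zeta=f(\xi)$ from Lemma \ref{lem:Lyapunov--Perron} is then $\rho$-invariant, which forces $f\equiv0$, so every decaying tail lies in $\{p=0\}$.
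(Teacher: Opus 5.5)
Your proposal is correct and follows the paper's proof essentially step for step. You use Proposition~\ref{prop:morse-end-C0} to get $|\lambda|\le \pi^2/(2\tau^2)$ on a forward tail, compute $\frac{\dd}{\dd s}\frac12\|p\|_{L^2_t}^2=\|p_t\|_{L^2_t}^2-\int_0^\tau\lambda|p|^2\,\dd t$ using $q\cdot p=0$ and the Dirichlet conditions, and conclude by Poincar\'e and Gronwall against $\|p(s,\cdot)\|_{L^2_t}\to0$; the equivariance fallback is not needed here.
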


\begin{proof}
By Proposition \ref{prop:morse-end-C0}, after increasing $S_0$ we have
\[
|\lambda(s,t)|\leq \frac{\pi^2}{2\tau^2}
\qquad
\text{for all }~s\geq S_0,\; t\in[0,\tau].
\]
Set
\[
 E_p(s):=\frac12\int_0^\tau |p(s,t)|^2\,\dd t .
\]
Using
\[
 p_s=-p_{tt}-\lambda p+\mu q,
\]
the boundary condition $p(s,0)=p(s,\tau)=0$, and the constraint $q\cdot p=0$,
we obtain
\[
\begin{aligned}
 E_p'(s)
 &=\int_0^\tau p\cdot p_s\,\dd t                                      =\int_0^\tau p\cdot(-p_{tt}-\lambda p+\mu q)\,\dd t                    \\
 &=\int_0^\tau |p_t|^2\,\dd t-\int_0^\tau \lambda |p|^2\,\dd t .
\end{aligned}
\]
Hence, by the Dirichlet Poincaré inequality,
\[
\begin{aligned}
E_p'(s)&\geq
\left(\frac{\pi^2}{\tau^2}-\frac{\pi^2}{2\tau^2}\right)
\int_0^\tau |p|^2\,\dd t=
\frac{\pi^2}{\tau^2} E_p(s).
\end{aligned}
\]
Thus, for $s\geq S_0$,
\[
 E_p'(s)\geq cE_p(s),
 \qquad
 c:=\frac{\pi^2}{\tau^2}>0.
\]
If $E_p(s_1)>0$ for some $s_1\geq S_0$, then Gronwall's inequality gives
\[
 E_p(s)\geq e^{c(s-s_1)}E_p(s_1)
 \qquad
 (s\geq s_1),
\]
which contradicts $E_p(s)\to0$ as $s\to+\infty$. Therefore
\[
 E_p(s)=0
 \qquad
 (s\geq S_0),
\]
and hence $p(s,t)\equiv0$ on the forward tail.
\end{proof}

Now tail rigidity gives $p\equiv 0$ on a half-cylinder. To conclude that the entire trajectory lies in the slice, we are going to use the backward-uniqueness result Proposition \ref{prop:backward-uniqueness-parabolic-system} again.

Lemma \ref{lem:forward-tail-rigidity} says that once a trajectory has entered the slice on a forward tail, we have $p\equiv 0$ there and $q$ solves the smooth sphere heat flow on the compact target $ S^2$. Expanding the full system \eqref{eq:qp-full-flow} around that slice trajectory and using the smooth dependence of $(\lambda,\mu)$ on $(q,p,q_t,p_t,q_{tt},p_{tt})$, we can rewrite the $p$-equation on the whole cylinder as
\[
p_s+p_{tt}=A_1(s,t)p_t+A_0(s,t)p.
\]
Therefore, we can apply Proposition \ref{prop:backward-uniqueness-parabolic-system} to $w=p$ and $[s_0,S_0]$ for any $s_0<S_0$, then let $s_0\rightarrow-\infty$. Then we obtain the following result.

\begin{prop}\label{prop:global-reduction-slice}
Let $z$ be a finite-energy classical $d=0$ trajectory from $\gamma_{1,0}$ to $\gamma_{0,0}$.
Then $z(s,t)\in \Sigma$ for all $(s,t)\in \R\times [0,\tau]$.
Hence the full ambient $d=0$ moduli space coincides with the slice moduli space.
\end{prop}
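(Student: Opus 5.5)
We combine the forward tail rigidity of Lemma~\ref{lem:forward-tail-rigidity} with the backward uniqueness of Proposition~\ref{prop:backward-uniqueness-parabolic-system}, applied to the $p$-component of the trajectory in the real coordinates $(q,p)$ of Section~\ref{subsec:full-d=0}. Let $z$ be a finite-energy classical $d=0$ trajectory from $\gamma_{1,0}$ to $\gamma_{0,0}$, and write it as $(q,p)$ solving \eqref{eq:qp-boundary}--\eqref{eq:qp-full-flow}. Since $\gamma_{0,0}$ corresponds to $(a,0)$, the strong $H^1$ convergence at $s\to+\infty$ is exactly the hypothesis of Lemma~\ref{lem:forward-tail-rigidity}. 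That lemma gives $S_0$ with $p\equiv0$ on $[S_0,\infty)\times[0,\tau]$. Hence $z(s,\cdot)\in\Sigma$ for all $s\ge S_0$.

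Next, fix any $s_0<S_0$. On the compact strip $[s_0,S_0]\times[0,\tau]$ the trajectory is classical and smooth, so $q,p$ and their $t$-derivatives up to order two are bounded. Moreover $M=|q|^2+|p|^2\ge1$. Using the formulas for $\lambda$ and $\mu$ from Lemma~\ref{lem:qp-full-flow}, the $p$-equation reads
\[
p_s=-p_{tt}-\lambda p+\mu q,\qquad \mu M=q\cdot p_{tt}-p\cdot q_{tt}.
\]
The term $\mu q$ contains $p_{tt}$, so I would not treat it as lower order. Instead I substitute $\mu$ and write
\[
p_s=-\Bigl(I-\tfrac{q q^T}{M}\Bigr)p_{tt}-\lambda p-\tfrac{(p\cdot q_{tt})}{M}q .
\]
After the time reversal $\sigma=S_0-s$, this becomes a linear system $w_\sigma-A\,w_{tt}=C w$ for $w=p$, with bounded coefficients $A=I-qq^T/M$ and $C$, and Dirichlet conditions from \eqref{eq:qp-boundary}.

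The main obstacle is uniform parabolicity: $A$ has kernel in the direction of $q$, so it is not positive definite on all of $\R^3$. I would handle this with the constraint $q\cdot p=0$. Differentiating gives $q\cdot p_{tt}=-2q_t\cdot p_t-q_{tt}\cdot p$, so the $q$-component of $p_{tt}$ is expressed through lower-order terms. Substituting this back, the $p$-equation can be rewritten as
\[
p_s+p_{tt}=A_1(s,t)p_t+A_0(s,t)p,
\]
with principal part $-I$ and $A_1,A_0$ bounded on the compact strip. This is the form stated just before the proposition. Its time reversal is uniformly parabolic with $A=I$, with homogeneous Dirichlet data, and $p(S_0,\cdot)=0$.

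Proposition~\ref{prop:backward-uniqueness-parabolic-system} then gives $p\equiv0$ on $[s_0,S_0]$. Since $s_0$ is arbitrary, $p\equiv0$ on $\R\times[0,\tau]$, that is $z(s,t)\in\Sigma$ everywhere. Restricted to $\Sigma$, $z=Qx$ with $x$ a finite-energy solution of the sphere heat flow \eqref{eq:sphere-heat}, and the asymptotics match $x_{1,*}$ and $x_0$. So $[x]\in\mathcal M_0^\Sigma$. Conversely, every slice trajectory is a full trajectory by the $\rho$-invariance of $\Sigma$. This identifies the full ambient moduli space with the slice moduli space, $\{[\nu_{0,+}],[\nu_{0,-}]\}$.
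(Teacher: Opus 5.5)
Your proof is correct and is essentially the paper's argument: forward tail rigidity (Lemma~\ref{lem:forward-tail-rigidity}) gives $p\equiv0$ for $s\ge S_0$, and a uniqueness theorem for the linear $p$-equation propagates this back to every $s<S_0$. Your substitution $q\cdot p_{tt}=-2q_t\cdot p_t-q_{tt}\cdot p$ turns $\mu$ into a term linear in $(p,p_t)$, which justifies the form $p_s+p_{tt}=A_1p_t+A_0p$ that the paper only asserts. One citation needs fixing: $p$ satisfies a backward heat equation in $s$, so after your reversal $\sigma=S_0-s$ the condition $p(S_0,\cdot)=0$ is \emph{initial} data. What you need is therefore ordinary forward uniqueness (a direct $L^2$ energy estimate plus Gronwall), not Proposition~\ref{prop:backward-uniqueness-parabolic-system}, which is stated for vanishing \emph{final} data of a forward-parabolic system.
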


Combining Proposition \ref{prop:S2-global-existence} with Proposition \ref{prop:global-reduction-slice}, we obtain:

\begin{thm}\label{thm:exact-two-d0}
The full $d=0$ moduli space
\[
\mathcal{M}_0(a,a;\gamma_{0,0},\gamma_{1,0})
\]
consists of exactly two trajectories modulo translation, namely $[\nu_{0,+}]$ and $[\nu_{0,-}]$.
\end{thm}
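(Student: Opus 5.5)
The statement to prove is Theorem \ref{thm:exact-two-d0}: the full $d=0$ moduli space consists of exactly the two slice trajectories. The plan is to establish two inclusions between the ambient moduli space $\mathcal{M}_0(a,a;\gamma_{0,0},\gamma_{1,0})$ and the slice moduli space $\mathcal{M}_0^\Sigma$, and then invoke the count on the slice.

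\textbf{Inclusion $\mathcal M_0^\Sigma\subset\mathcal M_0$.} By the $\rho$-equivariance proposition, the restriction of the $\theta=\pi$ flow to $\Sigma=Q(S^2)$ is the endpoint-fixed heat flow on $S^2$. Hence each $x_\pm$ from Proposition \ref{prop:S2-global-existence} gives a solution $\nu_{0,\pm}:=Qx_\pm$ of the full system \eqref{eq:flow}. Since $Q$ is unitary and constant, $H^1$-convergence of $x_\pm$ to $x_{1,*}$ and $x_0$ transfers to $H^1$-convergence of $\nu_{0,\pm}$ to $\gamma_{1,0}$ and $\gamma_{0,0}$. Finite energy transfers as well, so $[\nu_{0,\pm}]\in\mathcal M_0$. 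The two classes are distinct, because they lie on different components of $W^{uu}_{\mathrm{loc}}(x_{1,*};\mathcal S^-)\setminus\{x_{1,*}\}$; a translation in $s$ preserves the component, so it cannot identify them.

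\textbf{Inclusion $\mathcal M_0\subset\mathcal M_0^\Sigma$.} This is exactly Proposition \ref{prop:global-reduction-slice}. Write a full solution in the $(q,p)$ variables. Lemma \ref{lem:forward-tail-rigidity} gives $p\equiv0$ for $s\ge S_0$. Backward uniqueness, Proposition \ref{prop:backward-uniqueness-parabolic-system}, applied to the linear parabolic system satisfied by $p$ then gives $p\equiv0$ on every $[s_0,S_0]$. Hence $z=Qx$ with $x$ a finite-energy solution of \eqref{eq:sphere-heat} having the right asymptotics, so $[x]\in\mathcal M_0^\Sigma$.

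The $p$-equation is backward heat type: $p_s=-p_{tt}+\cdots$. The time-reversed function $\tilde p(\sigma,t)=p(S_0-\sigma,t)$ satisfies a forward heat equation with zero initial data, so uniqueness is immediate in that direction. The subtle point is that the coefficients $A_0,A_1$ depend on the solution itself. They are bounded on compact $s$-intervals because the trajectory is classical and continuous in $H^2$, so the proposition applies.

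\textbf{Counting on the slice.} It remains to show $\mathcal M_0^\Sigma=\{[x_+],[x_-]\}$, meaning no other slice trajectory exists. Let $x\in\mathcal M_0^\Sigma$. As $s\to-\infty$, $x$ converges to $x_{1,*}$, and I must show it lies on the strong unstable manifold of $x_{1,*}$ within the transverse slice. This is the main obstacle, because $x_{1,*}$ is only Morse--Bott, and a trajectory could a priori approach it along the Bott family $\mathcal C_B$ or the center direction.

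First, I would note that the convergence is to the specific point $x_{1,*}$, and that the energy of $x$ along the center-unstable manifold is strictly below $E_{S^2}(x_{1,*})$ off the Bott family, since the Bott family consists of critical points. Second, I would use the $S^1$-symmetry of rotations about the axis $a$, which act transitively on $\mathcal C_B$. In a rotation-equivariant Morse--Bott normal form, the center-unstable manifold of $\mathcal C_B$ is fibered by strong unstable fibers over $\mathcal C_B$. A trajectory converging to the point $x_{1,*}$ must lie on the fiber over $x_{1,*}$, which is one-dimensional. Its two branches are exactly $x_\pm$, and uniqueness of the forward heat flow identifies $x$ with a translate of $x_+$ or $x_-$.

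Combining the two inclusions gives $\#\mathcal M_0=2$.
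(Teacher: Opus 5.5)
Your proof is correct and takes the paper's route: the theorem is Proposition \ref{prop:S2-global-existence} (the two slice trajectories exist) combined with Proposition \ref{prop:global-reduction-slice} (every full $d=0$ trajectory lies in $\Sigma$, by forward tail rigidity of $p$ followed by uniqueness backward in $s$). Your additions are refinements rather than a different strategy: you justify the slice count (a trajectory asymptotic to $x_{1,*}$ must lie on its one-dimensional strong unstable fiber, via the $S^1$-equivariant Morse--Bott fibration), which the paper only asserts, and you correctly observe that the $p$-equation is of backward-heat type, so the ``backward uniqueness'' step is really the easy forward uniqueness for the time-reversed equation.
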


\subsubsection{The $d=0$ augmented isomorphism}
To go from $d=0$ to small--$d$, we need not only the existence of the two $d=0$ trajectories, but also the corresponding linearized theory.

Let
\[
E_{0,\alpha}:=\nu_{0,\alpha}^{*}T\mathcal\HH_{\C}^2, \qquad \alpha\in\{+,-\}.
\]
Fix a small weight $0<\delta\ll1$ which is smaller than the corresponding nonzero spectral gap.  Define
\[
\begin{aligned}
X^{\rm fix}_{0,\alpha}
:=
\Bigl\{
  \xi\in
  W^{1,2}_{\delta}\bigl(\R_s;L^2([0,\tau],E_{0,\alpha})\bigr)
  \cap&L^2_{\delta}\bigl(\R_s;H^2_0([0,\tau]_t,E_{0,\alpha})\bigr)
  \ :\ \xi(s,0)=\xi(s,\tau)=0,\\[1mm]
  &\xi(s,\cdot)\to0
    \text{ as }|s|\to\infty,~\Pi_{\rm Bott}^{-}\xi=0,~\ell_{0,\alpha}(\xi)=0
\Bigr\}.
\end{aligned}
\]
Here $\Pi_{\rm Bott}^{-}\xi=0$ fixes the upper Morse--Bott coefficient to be zero, and
\[
\ell_{0,\alpha}(\xi):=\frac{
\langle \xi,\partial_s\nu_{0,\alpha}\rangle_{L^2_{\R}}}{\|\partial_s\nu_{0,\alpha}\|^2_{L^2_{\R}}}
\]
is the translation gauge. Define
\[
Y_{0,\alpha}:=L^2_{\delta}\bigl(\R_s;L^2([0,\tau]_t,E_{0,\alpha})\bigr)=
\left\{\eta(s,t)\in E_{0,\alpha}
: \int_{\R}e^{2\delta |s|}
\int_0^\tau |\eta(s,t)|^2\,dt\,ds<\infty
\right\}.
\]
The fixed-end linearized operator 
\[
D^{\rm fix}_{0,\alpha}:X^{\rm fix}_{0,\alpha}\longrightarrow
Y_{0,\alpha},
\]
is given by the linearization of the $d=0$ flow equation along
$\nu_{0,\alpha}$. Along the slice $\Sigma=Q(S^2)$,
\[
T\HH_{\C}^2\big|_{\Sigma}=T\Sigma\oplus iT\Sigma
\cong TS^2\oplus iTS^2.
\]
Then $D^{\rm fix}_{0,\alpha}$ splits into the tangential and normal blocks along the $S^2$-slice. 

For the tangential block, the kernel is given by $\partial_s\nu_{0,\alpha}$, which is removed by our translation gauge fixing condition $\ell_{0,\alpha}(\xi)=0$. That is,
\[\ker D^{\rm{fix},\top}_{0,\alpha}=0.\]

For the normal block, by the same positive-end rigidity Lemma \ref{lem:forward-tail-rigidity} and backward-uniqueness argument for $p$ in Subsection \ref{subsec:full-d=0}, we have
\[\ker D^{\rm{fix},\perp}_{0,\alpha}=0.\]

On the other hand, since we fixed the upper Bott coefficient to be zero in
$X^{\rm fix}_{0,\alpha}$, the missing target direction is the upper
soft obstruction associated with the normal $n=2$ Bott mode
\[
  \sigma_{B,\alpha}(t)
  =
  \sinh\left(\frac{2\pi i t}{\tau}\right)N
  =
  i\sin\left(\frac{2\pi t}{\tau}\right)N,
  \qquad
  N=(0,0,1).
\]

Let us consider an augmented linear map which fills the missing direction. Choose a smooth cutoff
$\chi_-(s)$ with $\chi_-(s)=1$ for $s\ll0$ and
$\chi_-(s)=0$ away from the upper end.  Set
\[
\widehat\sigma_{B,\alpha}(s,t):=
\chi_-(s)\sigma_{B,\alpha}(t).
\]

We claim that $D_{0,\alpha}\widehat\sigma_{B,\alpha}\in Y_{0,\alpha}$.
To prove it, we write the operator near the upper end in the form
\[
D_{0,\alpha}=\partial_s+A_\alpha(s),
\]
where $A_\alpha(s)$ converges exponentially to the asymptotic upper-end
operator $A_\alpha^-$. Since $\sigma_{B,\alpha}$ is the upper Bott mode, $A_\alpha^-\sigma_{B,\alpha}=0$. Then we have
\begin{align*}
D_{0,\alpha}\widehat\sigma_{B,\alpha}
&=D_{0,\alpha}\bigl(\chi_-(s)\sigma_{B,\alpha}(t)\bigr)\\
&=\chi_-'(s)\sigma_{B,\alpha}(t)+
\chi_-(s)A_\alpha(s)\sigma_{B,\alpha}(t) \\
&=\chi_-'(s)\sigma_{B,\alpha}(t)+
\chi_-(s)\bigl(A_\alpha(s)-A_\alpha^-\bigr)\sigma_{B,\alpha}(t).
\end{align*}
The first term is compactly supported in $s$. For the second term, the
exponential convergence of the tail gives, for some $\kappa>0$,
\[
\left\|\bigl(A_\alpha(s)-A_\alpha^-\bigr)\sigma_{B,\alpha}\right\|_{L^2_t}\le C e^{-\kappa |s|}
\qquad\text{as } s\to-\infty.
\]
Choosing the weight $\delta>0$ with $\delta<\kappa$, we obtain
\[
\int_{\mathbb R} e^{2\delta |s|}
\left\|D_{0,\alpha}\widehat\sigma_{B,\alpha}\right\|_{L^2_t}^2\,\dd s
<\infty.
\]
Thus we can define the following augmented linearized operator
\[
\widehat D_{0,\alpha}:X_{0,\alpha}^{\rm fix}\oplus\R\longrightarrow Y_{0,\alpha},\qquad
\widehat D_{0,\alpha}(\xi,\rho)
=D_{0,\alpha}^{\rm fix}\xi+
\rho\,D_{0,\alpha}\widehat\sigma_{B,\alpha}.
\]
By the above analysis, we have

\begin{lem}
\label{lem:upper-augmented-isomorphism}
The operator $\widehat D_{0,\alpha}$ is an isomorphism.
\end{lem}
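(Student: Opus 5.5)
The plan is to show that $\widehat D_{0,\alpha}$ is Fredholm of index $0$ and injective. Surjectivity then follows, and the bounded inverse comes from the open mapping theorem.

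\emph{Fredholm property.} Near each end I will write $D_{0,\alpha}=\partial_s+A_\alpha(s)$. By Proposition \ref{prop:morse-end-C0} and its analogue at the upper end (the strong unstable branch converges exponentially to $x_{1,*}$), $A_\alpha(s)$ converges exponentially to the asymptotic operators $A^\pm_\alpha$. Conjugating by $e^{\delta|s|}$ replaces $A^\pm_\alpha$ by $A^\pm_\alpha\mp\delta$. Because $0<\delta$ is smaller than every nonzero spectral gap, the conjugated asymptotic operators are hyperbolic. This holds even at the upper end, where the Bott kernel $\sigma_{B,\alpha}$ is shifted off zero. The standard Lockhart--McOwen/Robbin--Salamon theory then makes $D_{0,\alpha}$ Fredholm from the weighted $W^{1,2}_\delta\cap L^2_\delta H^2_0$ space to $Y_{0,\alpha}$. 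The two extra conditions $\Pi^-_{\rm Bott}\xi=0$ and $\ell_{0,\alpha}(\xi)=0$ are finite-codimensional, and the augmentation is finite-dimensional, so $\widehat D_{0,\alpha}$ is Fredholm as well.

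\emph{Index.} I will compute the index by spectral flow, block by block. The tangential and normal blocks along the slice $T\Sigma\oplus iT\Sigma$ decouple, and on each block the modes in Lemma \ref{lem:hessian-once-round} reduce to scalar ODEs $\partial_s+\lambda_n(s)$. In the tangential block, the single negative Hessian direction at $x_{1,*}$ together with the constant loop being a minimum gives index $1$. The translation gauge $\ell_{0,\alpha}$ removes that index. In the normal block, with the $A^2=B^\#B$ sign structure of Proposition \ref{prop:expconv}, the real stable and unstable counts of the $p$-directions match at the two ends except for the Bott mode. Since the Bott mode is forced to decay by the weight and its coefficient is set to zero by the constraint, it contributes exactly one cokernel dimension. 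This should give $\operatorname{ind}D^{\rm fix}_{0,\alpha}=-1$, and therefore $\operatorname{ind}\widehat D_{0,\alpha}=0$.

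This bookkeeping of the Bott mode against the weight sign, and checking that the constraint $\Pi^-_{\rm Bott}$ and the augmentation exactly cancel, is the step I expect to be the main obstacle. I would settle it first by an explicit scalar computation for the normal $n=2$ mode, whose asymptotic eigenvalue is $0$ at $-\infty$ and $(2\pi/\tau)^2$ at $+\infty$.

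\emph{Injectivity.} Suppose $D^{\rm fix}_{0,\alpha}\xi+\rho\,D_{0,\alpha}\widehat\sigma_{B,\alpha}=0$. Then $\eta:=\xi+\rho\widehat\sigma_{B,\alpha}$ satisfies $D_{0,\alpha}\eta=0$, decays exponentially at $+\infty$, and tends to $\rho\,\sigma_{B,\alpha}$ at $-\infty$. Since $\sigma_{B,\alpha}=i\sin(2\pi t/\tau)N$ lies in the $p$-direction, only the normal component of $\eta$ carries $\rho$.

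For that normal component I will rerun the argument of Lemma \ref{lem:forward-tail-rigidity} at the linear level. With $E(s)=\frac12\|\eta^\perp\|^2_{L^2_t}$, the exponential decay of the multiplier $\lambda$ gives $E'\ge cE$ on a forward tail. Since $E\to0$, this forces $\eta^\perp\equiv0$ there. Proposition \ref{prop:backward-uniqueness-parabolic-system}, applied on $[s_0,S_0]$ with $s_0\to-\infty$, then gives $\eta^\perp\equiv0$ on all of $\mathbb R$, so $\rho=0$. Hence $\xi\in\ker D^{\rm fix}_{0,\alpha}=0$ by the kernel statements already established. Index $0$ together with injectivity gives the isomorphism.
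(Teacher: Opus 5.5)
Your proof has a genuine gap, in two places: the injectivity step and the index step. In fact the operator cannot be onto, so the lemma fails as stated.

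\textbf{Injectivity.} Your argument rests on the claim that $\sigma_{B,\alpha}=i\sin(2\pi t/\tau)N$ lies in the $p$-direction, and that is false. With $z_2=p_2-iq_2$, the vector $(0,0,i\sin(2\pi t/\tau))$ has $p_2=0$ and $q_2=-\sin(2\pi t/\tau)$. In fact $\sigma_{B,\alpha}=Q\bigl(0,0,-\sin(2\pi t/\tau)\bigr)$ is tangent to $\Sigma$: it is the Jacobi field along $x_{1,*}$ that rotates the great circle about the axis through $a$. (``Normal'' in ``normal $n=2$ mode'' means normal to the geodesic, not to the slice.) So tail rigidity for $\eta^\perp$ says nothing about $\rho$.

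Worse, the slice block carries a kernel element with $\rho\neq0$. The rotations of $\C^3$ in the $(z_1,z_2)$-plane fix $a$, preserve $\eta$, $h$ and $\Sigma$, and commute with the $d=0$ flow. Hence $J\nu_{0,\alpha}$, with $J(z_0,z_1,z_2)=(0,-z_2,z_1)$, satisfies $D_{0,\alpha}(J\nu_{0,\alpha})=0$. It tends to $J\gamma_{1,0}=\sigma_{B,\alpha}$ as $s\to-\infty$ (exponentially, faster than the weight) and to $Ja=0$ as $s\to+\infty$. Choose $c$ so that $\ell_{0,\alpha}$ vanishes; then $\bigl(J\nu_{0,\alpha}-\widehat\sigma_{B,\alpha}+c\,\partial_s\nu_{0,\alpha},\,1\bigr)$ lies in $\ker\widehat D_{0,\alpha}$. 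Only a genuine extra linear condition $\Pi^-_{\rm Bott}$ on decaying sections could exclude it. Your argument uses none, and on your reading (the weight already kills the Bott coefficient) $\Pi^-_{\rm Bott}$ is vacuous.

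\textbf{Index and surjectivity.} In the $p$-block the two ends differ not only in the Bott mode but also in the normal $n=1$ mode. At $\gamma_{1,0}$, Proposition~\ref{prop:linearization} at $\theta=\pi$ shows that the $p$-mode $b(s)\sin(\pi t/\tau)N$ with $b$ real obeys $b_s=-3(\pi/\tau)^2b$. It is the $\mathbb J$-image of the unique unstable slice direction, hence stable. At $\gamma_{0,0}$ every $p$-mode grows. So $D^{\rm fix,\perp}_{0,\alpha}$ has index $-2$, not $-1$, and being injective it has a two-dimensional cokernel. An augmentation lying in $T\Sigma$ cannot reach it, since $D_{0,\alpha}$ is block-diagonal for $T\Sigma\oplus iT\Sigma$.

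One cokernel direction is forced by conservation of $G_\pi$. Since $\dd G_\pi(\nabla F_\pi)\equiv0$, every $z:\R\to\PP^{\C}_0$ with limits at $s=\pm\infty$ satisfies
\[
\int_\R\dd G_\pi(z)\bigl(\mathcal F_0(z)\bigr)\,\dd s=G_\pi(z(+\infty))-G_\pi(z(-\infty)).
\]
Differentiate along $z=\exp_{\nu_{0,\alpha}}(\epsilon\xi)$ and use $\nabla G_\pi(\nu_{0,\alpha})=\mathbb J\nabla F_\pi(\nu_{0,\alpha})=-i\partial_s\nu_{0,\alpha}$. This gives $\int_\R\re\langle D_{0,\alpha}\xi,\,i\partial_s\nu_{0,\alpha}\rangle_{L^2_t}\,\dd s=0$ for every decaying $\xi$. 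It also holds for $\xi=\widehat\sigma_{B,\alpha}$, because $\dd G_\pi(\gamma_{1,0})=0$. Hence the image of $\widehat D_{0,\alpha}$ lies in the closed hyperplane of $Y_{0,\alpha}$ annihilated by $i\partial_s\nu_{0,\alpha}\neq0$, whatever constraints are imposed in $X^{\rm fix}_{0,\alpha}$. So $\widehat D_{0,\alpha}$ is not onto, and the lemma cannot hold as stated.

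\textbf{Relation to the paper.} The paper's sketch follows the same outline: gauge for the tangential kernel, tail rigidity and backward uniqueness for the $p$-block kernel, and a single missing target direction filled by $\widehat\sigma_{B,\alpha}$. It asserts that last point without any count, so it does not close the gap either. The obstructing direction is exactly $\pi\,\partial_d\mathcal F_d(\nu_{0,\alpha})|_{d=0}$ from Lemma~\ref{lem:quadratic-upper-bott}, i.e.\ the phase variation. A repaired augmented operator must include a direction that pairs nontrivially with $i\partial_s\nu_{0,\alpha}$, such as this one. Augmenting by asymptotic Bott modes alone cannot work.
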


\subsection{Case: $0<d\ll1$} 

We now extend our discussion from $d=0$ to small $d>0$ using perturbation method. The proof again has two parts. 

First, we have a small $d>0$ spectral analysis and construct a $d$--perturbative invertible augmented linear operator. The implicit function theorem, combined with a careful estimate of the soft eigenvalue direction, then produces two smooth solution branches for all sufficiently small $d>0$.

Second, we prove that these are no extra solutions by the uniform estimates, the no-escape and compactness argument. Then we get
\[
\#\mathcal M_d(a,b_d;\gamma_{0,d},\gamma_{1,d})=2
\]
for all sufficiently small $d>0$.

\subsubsection{The small $d$ spectral analysis}

Recall that at the critical path $\gamma_{j,d}$, we have obtained the linearization of the flow \eqref{eq:flow} with $\R$-linear asymptotic operator
\[
A_{j,d,\theta}=e^{i\theta}\,\mathcal{C}\,M_{\gamma_{j,d}}^{-1}\, L^{\C}_{j,d}.
\]

Applying the proof of Proposition \ref{prop:expconv} to the $j=0,1$ case, 
\begin{prop}\label{prop:exact-spectrum-small-d}
\begin{itemize}
    \item[(1)] At the lower end $j=0$, there is a uniform spectral gap:
\[
\mathrm{dist}(0,\sigma_{\R}(A_{0,d,\theta}))
\ge c_0>0,\qquad \text{for all } 0\le d\ll1.
\]
    \item[(2)] At the upper end $j=1$, the unique small normal mode is the normal
$n=2$ mode, for which
\[
\phi^{\mathrm{nor}}_{2,1}(d)
=\frac{4\pi^2+(d+2\pi i)^2}{\tau^2}
=\frac{d^2+4\pi i d}{\tau^2}.
\]
Hence
\[
\lambda_{\mathrm{soft}}(d)
:=|\phi^{\mathrm{nor}}_{2,1}(d)|
=\frac{d\sqrt{d^2+16\pi^2}}{\tau^2}
\sim\frac{4\pi}{\tau^2}d,\qquad d\to0^+.
\]
After removing this normal $n=2$ soft mode, the remaining spectrum at $j=1$ is uniformly bounded away from zero for all sufficiently small $d\ge0$.
\end{itemize}
\end{prop}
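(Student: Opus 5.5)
The plan is to reduce everything to scalar modes, exactly as in the proof of Proposition~\ref{prop:expconv}, and then read off the uniform bounds from the explicit eigenvalue formulas. By Proposition~\ref{prop:linearization}, in the frame $(T_{j,d},N)$ the operator $B_{j,d}=M_{\gamma_{j,d}}^{-1}L^{\C}_{j,d}$ is block diagonal:
\[
B_{j,d}(\alpha,\beta)=\Bigl(-|\gamma_{j,d}|^{-2}\alpha_{tt},\;-\beta_{tt}+c_{j,d}^2\beta\Bigr),
\]
and $A_{j,d,\theta}=e^{i\theta}\mathcal C B_{j,d}$.

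\textbf{Step 1: diagonalize both blocks by real eigenfunctions.} In both blocks the spatial eigenfunctions can be chosen real-valued.
\begin{itemize}
\item For the normal block, they are $\sin(n\pi t/\tau)$, with complex eigenvalues $\phi^{\rm nor}_{n,j,d}=(n\pi/\tau)^2+c_{j,d}^2$.
\item For the tangential block, note first that $|\gamma_{j,d}(t)|^2=|\cosh(c_{j,d}t)|^2+|\sinh(c_{j,d}t)|^2=\cosh(2dt/\tau)$ is real, positive and independent of $j$. So this is a self-adjoint Sturm--Liouville problem in $L^2$ with weight $\cosh(2dt/\tau)$, with real eigenfunctions $e_n$ and positive eigenvalues $\phi^{\tan}_{n,d}$.
\end{itemize}

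\textbf{Step 2: compute the spectrum of $A_{j,d,\theta}$ mode by mode.} On each complex line $\C e$, with $e$ one of the real eigenfunctions above and $B_{j,d}e=\phi e$, we get
\[
A_{j,d,\theta}(ce)=e^{i\theta}\,\overline{\phi}\,\overline{c}\,e .
\]
This is an $\R$-linear map of $\C\cong\R^2$ whose square is $|\phi|^2$, so its eigenvalues are exactly $\pm|\phi|$. Hence
\[
\sigma_{\R}(A_{j,d,\theta})=\{\pm\phi^{\tan}_{n,d}\}\cup\{\pm|\phi^{\rm nor}_{n,j,d}|\}.
\]
The lines spanned by these eigenfunctions give a Riesz basis: the normal sines are orthonormal, and the tangential $e_n$ are orthonormal for the weighted inner product, which is uniformly equivalent to the $L^2$ inner product for $d\le1$. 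So no other spectrum appears. The uniform equivalence of norms is also what makes the dichotomy constants uniform in $d$.

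\textbf{Step 3: the lower bounds.}
\begin{itemize}
\item \emph{Tangential modes.} Rayleigh quotient comparison with $1\le\cosh(2dt/\tau)\le\cosh 2d$ gives $\phi^{\tan}_{n,d}\ge \pi^2/(\tau^2\cosh 2d)$. This holds for both ends and is uniform for $d\le 1$.
\item \emph{Lower end $j=0$.} Here $\phi^{\rm nor}_{n,0,d}=(n^2\pi^2+d^2)/\tau^2\ge\pi^2/\tau^2$. Together with the tangential bound this proves (1) with $c_0=\pi^2/(\tau^2\cosh2)$.
\item \emph{Upper end $j=1$.} Here $\tau^2\phi^{\rm nor}_{n,1,d}=(n^2-4)\pi^2+d^2+4\pi i d$.
\begin{itemize}
\item For $n=2$ this equals $d^2+4\pi i d$, with modulus $d\sqrt{d^2+16\pi^2}\sim 4\pi d$.
\item For $n\neq2$, $|n^2-4|\ge3$, so the real part has modulus at least $3\pi^2-d^2\ge 2\pi^2$ once $d\le\pi$.
\end{itemize}
Hence after removing the $n=2$ normal line, the remaining spectrum is bounded below by $\min\{2\pi^2/\tau^2,\ \pi^2/(\tau^2\cosh 2d)\}$, uniformly in small $d\ge0$. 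This gives (2). At $d=0$ the removed line is exactly the Bott mode $\sigma_B$ from Lemma~\ref{lem:hessian-once-round}.
\end{itemize}

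\textbf{Main obstacle.} The only delicate point is Step 2's completeness and uniformity claim, since $A_{j,d,\theta}$ is merely $\R$-linear and non-self-adjoint. I would handle it by exhibiting it explicitly as a direct sum of $2\times2$ real blocks on the fixed basis above. Each block has norm and inverse controlled by $|\phi|$, so the dichotomy projections are uniformly bounded for all $0\le d\le d_0$. The soft $n=2$ block is the only one whose gap degenerates, and it does so like $4\pi d/\tau^2$.
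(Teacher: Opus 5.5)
Your proposal is correct and follows the paper's route; the paper simply reruns the proof of Proposition~\ref{prop:expconv}. That proof splits $B_{j,d}$ into the weighted tangential Dirichlet problem, bounded below by $\pi^2/(\tau^2\cosh 2d)$ and independent of $j$, and the explicit normal modes $(n\pi/\tau)^2+c_{j,d}^2$, and then gets $\pm|\phi|$ in $\sigma_{\R}(A_{j,d,\theta})$ from $A_{j,d,\theta}^2=B_{j,d}^{\#}B_{j,d}$, which is exactly what your $2\times 2$ antilinear blocks compute. Your completeness worry is milder than you suggest, because $A_{j,d,\theta}$ is actually symmetric for $\re\langle\cdot,\cdot\rangle_h$: it is the real Hessian of $F_\theta$ at a critical point, and each block $c\mapsto e^{i\theta}\overline{\phi}\,\overline{c}$ is $|\phi|$ times a real reflection, so the spectral theorem gives the full spectrum and uniform dichotomy constants directly.
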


\begin{cor}\label{cor:fixed-d-exp-convergence}
There exists $d_0>0$ such that, for each fixed
$0<d\le d_0$ and each finite-energy trajectory (if it exists)
\[
u\in \mathcal M_d(a,b_d;\gamma_{0,d},\gamma_{1,d}),
\]
the following hold.

\begin{enumerate}
\item[(i)] As $s\to+\infty$, write
\[
u(s)=\exp_{\gamma_{0,d}}^{\C}\xi_+(s).
\]
Then, for every $m\ge0$ and every $0<\mu<c_0$, there exist
constants $C_{m,\mu}>0$ and $S_+>0$ such that
\[
\|\xi_+(s)\|_{H^m_t}\le C_{m,\mu}e^{-\mu s},\qquad s\ge S_+.
\]
\item[(ii)] As $s\to-\infty$, write
\[
u(s)=\exp_{\gamma_{1,d}}^{\C}\xi_-(s).
\]
Then, for every $m\ge0$ and every
$
0<\mu<\lambda_{\mathrm{soft}}(d)
$, there exist constants $C_{m,\mu}>0$ and $S_->0$ such that
\[
\|\xi_-(s)\|_{H^m_t}\le C_{m,\mu}e^{\mu s},
\qquad s\le -S_-.
\]
\end{enumerate}
\end{cor}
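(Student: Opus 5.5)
The approach is to apply the Lyapunov--Perron result, Lemma~\ref{lem:Lyapunov--Perron}, separately at each end of the trajectory. The spectral input is Proposition~\ref{prop:exact-spectrum-small-d}, and the argument runs the same way as the proof of Proposition~\ref{prop:expconv}. Fix $0<d\le d_0$ and $u\in\mathcal M_d(a,b_d;\gamma_{0,d},\gamma_{1,d})$. By the strong $H^1_t$ asymptotics and $H^1\hookrightarrow C^0$, for $|s|$ large the path $u(s)$ lies in the exponential chart of $\gamma_{0,d}$ (for $s\to+\infty$) or of $\gamma_{1,d}$ (for $s\to-\infty$). In each chart we write $u(s)=\exp^{\C}_{\gamma_{j,d}}\xi(s)$. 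The equation \eqref{eq:flow}, with $\theta=\theta(d)$ given by \eqref{eq:thetad}, then takes the form
\[
\partial_s\xi+A_{j,d,\theta(d)}\xi=N(\xi),\qquad N(0)=0,\quad DN(0)=0.
\]
Here $A_{j,d,\theta(d)}$ is the $\R$-linear operator of Proposition~\ref{prop:linearization}, and $N$ satisfies the quadratic $H^1_t\to L^2_t$ estimate, exactly as in Proposition~\ref{prop:expconv}.

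\textbf{Positive end, part (i).} We use the identity $A^2=B^\#B$ from Proposition~\ref{prop:expconv} to place the real spectrum of $A_{0,d,\theta}$ at $\pm|\phi|$, where $\phi$ runs over the spectrum of $B_{0,d}$. Proposition~\ref{prop:exact-spectrum-small-d}(1) then gives an exponential dichotomy with gap $c_0$, uniform in $d$. Lemma~\ref{lem:Lyapunov--Perron} applied to the forward half-trajectory yields $\|\xi_+(s)\|_{H^1_t}\le C_\mu e^{-\mu s}$ for every $\mu<c_0$.

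\textbf{Negative end, part (ii).} Reverse time by setting $\tilde\xi(\sigma)=\xi_-(-\sigma)$. Then
\[
\partial_\sigma\tilde\xi-A_{1,d,\theta}\tilde\xi=-N(\tilde\xi),
\]
and $\tilde\xi\to0$ as $\sigma\to+\infty$. For fixed $d>0$, Proposition~\ref{prop:nondegenerate} gives $\ker L^{\C}_{1,d}=0$. The smallest value of $|\phi|$ is the soft mode $\lambda_{\rm soft}(d)$, and the remaining spectrum is bounded away from zero by Proposition~\ref{prop:exact-spectrum-small-d}(2). Shrinking $d_0$ if necessary, we may assume $\lambda_{\rm soft}(d)$ is below that uniform bound. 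Then $-A_{1,d,\theta}$ has an exponential dichotomy with gap $\lambda_{\rm soft}(d)$. Its roles of stable and unstable are swapped, but the gap is the same because the real spectrum is symmetric. Lemma~\ref{lem:Lyapunov--Perron} gives decay $e^{-\mu\sigma}=e^{\mu s}$ for every $\mu<\lambda_{\rm soft}(d)$.

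\textbf{Upgrade to $H^m_t$.} We pass from $H^1_t$ to $H^m_t$ by parabolic bootstrapping on unit strips $[s-1,s+1]\times[0,\tau]$, as in Proposition~\ref{prop:morse-end-C0}. The principal part of \eqref{eq:flow} is $\partial_s z=e^{i\theta}\eta\bar z_{tt}$ plus lower-order terms. As a real system in $(\re z,\im z)$ this has the form $\partial_s w=\mathcal A\,w_{tt}+\cdots$, where $\mathcal A$ is a constant-coefficient matrix that squares to a positive multiple of the identity. This matrix is not positive definite, so the usual scalar parabolic theory does not apply directly. For this reason I expect the regularity upgrade, rather than the decay, to be the main obstacle.

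My first attempt is to split along the real and imaginary parts, as in \eqref{eq:uvlineaer}. Near $\gamma_{j,d}$ the linear operator diagonalises into a forward-parabolic block on the stable modes and a backward-parabolic block on the unstable modes. Each block obeys a standard energy estimate in the direction in which it is well posed. Because the solution is already known to decay exponentially in $H^1_t$ on the whole tail, these two-sided strip estimates close. They give $\|\xi(s)\|_{H^{m+1}_t}\lesssim\sup_{|s'-s|\le1}\|\xi(s')\|_{H^{m}_t}$, up to higher-order terms controlled by induction and the smallness of $\xi$. This produces the stated $H^m_t$ bounds with the same exponents $\mu$.
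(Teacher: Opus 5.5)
This is essentially the paper's argument: the paper obtains the corollary by rerunning the proof of Proposition~\ref{prop:expconv} with the spectral input of Proposition~\ref{prop:exact-spectrum-small-d}. Your time reversal at the upper end, your choice of $d_0$ so that $\lambda_{\rm soft}(d)$ is the smallest $|\phi|$ at $\gamma_{1,d}$, and your forward/backward-parabolic strip estimates spell out what the paper summarizes as ``elliptic estimates'' for the $\R$-linear operator $A_{j,d,\theta}$.

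One caveat, inherited from the paper, concerns the claim that $N$ is quadratic from $H^1_t$ to $L^2_t$ pointwise in $s$: this fails, because the nonlinearity contains terms of the form $O(|\xi|)\,\overline{\xi}_{tt}$. The quadratic bound does hold in the space-time norms $H^1_sL^2_t\cap L^2_sH^2_t$, given only smallness of $\sup_s\|\xi(s)\|_{H^1_t}$. So Lemma~\ref{lem:Lyapunov--Perron} should be run in that setting, or else after your strip estimates (which need only $C^0$-smallness) have upgraded the convergence to $H^2_t$.
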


\subsubsection{Endpoint correction}

From $d=0$ to $d>0$, the endpoint at $t=\tau$ moves from $a$ to $b_d$.  To compare the $d$-flow equation with the $d=0$ trajectories in a fixed Banach space with homogeneous Dirichlet variations, we first fix this moving-endpoint
condition as follows.  Let
\[
B_r(z_0,z_1,z_2)
:=(\cosh r\,z_0+\sinh r\,z_1,
\sinh r\,z_0+\cosh r\,z_1,z_2)
\]
be the real Lorentz transformation. Then 
$B_d a=b_d$. For $\kappa(t)\in\PP_0^{\C}$, define
\begin{equation}\label{eq:Tkappa}
\mathcal T_d(\kappa)(t):=B_{dt/\tau}\kappa(t).
\end{equation}
Then we have
\[
\mathcal T_d:\mathcal P^{\C}_0\longrightarrow\mathcal P^{\C}_d,
\qquad\mathcal T_d(\kappa)(0)=a,
\qquad\mathcal T_d(\kappa)(\tau)=b_d,
\qquad\mathcal T_0=\mathrm{Id}.
\]
On bounded $C^2$-sets of paths,
\[
 \mathcal T_d(\kappa)=\kappa+O(d),
\qquad\partial_d\mathcal T_d(\kappa)\big|_{d=0}
=\frac{t}{\tau}K\kappa,
\]
where
\[
K(z_0,z_1,z_2):=(z_1,z_0,0).
\]
For each $\alpha\in\{+,-\}$, we define the approximate small-$d$ trajectory
\[
\nu^{\mathrm{app}}_{d,\alpha}
:=\mathcal T_d\nu_{0,\alpha}=\nu_{0,\alpha}+O(d).
\]
Then
\[\nu^{\mathrm{app}}_{d,\alpha}(s,0)=a,\qquad \nu^{\mathrm{app}}_{d,\alpha}(s,\tau)=b_d.\]

\begin{lem}\label{lem:endpoint-correction-orthogonal}
Let
\begin{equation}\label{eq:rho-alpha}
\rho_\alpha
:=\partial_d\nu^{\mathrm{app}}_{d,\alpha}\big|_{d=0}
=\frac{t}{\tau}K\nu_{0,\alpha}.    
\end{equation}
At the upper end $s\to-\infty$, we have
\[
\rho_\alpha(s,t)\longrightarrow
\frac{t}{\tau}K\gamma_{1,0}(t)=\frac{t}{\tau}T_{1,0}(t)
=\frac{t}{\tau}\left(\sinh\frac{2\pi i t}{\tau},
\cosh\frac{2\pi i t}{\tau},0\right).
\]
Thus the infinitesimal endpoint correction is orthogonal to 
the normal Bott direction $\sigma_{B,\alpha}(t)$.
\end{lem}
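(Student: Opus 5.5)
The plan is a direct computation in three steps. First identify $\rho_\alpha$ from the definition \eqref{eq:Tkappa}. Then pass to the limit $s\to-\infty$ using the known asymptotics of $\nu_{0,\alpha}$. Finally check orthogonality pointwise in $t$.

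\emph{Step 1: computing $\rho_\alpha$.} Differentiating the Lorentz boost $B_r$ at $r=0$ gives
\[
\frac{\dd}{\dd r}\Big|_{r=0}B_r(z_0,z_1,z_2)=(z_1,z_0,0)=Kz,
\]
since $\partial_r\cosh r|_{0}=0$ and $\partial_r\sinh r|_{0}=1$. By the chain rule applied to $\mathcal T_d(\kappa)(t)=B_{dt/\tau}\kappa(t)$, we get $\partial_d\mathcal T_d(\kappa)|_{d=0}=\frac{t}{\tau}K\kappa$. This holds pointwise and also in $C^2_t$ on bounded sets, because $B_r$ depends smoothly on $r$ and $\frac{t}{\tau}$ is smooth. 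Applying this to $\kappa=\nu_{0,\alpha}(s,\cdot)$ gives \eqref{eq:rho-alpha}.

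\emph{Step 2: the limit $s\to-\infty$.} By Proposition \ref{prop:S2-global-existence}, transported to $\Sigma$ by $Q$, the trajectory $\nu_{0,\alpha}(s,\cdot)$ converges to $\gamma_{1,0}$ as $s\to-\infty$. The convergence is in $H^1_t$, and in fact exponentially fast along the strong unstable branch. Multiplication by the smooth function $\frac{t}{\tau}$ composed with the constant linear map $K$ is bounded on $H^1([0,\tau],\C^3)$. Hence $\rho_\alpha(s,\cdot)\to\frac{t}{\tau}K\gamma_{1,0}$ in $H^1_t$, and therefore uniformly in $t$. Since
\[
\gamma_{1,0}(t)=\Bigl(\cosh\tfrac{2\pi it}{\tau},\sinh\tfrac{2\pi it}{\tau},0\Bigr),
\]
applying $K$ swaps the first two entries and kills the third. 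This gives
\[
K\gamma_{1,0}=\Bigl(\sinh\tfrac{2\pi it}{\tau},\cosh\tfrac{2\pi it}{\tau},0\Bigr)=T_{1,0},
\]
which is the stated limit.

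\emph{Step 3: orthogonality.} The normal Bott direction is $\sigma_{B,\alpha}(t)=i\sin(\frac{2\pi t}{\tau})N$ with $N=(0,0,1)$, while the limit $\frac{t}{\tau}T_{1,0}(t)$ has vanishing third component. Pointwise in $t$, we therefore have
\[
\ip{T_{1,0}}{N}_\eta=0,\qquad T_{1,0}\cdot\overline{N}=0.
\]
So the limiting field is orthogonal to $\sigma_{B,\alpha}$ in three senses: for the complex-bilinear pairing, for the Hermitian $L^2_t$ pairing, and hence for the real $L^2_\R$ pairing used in the gauge and augmentation. The same holds in the frame decomposition $\xi=\alpha T_{1,0}+\beta N$: the limit is purely tangential, with $\beta=0$, whereas $\sigma_{B,\alpha}$ is purely normal.

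There is no real obstacle here. The only point needing care is the mode of convergence in Step 2: the $H^1_t$ (or exponential) convergence at the upper end from Proposition \ref{prop:S2-global-existence} must be transferred through the bounded multiplier $\frac{t}{\tau}K$. This is immediate once the boundedness of that multiplier on $H^1_t$ is noted.
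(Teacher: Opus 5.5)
Your proposal is correct and is the same direct computation the paper intends. The paper gives no separate proof: the displayed formulas in the statement (the derivative of $B_{dt/\tau}$ at $d=0$, the upper-end limit $\nu_{0,\alpha}\to\gamma_{1,0}$, and $K\gamma_{1,0}=T_{1,0}$ having no $N$-component) are themselves the argument. Your proof supplies the details left implicit there, namely transferring the $H^1_t$ convergence through the bounded multiplier $\frac{t}{\tau}K$ and checking orthogonality to $\sigma_{B,\alpha}$ for the bilinear, Hermitian and real $L^2$ pairings.
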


We now further formulate the local equation near $\nu^{\mathrm{app}}_{d,\alpha}$. 

\smallskip
Let $X^{\rm fix}_{d,\alpha}$ be the space of weighted sections along $v^{\rm app}_{d,\alpha}$ satisfying the same fixed-endpoint, decay,
upper-end Morse--Bott fixing, and phase-fixing conditions as $X^{\rm fix}_{0,\alpha}$, after identifying the bundles by parallel transport.
And let $Y_{d,\alpha}$ be the corresponding weighted $L^2$-space of sections along $v^{\rm app}_{d,\alpha}$.

Using the parallel transport along $\nu_{0,\alpha}$
and $\nu^{\rm app}_{d,\alpha}$, we have the identifications
\[
\mathcal P_{d,\alpha}:X^{\rm fix}_{0,\alpha}\longrightarrow
X^{\rm fix}_{d,\alpha},\qquad \Theta_{d,\alpha}:Y_{d,\alpha}\longrightarrow Y_{0,\alpha}.
\]
By construction, 
$\mathcal P_{d,\alpha}$ and $\Theta_{d,\alpha}$ depend $C^1$--smoothly on
$d$.  In particular,
\[
\mathcal P_{d,\alpha}=\operatorname{Id}+O(d),
\qquad\Theta_{d,\alpha}=\operatorname{Id}+O(d)
\]
as bounded maps on the corresponding weighted spaces.


For
\[
(\zeta,\rho)\in X^{\rm fix}_{0,\alpha}\oplus\R,
\]
define the perturbation along $\nu^{\rm app}_{d,\alpha}$ by
\[
\mathcal I_{d,\alpha}(\zeta,\rho):=
\mathcal P_{d,\alpha}\zeta+
\rho\,\mathcal{P}_{d,\alpha}\widehat\sigma_{B,\alpha}=\zeta+\rho\widehat{\sigma}_{B,\alpha}+O(d).
\]
This map is $C^1$ in $d$.

Furthermore, we define the parameter-dependent augmented nonlinear map 
\[
\widehat{\mathcal F}_{\alpha}:
[0,\varepsilon)_d\times
\bigl(X^{\rm fix}_{0,\alpha}\oplus\R\bigr)
\longrightarrow Y_{0,\alpha},
\]
to be
\[
\widehat{\mathcal F}_{\alpha}(d,\zeta,\rho)
:=\Theta_{d,\alpha}\left[\mathcal F_d
\left(\exp_{\nu^{\rm app}_{d,\alpha}}^{\C}
\bigl(\mathcal I_{d,\alpha}(\zeta,\rho)\bigr)\right)
\right],\qquad \mathcal{F}_d(z):=\partial_sz-P_z(e^{i\theta(d)}\eta\bar{z}_{tt}).
\]
Here the exponential map is taken pointwise along
$\nu^{\rm app}_{d,\alpha}$. 

By construction, the augmented nonlinear map $\widehat{\mathcal F}_{\alpha}(d,\zeta,\rho)$ is $C^1$ in $(d,\zeta,\rho)$ near $(0,0,0)$ (in the one-sided sense at $d=0$).  Moreover, we have
\begin{itemize}
    \item \[
\widehat{\mathcal F}_{\alpha}(d,0,0)
=\widehat{\mathcal F}_{\alpha}(0,0,0)+O(d)=O(d),
\]
and the sharper estimate (proved in the following Lemma \ref{lem:quadratic-upper-bott})
\[
\Pi_{\rm Bott}^-\Bigl(\widehat{\mathcal F}_{\alpha}(d,0,0)\Bigr)=O(d^2)
\quad\text{in }~\R\sigma_{B,\alpha},
\qquad d\to0^+.
\]
    \item \[
D_{(\zeta,\rho)}
\widehat{\mathcal F}_{\alpha}(d,0,0)
=D_{(\zeta,\rho)}
\widehat{\mathcal F}_{\alpha}(0,0,0)
+O(d)
\]
as bounded operators. In particular,
\[
D_{(\zeta,\rho)}
\widehat{\mathcal F}_{\alpha}(0,0,0)(\xi,r)
=D^{\rm fix}_{0,\alpha}\xi
+r\,D_{0,\alpha}\widehat\sigma_{B,\alpha}.
\]
This is precisely our augmented linearized operator $\widehat{D}_{0,\alpha}$.
\end{itemize}

\begin{lem}\label{lem:quadratic-upper-bott}
For each $\alpha\in\{+,-\}$, we have
\[
\Pi_{\rm Bott}^-\Bigl(\widehat{\mathcal F}_{\alpha}(d,0,0)\Bigr)=O(d^2)
\quad\text{in }~ \R\,\sigma_{B,\alpha},
\qquad d\to0^+.
\]   
\end{lem}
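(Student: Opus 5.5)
The plan is to Taylor expand $d\mapsto\widehat{\mathcal F}_\alpha(d,0,0)=\Theta_{d,\alpha}\bigl[\mathcal F_d(\mathcal T_d\nu_{0,\alpha})\bigr]$ to first order at $d=0$, and to show that the coefficient of $d$ has vanishing $\sigma_{B,\alpha}$-component.

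\textbf{Regularity and reduction to first order.} All ingredients depend real-analytically on $d$ near $0$: $\mathcal T_d$ from \eqref{eq:Tkappa}, the Stokes phase $\theta(d)=\pi-\arctan(d/\pi)$ from \eqref{eq:thetad}, and the parallel transports $\Theta_{d,\alpha}$. Moreover $\nu_{0,\alpha}$ converges to its ends exponentially in every $C^k$ (Proposition \ref{prop:morse-end-C0} and Proposition \ref{prop:S2-global-existence}). Together with the weight $\delta$ chosen below the tail decay rate, this should make $d\mapsto\widehat{\mathcal F}_\alpha(d,0,0)\in Y_{0,\alpha}$ of class $C^2$. Since $\widehat{\mathcal F}_\alpha(0,0,0)=\mathcal F_0(\nu_{0,\alpha})=0$, the term $\partial_d\Theta_{d,\alpha}|_{d=0}$ multiplies zero and drops out, and Taylor's formula gives
\[
\widehat{\mathcal F}_\alpha(d,0,0)=d\,\Bigl(D_{0,\alpha}\rho_\alpha-i\,\theta'(0)\,\partial_s\nu_{0,\alpha}\Bigr)+O(d^2).
\]
Here $\rho_\alpha$ is as in \eqref{eq:rho-alpha} and $\theta'(0)=-1/\pi$. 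For the second term I use
\[
\partial_\theta\bigl[P_z(e^{i\theta}\eta\bar z_{tt})\bigr]=i\,P_z(e^{i\theta}\eta\bar z_{tt}),
\]
together with $P_{\nu}(e^{i\pi}\eta\bar\nu_{tt})=\partial_s\nu$ along the $d=0$ solution. It therefore suffices to prove
\[
\Pi^-_{\rm Bott}\bigl(D_{0,\alpha}\rho_\alpha\bigr)=0,\qquad \Pi^-_{\rm Bott}\bigl(i\,\partial_s\nu_{0,\alpha}\bigr)=0.
\]

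\textbf{Key tool: a reflection symmetry.} Let $R$ be the $d=0$ map $(\mathcal R z)(s,t):=\diag(1,-1,1)\,z(s,\tau-t)$.
\begin{itemize}
\item $R$ is a Lorentz isometry commuting with $\eta$ and complex conjugation, and it fixes $a$. Hence $\mathcal R$ maps solutions of the $d=0$, $\theta=\pi$ flow to solutions.
\item $\mathcal R$ fixes $\gamma_{1,0}$, since $t\mapsto\tau-t$ reverses the circle and $z_1\mapsto -z_1$ reverses it back. It also fixes $\gamma_{0,0}$.
\item $\mathcal R$ maps $\sin(\pi t/\tau)N$ to itself.
\end{itemize}
Consequently $\mathcal R$ preserves each of the two strong-unstable germs $x_\pm$, and by the uniqueness in Proposition \ref{prop:S2-global-existence}, $\mathcal R\nu_{0,\alpha}=\nu_{0,\alpha}$. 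In particular, the $N$-component of $\partial_s\nu_{0,\alpha}(s,\cdot)$ is even about $t=\tau/2$ for every $s$, whereas $\sigma_{B,\alpha}=i\sin(2\pi t/\tau)N$ is odd. So whichever realization of $\Pi^-_{\rm Bott}$ is used, its $\sigma_{B,\alpha}$-coefficient of $i\,\partial_s\nu_{0,\alpha}$ vanishes:
\begin{itemize}
\item if $\Pi^-_{\rm Bott}$ is the $L^2_t$-pairing with $\widehat\sigma_{B,\alpha}$ integrated in $s$, the pairing vanishes slice by slice by parity;
\item if it is the asymptotic coefficient at $s\to-\infty$, the leading tail of $\partial_s\nu_{0,\alpha}$ is the $n=1$ unstable mode, which is $L^2_t$-orthogonal to the $n=2$ mode, and the parity statement covers the remaining terms.
\end{itemize}

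\textbf{The endpoint correction $D_{0,\alpha}\rho_\alpha$.} The same symmetry acts on $\rho_\alpha$. Under $t\mapsto\tau-t$ one has $\tfrac t\tau\mapsto 1-\tfrac t\tau$, and $K$ anticommutes with $\diag(1,-1,1)$. I will therefore split $\rho_\alpha=\tfrac12K\nu_{0,\alpha}+(\tfrac t\tau-\tfrac12)K\nu_{0,\alpha}$ and track the $\mathcal R$-parity of each piece through $D_{0,\alpha}$, which is $\mathcal R$-equivariant. The goal is to show that the $N$-component of $D_{0,\alpha}\rho_\alpha$ has no odd part. As a cross-check, Lemma \ref{lem:endpoint-correction-orthogonal} together with the block-diagonal form of the asymptotic operator (Proposition \ref{prop:linearization}) already gives the statement at the upper end: $\rho_\alpha$ tends to a purely tangential field, so the asymptotic Bott coefficient vanishes. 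An alternative route avoids the parity bookkeeping: since $K$ is the generator of the boosts $B_r$, the field $K\nu_{0,\alpha}$ is an infinitesimal symmetry. This gives $D_{0,\alpha}(K\nu_{0,\alpha})=0$ up to the boundary term, and $D_{0,\alpha}\rho_\alpha$ reduces to explicit commutator terms $[\partial_t^2,\tfrac t\tau]K\nu_{0,\alpha}$, whose $N$-components can be computed directly.

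\textbf{Main obstacle.} I expect the hardest step to be this last one: controlling the $N$-component of $D_{0,\alpha}\rho_\alpha$ in the interior of the cylinder rather than only at the upper end. The tangential projection $P_z$ along the slice can transfer $z_0,z_1$ data into the $N$-direction, because $\nu_{0,\alpha}$ leaves the $(z_0,z_1)$-plane. I would try the symmetry/commutator computation first, and fall back on the parity argument if the commutator terms are not transparent.
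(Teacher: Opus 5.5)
\textbf{Verdict: genuine gap.} The step you flag as the main obstacle is left open, and neither of the two routes you propose for it works.

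\textbf{What is correct.} Your first-order reduction is exactly the paper's. The $\partial_d\Theta_{d,\alpha}$ term multiplies $\mathcal F_0(\nu_{0,\alpha})=0$, and the coefficient of $d$ is $D_{0,\alpha}\rho_\alpha+\frac{i}{\pi}\partial_s\nu_{0,\alpha}$. Your $\mathcal R$-parity argument for the second term is valid. Your upper-end ``cross-check'' for $\rho_\alpha$ is literally the paper's argument: the paper invokes Lemma~\ref{lem:endpoint-correction-orthogonal}, i.e.\ it reads $\Pi^-_{\rm Bott}$ as the asymptotic coefficient at $s\to-\infty$.

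\textbf{The boost route rests on a false claim.} $K$ generates isometries of $g_{\C}$, so $S$ is preserved and geodesics go to geodesics (e.g.\ $\mathcal T_d\gamma_{1,0}=\gamma_{1,d}$). But the Morse flow is the gradient for the ambient Hermitian metric $h$, which real boosts do not preserve: $\eta K=-K\eta$, so $e^{i\theta}\eta\bar z_{tt}$ is not boost-equivariant. Concretely, in the coordinates $(q,p)$ of Section~\ref{subsec:full-d=0}, $K\nu_{0,\alpha}$ is the variation $\delta q=0$, $\delta p=\Omega q$, with $\Omega(x_0,x_1,x_2)=(-x_1,x_0,0)$. Linearizing the $p$-equation of Lemma~\ref{lem:qp-full-flow} at $p\equiv0$, and using $q_{tt}+\lambda q=q_s$, gives the $\delta p$-component $D_{0,\alpha}(K\nu_{0,\alpha})=2\bigl(\Omega q_s-(q\cdot\Omega q_s)q\bigr)$. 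This is not identically zero; it would force $\partial_s q_0=\partial_s q_1=0$.

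\textbf{The parity route cannot compensate.} Since $\diag(1,-1,1)\,K\,\diag(1,-1,1)=-K$, the field $K\nu_{0,\alpha}$ is $\mathcal R$-odd. So the odd part of $\rho_\alpha$ is $\frac12 K\nu_{0,\alpha}$, which has the same parity as $\sigma_{B,\alpha}$, and parity says nothing about its Bott component. Your stated target is in fact false. The $N$-component of the odd part of $D_{0,\alpha}\rho_\alpha$ is $-(q\cdot\Omega q_s)\,q_2$, which is odd in $t$ and generically nonzero.

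\textbf{The missing idea: the slice involution $\rho$.} Use the anti-holomorphic involution $\rho$ with $\operatorname{Fix}(\rho)=\Sigma$. In the $(q,p)$ coordinates it acts by $(q,p)\mapsto(q,-p)$.
\begin{itemize}
\item The real part of the Hermitian product is $\delta q\cdot\delta q'+\delta p\cdot\delta p'$.
\item In Lemma~\ref{lem:qp-full-flow}, $\mu$ is odd in $p$ while $\lambda$ and $M$ are even in $p$. Linearizing at $p\equiv0$ therefore shows that $D_{0,\alpha}$ is block diagonal in $(\delta q,\delta p)$.
\item $\rho_\alpha=\frac{t}{\tau}K\nu_{0,\alpha}$ is pure $\delta p$, hence so is $D_{0,\alpha}\rho_\alpha$.
\item $i\partial_s\nu_{0,\alpha}$ is pure $\delta p$, because multiplication by $i$ sends $(\delta q,\delta p)$ to $(-\delta p,\delta q)$.
\item By contrast, $\sigma_{B,\alpha}=i\sin(2\pi t/\tau)N$ is pure $\delta q$. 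It is $\delta q=(0,0,-\sin(2\pi t/\tau))$, the Bott mode of $E_{S^2}$.
\end{itemize}
The whole first-order coefficient is therefore pointwise orthogonal to $\sigma_{B,\alpha}$, and to any cutoff or projection of it that stays in the $\delta q$-block. So its Bott component vanishes for every reasonable realization of $\Pi^-_{\rm Bott}$, asymptotic or integrated over $s$. This single observation replaces both your parity bookkeeping and the paper's upper-end argument, removes your interior worry, and justifies the vanishing for $\frac{i}{\pi}\partial_s\nu_{0,\alpha}$ that the paper only asserts.
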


\begin{proof}
At $(\zeta,\rho)=(0,0)$, we have $\mathcal I_{d,\alpha}(0,0)=0$, hence
\[
\widehat{\mathcal F}_{\alpha}(d,0,0)
=\Theta_{d,\alpha}
\left[\mathcal F_d(\nu^{\rm app}_{d,\alpha})
\right].
\]    
We can compute
\begin{align*}
\frac{\dd}{\dd d}\bigg|_{d=0}\Theta_{d,\alpha}
\left[\mathcal F_d(\nu^{\rm app}_{d,\alpha})
\right]
&=\left.\partial_d\Theta_{d,\alpha}\right|_{d=0}
\left[\mathcal F_0(\nu_{0,\alpha})\right]
+\Theta_{0,\alpha}\left[
D_z\mathcal F_0(\nu_{0,\alpha})\rho_\alpha
+\left.\partial_d\mathcal F_d(\nu_{0,\alpha})
\right|_{d=0}\right]\\
&=D_{0,\alpha}\rho_\alpha
+\left.\partial_d\mathcal F_d(\nu_{0,\alpha})
\right|_{d=0},
\end{align*}
where we use $\mathcal F_0(\nu_{0,\alpha})=0$ and $\Theta_{0,\alpha}=\Id$.

By Lemma \ref{lem:endpoint-correction-orthogonal}, we have
\[\Pi_{\rm Bott}^-(D_{0,\alpha}\rho_\alpha)=0.\]
On the other hand, the $d$-dependence in $\mathcal F_d$ at $\nu_{0,\alpha}$ is the factor
$e^{i\theta(d)}$, we get
\[
\left.\partial_d\mathcal F_d(\nu_{0,\alpha})\right|_{d=0}
=-P_z\left(i\theta'(0)e^{i\theta(0)}
\eta\overline{\nu_{0,\alpha}}_{tt}\right)=
\frac{i}{\pi}\partial_s\nu_{0,\alpha}.
\]
Hence, we also have
\[\Pi_{\rm Bott}^-\left(\frac{i}{\pi}\partial_s\nu_{0,\alpha}\right)=0.\]
In summary, we obtain
\[
\Pi_{\rm Bott}^-\Bigl(\widehat{\mathcal F}_{\alpha}(d,0,0)\Bigr)=\Pi_{\rm Bott}^-\Bigl(\Theta_{d,\alpha}\mathcal F_d(\nu^{\rm app}_{d,\alpha})\Bigr)=
O(d^2).\qedhere\]
\end{proof}

\begin{prop}\label{prop:at-least-two-small-d}
There exists $d_0>0$ and two smooth branches
\[
d\longmapsto [\nu_{d,+}],
\qquad d\longmapsto [\nu_{d,-}],
\qquad \quad 0<d<d_0,
\]
such that for $\alpha\in \{+,-\}$
\[
[\nu_{d,\alpha}]\in \mathcal{M}_d(a,b_d;\gamma_{0,d},\gamma_{1,d}),
\qquad
[\nu_{d,\alpha}]\to [\nu_{0,\alpha}]
\quad (d\to 0^+).
\]
In particular,
\[
\#\mathcal{M}_d(a,b_d;\gamma_{0,d},\gamma_{1,d})\ge 2\qquad (0<d<d_0).
\]
\end{prop}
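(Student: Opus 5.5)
The plan is a Lyapunov--Schmidt reduction built on the augmented map $\widehat{\mathcal F}_{\alpha}$ from the paper. For each $\alpha\in\{+,-\}$, consider
\[
\widehat{\mathcal F}_{\alpha}:[0,\varepsilon)_d\times\bigl(X^{\rm fix}_{0,\alpha}\oplus\R\bigr)\longrightarrow Y_{0,\alpha}
\]
near $(0,0,0)$. Two facts are already in place. First, $\widehat{\mathcal F}_{\alpha}(0,0,0)=\mathcal F_0(\nu_{0,\alpha})=0$. Second, its partial derivative in $(\zeta,\rho)$ at the origin is the augmented operator $\widehat D_{0,\alpha}$, which is an isomorphism by Lemma \ref{lem:upper-augmented-isomorphism}. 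Since $\widehat{\mathcal F}_\alpha$ is $C^1$ in all variables (one-sided at $d=0$), the implicit function theorem gives $d_0>0$ and a unique $C^1$ branch $d\mapsto(\zeta_\alpha(d),\rho_\alpha(d))$ with $\widehat{\mathcal F}_{\alpha}(d,\zeta_\alpha(d),\rho_\alpha(d))=0$ for $0\le d<d_0$. The branch satisfies $\|\zeta_\alpha(d)\|+|\rho_\alpha(d)|=O(d)$, because $\widehat{\mathcal F}_\alpha(d,0,0)=O(d)$ and $\widehat D_{0,\alpha}^{-1}$ is uniformly bounded. We then set
\[
\nu_{d,\alpha}:=\exp^{\C}_{\nu^{\rm app}_{d,\alpha}}\bigl(\mathcal I_{d,\alpha}(\zeta_\alpha(d),\rho_\alpha(d))\bigr).
\]
This is a solution of \eqref{eq:flow} at the Stokes phase $\theta(d)$ with endpoints $a,b_d$, since $\Theta_{d,\alpha}$ is invertible and $\mathcal T_d$ fixes the endpoints. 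It converges to $\nu_{0,\alpha}$ as $d\to0^+$. The two branches are distinct modulo translation because their $d=0$ limits are: $[\nu_{0,+}]\neq[\nu_{0,-}]$ by Theorem \ref{thm:exact-two-d0}.

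The substantive step is to check that $\nu_{d,\alpha}$ has the correct asymptotics, namely that it converges in $H^1_t$ to $\gamma_{0,d}$ as $s\to+\infty$ and to $\gamma_{1,d}$ as $s\to-\infty$. At the lower end this is routine. The weight $\delta$ lies below the uniform gap $c_0$ of Proposition \ref{prop:exact-spectrum-small-d}(1), so $\zeta_\alpha(d)\in X^{\rm fix}_{0,\alpha}$ decays exponentially. Also $\widehat\sigma_{B,\alpha}$ vanishes for $s\gg0$, and $\nu^{\rm app}_{d,\alpha}\to\mathcal T_d\gamma_{0,0}=\gamma_{0,d}$. The upper end is the main obstacle. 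There the section is $\zeta+\rho\,\sigma_{B,\alpha}+O(d)$, which tends to $\rho_\alpha(d)\sigma_{B,\alpha}$ rather than to $0$, and the soft eigenvalue $\lambda_{\rm soft}(d)\sim 4\pi d/\tau^2$ is much smaller than $\delta$.

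To handle the upper end, I would argue as follows. On the $s\to-\infty$ tail, solutions of the full $d$-flow near $\mathcal T_d\gamma_{1,0}$ are governed by a one-dimensional center-type equation in the soft direction, coupled to hyperbolic directions with gap $\gg\delta$. By Proposition \ref{prop:exact-spectrum-small-d}(2), for $d>0$ the soft direction is genuinely hyperbolic, with eigenvalue $\phi^{\rm nor}_{2,1}(d)=(d^2+4\pi i d)/\tau^2$. The only equilibrium near $\gamma_{1,0}$ in the Bott direction is $\gamma_{1,d}$, since the Bott circle is destroyed for $d>0$. A bounded solution of the reduced equation on the tail must therefore converge to $\gamma_{1,d}$, at the slow rate $e^{\mu s}$ with $\mu<\lambda_{\rm soft}(d)$, as in Corollary \ref{cor:fixed-d-exp-convergence}(ii). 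This reduced equation is where Lemma \ref{lem:quadratic-upper-bott} enters. Projecting the equation $\widehat{\mathcal F}_\alpha=0$ onto $\R\sigma_{B,\alpha}$ gives
\[
\rho_\alpha(d)\,\bigl\langle D_{0,\alpha}\widehat\sigma_{B,\alpha},\sigma_{B,\alpha}\bigr\rangle+O(d^2)+O(d\,\rho)=0,
\]
with the $O(d^2)$ coming from that lemma. This forces $\rho_\alpha(d)=O(d^2)$, which is smaller than the $O(d)$ displacement of $\gamma_{1,d}$ from the Bott family. The soft coefficient thus lies in the basin of the reduced one-dimensional dynamics. I would expect to need care in making these size comparisons uniform in $d$, since $\lambda_{\rm soft}\to0$.

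Finally, I would verify finite energy directly. Since $F_{\theta(d)}$ decreases along the flow, the energy equals $F_{\theta(d)}(\gamma_{1,d})-F_{\theta(d)}(\gamma_{0,d})<\infty$. Smoothness of the branches in $d$ follows from the smooth dependence of $\widehat{\mathcal F}_\alpha$ on its arguments. Together with the distinctness of the two branches, this gives $\#\mathcal M_d(a,b_d;\gamma_{0,d},\gamma_{1,d})\ge2$ for $0<d<d_0$.
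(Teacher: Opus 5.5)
Your argument matches the paper's through the implicit function theorem, the $O(d)$ bounds and the distinctness of the two branches. The gap is your treatment of the upper end.

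\textbf{The backward limit is already fixed by the construction.} Two facts determine it. First, $\zeta_\alpha(d)$ lies in the $\delta$-weighted space $X^{\rm fix}_{0,\alpha}$. Second, $\nu^{\rm app}_{d,\alpha}\to\mathcal T_d\gamma_{1,0}$, and this is \emph{exactly} $\gamma_{1,d}$, because
\[
B_{dt/\tau}\bigl(\cosh\tfrac{2\pi i t}{\tau},\sinh\tfrac{2\pi i t}{\tau},0\bigr)=(\cosh c_{1,d}t,\sinh c_{1,d}t,0).
\]
Hence $\nu_{d,\alpha}(s)\to\exp^{\C}_{\gamma_{1,d}}\bigl(\rho_\alpha(d)\,\mathcal P_{d,\alpha}\sigma_{B,\alpha}\bigr)$ as $s\to-\infty$, with $\delta$-weighted decay of the difference. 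Consequences:
\begin{enumerate}
\item In this chart there is no ``$O(d)$ displacement of $\gamma_{1,d}$ from the Bott family'' against which to compare $\rho$.
\item There is no freedom left for a reduced soft equation to steer the solution into $\gamma_{1,d}$ at a rate slower than $\delta$.
\item For $d>0$, $\gamma_{1,d}$ is an isolated nondegenerate critical point (Proposition \ref{prop:nondegenerate}). So the limit above is critical only if $\rho_\alpha(d)=0$.
\end{enumerate}
Equivalently, for $\rho\neq0$, $\mathcal F_d$ of your ansatz tends as $s\to-\infty$ to the nonzero constant $-V_d\bigl(\exp^{\C}_{\gamma_{1,d}}(\rho\,\mathcal P_{d,\alpha}\sigma_{B,\alpha})\bigr)$, whose size is roughly $\rho\,\lambda_{\rm soft}(d)$. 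It therefore does not lie in $Y_{0,\alpha}$.

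So your bound $\rho_\alpha(d)=O(d^2)$ says nothing about the asymptotics. A genuine zero would force $\rho=0$ together with $\delta$-decay. That describes a trajectory leaving $\gamma_{1,d}$ with no component along the slow soft unstable direction, which is an extra condition nothing in the construction supplies. The mismatch you noticed, $\lambda_{\rm soft}(d)\ll\delta$ against an $s$-independent augmentation, is an obstruction to the fixed-weight set-up itself. It cannot be repaired afterwards by a basin argument.

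\textbf{What is missing is control of the slow soft mode through which the trajectory actually leaves $\gamma_{1,d}$.} The paper writes the upper-end coordinate as $Y_{d,\alpha}=\beta_{d,\alpha}e^{\lambda_{\rm soft}(d)s}\sigma_{{\rm soft},d}+Y^\perp_{d,\alpha}$. It projects the equation onto the soft channel, using $\Pi^-_{\rm soft}(d)=\Pi^-_{\rm Bott}+O(d)$ and $\Pi^-_{\rm Bott}D^{\rm fix}_{0,\alpha}\zeta=0$, and obtains
\[
\lambda_{\rm soft}(d)\,\beta_{d,\alpha}=-\Pi^-_{\rm Bott}\widehat{\mathcal F}_\alpha(d,0,0)+O(d^2).
\]
This is where Lemma \ref{lem:quadratic-upper-bott} is used. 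Dividing by $\lambda_{\rm soft}(d)\sim 4\pi d/\tau^2$ costs one power of $d$ and gives $|\beta_{d,\alpha}|\le Cd$. That bound is the one invoked later in Proposition \ref{prop:compactness-dn-to-0} to rule out Bott drift. Your projection onto $\R\sigma_{B,\alpha}$ spends the same lemma on the wrong quantity, $\rho$ instead of $\beta$.

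Whichever route one takes, the upper end must accommodate the slow $e^{\lambda_{\rm soft}(d)s}$ approach, with estimates uniform as $\lambda_{\rm soft}(d)\to0$. One option is an upper-end weight below $\lambda_{\rm soft}(d)$. Another is to augment by the soft eigenmode rather than by the $s$-independent $\widehat\sigma_{B,\alpha}$. This is exactly the point your sketch flags as needing care but leaves open.
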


\begin{proof}
By Lemma \ref{lem:upper-augmented-isomorphism}, $\widehat{D}_{0,\alpha}$ is an
isomorphism. The Banach-space implicit function theorem, applied to the fixed cut-off augmented map $\widehat{\mathcal{F}}_{\alpha}$, gives for all sufficiently small $d$, a unique small pair
$(\zeta_{d,\alpha},\rho_{d,\alpha})$ such that
\[
\widehat{\mathcal F}_\alpha
(d,\zeta_{d,\alpha},\rho_{d,\alpha})=0.
\]

Moreover, since
\[
\widehat{\mathcal F}_\alpha(d,0,0)=O(d)
\]
in $Y_{0,\alpha}$, and the inverse of the linearized operator remains
uniformly bounded near the isomorphism $\widehat D_{0,\alpha}$, we obtain
\begin{equation}\label{eq:zeta-rho}
\|\zeta_{d,\alpha}\|+|\rho_{d,\alpha}|\le C\|\widehat{\mathcal F}_\alpha(d,0)\|_{Y_{0,\alpha}}
\le Cd.
\end{equation}

Let
\[
z_{d,\alpha}:=\exp_{\nu^{\rm app}_{d,\alpha}}^{\C}
\Bigl(\mathcal I_{d,\alpha}\left(\zeta_{d,\alpha},\rho_{d,\alpha}
\right)\Bigr)
\]
be the small trajectory constructed above. Let
$\sigma_{{\rm soft},d}$ be the normalized upper $\lambda_{\rm soft}(d)$ eigenmode.
As $s\to-\infty$, we can expand
\[
Y_{d,\alpha}(s,t):=\left(\exp_{\gamma_{1,d}(t)}^{\C}\right)^{-1}(z_{d,\alpha}(s,t))=\beta_{d,\alpha}
e^{\lambda_{\rm soft}(d)s}
\sigma_{{\rm soft},d}(t)
+Y_{d,\alpha}^{\perp}(s,t),\qquad s\to-\infty,
\]
where
\[
\lambda_{\rm soft}(d)=c d+O(d^2),~~c>0;
\qquad
\Pi_{\rm soft}^-(d)Y_{d,\alpha}^{\perp}=0.
\]

Since
\[
\widehat{\mathcal F}_{\alpha}
\left(d,\zeta_{d,\alpha},\rho_{d,\alpha}\right)=0,
\]
we expand at $(\zeta,\rho)=(0,0)$:
\[
0=\widehat{\mathcal F}_{\alpha}(d,0,0)+D_{(\zeta,\rho)}
\widehat{\mathcal F}_{\alpha}(d,0,0)
\left(\zeta_{d,\alpha},\rho_{d,\alpha}
\right)+\mathcal N_{d,\alpha}
\left(\zeta_{d,\alpha},\rho_{d,\alpha}\right),
\]
where
\[
\left\|\mathcal N_{d,\alpha}(\zeta,\rho)\right\|
\le C\left(\|\zeta\|^2+|\rho|\,\|\zeta\|+|\rho|^2
\right).
\]

Projecting the equation to the upper soft channel gives
\[
0=\Pi_{\rm soft}^-(d)\widehat{\mathcal F}_{\alpha}(d,0,0)+\lambda_{\rm soft}(d)\beta_{d,\alpha}+
R_{d,\alpha}.
\]
Here $R_{d,\alpha}$ contains the transverse linear term and the nonlinear Taylor remainder.

Let
\[
\mathscr L_{d,\alpha}
:=D_\zeta\widehat{\mathcal F}_\alpha(d,0,0)
:X^{\rm fix}_{0,\alpha}\longrightarrow Y_{0,\alpha}.
\]
Then
\[
\mathscr L_{d,\alpha}\zeta
=\left.\frac{d}{d\varepsilon}\right|_{\varepsilon=0}
\widehat{\mathcal F}_\alpha(d,\varepsilon\zeta,0),\qquad\text{and}\qquad \mathscr L_{0,\alpha}=D_{0,\alpha}^{\rm fix}.
\]

Moreover, by the $C^1$-dependence of
$\widehat{\mathcal F}_\alpha$ on $d$,
\[
\mathscr L_{d,\alpha}
=\mathscr L_{0,\alpha}+O(d)\quad\text{in }~\mathcal L
\left(X^{\rm fix}_{0,\alpha},Y_{0,\alpha}\right).
\]
After the fixed upper-end trivialization, the soft projection also
satisfies
\[
\Pi_{\rm soft}^-(d)=\Pi_{\rm Bott}^-+O(d).
\]
Hence
\[
\begin{aligned}
\Pi_{\rm soft}^-(d)\mathscr L_{d,\alpha}\zeta
&=\bigl(\Pi_{\rm Bott}^-+O(d)\bigr)
\bigl(\mathscr L_{0,\alpha}+O(d)\bigr)\zeta \\
&=\Pi_{\rm Bott}^-D_{0,\alpha}^{\rm fix}\zeta
+O(d)\|\zeta\|=O(d)\|\zeta\|.
\end{aligned}
\]
Thus the projected equation becomes
\[
\lambda_{\rm soft}(d)\beta_{d,\alpha}
=-\Pi_{\rm Bott}^-\widehat{\mathcal F}_{\alpha}(d,0,0)+O\left(d^2+d\|\zeta_{d,\alpha}\|
+\|\zeta_{d,\alpha}\|^2+|\rho_{d,\alpha}|\,\|\zeta_{d,\alpha}\|+|\rho_{d,\alpha}|^2\right).
\]
By Lemma \ref{lem:quadratic-upper-bott}, \eqref{eq:zeta-rho} and $\lambda_{\rm soft}(d)\sim d$, we obtain
\[
|\beta_{d,\alpha}|\le Cd.
\]

Then we get a finite-energy solution
\[
\nu_{d,\alpha}\in
\mathcal M_d(a,b_d;\gamma_{0,d},\gamma_{1,d}).
\]
Since the two $d=0$ trajectories $\nu_{0,+}$ and $\nu_{0,-}$ are
distinct, their small $d$ continuations remain distinct after
decreasing $d_0>0$ if necessary.  Therefore
\[
\#\mathcal M_d(a,b_d;\gamma_{0,d},\gamma_{1,d})\ge2
\]
for all sufficiently small $d>0$.    
\end{proof}

\subsubsection{The normal $O(d)$ control}

To prove exactness of the count, we need a small-$d$ no-escape result. The idea is to use the first-exit argument. To do that, let us establish some useful estimates.

\smallskip
First along the $d=0$ slice,  we will prove that the $d$-twisted vector field has a uniformly order-$d$ normal component. Recall from Section \ref{sec:heat flow on S2} 
\[
\Sigma=Q(S^2)\subset \HH^2_{\C}.
\]
For every smooth path $x:[0,\tau]\to S^2$, we regard $Qx$ as a path in
$\Sigma$. Along $Qx$, the Hermitian metric on $\HH^2_{\C}$ gives the
orthogonal splitting
\[
T_{Qx(t)}\HH^2_{\C}=T_{Qx(t)}\Sigma
\oplus N_{Qx(t)}\Sigma,\qquad 0\le t\le \tau .
\]
Let
\[
\pi_{Dx}^{\perp}:L^2\bigl([0,\tau],(Qx)^*T\HH^2_{\C}\bigr)
\longrightarrow L^2\bigl([0,\tau],(Qx)^*N\Sigma\bigr)
\]
denote the Hermitian orthogonal projection onto the normal component of $L^2$-sections along $Qx$.

For simplicity, denote the $d$-twisted Morse vector field by
\[
V_d(z):=P_z\bigl(e^{i\theta(d)}\eta \overline{z}_{tt}\bigr),\quad\text{with}\quad \theta(d)=\pi-\arctan\frac d\pi.
\]
Along the slice, its
normal component 
\[
V_d^\perp(Qx):=\pi_{Qx}^{\perp}\bigl(V_d(Qx)\bigr)
\]
measures the part of the $d$-twisted vector field which pushes the flow out of the $d=0$ slice $\Sigma$.

\begin{lem}\label{lem:order-d-almost-tangent}
With the notations above, there exists a constant $C>0$, independent of $x$ on bounded $C^2$-sets
and independent of all sufficiently small $d\ge 0$, such that
\[
\bigl\|V_d^\perp(Qx)\bigr\|_{L_t^2}\le Cd .
\]
\end{lem}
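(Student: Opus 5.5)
The plan is to exploit the fact that at $d=0$ the vector field $V_0=P_z(-\eta\overline z_{tt})$ is $\rho$-equivariant, so it is tangent to $\Sigma=\operatorname{Fix}(\rho)$. Then I would write $V_d$ as $V_0$ plus a correction whose size is controlled by $|e^{i\theta(d)}-e^{i\pi}|$. Because $P_z$ is $\C$-linear in its argument $W$ (the formula $P_z(W)=W-\frac{z^T\eta W}{|z|^2}\eta\overline z$ is complex linear in $W$), we get
\[
V_d(Qx)=e^{i\theta(d)}P_{Qx}\bigl(\eta\overline{(Qx)}_{tt}\bigr)=-e^{i(\theta(d)-\pi)}V_0(Qx).
\]
The normal projection $\pi^\perp_{Qx}$ is only $\R$-linear. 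Since $T\HH^2_{\C}|_\Sigma=T\Sigma\oplus iT\Sigma$, and this splitting is Hermitian-orthogonal, the normal bundle is $N\Sigma=iT\Sigma$.

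First, from the $\rho$-equivariance computation in the Proposition of Section \ref{sec:heat flow on S2}, I would record that $V_0(Qx)\in T_{Qx}\Sigma$ pointwise. Concretely, $V_0(Qx)=Q\,(x_{tt}+|x_t|^2x)$, which is the $S^2$ heat-flow vector field, up to the sign already fixed in that proof. Next, writing $e^{i(\theta(d)-\pi)}=\cos\psi+i\sin\psi$ with $\psi=-\arctan(d/\pi)$, one gets
\[
V_d(Qx)=-\cos\psi\,V_0(Qx)-\sin\psi\,iV_0(Qx).
\]
The first term is tangent to $\Sigma$. The second lies in $iT\Sigma=N\Sigma$. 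Hence
\[
V_d^\perp(Qx)=-\sin\psi\, iV_0(Qx),
\]
and so
\[
\|V_d^\perp(Qx)\|_{L^2_t}=|\sin\psi|\,\|V_0(Qx)\|_{L^2_t}\le \frac{d}{\pi}\,\|x_{tt}+|x_t|^2x\|_{L^2_t}.
\]

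Finally, on a bounded $C^2$-set of paths $x$ into $S^2$, the quantity $\|x_{tt}+|x_t|^2x\|_{L^2_t}\le \sqrt\tau(\|x\|_{C^2}+\|x\|_{C^1}^2)$ is uniformly bounded. This gives $C$ independent of $x$ and of $d$. It also shows the identical bound holds with $\pi^\perp$ realized either pointwise or as the $L^2$ projection.

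The only step needing care is the identification $N_{Qx}\Sigma=iT_{Qx}\Sigma$ with respect to the Hermitian metric $h$, restricted from $\C^3$. I would check it by noting that $T\Sigma=Q\,T S^2$ consists of vectors $Qv$ with $v$ real. Then $\re\bigl(Qv\cdot\overline{iQw}\bigr)=\re(-i\,v\cdot w)=0$ for real $v,w$, using that $Q$ is unitary. So $iT\Sigma$ is $h$-orthogonal (in the real sense) to $T\Sigma$, and by dimension count it is the full normal space. I would also double-check the sign convention of $\theta(d)$. This is harmless, since only $|\sin\psi|\le d/\pi$ enters the bound.
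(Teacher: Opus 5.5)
Your proof is correct, but it takes a different route from the paper's.

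\textbf{The paper's argument} is soft. It notes $V_0^\perp(Qx)=0$ by $\rho$-equivariance and that $(d,x)\mapsto V_d^\perp(Qx)$ is smooth. It then applies the mean-value theorem in $d$, bounding $\partial_\delta V_\delta^\perp$ on bounded $C^2$-sets by a ``compactness argument''. That phrasing is a little loose, since bounded $C^2$-sets are not compact, although the derivative is explicitly bounded anyway.

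\textbf{Your argument} is exact. You use that $P_z$ is $\C$-linear in $W$ and that $N\Sigma=iT\Sigma$; the latter is the only sensible reading of ``Hermitian orthogonal'' for the totally real $T\Sigma$, namely orthogonality for $\re h$. This gives the identity $V_d^\perp(Qx)=\sin\psi\, iV_0(Qx)$ with $\psi=\theta(d)-\pi=-\arctan(d/\pi)$.

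There is a harmless sign slip. Since $V_0=-P_z(\eta\overline z_{tt})$, one has $V_d=-e^{i\theta(d)}V_0=e^{i(\theta(d)-\pi)}V_0$, not $-e^{i(\theta(d)-\pi)}V_0$. Only $|\sin\psi|$ enters the bound, so nothing changes.

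\textbf{What your version buys.}
\begin{itemize}
\item An explicit constant, $C=\pi^{-1}\sup_x\|x_{tt}+|x_t|^2x\|_{L^2_t}$.
\item The sharper relative bound $\|V_d^\perp(Qx)\|_{L^2_t}\le \frac{d}{\pi}\|V_0(Qx)\|_{L^2_t}$.
\end{itemize}
Along the slice trajectories $V_0(\nu_{0,\alpha})=\partial_s\nu_{0,\alpha}$. So the relative bound controls the normal push-off in $L^2$ in $s$ by $\frac{d}{\pi}$ times the finite energy norm of $\partial_s\nu_{0,\alpha}$. The estimate $\|f_{I,d}\|_{\mathcal Y(I)}\le C_0d$ on half-infinite intervals in Proposition \ref{prop:YIestimate} actually needs this kind of control; a uniform pointwise-in-$s$ bound $Cd$ alone does not integrate over an infinite interval. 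Your formula also reproduces $\partial_d\mathcal F_d(\nu_{0,\alpha})|_{d=0}=\frac{i}{\pi}\partial_s\nu_{0,\alpha}$ from Lemma \ref{lem:quadratic-upper-bott}.

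\textbf{What the paper's version buys.} It does not rely on the special structure (complex linearity of $P_z$ and $N\Sigma=iT\Sigma$). It would therefore survive any smooth $d$-dependent perturbation of the vector field that is tangent to $\Sigma$ at $d=0$.
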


\begin{proof}
At $d=0$, we have $\theta(0)=\pi$, and  the preceding analysis in Section \ref{sec:heat flow on S2} shows
\[
V_0(Qx)\in L^2\bigl([0,\tau],(Qx)^*T\Sigma\bigr).
\]
Applying the normal projection gives
\[
V_0^\perp(Qx)=
\pi_{Qx}^{\perp}\bigl(V_0(Qx)\bigr)=0.
\]

Now fix a bounded $C^2$-set of slice paths. On such a set, the map
\[
(d,z)\longmapsto V_d(z)
=P_z\bigl(e^{i\theta(d)}\eta\overline{z}_{tt}\bigr)
\]
depends smoothly on $d$ and on $z$. Moreover, the Hermitian orthogonal splitting
\[
T_{Qx(t)}\HH^2_{\C}=T_{Qx(t)}\Sigma
\oplus N_{Qx(t)}\Sigma
\]
varies smoothly with the base point $Qx(t)$. Hence the normal projection
$\pi_{Qx}^{\perp}$ also varies smoothly with $x$. Therefore the map
\[
(d,x)\longmapsto V_d^\perp(Qx)
:=\pi_{Qx}^{\perp}\bigl(V_d(Qx)\bigr)
\]
is smooth on bounded $C^2$-sets of slice paths.

Since this map vanishes 
at $d=0$, the mean-value theorem in the
parameter $d$ gives
\[
\bigl\|V_d^\perp(Qx)\bigr\|_{L_t^2}=
\bigl\|V_d^\perp(Qx)-V_0^\perp(Qx)\bigr\|_{L_t^2} 
\le d\,\sup_{0\le \delta\le d}\left\|\frac{\partial}{\partial \delta}V_\delta^\perp(Qx)\right\|_{L_t^2}.
\]
On a fixed bounded $C^2$-set of paths, the standard compactness argument
shows that the last supremum is bounded by a constant $C$. Thus
\[
\bigl\|V_d^\perp(Qx)\bigr\|_{L_t^2}\le Cd. \qedhere
\]
\end{proof}

\subsubsection{The small-$d$ no-escape theorem}

\smallskip
Set
\[
\mathcal{K}_0:=\{\gamma_{1,0}\}\cup
\{\nu_{0,+}(s,\cdot):s\in \R\}\cup
\{\nu_{0,-}(s,\cdot):s\in \R\}\cup
\{\gamma_{0,0}\}.
\]
It is the compactified $d=0$ model. We use the following three kinds of local charts:
\begin{itemize}
\item[(a)] near $\gamma_{1,0}$, a Morse--Bott end chart, with the Bott direction separated from the transverse directions;
\item[(b)] along $\nu_{0,\pm}$, the tubular neighborhood charts;
\item[(c)] near $\gamma_{0,0}$, an ordinary Morse end chart.
\end{itemize}
The asymptotic expansions at $\gamma_{1,0}$, $\gamma_{0,0}$, established in the previous
subsections, will be used to glue these charts with uniform constants.

Now fix an interval $I\subset\R_s$ on which one of the above local
charts is used. We denote by
\[
\kappa_I:I\times[0,\tau]\longrightarrow \HH^2_{\C}
\]
the corresponding $d=0$ reference path in $\mathcal{K}_0$. Using the endpoint correction \eqref{eq:Tkappa}, set
\[
\kappa_{I,d}:=\mathcal T_d(\kappa_I).
\]
Then
\[
\kappa_{I,d}(s,0)=a,\qquad
\kappa_{I,d}(s,\tau)=b_d,\qquad
\kappa_{I,d}=\kappa_I+O(d).
\]
A nearby $d$-trajectory is written in the form
\[
z(s,\cdot)=\exp_{\kappa_{I,d}(s,\cdot)}^{\C}Y_I(s,\cdot),\qquad s\in I.
\]
Then the fixed-endpoint condition for $z$ becomes the
homogeneous Dirichlet condition
\[
Y_I(s,0)=0,\qquad Y_I(s,\tau)=0.
\]

Let $E_{I,d}:=\kappa_{I,d}^{*}T\HH^2_{\C}$. Since $\kappa_{I,d}$ is $O(d)$-close to $\kappa_I$, we identify
$E_{I,d}$ with
\[
E_I:=E_{I,0}=\kappa_I^{*}T\HH^2_{\C}
\]
by parallel transport along the short curves
$r\mapsto \kappa_{I,r}(s,t)$, $0\le r\le d$. Under this identification,
all $d$-dependent changes of the reference path contribute only $O(d)$
terms. 

Define $\mathcal X(I)$ to be the
space of sections $Y_I$ of $E_I$ such that
\[
Y_I\in H^1(I;L_t^2(E_I))\cap
L^2(I;H_t^2(E_I)\cap H^1_{0,t}(E_I))
\cap L^\infty(I;H^1_{0,t}(E_I)),
\]
satisfying 
\begin{enumerate}[label=\textup{(\alph*)}]
\setlength{\itemsep}{-1pt}
    \item near the Morse--Bott end $\gamma_{1,0}$, 
    \[Y_I \perp \sigma_B;\]
    \item on a middle chart around $\nu_{0,\alpha}$,
    \[Y_I \perp \partial_s\nu_{0,\alpha};\]
    \item near the Morse end $\gamma_{0,0}$, $Y_I$ satisfies the decaying Morse-end condition.
\end{enumerate}
Here the orthogonality is taken in the relevant $L^2$-sense. The norm is
\[
\|Y_I\|_{\mathcal X(I)}:=\|Y_I\|_{H^1(I;L_t^2)}+
\|Y_I\|_{L^2(I;H_t^2)}+\|Y_I\|_{L^\infty(I;H_t^1)}.
\]

For each chart, let $L_I(s)$ denote the $d=0$ linearized operator along the reference map $\kappa_I$. Then we can write the full $d=0$ linearized operator in the $I$-chart as
\[
\nabla_s^{\kappa_I}-L_I(s).
\]

The target space is
\[
\mathcal Y(I):=L^2(I;L_t^2(E_I)),\qquad
\|F\|_{\mathcal Y(I)}:=\|F\|_{L^2(I;L_t^2)}.
\]
On half-infinite end intervals, the same notation will be used for the
corresponding exponentially weighted versions of these spaces.

\begin{prop}\label{prop:YIestimate}
There exist $\rho_*>0$, constants $C_0,C_1,C_2>0$, and a system of local
charts as above such that the following holds. Let $0<\rho\le\rho_*$ and let $d\ge0$ be sufficiently small. If a $d$-trajectory $z$
\[
z(s,\cdot)=\exp_{\kappa_{I,d}(s,\cdot)}Y_I(s,\cdot),\qquad s\in I.
\]
stays in the $\rho$-neighborhood of $\mathcal{K}_0$ on an interval $I$, then 
$Y_I\in\mathcal X(I)$ satisfies
\[
\nabla_s^{\kappa_I}Y_I-L_I(s)Y_I
=f_{I,d}(s)+R_{I,d}(Y_I).
\]
The two remainders (forcing and nonlinear reminder respectively) satisfy
\[
\|f_{I,d}\|_{\mathcal Y(I)}\le C_0d,
\]
and
\[
\|R_{I,d}(Y_I)\|_{\mathcal Y(I)}
\le C_1\left(\rho\|Y_I\|_{\mathcal X(I)}
+\|Y_I\|_{\mathcal X(I)}^2\right).
\]
Moreover, 
\[
\|Y_I\|_{\mathcal X(I)}\le
C_2\|\nabla_s^{\kappa_I}Y_I-L_I(s)Y_I\|_{\mathcal Y(I)}.
\]
On every bounded interval $I=[-R,R]$, bounded subsets of $\mathcal X(I)$
are relatively compact in $C^0(I;H_t^1)$.
\end{prop}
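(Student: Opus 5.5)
The plan is to derive the local equation by Taylor-expanding the $d$-flow operator $\mathcal F_d(z)=\partial_s z-P_z(e^{i\theta(d)}\eta\bar z_{tt})$ around the $d=0$ reference map $\kappa_I$ in the chart $z=\exp^{\C}_{\kappa_{I,d}}Y_I$. We then read off the forcing, the nonlinearity and the linear estimate from the structure already established. Concretely, we write
\[
\mathcal F_d\bigl(\exp^{\C}_{\kappa_{I,d}}Y_I\bigr)
=\mathcal F_d(\kappa_{I,d})+D\mathcal F_d(\kappa_{I,d})Y_I+\mathcal Q_{I,d}(Y_I),
\]
transported to $E_I$ by the parallel transport identification. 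Setting this equal to zero and moving $D\mathcal F_0(\kappa_I)Y_I=\nabla_s^{\kappa_I}Y_I-L_I(s)Y_I$ to the left gives the stated equation with
\[
f_{I,d}:=-\mathcal F_d(\kappa_{I,d}),\qquad
R_{I,d}(Y_I):=-\bigl(D\mathcal F_d(\kappa_{I,d})-D\mathcal F_0(\kappa_I)\bigr)Y_I-\mathcal Q_{I,d}(Y_I).
\]
The Dirichlet condition on $Y_I$ follows from $\kappa_{I,d}(s,0)=a$ and $\kappa_{I,d}(s,\tau)=b_d$. The gauge conditions (a)--(c) are arranged by choosing the reference parameter, namely a Bott-family point, a translation $s$, or the Morse-end tail, via an implicit-function argument within each chart.

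\textbf{Forcing bound.} Since $\mathcal F_0(\kappa_I)=0$ and $\kappa_{I,d}=\mathcal T_d\kappa_I$ depends smoothly on $d$, the mean value theorem gives
\[
\|f_{I,d}\|_{L^2_t}\le d\sup_{\delta\le d}\|\partial_\delta\mathcal F_\delta(\mathcal T_\delta\kappa_I)\|_{L^2_t}
\]
pointwise in $s$. The $L^2(I)$ norm requires integrability in $s$. On middle charts $I$ is bounded, so this is immediate. On the end charts we use the exponential convergence of $\nu_{0,\pm}$ to $\gamma_{1,0},\gamma_{0,0}$ (Propositions \ref{prop:morse-end-C0} and \ref{prop:S2-global-existence}). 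For $s$ far out, $\kappa_I$ is within $e^{-\delta|s|}$ of a critical path. There $\mathcal F_d(\mathcal T_d\gamma_{j,0})$ is the $d$-derivative of a time-independent expression. We therefore expect to replace the reference near the ends by the exact critical paths $\gamma_{j,d}$ themselves, which have $\mathcal F_d(\gamma_{j,d})=0$. The remaining discrepancy $\mathcal T_d\gamma_{j,0}-\gamma_{j,d}=O(d)$ is absorbed into $Y_I$ as a constant-in-$s$ shift. The weighted norms on the ends then handle the decaying part, which gives $C_0 d$.

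\textbf{Nonlinear bound.} The first piece of $R_{I,d}$ is $O(d)\|Y_I\|$ in operator norm from $\mathcal X$ to $\mathcal Y$, since coefficients differ by $O(d)$ in $C^2$. Because $d\le\rho$ may be assumed, this is at most $C\rho\|Y_I\|$. The quadratic piece $\mathcal Q$ involves terms like $Y\cdot Y_{tt}$ and $(Y_t)^2$. We bound these with $\|Y\|_{L^\infty_{s}H^1_t}\|Y\|_{L^2_sH^2_t}$ using $H^1\hookrightarrow C^0$ in one dimension, giving $\|Y_I\|_{\mathcal X}^2$. Any $\rho$-dependent part comes from the $\rho$-closeness of $z$ to $\mathcal K_0$ entering the coefficients.

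\textbf{Linear estimate and compactness (main obstacle).} The linear estimate $\|Y\|_{\mathcal X}\le C_2\|(\nabla_s-L_I)Y\|_{\mathcal Y}$ is the crucial step. On the end charts, $L_I$ is an exponentially small perturbation of the asymptotic operators. At the Morse end this operator is invertible by the spectral gap of Proposition \ref{prop:exact-spectrum-small-d}. At the Morse--Bott end it is invertible once $\sigma_B$ is excluded by condition (a), using the spectral decomposition of Lemma \ref{lem:hessian-once-round}, with a weight smaller than the gap. On middle charts we use the invertibility of $\widehat D_{0,\alpha}$ from Lemma \ref{lem:upper-augmented-isomorphism}, restricted by the translation gauge. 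This is combined with standard maximal parabolic regularity, since the $t$-part is a uniformly elliptic second-order operator with Dirichlet conditions and $\R$-linear conjugation. Uniformity of the constants across charts follows from there being only finitely many charts after compactness of $\mathcal K_0$.

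Finally, relative compactness on bounded $I$ is an Aubin--Lions statement: $H^1(I;L^2_t)\cap L^2(I;H^2_t)\hookrightarrow C^0(I;H^1_t)$ is compact.
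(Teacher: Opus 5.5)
Your decomposition is the same as the paper's proof: the forcing $f_{I,d}=-\mathcal F_d(\kappa_{I,d})$ is controlled by a $d$-derivative (the paper phrases this through Lemma~\ref{lem:order-d-almost-tangent}), the quadratic remainder by $\|Y\|_{L^\infty_sH^1_t}\|Y\|_{L^2_sH^2_t}$, a chart-wise linear estimate is glued over finitely many charts, and Aubin--Lions gives compactness. Those pieces are fine. The step that fails is the linear estimate $\|Y_I\|_{\mathcal X(I)}\le C_2\|\nabla^{\kappa_I}_sY_I-L_I(s)Y_I\|_{\mathcal Y(I)}$ on an interval $I\neq\R$. The reason is that $\nabla_s-L_I(s)$ is not parabolic. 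The asymptotic operators $A_{j,d,\theta}$ have spectrum $\pm|\phi|$; compare \eqref{eq:uvlineaer}, where $u$ solves a forward and $v$ a backward heat equation. At $d=0$ the normal block is the backward-parabolic equation $p_s=-p_{tt}-\lambda p+\mu q$ of Lemma~\ref{lem:forward-tail-rigidity}. The spectral gap, the exponential dichotomy and the isomorphism $\widehat D_{0,\alpha}$ of Lemma~\ref{lem:upper-augmented-isomorphism} are statements on all of $\R_s$, and they do not restrict to subintervals. Concretely, take $I=(-\infty,S]$ with reference $\gamma_{1,0}$. Every mode $e^{\mu(s-S)}e$ with $A_{1,0,\pi}e=-\mu e$ and $\mu>\delta$ lies in $\mathcal X(I)$ and is annihilated. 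For example, the slice-normal modes $\sin(n\pi t/\tau)N$, $n\ge 3$, have $\mu=(n^2-4)\pi^2/\tau^2$ and satisfy (a). On a bounded middle interval, arbitrary initial data for the tangential block and arbitrary terminal data for the normal block give an infinite-dimensional family of kernel elements, which the single condition (b) cannot remove. So neither the spectral-gap argument nor maximal parabolic regularity can give this estimate. What the dichotomy does give is an estimate with boundary terms at the finite ends of $I$, of the form $\|Y\|_{\mathcal X(I)}\le C\bigl(\|(\nabla_s-L_I)Y\|_{\mathcal Y(I)}+\|\Pi^sY(s_1)\|_{H^1_t}+\|\Pi^uY(s_2)\|_{H^1_t}\bigr)$. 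The term $\|\Pi^uY(s_n^*)\|_{H^1_t}$ at the exit time is then exactly what the first-exit argument of Proposition~\ref{prop:no-escape-small-d} would have to control, and the local analysis does not provide it. The paper's proof asserts the same chart-wise estimate in one sentence, so you have not diverged from it here, but neither version closes at this step.

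A second, smaller gap is your claim that $Y_I\in\mathcal X(I)$, in particular gauge (a), can be arranged by choosing the reference. First, a correction in your favour: $\mathcal T_d\gamma_{j,0}=\gamma_{j,d}$ exactly, because $B_rB_{r'}=B_{r+r'}$ also holds for complex $r$. So with the constant end reference $f_{I,d}\equiv 0$, and there is no $O(d)$ discrepancy to absorb. Absorbing a nonzero $s$-independent shift into $Y_I$ would in any case take it out of $\mathcal X$ on a half-line. The same computation shows that moving the end reference to another point $\kappa_B$ of the Bott family fails for $d>0$. In that case $\mathcal T_d\kappa_B$ is not a holomorphic geodesic, so $\mathcal F_d(\mathcal T_d\kappa_B)$ is a nonzero $s$-independent $O(d)$ term that is not in $\mathcal Y((-\infty,S])$. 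Moreover, the trajectory converges to $\gamma_{1,d}$, not to $\mathcal T_d\kappa_B$. Meanwhile the soft component of a $d$-trajectory behaves like $\beta_{d,\alpha}e^{\lambda_{\rm soft}(d)s}$ with $\lambda_{\rm soft}(d)\sim 4\pi d/\tau^2$ (cf.\ Corollary~\ref{cor:fixed-d-exp-convergence}(ii) and the proof of Proposition~\ref{prop:at-least-two-small-d}). For small $d$ this decay is slower than any fixed weight $\delta$. The paper only shows $|\beta_{d,\alpha}|\le Cd$, not $\beta_{d,\alpha}=0$. So this coefficient has to be carried as an explicit extra parameter, as $\rho$ is in $\widehat D_{0,\alpha}$, rather than gauged away.
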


\begin{proof}
Write the trajectory in the chosen chart as
\[
z(s,\cdot)=\exp_{\kappa_{I,d}(s,\cdot)}^{\C}Y_I(s,\cdot),\qquad Y_I(s,0)=Y_I(s,\tau)=0 .
\]
Substituting this expression into the $d$-flow equation $\partial_s z=V_d(z)$ and expanding around the reference trajectory $\kappa_I$ gives
\[
\nabla^{\kappa_I}_{s}Y_I-L_I(s)Y_I
=f_{I,d}(s)+R_{I,d}(Y_I).
\]
Here 
\begin{itemize}
\setlength{\itemsep}{-1pt}
    \item $L_I(s)$ is the $d=0$ sliced linearized operator along $\kappa_I$.
    \item The term $f_{I,d}$ is the $Y_I$-independent forcing term:
    \[
 f_{I,d}
 =\bigl(V_d(\kappa_{I,d})-\partial_s\kappa_{I,d}\bigr)^{\perp}.
\]
It therefore measures how far $\kappa_{I,d}$ is from being an exact
$d$-flow trajectory. By the estimates in Lemma \ref{lem:order-d-almost-tangent} for $(V_d-V_0)^{\perp}$ and for
$\kappa_{I,d}-\kappa_I$, we have uniformly
\[
\|f_{I,d}\|_{\mathcal Y(I)}\le C_0 d.
\]
    \item The term $R_{I,d}(Y_I)$ is the remaining nonlinear part. Thus $R_{I,d}(0)=0$, it is quadratic in $Y_I$ at $d=0$,
and for small $d$ it satisfies
\[
\|R_{I,d}(Y_I)\|_{\mathcal Y(I)}\le C_1\bigl(\rho\,\|Y_I\|_{\mathcal X(I)}+\|Y_I\|_{\mathcal X(I)}^2\bigr).
\]
\end{itemize}

It only remains to recall the uniform estimate for the sliced linearized
operator.  On the two ends this is the standard Morse--Bott estimate after
removing the Bott direction.  On the compact middle part it follows from
the regularity and nondegeneracy of the two $d=0$ trajectories
$\nu_{0,+}$ and $\nu_{0,-}$. Since the compactified model
$\mathcal K_0$ is covered by finitely many such charts, we obtain, after
possibly decreasing $\rho_*$,
\[
\|Y_I\|_{\mathcal X(I)}\le C_2\bigl\|
\nabla^{\kappa_I}_{s}Y_I-L_I(s)Y_I
\bigr\|_{\mathcal Y(I)} .
\]

Finally, if $I=[-R,R]$ is bounded, the definition of the
$\mathcal X(I)$-norm controls
\[
H^1(I;L_t^2)\cap L^\infty(I;H_t^1)\cap L^2(I;H_t^2).
\]
Since $H_t^2\Subset H_t^1\hookrightarrow L_t^2$ on the compact
$t$-interval, the Rellich--Aubin--Lions compactness theorem implies that
bounded subsets of $\mathcal X(I)$ are relatively compact in $C^0(I;H_t^1)$.
\end{proof}

\begin{prop}\label{prop:no-escape-small-d}
There exist $d_*>0$ and a neighborhood $\mathcal N(\mathcal{K}_0)\subset \PP^{\C}_0$
of the compactified $d=0$ model such that the following holds. For every
$0<d\le d_*$, every finite-energy trajectory
\[
z\in \mathcal{M}_d(a,b_d;\gamma_{0,d},\gamma_{1,d})
\]
satisfies
\[
\mathcal T_d^{-1}z(s,\cdot)\in \mathcal N(\mathcal{K}_0)
\qquad\text{for all }s\in\R.
\]
\end{prop}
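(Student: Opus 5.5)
The plan is a first-exit (continuity) argument in the parameter $s$, using Proposition \ref{prop:YIestimate} in each chart, with the two ends controlled by the asymptotic results of Corollary \ref{cor:fixed-d-exp-convergence}. Fix $\rho\le\rho_*$ small enough that $C_2C_1\rho\le\frac14$, and let $\mathcal N(\mathcal K_0)$ be the $\rho$-neighborhood, covered by the finitely many charts (a)--(c).

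\textbf{Step 1: both ends lie in the neighborhood.} Take $z\in\mathcal M_d(a,b_d;\gamma_{0,d},\gamma_{1,d})$. As $s\to\pm\infty$, $z(s)$ converges in $H^1_t$ to $\gamma_{0,d}$ or $\gamma_{1,d}$. Moreover $\mathcal T_d^{-1}\gamma_{j,d}=\gamma_{j,0}+O(d)$. So for $d$ small, $\mathcal T_d^{-1}z(s)$ lies in the $\rho/2$-neighborhood of $\mathcal K_0$ for $|s|$ large. Near the Morse--Bott end we must remove the Bott coordinate. In the upper chart, the component of $Y$ along $\sigma_B$ is absorbed by choosing the reference point $\kappa_I$ on the Bott circle $\mathcal C_B$, which is itself contained in the closure of the $d=0$ model's end chart. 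This is why chart (a) is a Morse--Bott chart.

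\textbf{Step 2: first exit.} Suppose the conclusion fails. Let $s_1$ be the supremum of the times $s$ such that $\mathcal T_d^{-1}z$ stays in the closed $\rho$-neighborhood on $(-\infty,s]$. Then $s_1<\infty$, and $\operatorname{dist}(\mathcal T_d^{-1}z(s_1),\mathcal K_0)=\rho$. On $I=(-\infty,s_1]$, split into chart intervals and write $z=\exp_{\kappa_{I,d}}Y_I$ on each. Proposition \ref{prop:YIestimate} gives
\[
\|Y_I\|_{\mathcal X(I)}\le C_2\bigl(C_0d+C_1\rho\|Y_I\|_{\mathcal X(I)}+C_1\|Y_I\|_{\mathcal X(I)}^2\bigr).
\]
We then run a continuity argument in the right endpoint, using that $\|Y_I\|$ is small near $-\infty$ and continuous in the endpoint. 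This yields $\|Y_I\|_{\mathcal X(I)}\le 2C_2C_0d$. Since $L^\infty_sH^1_t$ is part of the $\mathcal X$-norm, the distance at $s_1$ is at most $Cd<\rho$ once $d\le d_*$. This contradicts the exit condition.

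The gluing between consecutive charts needs care. The reference paths change, for instance from the Bott end to $\nu_{0,\alpha}(s,\cdot)$ with the translation gauge $Y\perp\partial_s\nu_{0,\alpha}$. We choose the reference time $s'$ along $\nu_{0,\alpha}$ via the implicit function theorem so that the gauge holds. We also need the constants to be uniform over finitely many transitions, which holds because $\mathcal K_0$ is compact.

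\textbf{Main obstacle.} The hardest step is which branch of $\mathcal K_0$ the trajectory follows, and the fact that the gauge choice (Bott parameter, translation parameter) is only defined while we stay close. A priori the trajectory might leave the upper chart near a Bott point different from $x_{1,*}$, or wander toward another critical set. The latter is excluded by energy. Energy pinching gives $F_{\theta(d)}(z(s))\in[F(\gamma_{0,d}),F(\gamma_{1,d})]$, and by Proposition \ref{prop:FG} the quantity $G_{\theta(d)}$ is constant. Together with Lemma \ref{lem:order-d-almost-tangent}, this bounds the normal drift out of $\Sigma$ by $O(d)$ on $s$-intervals of bounded length. The slice dynamics on $S^2$ below the energy $2\pi^2/\tau$ then force the orbit to follow the unstable manifold of the once-round circle. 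Since the endpoint correction is orthogonal to $\sigma_B$ (Lemma \ref{lem:endpoint-correction-orthogonal}), the Bott parameter stays $O(d)$-close to $x_{1,*}$. I expect this Bott-end control to be the delicate point. The soft eigenvalue $\lambda_{\rm soft}(d)\sim d$ means the upper-end estimates degenerate, so one must use the augmented splitting of Lemma \ref{lem:upper-augmented-isomorphism} rather than a uniform spectral gap.
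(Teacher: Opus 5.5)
Your Step 2 is the paper's proof. It takes a first exit time from $\mathcal N_\rho(\mathcal K_0)$ and applies Proposition~\ref{prop:YIestimate} on $(-\infty,s_1]$ with $C_2C_1\rho$ small. It propagates the small/large dichotomy for $S\mapsto\|Y\|_{\mathcal X((-\infty,S])}$ from $S=-\infty$ by monotonicity and continuity, and uses the $L^\infty_sH^1_t$ part of the norm to contradict the exit. The paper argues by contradiction along $d_n\to0^+$ where you argue directly for fixed $d\le d_*$, which is equivalent.

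The paper does none of the Bott-end analysis in your last paragraph; it is all packaged into Proposition~\ref{prop:YIestimate}. There chart (a) keeps the reference path $\kappa_I$ inside $\mathcal K_0$ and simply builds $Y_I\perp\sigma_B$ into $\mathcal X(I)$. So do not re-center $\kappa_I$ on $\mathcal C_B$ as in your Step 1: then $\|Y_I\|$ no longer bounds the distance to $\mathcal K_0$. Nor does Lemma~\ref{lem:endpoint-correction-orthogonal} repair this, since it concerns only $\partial_d\nu^{\rm app}_{d,\alpha}$ along the two model trajectories and does not control the Bott parameter of an arbitrary trajectory.
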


\begin{proof}
We prove it by a first-exit contradiction.

Let $\rho^*>0$, $C_0,C_1,C_2>0$ be the constants in Proposition \ref{prop:YIestimate}.
Choose $0<\rho<\rho^*$ so small that $C_2C_1\rho<1/2$. Suppose that the theorem were false. Then there exist $d_n\to0^+$ and
trajectories
\[z_n\in \mathcal{M}_{d_n}(a,b_{d_n};\gamma_{0,d_n},\gamma_{1,d_n})\]
such that the pulled-back trajectories
\[\widetilde z_n:=\mathcal T_{d_n}^{-1}z_n\in \PP^{\C}_0 \quad\text{leave }~\mathcal N_\rho(\mathcal{K}_0).\]
As $s\to-\infty$, 
\[z_n(s,\cdot)\to\gamma_{1,d_n},\qquad
\mathcal{T}_{d_n}^{-1}\gamma_{1,d_n}\to\gamma_{1,0}
,\]
then the path $\widetilde z_n(s,\cdot)$ lies in $\mathcal N_\rho(\mathcal{K}_0)$ for
all $s\ll0$. Hence there is a first exit time $s_n^*$
such that
\[
\widetilde z_n((-\infty,s_n^*])\subset\mathcal N_\rho(\mathcal{K}_0),\qquad
\widetilde z_n(s_n^*)\in\partial\mathcal N_\rho(\mathcal{K}_0).
\]
Applying Proposition \ref{prop:YIestimate} on the interval
$ I_n:=(-\infty,s_n^*]$, we can write
\[
z_n(s,\cdot)=\exp_{\kappa_{I_n,d_n}(s,\cdot)}^{\C}
Y_n(s,\cdot),\qquad s\in I_n,
\]
where $Y_n\in\mathcal X(I_n)$ satisfies
\[
\nabla_s^{\kappa_{I_n}}Y_n-L_{I_n}(s)Y_n=
f_{I_n,d_n}(s)+R_{I_n,d_n}(Y_n).
\]
Furthermore, for each $n$ and each $S\le s_n^*$, set
\[
A_n(S):=\|Y_n\|_{\mathcal X((-\infty,S])}.
\]
Again, by Proposition \ref{prop:YIestimate} on the interval $(-\infty,S]$ and our assumption $C_2C_1\rho<1/2$,
\[
\frac12 A_n(S)\le Cd_n+C A_n(S)^2 .
\]
This scalar inequality implies that, for $d_n$ sufficiently small, we have
\[
A_n(S)\le C'd_n\quad\text{ or }\quad A_n(S)\ge c_0.
\]
In particular, $c_0>0$ is independent of $n$.

For each fixed $n$, the convergence
\[
z_n(s,\cdot)\to\gamma_{1,d_n}\qquad (s\to-\infty)
\]
implies that 
the normal coordinate satisfies
\[
A_n(S)\to0 \qquad (S\to-\infty).
\]
Hence there exists $S_n\le s_n^*$ such that
\[
A_n(S_n)\le C'd_n.
\]
Since $S\mapsto A_n(S)$ is continuous from the right on finite subintervals
and monotone nondecreasing, the value $A_n(S)$ cannot pass from the small
branch to the large branch without crossing the forbidden gap between them.
Therefore
\[
A_n(S)\le C'd_n \qquad\text{for all }S\le s_n^*.
\]

By the definition of the $\mathcal X$-norm, we have
\[
\sup_{s\le s_n^*}\|Y_n(s,\cdot)\|_{H_t^1}
\le C''d_n.
\]
The complexified exponential map and the endpoint correction $\mathcal T_{d_n}$
are smooth and uniformly $C^1$-close to the identity for $d_n$ small.
Therefore
\[
\operatorname{dist}_{H_t^1}
\bigl(\mathcal T_{d_n}^{-1}z_n(s,\cdot),\mathcal{K}_0\bigr)
\le C''d_n\qquad\text{for all }s\le s_n^*.
\]
For $n$ sufficiently large, $C''d_n<\rho/2$. Hence
\[
\widetilde z_n((-\infty,s_n^*])
\subset\mathcal N_{\rho/2}(K_0).
\]
This contradicts the first exit time assumption
\[
\widetilde z_n(s_n^*)\in\partial\mathcal N_\rho(\mathcal{K}_0).\qedhere
\]
\end{proof}

\subsubsection{Compactness as $d_n\to 0^+$ and the exact-two count}

We now combine no-escape with local continuation. A limiting sequence could
fail to give the desired $d=0$ trajectory only by breaking, or by developing
a zero-energy drift along the upper Bott family at $d=0$. The first
phenomenon is excluded, since 
for $0<d\ll1$ the no-escape neighborhood contains no intermediate critical
point $\zeta$ with
\[
F_{\theta(d)}(\gamma_{0,d})< F_{\theta(d)}(\zeta)
< F_{\theta(d)}(\gamma_{1,d}),\qquad
G_{\theta(d)}(\zeta)=G_{\theta(d)}(\gamma_{0,d})
=G_{\theta(d)}(\gamma_{1,d}).
\]
The second phenomenon is excluded by our estimate $|\beta_{d_n,\alpha}|\leq Cd_n$ in Proposition \ref{prop:at-least-two-small-d}.

\begin{prop}\label{prop:compactness-dn-to-0}
Let $d_n\to 0^+$ and choose
\[
[u_n]\in \mathcal{M}_{d_n}(a,b_{d_n};\gamma_{0,d_n},\gamma_{1,d_n}).
\]
Then, after $s$-translation, a subsequence converges in $C^\infty_{\mathrm{loc}}$ to an element of
\[
\mathcal{M}_0(a,a;\gamma_{0,0},\gamma_{1,0})=
\{[\nu_{0,+}],[\nu_{0,-}]\}.
\]
\end{prop}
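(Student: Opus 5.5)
The plan is a standard Gromov--Floer type compactness argument, organised around the no-escape result Proposition \ref{prop:no-escape-small-d} and the estimates of Proposition \ref{prop:YIestimate}.

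\textbf{Step 1 (normalisation and local bounds).} Pull back by the endpoint correction and set $\widetilde u_n:=\mathcal T_{d_n}^{-1}u_n$, which are paths in $\PP_0^{\C}$. By Proposition \ref{prop:no-escape-small-d}, every $\widetilde u_n(s,\cdot)$ lies in the fixed neighborhood $\mathcal N(\mathcal K_0)$. In particular the trajectories stay in a compact region of $\HH^2_{\C}$ and are uniformly $H^1_t$-bounded.

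We fix the translation gauge by choosing $s_n$ so that $F_{\theta(d_n)}(u_n(s_n))$ equals a fixed intermediate value. The natural choice is the midpoint between the energies $F_{\theta(0)}(\gamma_{1,0})=2\pi^2/\tau$ and $F_{\theta(0)}(\gamma_{0,0})=0$, corrected by $O(d_n)$. This is possible because $F_{\theta(d_n)}$ decreases monotonically from $F_{\theta(d_n)}(\gamma_{1,d_n})$ to $F_{\theta(d_n)}(\gamma_{0,d_n})$, and both converge to the $d=0$ values.

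The total energy $\int\|\partial_s u_n\|^2_{L^2_t}\,\dd s$ equals this action difference, so it is uniformly bounded. On each strip $[-R,R]\times[0,\tau]$, the equation $\partial_s z=V_{d_n}(z)$ is uniformly parabolic in the chart coordinates, for instance in the $(q,p)$ form of Lemma \ref{lem:qp-full-flow} perturbed by $O(d_n)$. Parabolic bootstrapping on unit strips then gives uniform $C^k$ bounds. By Arzel\`a--Ascoli and the Rellich--Aubin--Lions compactness of Proposition \ref{prop:YIestimate}, a subsequence converges in $C^\infty_{\mathrm{loc}}$ to a solution $u_\infty$ of the $d=0$, $\theta=\pi$ flow. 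The limit has finite energy and lies in $\overline{\mathcal N(\mathcal K_0)}$.

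\textbf{Step 2 (identifying the asymptotic limits of $u_\infty$).} Finite energy forces $\partial_s u_\infty\to0$ along sequences. Any $s\to\pm\infty$ limit is therefore a $d=0$ holomorphic geodesic loop in $\mathcal N(\mathcal K_0)$. For small $\rho$, the only such loops are $\gamma_{0,0}$ and points of the upper Bott family $\mathcal C_B$ near $\gamma_{1,0}$.

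By the normalisation, $F_\pi(u_\infty(0))$ sits strictly between the two critical levels. Since $F_\pi$ decreases along the flow, the forward limit must be $\gamma_{0,0}$, the only critical point below that level. The backward limit must lie at the level $2\pi^2/\tau$, hence in $\mathcal C_B$.

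Breaking through an intermediate critical point is excluded for two reasons: there is none at an intermediate level with matching $G$ in the neighborhood, as noted before the statement, and the energies of the pieces must add up.

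\textbf{Step 3 (ruling out Bott drift).} This is the main obstacle. We must show the backward limit is exactly $\gamma_{1,0}$ and not another point of $\mathcal C_B$, and that no energy concentrates at $s\to-\infty$ through slow drift along the soft mode of rate $\lambda_{\rm soft}(d_n)\sim d_n$.

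Near the upper end, I would use the Morse--Bott end chart (a) of Proposition \ref{prop:YIestimate}, with $\kappa_{I,d_n}=\mathcal T_{d_n}\gamma_{1,0}$. In the proof of Proposition \ref{prop:no-escape-small-d}, the first-exit argument gives $\|Y_n\|_{\mathcal X((-\infty,S])}\le C'd_n$ on the whole upper-end interval, with $Y_n\perp\sigma_B$ by construction of the chart. To control the Bott coordinate itself, I would project the equation onto the soft channel as in the proof of Proposition \ref{prop:at-least-two-small-d}. Lemma \ref{lem:quadratic-upper-bott} says the forcing in that channel is $O(d^2)$, so dividing by $\lambda_{\rm soft}\sim d$ yields a Bott displacement of $O(d_n)$. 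This bound tends to $0$ uniformly in $s$, which pins the backward limit of $u_\infty$ to $\gamma_{1,0}$. Uniform exponential decay in the Morse end at $\gamma_{0,0}$, from Proposition \ref{prop:exact-spectrum-small-d}(1), similarly prevents energy from escaping at $+\infty$.

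\textbf{Step 4 (conclusion).} Steps 1--3 show that $u_\infty$ is a finite-energy $d=0$ trajectory from $\gamma_{1,0}$ to $\gamma_{0,0}$ with strong $H^1$ asymptotics. It is nonconstant because its action at $s=0$ is intermediate. By Theorem \ref{thm:exact-two-d0}, $[u_\infty]$ is $[\nu_{0,+}]$ or $[\nu_{0,-}]$. Upgrading the $C^\infty_{\mathrm{loc}}$ convergence together with the uniform end control gives the stated convergence.
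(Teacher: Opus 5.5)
Your proposal follows the paper's proof essentially step for step: you normalise by an intermediate regular level of $F_{\theta(d_n)}$, obtain a finite-energy nonconstant $d=0$ limit from Proposition~\ref{prop:no-escape-small-d}, the compactness in Proposition~\ref{prop:YIestimate} and a diagonal argument, exclude breaking by the absence of intermediate critical levels and exclude Bott drift by the $O(d)$ soft-channel estimate resting on Lemma~\ref{lem:quadratic-upper-bott}, and conclude with Theorem~\ref{thm:exact-two-d0}. The only real difference is that you rerun the soft-channel projection for the given $u_n$ instead of quoting $|\beta_{d_n,\alpha}|\le Cd_n$, which the paper derived only for the implicit-function branches, so your version is what the argument actually needs; one correction is that the $(q,p)$ system is not uniformly parabolic, because $p_s=-p_{tt}+\dots$ is a backward heat equation, so interior regularity on strips should be credited to the mixed forward/backward structure (both $\partial_s\mp\partial_t^2$ are hypoelliptic) or to Proposition~\ref{prop:YIestimate}, not to parabolicity.
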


\begin{proof}
Choose representatives $u_n$ and translate them in the $s$-variable so
that, for a fixed regular value
\[
c_*\in\bigl(F_{\theta(0)}(\gamma_{0,0}),F_{\theta(0)}(\gamma_{1,0})\bigr),
\]
we have
\[
F_{\theta(d_n)}(u_n(0,\cdot))=c_* .
\]
For $n$ large, $c_*$ still lies strictly between the two endpoint critical
values.

By Proposition \ref{prop:no-escape-small-d}, after applying the endpoint correction
$\mathcal T_{d_n}^{-1}$, the trajectories $u_n$ remain in a fixed
neighborhood of $\mathcal{K}_0$. And by Proposition \ref{prop:YIestimate},
after passing to a subsequence, we obtain convergence on each bounded strip.
A diagonal argument gives a limit
\[
u_\infty:\R\times[0,\tau]\longrightarrow \HH^2_{\C}.
\]

Since $d_n\to0$ and $\mathcal T_{d_n}\to\mathrm{Id}$, the limit $u_\infty$ solves the $d=0$ flow
equation. Standard local parabolic bootstrapping improves the $C_{\mathrm{loc}}^{\infty}$ convergence. 

The normalization also passes to the limit and gives
\[
F_{\theta(0)}(u_\infty(0,\cdot))=c_*.
\]
Thus, $u_\infty$ is nonconstant. Moreover, the energy identity gives
\[
\int_{\R}\|\partial_su_n(s,\cdot)\|_{L_t^2}^2\,\dd s
=F_{\theta(d_n)}(\gamma_{1,d_n})-F_{\theta(d_n)}(\gamma_{0,d_n}).
\]
By lower semicontinuity, $u_\infty$ is a finite-energy $d=0$ trajectory.
The preceding discussion rules out broken intermediate levels,
and the transversality of the endpoint branch to the upper Bott family rules
out zero-energy Bott drift. Therefore the upper and lower limits of
$u_\infty$ are forced to be
\[
\gamma_{1,0}\quad\text{and}\quad\gamma_{0,0}.
\]
Thus
\[
[u_\infty]\in M_0(a,a;\gamma_{0,0},\gamma_{1,0}).
\]
By Theorem \ref{thm:exact-two-d0},
\[
M_0(a,a;\gamma_{0,0},\gamma_{1,0})
=\{[\nu_{0,+}],[\nu_{0,-}]\}.
\]
Hence, after translating in $s$ and passing to a subsequence,
$[u_n]$ converges in $C^\infty_{\mathrm{loc}}$ to $[\nu_{0,+}]$ or $[\nu_{0,-}]$.
\end{proof}

\begin{thm}\label{thm:exact-two-small-d}
There exists $\varepsilon>0$ such that
\[
\#\mathcal{M}_d(a,b_d;\gamma_{0,d},\gamma_{1,d})=2
\qquad\text{for all }0<d<\varepsilon.
\]
\end{thm}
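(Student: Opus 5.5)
The count splits into a lower bound $\ge 2$, already supplied by Proposition \ref{prop:at-least-two-small-d}, and an upper bound $\le 2$, which I would obtain by combining the compactness statement Proposition \ref{prop:compactness-dn-to-0} with the local uniqueness in the implicit function theorem used to build the branches $[\nu_{d,\pm}]$. So the plan is a contradiction argument.

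Suppose the statement fails. Then there is a sequence $d_n\to0^+$ and, for each $n$, a class
\[
[u_n]\in\mathcal M_{d_n}(a,b_{d_n};\gamma_{0,d_n},\gamma_{1,d_n})\setminus\{[\nu_{d_n,+}],[\nu_{d_n,-}]\}.
\]
By Proposition \ref{prop:compactness-dn-to-0}, after $s$-translation and passing to a subsequence, $u_n$ converges in $C^\infty_{\rm loc}$ to $\nu_{0,\alpha}$ for a fixed $\alpha\in\{+,-\}$. The next step is to upgrade this local convergence to convergence in the weighted space $X^{\rm fix}_{0,\alpha}\oplus\R$. On the two ends I would use Proposition \ref{prop:no-escape-small-d} together with the first-exit bound $A_n(S)\le C'd_n$ from its proof. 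This shows that the end portions of $\mathcal T_{d_n}^{-1}u_n$ stay uniformly small in the chart norms $\mathcal X(I)$. Corollary \ref{cor:fixed-d-exp-convergence} then supplies the exponential tails. Near the Morse--Bott end, the Bott component of $u_n$ relative to $\gamma_{1,d_n}$ is absorbed into the augmentation parameter $\rho$. Its size is controlled by the soft-channel projection argument of Proposition \ref{prop:at-least-two-small-d}, which gives $|\beta|\le Cd_n$.

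With this in hand, I would write
\[
u_n=\exp^{\C}_{\nu^{\rm app}_{d_n,\alpha}}\bigl(\mathcal I_{d_n,\alpha}(\zeta_n,\rho_n)\bigr)
\]
after a further translation, chosen so that the gauge condition $\ell_{0,\alpha}(\zeta_n)=0$ holds. This is possible by the implicit function theorem in the translation parameter, since $\partial_s\nu_{0,\alpha}\neq0$. One then checks $\|\zeta_n\|+|\rho_n|\to0$ and $\widehat{\mathcal F}_\alpha(d_n,\zeta_n,\rho_n)=0$. Lemma \ref{lem:upper-augmented-isomorphism} says $\widehat D_{0,\alpha}$ is an isomorphism, so the implicit function theorem gives a neighborhood of $(0,0,0)$ in which the zero set of $\widehat{\mathcal F}_\alpha$ is exactly the graph $d\mapsto(\zeta_{d,\alpha},\rho_{d,\alpha})$. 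Hence $[u_n]=[\nu_{d_n,\alpha}]$ for $n$ large, which is a contradiction. The two branches are distinct for small $d$, so the count is exactly $2$.

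The main obstacle is the passage from $C^\infty_{\rm loc}$ convergence to smallness in the exponentially weighted norms, uniformly in $n$, at the upper Morse--Bott end. There the spectral gap $\lambda_{\rm soft}(d_n)\sim d_n$ degenerates, so the weight $\delta$ cannot be kept uniform on the soft mode. My first attempt would be to split off the soft component using $\Pi^-_{\rm soft}(d)$. On the remaining component I would use the uniform gap from Proposition \ref{prop:exact-spectrum-small-d}. The soft coefficient itself I would treat as the finite-dimensional parameter $\rho$, bounded through the $O(d^2)$ estimate of Lemma \ref{lem:quadratic-upper-bott}. This has to be set up so that the $\rho$-direction exactly matches the augmentation direction $\widehat\sigma_{B,\alpha}$ up to $O(d)$.
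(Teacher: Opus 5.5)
Your overall route is the paper's: the lower bound comes from Proposition~\ref{prop:at-least-two-small-d}, and the upper bound is a contradiction argument combining Proposition~\ref{prop:compactness-dn-to-0} with local uniqueness of the implicit-function branch. The paper's proof consists of exactly these two steps. It asserts that the IFT ``forces $[w_n]$ to lie on the corresponding local branch'' and never passes from $C^\infty_{\rm loc}$ convergence to smallness in $X^{\rm fix}_{0,\alpha}\oplus\R$. You are right that this passage is the crux, and it is a gap you share with the source. However, the repair you sketch does not close it, so the upper bound is not yet established.

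The failing step is the claim that the Bott/soft component of $u_n$ at the upper end can be absorbed into $\rho$.

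\textbf{The tail is too slow for the space.} Corollary~\ref{cor:fixed-d-exp-convergence}(ii), which you invoke for the tails, only guarantees decay at rates below $\lambda_{\rm soft}(d_n)\sim 4\pi d_n/\tau^2\to0$. This is sharp whenever the soft coefficient of $u_n$ is nonzero, since the tail is then led by $\beta_n e^{\lambda_{\rm soft}(d_n)s}\sigma_{{\rm soft},d_n}$. By contrast, $X^{\rm fix}_{0,\alpha}$ carries a weight $e^{\delta|s|}$ with $\delta$ fixed. Once $\lambda_{\rm soft}(d_n)<\delta$, such a tail has infinite weighted norm.

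\textbf{The augmentation cannot absorb it.} The term $\rho\,\widehat\sigma_{B,\alpha}=\rho\,\chi_-(s)\sigma_{B,\alpha}$ is asymptotically constant, tending to $\rho\,\sigma_{B,\alpha}$ as $s\to-\infty$. Meanwhile both $u_n$ and $\nu^{\rm app}_{d_n,\alpha}$ tend to $\mathcal T_{d_n}\gamma_{1,0}=\gamma_{1,d_n}$. So any representation $u_n=\exp^{\C}_{\nu^{\rm app}_{d_n,\alpha}}\bigl(\mathcal I_{d_n,\alpha}(\zeta_n,\rho_n)\bigr)$ with $\zeta_n\to0$ forces $\rho_n=0$, and the whole soft tail must sit in $\zeta_n$. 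Hence $u_n$ enters the IFT chart only if $\beta_n=0$, which nothing in your argument (or the paper) provides.

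\textbf{The cited bounds do not help.}
\begin{itemize}
\item The bound $|\beta|\le Cd$ controls size, not decay rate.
\item In Proposition~\ref{prop:at-least-two-small-d} that bound is derived from \eqref{eq:zeta-rho}, i.e.\ for the IFT solution itself, so quoting it for an arbitrary $u_n$ is circular.
\item The no-escape bound $A_n(S)\le C'd_n$ is measured in upper-end charts that already impose $Y_I\perp\sigma_B$, so it says nothing about the soft coefficient.
\end{itemize}

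To finish, you need a local uniqueness statement adapted to the slow soft tail uniformly in $d$. Possible routes are $d$-dependent weights below $\lambda_{\rm soft}(d)$; an augmentation by a decaying soft mode in place of $\widehat\sigma_{B,\alpha}$, with uniform invertibility re-proved instead of quoting Lemma~\ref{lem:upper-augmented-isomorphism}; or a finite-dimensional reduction near the upper Bott family that matches the soft exit data of $u_n$ with those of $\nu_{d_n,\alpha}$.
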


\begin{proof}

Proposition \ref{prop:at-least-two-small-d} gives at least two trajectories for all small $d>0$. Suppose there are additional trajectories $[w_n]$ with $d_n\to 0^+$. By Proposition \ref{prop:compactness-dn-to-0}, after translation and subsequence extraction we get a limit
\[
[w_\infty]\in \mathcal{M}_0(a,a;\gamma_{0,0},\gamma_{1,0})
=\{[\nu_{0,+}],[\nu_{0,-}]\}.
\]
Then the local implicit-function theorem from Proposition \ref{prop:at-least-two-small-d} forces $[w_n]$ to lie on the corresponding local branch for all sufficiently large $n$, contradicting the assumption that the $[w_n]$ is distinct from those two branches.
Therefore no additional small-$d$ trajectories exist.
\end{proof}


\bigskip



\begin{thebibliography}{0}


\bibitem{AG} D. N. Akhiezer and S. G. Gindikin.
\textit{On Stein extensions of real symmetric spaces}.
Math. Ann. 286 (1990), No. 1-3, 1-12.


\bibitem{Az} H. Azad.
\textit{Levi-curvature of manifolds with a Stein rational fibration}.
Manuscr. Math. 50 (1985), 269-311.


\bibitem{Balser}
W. Balser.
\textit{From Divergent Power Series to Analytic Functions:
Theory and Application of Multisummable Power Series}.
Lecture Notes in Mathematics, Vol. 1582, Springer-Verlag, Berlin--Heidelberg, 1994.





\bibitem{BiswasDumitrescuHRM} I. Biswas and S. Dumitrescu.
\textit{Holomorphic Riemannian metric and the fundamental group},
Bulletin de la Société Mathématique de France \textbf{147} (2019), no.3, 455--468.


\bibitem{BleisteinHandelsman} N. Bleistein and R. A. Handelsman,
\textit{Asymptotic Expansions of Integrals}.
Dover Publications, 1986.









\bibitem{Costin}
O. Costin.
\textit{Asymptotics and Borel Summability}.
Monographs and Surveys in Pure and Applied Mathematics, Vol. 141,
Chapman \& Hall/CRC, Boca Raton, FL, 2008.


\bibitem{CPT} O. Costin, H. Park and Y. Takei.
\textit{Borel summability of the heat equation with variable coefficients}.
J. Differential Equations 252 (2012), no. 4, 3076--3092.






\bibitem{Du} G. V. Dunne.
\textit{Borel Summation and Analytic Continuation of the Heat Kernel on Hyperbolic Space}.
In \textit{Peter Suranyi Festschrift: A Life in Quantum Field Theory}, pp. 167--189, World Scientific, 2022.


\bibitem{Ecalle1981I}
J.~Écalle,
\textit{Les fonctions résurgentes. Tome I: Les algèbres de fonctions résurgentes},
Publications Mathématiques d'Orsay, 81-05,
Université de Paris-Sud, Département de Mathématique, Orsay, 1981.


\bibitem{Ecalle1981II}
J.~Écalle,
\textit{Les fonctions résurgentes. Tome II: Les fonctions résurgentes appliquées à l'itération},
Publications Mathématiques d'Orsay, 81-06,
Université de Paris-Sud, Département de Mathématique, Orsay, 1981.



\bibitem{Ecalle1985III}
J.~Écalle,
\textit{Les fonctions résurgentes. Tome III: L'équation du pont et la classification analytique des objets locaux},
Publications Mathématiques d'Orsay, 85-05,
Université de Paris-Sud, Département de Mathématique, Orsay, 1985.


\bibitem{ES} J. Eells, Jr. and J. H. Sampson.
\textit{Harmonic mappings of Riemannian manifolds}.
Amer. J. Math. 86 (1964), No. 1, 109--160.




\bibitem{Gr} H. Grauert.
\textit{On Levi's problem and the imbedding of real-analytic manifolds}.
Ann. Math. 68 (1958), 460-472.


\bibitem{GS} V. Guillemin and M. Stenzel.
\textit{Grauert tubes and the homogeneous Monge-Amp\'ere equation}.
J. Differential Geom. 34 (1991), no.2, 561-570.




\bibitem{Ham} R. S. Hamilton.
\textit{Harmonic maps of manifolds with boundary}.
Lecture Notes in Mathematics, Vol. 471. Springer-Verlag, Berlin--New York, 1975.


\bibitem{Har} T. Harg\'e.
\textit{Borel summation of the small time expansion of the heat kernel. The scalar potential case.}.
arXiv:1301.7742, 2013.


\bibitem{Har2} T. Harg\'e.
\textit{Borel summation of the small time expansion of the heat kernel with a vector potential}.
arXiv:1302.0604, 2013.



\bibitem{Henry1981}
D. Henry.
\textit{Geometric Theory of Semilinear Parabolic Equations}.
Lecture Notes in Mathematics, Vol. 840. Springer-Verlag, Berlin--New York, 1981.


\bibitem{HilgertNeeb2012}
J. Hilgert and K.-H. Neeb.
\textit{Structure and Geometry of Lie Groups}.
Springer Monographs in Mathematics. Springer, New York, 2012.


\bibitem{Hochschild1965}
G. Hochschild.
\textit{The Structure of Lie Groups}.
Holden-Day, San Francisco, 1965.


\bibitem{Hochschild1966}
G. Hochschild.
\textit{Complexification of real analytic groups}.
Trans. Amer. Math. Soc. 125 (1966), no. 3, 406--413.




\bibitem{HS} A. Huckleberry and D. Snow.
\textit{A classification of strictly pseudoconcave homogeneous manifold}.
Ann. Scuola Norm. Sup. Pisa Cl. Sci. (4) 8 (1981), no. 2, 231--255.





\bibitem{Kont} M. Kontsevich.
\textit{Private discussion}, YMSC \& BIMSA, 2023.

\bibitem{Kont2020} M. Kontsevich.
\textit{Talk Slides: Exponential integrals, Lefschetz thimbles and linear resurgence, 30 June, 2020}. 

\bibitem{KS-2022}
M.~Kontsevich, and Y.~Soibelman. \textit{Analyticity and resurgence in wall-crossing formulas.} Lett. Math. Phys. 112 (2022), no. 2, Paper No. 32.


\bibitem{KS-2024}
M.~Kontsevich, and Y.~Soibelman. \textit{Holomorphic Floer theory I: exponential integrals in finite and infinite dimensions.} arXiv preprint arXiv:2402.07343 [math.SG], 2024.


\bibitem{Ku} R. S. Kulkarni.
\textit{On complexifications of differentiable manifolds}.
Invent. Math. 44 (1978), 49--64.


\bibitem{Le} L. Lempert.
\textit{Complex structures on the tangent bundle of Riemannian manifolds}.
In \textit{Complex Analysis and Geometry}, The University Series in Mathematics, Springer/Plenum, 1993, pp. 235--251.


\bibitem{LS} L. Lempert and R. Sz\"oke.
\textit{Global solutions of the homogeneous complex Monge-Amp\'ere equation and complex structures on the tangent bundle of Riemannian manifolds}.
Math. Ann. 290 (1991), no.4, 689--712.


\bibitem{LiLiTang2026}
S.~Li, Y.~Li, and X.~Tang,
\textit{Picard--Lefschetz Theory and Alien Calculus: A Case Study},
arXiv:2605.08867v1 [math-ph], 2026.


\bibitem{Lin1986}
X.-B. Lin.
\textit{Exponential dichotomies and homoclinic orbits in functional differential equations}.
J. Differential Equations 63 (1986), no. 2, 227--254.


\bibitem{LM1960} J.-L. Lions and B. Malgrange.
\textit{Sur l'unicit{\'e} r{\'e}trograde dans les probl{\`e}mes mixtes paraboliques}.
Math. Scand. 8 (1960), 277--286.


\bibitem{LMS} D. A. Lutz, M. Miyake and R. Sch\"afke.
\textit{On the Borel summability of divergent solutions of the heat equation}.
Nagoya Math. J. 154 (1999), 1--29.


\bibitem{MitschiSauzin2016}
C.~Mitschi and D.~Sauzin,
\textit{Divergent Series, Summability and Resurgence I:
Monodromy and Resurgence},
Lecture Notes in Mathematics, Vol.~2153, Springer, Cham, 2016.


\bibitem{MN} A. Morimoto and T. Nagano.
\textit{On pseudo-conformal transformations of hypersurfaces}.
J. Math. Soc. Japan 15 (1963), no.3, 289-300.


\bibitem{Neeb2000}
K.-H. Neeb.
\textit{Holomorphy and Convexity in Lie Theory}.
de Gruyter Expositions in Mathematics, Vol. 28. 
Walter de Gruyter, Berlin, 2000.

\bibitem{PW} G. Patrizio and P-M. Wong.
\textit{Stein manifolds with compact symmetric center}.
Math. Ann. 289 (1991), no. 3, 355--382.


\bibitem{PSS1997}
D. Peterhof, B. Sandstede and A. Scheel.
\textit{Exponential dichotomies for solitary-wave solutions of semilinear elliptic equations on infinite cylinders}.
J. Differential Equations 140 (1997), no. 2, 266--308.


\bibitem{PhamSaddlepoint}
F. Pham,
\textit{Vanishing homologies and the $n$ variable saddlepoint method}. 
In Singularities, Part 2 (Arcata, Calif., 1981), Proc. Sympos. Pure Math., Vol. 40, Amer. Math. Soc., 1983, pp. 319--333.


\bibitem{SauzinSplitting}
D. Sauzin.
\textit{Resurgent functions and splitting problems}.
RIMS K\^oky\^uroku \textbf{1493} (2006), 48--117.


\bibitem{Schnaubelt1999}
R. Schnaubelt.
\textit{Sufficient conditions for exponential stability and dichotomy of evolution equations}.
Forum Math. 11 (1999), no. 5, 543--566.




\bibitem{Sz} R. Sz\"oke.
\textit{Complex structures on tangent bundles of Riemannian manifolds}.
Math. Ann. 291 (1991), no. 3, 409--428.


\bibitem{BW} H. Whitney and F. Bruhat.
\textit{Quelques propri\'et\'es fondamentales des ensembles analytiques-r\'eels}.
Comment. Math. Helv. 33 (1959) 132-160.


\bibitem{Witten2010}
E. Witten.
\textit{A new look at the path integral of quantum mechanics}.
In \textit{Surveys in Differential Geometry. Volume XV. Perspectives in Mathematics and Physics}, Int. Press, Somerville, MA, 2011, 345--419.



\bibitem{Wu-thesis}
Y.~Wu, \textit{Resurgence theory on the heat
kernel of Riemann surfaces}. Undegraduate thesis (2024), Qiuzhen college, Tsinghua University. 

\end{thebibliography}
\end{document}